\pgfplotsset{compat=1.18}
\renewcommand{\backref}[1]{}
\renewcommand{\backrefalt}[4]{%
\ifcase #1 %
\or
[p.\ #2]%
\else
[pp.\ #2]%
\fi}
\renewcommand{\paragraph}{%
 \@startsection{paragraph}{4}%
 {\z@}{2.25ex \@plus .5ex \@minus .3ex}{-1em}%
 {\normalfont\normalsize\bfseries}%
}
\newcommand{\para}{%
 \@startsection{paragraph}{4}%
 {\z@}{2ex \@plus 3.3ex \@minus .2ex}{-1em}%
% {\normalfont\small\bfseries\uppercase}%
 {\normalfont\normalsize\bfseries}%
}
\newtheorem{lem}{Lemma}
\newtheorem*{lem*}{Lemma}
\crefname{lem}{lemma}{lemmas}
\Crefname{lem}{Lemma}{Lemmas}
\newtheorem{thm}[lem]{Theorem}
\newtheorem*{thm*}{Theorem}
\newtheorem{claim}[lem]{Claim}
\crefname{claim}{claim}{claims}
\newtheorem{cor}[lem]{Corollary}
\newtheorem{defn}[lem]{Definition}
\crefname{defn}{definition}{definitions}
\theoremstyle{definition}
\newtheorem{fact}[lem]{Fact}
\newtheorem{question}[lem]{Question}
\newcommand{\NN}{\mathbb{N}}
\newcommand{\CC}{\mathbb{C}}
\newcommand{\RR}{\mathbb{R}}
\newcommand{\ZZ}{\mathbb{Z}}
\newcommand{\zo}{\{0,1\}}
\newcommand{\zon}{\zo^n}
\newcommand{\pbra}[1]{\left({#1} \right)}
\newcommand{\sbra}[1]{\left[{#1} \right]}
\newcommand{\cbra}[1]{\left\{{#1} \right\}}
\newcommand{\wt}[1]{\widetilde{#1}}
\def\beq{\begin{equation}}
\def\eeq{\end{equation}}
\def\ben{\begin{enumerate}}
\def\een{\end{enumerate}}
\def\bem{\begin{bmatrix}}
\def\eem{\end{bmatrix}}
\newcommand{\majority}{\text{majority}\xspace}
\renewcommand{\parity}{\text{parity}\xspace}
\newcommand{\intparity}{\text{PAR}\xspace} % takes as input an integer, placeholder name for now to distinguish between \parity. It could be valid to abuse notation here and keep it as parity since parity of an integer is a common convention. 
\newcommand{\majmod}[1]{\text{majmod}_{#1}\xspace} % takes as input an n-bit string
\newcommand{\MM}[1]{\text{MM}_{#1}} % takes as input an integer
\newcommand{\GHZ}{\text{GHZ}\xspace}
\newcommand{\EVEN}{\text{EVEN}\xspace}
\newcommand{\CNOT}{\text{CNOT}\xspace}
\newcommand{\NUrot}[1]{A_{#1}}
\newcommand{\multiNUrot}[2]{A_{#1, #2}}
\newcommand{\Urot}[2]{U_{#1, #2}}
\newcommand{\halfbitstring}{B}
\newcommand{\TVD}{\Delta}
\newcommand{\blocksum}[1]{S_{#1}}
\newcommand{\forest}[1]{F_{#1}}
\newcommand{\allsmalltrees}{\mathcal{T}}
\newcommand{\treesum}[1]{S_{#1}}
\newcommand{\treeneighb}{N_{\allsmalltrees}}
\newcommand{\fneighb}{N_f}
\newcommand{\blockidx}[1]{B_{#1}}
\newcommand{\bintree}[1][n]{\mathcal{B}_{#1}}
\newcommand{\pmmajmod}[1]{\text{pmmajmod}_{#1}\xspace} %I don't love this name (too many m's) -- this is a placeholder name for the function sampled approximately by the QNC^0 circuit using a poor man's state input. 
\newcommand{\at}[1]{_{#1}} % for indexing into strings/sets
\newcommand{\unif}{\text{Unif}} % uniform distribution over some set
\newcommand{\adj}{^\dag}
\newcommand{\cktclass}[1]{\textsf{#1}}
\newcommand{\QNCZ}{\cktclass{QNC}\textsuperscript{0}\xspace}
\newcommand{\NCZ}{\cktclass{NC}\textsuperscript{0}\xspace}
\newcommand{\ACZ}{\cktclass{AC}\textsuperscript{0}\xspace}
\newcommand{\polytime}{\textsf{Poly-time}}
\newcommand{\localityf}{\ell_f}
\newcommand{\xmark}{\textsf{X}}
\newcommand{\cmark}{\checkmark}
\definecolor{orange(colorwheel)}{rgb}{1.0, 0.5, 0.0}
\definecolor{cb-blue-green} {RGB}{  0,  073,  073}
\definecolor{cb-salmon-pink}{RGB}{255, 182, 119}
\definecolor{cb-lilac}      {RGB}{182, 109, 255}
\title{Unconditional Quantum Advantage for Sampling with Shallow Circuits}
\author[1]{Adam Bene Watts}
\author[2, 1, 3]{Natalie Parham}
\affiliation[1]{Institute for Quantum Computing, University of Waterloo, Canada}
\affiliation[2]{Department of Computer Science, Columbia University}
\affiliation[3]{Perimeter Institute for Theoretical Physics, Canada}
\date{}
\begin{document}

\maketitle

\begin{abstract}
    
    Recent work by Bravyi, Gosset, and Koenig showed that there exists a search problem that a constant-depth quantum circuit can solve, but that any constant-depth classical circuit with bounded fan-in cannot. They also pose the question: Can we achieve a similar proof of separation for an input-independent sampling task? In this paper, we show that the answer to this question is yes when the number of random input bits given to the classical circuit is bounded. 
    
    We introduce a distribution $D_{n}$ over $\zon$ and construct a constant-depth uniform quantum circuit family $\{C_n\}_n$ such that $C_n$ samples from a distribution close to $D_{n}$ in total variation distance. For any $\delta < 1$ we also prove, unconditionally, that any classical circuit with bounded fan-in gates that takes as input $kn + n^\delta$ 
    i.i.d. Bernoulli random variables with entropy $1/k$
    and produces output close to $D_{n}$ in total variation distance has depth $\Omega(\log \log n)$. This gives an unconditional proof that constant-depth quantum circuits can sample from distributions that can't be reproduced by constant-depth bounded fan-in classical circuits, even up to additive error. We also show a similar separation between constant-depth quantum circuits with advice and classical circuits with bounded fan-in and fan-out, but access to an unbounded number of i.i.d. random inputs.

    The distribution $D_n$ and classical circuit lower bounds are inspired by work of Viola, in which he shows a different (but related) distribution cannot be sampled from approximately by constant-depth bounded fan-in classical circuits. 
\end{abstract}

\tableofcontents 

% \begin{defn}
%     A test defn.
%     \label{defn:test}
% \end{defn}

% And now I reference~\Cref{defn:test}.

\section{Introduction}

At the heart of quantum information theory lies the remarkable observation that quantum devices can process information in ways that classical devices cannot. This is illustrated by the work of Bell, which showed that measurements made on spatially separated parts of a quantum system could produce non-classical correlations. More recently, much excitement surrounding quantum computers comes from the belief that there are problems, such as factoring~\cite{shor1999polynomial}, which can be solved by quantum computers in polynomial time but cannot be solved efficiently by classical computers. 

While Bell's predictions have been verified experimentally~\cite{georgescu2021bell}, there are significant challenges to demonstrating quantum advantage for complex computational problems such as factoring. The current best known quantum algorithms for these problems require construction of a full scale fault-tolerant quantum computer, which is well beyond our current experimental capabilities. Additionally, while it is widely believed that there is no efficient classical algorithm for factoring, this hardness has not been proven formally. Indeed, proving any problem can be solved efficiently by a quantum computer but not by a classical computer would also require proving $\P \neq \PSPACE$~\cite{watrous2008quantum}, constituting a major breakthrough in complexity theory. 

One approach to demonstrating quantum advantage while avoiding these problems is to study the task of sampling from the output distribution of quantum circuits. In this setting it seems possible that even shallow quantum circuits (that is, quantum circuits whose depth is much less than their length) can perform tasks which are still hard classically.
In 2004, Terhal and DiVincenzo provided evidence, later strengthened by Aaronson \cite{aaronson2005quantum}, that there is no polynomial time classical algorithm that takes as input a description of a depth-3 quantum circuit and produces samples from the output distribution of that circuit~\cite{terhal2002adaptive}. 
More recently, a series of works \cite{bouland2018quantum,aaronson2016complexity,boixo2018characterizing} studied the complexity of sampling from the output distribution of a randomly generated shallow quantum circuit (again given a description of the circuit as input) and gave evidence this task couldn't be performed by classical computers in polynomial time. We refer the reader to~\cite{harrow2017quantum} for a more complete discussion of these issues.

While these examples are striking, they do have some limitations. As in the case of factoring, the proofs of classical hardness in the results discussed above are conditional, relying on (natural) complexity-theoretic conjectures. More subtly, the presence of noise in real-world experiments means that quantum computers will not sample from the ideal output distribution of quantum circuits \textit{exactly}. Near-term (NISQ~\cite{preskill2018quantum}) devices will likely only sample from the output distribution of the idealized quantum circuits up to (likely large) \textit{additive error}. Strengthening hardness-of-sampling results of the form described above to this more real-world scenario requires more tenuous complexity-theoretic conjectures.  

An alternate approach, pioneered by Bravyi, Gosset, and Konig in~\cite{bravyi2018quantum} is to compare the computational power of shallow quantum circuits to the computational power of similarly restricted classical circuits. This allowed for an \textit{unconditional} separation: in~\cite{bravyi2018quantum} they showed that constant-depth quantum  (\QNCZ) circuits could 
solve a relational (search) problem 
 -- mapping inputs to valid outputs -- in a way 
that constant-depth, bounded fan-in, classical (\NCZ) circuits could not. Later work~\cite{watts2019exponential,grier2020interactive} improved on their result to give separations between \QNCZ circuits and more powerful classes of constant-depth classical circuits, or between quantum circuits and classical circuits even in the presence of noise~\cite{bravyi2020quantum}.

One notable feature of all of the separations discussed so far is that they are \textit{input-dependent}.  That is, they are based on the classical hardness of mapping some input to some output, e.g., a positive integer to its prime factors in the case of factoring, a circuit description to a sample from its output distribution in the case of circuit sampling problems, or a measurement setting to correlated measurement outcomes in the case of the Bell test. For each of these problems it may be easy to produce a valid output for any fixed input -- the classical hardness is in finding a classical process that takes \textit{all} valid inputs to valid outputs. 

Many important questions in quantum computing, however, concern operations that are \emph{inputless}. A major area of study is the complexity of state preparation \cite{aaronson2016states}, which asks what states can be prepared by quantum computers with bounded resources. The recently proven NLTS theorem~\cite{anshu2022nlts} states that there exist local Hamiltonians whose ground state cannot be prepared by constant-depth quantum circuits. More broadly, the longstanding open question of whether complexity classes $\QMA$ and $\QCMA$ are equal roughly amounts to asking whether every local Hamiltonian has an efficient classical description of its ground state~\cite{aharonov2002quantum}.
% \footnote{Although proving $\QMA \neq \QCMA$ would  imply $\P \neq \PSPACE$, so is unlikely without sophisticated new tools.} 
Beyond this, the complexity of state preparation has implications in quantum cryptography and physics, with connections to black holes and quantum money \cite{aaronson2016states}.

In this work we study unconditional separations in the style of Bravyi, Gosset, and Konig in the inputless setting. Classical \textit{input-independent sampling} problems can be thought of as the classical analog of state-preparation problems, in which the goal is to sample from a fixed $n$-bit distribution $D_n$ using a classical circuit whose input is fixed to uniformly random bits.\footnote{More formally, the goal, given a family of distributions $\{D_n\}$ that depend only on $n$, is to produce a family of circuits $\{C_n\}$, each of which samples from the appropriate distribution given random bits as input.} 
While input-dependent problems ask about a classical system's ability to \textit{process} information, input-independent problems instead study what distributions classical systems can \textit{prepare}. 
%At a high level, these questions can be understood as questions about what sort of distributions can arise via classical processes -- provided these processes can be effectively modeled by the classical circuit under consideration.

At first glance, it may appear that there is a close connection between input-dependent problems and input-independent sampling problems. If it is hard to map input $x$ to output $f(x)$ in constant-depth, is it also hard to sample from the distribution $(X, f(X))$ where $X$ is uniform? Perhaps surprisingly, the answer to this question is no! To illustrate, consider the $\parity$ function,
which requires $\Omega(\log n /\log\log(n))$ depth to implement with a classical circuit with unbounded fan-in~\cite{haastad1987computational}. Despite this, there is a simple depth-2 classical circuit which maps a random string $r\in \{0,1\}^{n-1}$ to output $(X, \parity(X))$ for uniformly random $X$. This circuit is easy to describe: simply map input $r$ to output
\begin{align*}
   (r_1, r_1 \oplus r_2, r_2 \oplus r_3, \dots, r_{n-2} \oplus r_{n-1} , r_{n-1}) 
\end{align*} 
and check that the output distribution has the desired statistics. A similar trick can be used to sample from the distribution $(X, \textsf{PHP}_n(X))$ where $\textsf{PHP}_n$ is the Parity Halving Problem, a search problem introduced in \cite{watts2019exponential} which separates \QNCZ circuits from constant-depth classical circuits with unbounded fan-in. 

Indeed, in contrast to search problems, where lower bounds against constant-depth circuits have a long history~\cite{haastad1987computational,razborov1987lower,smolensky1987algebraic}, lower bounds for input-independent problems have only been developed recently. Particularly relevant to this paper is a breakthrough result of Viola~\cite{viola2012complexity} in which he gave the first example of a distribution that could not be sampled by constant-depth classical circuits with bounded fan-in, even up to additive error. (In a follow-up work~\cite{viola2014extractors}, Viola also gave a distribution that can not be sampled by constant-depth classical circuits with \textit{unbounded} fan-in. While this result is stronger, the techniques used in~\cite{viola2014extractors} are less natural in the situation studied here). 

A natural question is whether constant-depth quantum circuits can sample from distributions that classical circuits cannot. Indeed, the authors of~\cite{bravyi2018quantum} asked exactly this question: 
\begin{question}[From \cite{bravyi2018quantum}]
\label{question:sampling_separation}
    Does there exist a family of quantum circuits $\{C_n\}_{n\in \NN}$ such that, for each $n\in \NN$, 
    any constant-depth classical circuit with bounded fan-in (\NCZ) with access to uniformly random bits produces a distribution \emph{far} from the output distribution produced by $C_n$ run on the all-zero state?
\end{question}
In the question above we understand \emph{close} and \emph{far} in the sense of additive error (or total variation distance). We quickly review the definition of this distance below. 

\begin{defn}[Total Variation Distance, $\TVD$] The Total Variation Distance (or Statistical Distance) between two distributions $D_1, D_2$ over $\{0,1\}^m$ is
    \begin{align}
        \TVD(D_1, D_2) := \max_{T \subseteq \{0,1\}^m} \bigg|\Pr[D_1 \in T] - \Pr[D_2 \in T]\bigg| = \frac{1}{2} \sum_{a\in \{0, 1\}^m} \bigg|\Pr[D_1 = a] - \Pr[D_2 = a]\bigg|
    \end{align}
\end{defn}

In the next section we discuss the main results of this paper, including a positive answer to \Cref{question:sampling_separation} when the number of random inputs given to the classical circuit is bounded.

\subsection{Results}

The main result of this paper is the following Theorem.

\begin{thm}\label{thm:main_introduction}
    For each $\delta \in [0,1)$, there exists a family of distributions $\{D_n\}$ such that for each $n\in \NN$, $D_n$ is a distribution over $\{0,1\}^n$ and
    \begin{enumerate}[label = (\arabic*)]
        \item There exists a uniform family of constant-depth quantum circuits
        $\{C_n\}$ such that for each $n$, applying $C$ to input $\ket{0^n}$ produces a distribution which has total variation distance at most $1/6 + O(n^{-c})$ from $D_n$ for some $c \in (0,1)$.
        \item Each classical circuit with fan-in 2 which takes $n + n^\delta$ random bits as input and has total variation distance at most $\frac{1}{2} - \omega(1/\log n)$ from $D_n$ has depth $\Omega(\log \log n)$.
    \end{enumerate}
    The distributions $D_n$ constructed are of the form $(X, f(X))$ for a uniformly random bitstring $X$ and function $f: \{0,1\}^{n-1} \rightarrow \{0,1\}$.  
\end{thm}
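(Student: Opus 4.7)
The plan is to construct $D_n$ using a tree-based structure inspired by Viola's work, establish the quantum upper bound via a GHZ-based preparation protocol, and establish the classical lower bound via a light-cone restriction argument. Specifically, I would define $f : \{0,1\}^{n-1} \to \{0,1\}$ by partitioning the input coordinates according to the leaves of a balanced binary tree of depth $\Theta(\log n)$, computing a small-arity function (for instance a mod-$k$ sum) on each block, and combining block values up the tree via \parity at each internal node. Setting $D_n = (X, f(X))$ for $X$ uniform, the tree structure of $f$ is designed so that its value depends on far-apart input coordinates in a way that encodes long-range correlations — correlations that quantum circuits can prepare using entanglement but that bounded-resource classical circuits cannot.

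For the quantum upper bound, the constant-depth circuit $C_n$ prepares GHZ-like states on subsets of qubits that mirror the tree structure of $f$, then applies a constant number of measurements and local Pauli corrections in the style of~\cite{bravyi2018quantum,watts2019exponential}. Intuitively, the measurements effectively "collapse" the tree in parallel to yield a sample close to $D_n$ in one shot. The TVD bound $1/6 + O(n^{-c})$ arises because the local subroutines each succeed with probability bounded away from $1$, and the errors are controlled through a union bound across the tree.

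The classical lower bound is the main technical contribution. For a fan-in-$2$ depth-$d$ circuit with $n + n^\delta$ random input bits, each output bit depends on at most $2^d$ input bits. To force $d = \Omega(\log\log n)$, equivalently light cones of size $\Omega(\log n)$, I would use a restriction argument: fix most input bits uniformly at random to reduce to a smaller restricted circuit that must still approximately produce the appropriate marginals of $D_n$. Drawing on the tree definition of $f$, pairs of output bits at the endpoints of long tree paths must be correlated in a way that forces their light cones to share input bits, and combined with the tight input budget $n + n^\delta$, this sharing can only be accommodated when light cones are large, since otherwise the first $n-1$ output bits cannot carry nearly $n$ bits of entropy. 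The input-size restriction is essential here: with unboundedly many random bits, the prefix-sum trick highlighted for \parity generalizes and the separation collapses.

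The principal obstacle is the very weak total variation bound $\frac{1}{2} - \omega(1/\log n)$, which allows the classical sampler to be close to trivial and we must still rule it out. Standard light-cone arguments only give depth lower bounds when the TVD is small, so the plan is to aggregate the correlation structure across many tree paths simultaneously: even if each individual correlation is only weakly approximated, the joint ensemble of correlations implied by $D_n$ is too rich to be reproduced by bounded light cones. This amplification strategy follows the spirit of~\cite{viola2012complexity}, adapted to the bounded-randomness setting and to the specific tree structure of $f$, and is where the bulk of the technical effort will be concentrated.
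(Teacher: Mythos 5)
There is a genuine gap, and it appears in all three parts of your plan. First, your candidate $f$ (block mod-$k$ sums combined by \parity up a tree) is not shown, and is unlikely, to sit at the required intersection: iterating \parity up a tree just yields a parity-type combination of block values, and the paper's own warm-up observation shows that distributions $(X,\parity(X))$-style are exactly the ones a depth-$2$ fan-in-$2$ circuit \emph{can} sample; you never argue your $f$ escapes this. The paper instead uses a ``majority mod $p$'' function: the hard bit is $\MM{p}$ of a \emph{signed} sum of the output bits (signs given by parities of edge labels along tree paths), XORed with \parity. Second, your quantum upper bound assumes GHZ-like preparation followed by ``local Pauli corrections,'' but computing those corrections is precisely what cannot be done in constant depth ($\Omega(\log n)$ is needed); the paper's resolution is to \emph{not} correct and instead absorb the would-be corrections into the definition of the target distribution, which is why the signed-sum structure appears. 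Also, the $1/6+O(n^{-c})$ error does not come from a union bound over local subroutines succeeding with bounded probability; it is a single global constant arising from the imperfect (inverse) correlation $\cos^2(-\pi/4+\pi|x|/p)$ between the last measured qubit and $\majmod{p}\oplus\parity$, plus an equidistribution-mod-$p$ estimate.

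Third, and most seriously, your classical lower bound mechanism does not match the structure of the target distribution. For any distribution of the form $(X,f(X))$ with $X$ uniform, every pair of the first $n-1$ output bits is independent, so there are no ``pairwise correlations between endpoints of long tree paths'' to exploit, and producing $n-1$ uniform bits from $n+n^\delta$ inputs is trivial at locality $1$, so an entropy/light-cone-sharing count on those bits yields nothing. The entire difficulty is the correlation between the \emph{last} bit and a global function of the rest, and the weak $\frac12-\omega(1/\log n)$ target rules out your aggregation idea unless you supply the actual mechanism the paper uses: a Viola-style statistical test. Concretely, one decomposes the output of a $d$-local $f$ into blocks each controlled by a single independent input bit after fixing the rest of the input (here this requires a delicate ``forest partition'' of the binary tree, since flipping one edge bit flips the signs of all downstream terms in the signed sum); then one shows that conditioned on that fixing the last output bit is constant while the signed sum is nearly equidistributed mod $p$ (via Fact 3.2 of Viola, applicable because each block shifts the sum by a nonzero amount $\le O(\log n)\ll p$), so the sampler is wrong with probability $\approx 1/2$; the bounded input budget $\ell\le n+n^\delta$ enters only in a counting/entropy bound on the ``many blocks fixed'' event, and separate tests handle all-zero and minimal blocks. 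None of these ingredients (the statistical test, the equidistribution-mod-$p$ step, the tree-aware block decomposition) is present in your proposal, so as written it would not yield the stated bound.
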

To provide context for the classical lower bound we note that a uniformly random bitstring has total variation distance $1/2$ from the distribution $D_n$ (or any other distribution of the form $(X, g(X))$ for uniformly random $X$ and function $g : \{0,1\}^{n-1} \rightarrow \{0,1\}$) and so the classical lower bound on total variation distance is near-optimal.

Considering the family of constant-depth quantum circuits that  approximately produce the distributions $\{D_n\}$, we get the following Corollary, showing the answer to \Cref{question:sampling_separation} is YES provided the number of random bits provided to the classical circuit is bounded:
\begin{cor} \label{cor:quantum_classical_TVD}
    There exists a uniform family of constant-depth quantum circuits $\{C_n\}$ such that, for each $\delta \in [0,1)$, any classical circuit with fan-in 2 which takes $n + n^\delta$ random bits as input and samples from the $n$-bit output distribution of $C_n$ to within $1/3 - \omega(1/\log n)$ additive error has depth $\Omega(\log\log n)$.
\end{cor}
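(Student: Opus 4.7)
The plan is to derive this corollary directly from \Cref{thm:main_introduction} via the triangle inequality for total variation distance. Concretely, let $\{C_n\}$ be the uniform family of constant-depth quantum circuits guaranteed by part~1 of \Cref{thm:main_introduction}, and let $Q_n$ denote the output distribution of $C_n$ on $\ket{0^n}$, so that $\TVD(Q_n, D_n) \le 1/6 + O(n^{-c})$ for some $c \in (0,1)$.

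Suppose $E_n$ is the output distribution of a classical fan-in-2 circuit taking $n + n^\delta$ random input bits, and assume for contradiction that $\TVD(E_n, Q_n) \le 1/3 - \omega(1/\log n)$ while the circuit has depth $o(\log\log n)$. By the triangle inequality,
\[
\TVD(E_n, D_n) \;\le\; \TVD(E_n, Q_n) + \TVD(Q_n, D_n) \;\le\; \bigl(1/3 - \omega(1/\log n)\bigr) + \bigl(1/6 + O(n^{-c})\bigr).
\]
Since $O(n^{-c})$ is $o(1/\log n)$, it is absorbed by the $\omega(1/\log n)$ slack, and we obtain $\TVD(E_n, D_n) \le 1/2 - \omega(1/\log n)$.

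Applying part~2 of \Cref{thm:main_introduction} then forces the classical circuit to have depth $\Omega(\log\log n)$, contradicting our assumption. Thus any classical fan-in-2 circuit with $n + n^\delta$ random input bits that approximates the output distribution of $C_n$ to within $1/3 - \omega(1/\log n)$ in total variation distance has depth $\Omega(\log\log n)$, as claimed.

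Since this is purely a triangle-inequality argument, there is no real obstacle; the only thing to check is the constant bookkeeping, namely that $1/3 + 1/6 = 1/2$ matches the threshold in part~2 and that the error terms $O(n^{-c})$ from part~1 are negligible compared to the $\omega(1/\log n)$ gap supplied in the hypothesis, both of which hold.
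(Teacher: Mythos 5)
Your triangle-inequality derivation is correct and is exactly the (implicit) argument the paper intends: the corollary is stated as an immediate consequence of \Cref{thm:main_introduction}, obtained by combining the quantum upper bound $1/6 + O(n^{-c})$ with the classical threshold $1/2 - \omega(1/\log n)$, noting that $O(n^{-c}) = o(1/\log n)$. The constant bookkeeping $1/3 + 1/6 = 1/2$ checks out, so nothing is missing.
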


While we view \Cref{thm:main_introduction} and \Cref{cor:quantum_classical_TVD} as the main results of the paper, they have some limitations which we address in part with subsequent theorems. Most significantly, the  circuits $C_n$ constructed in \Cref{thm:main_introduction} use arbitrary constant-sized unitaries. In \Cref{sec:U_m_theta_compiling} we review a standard series of arguments which shows that these unitaries can also be compiled in constant depth by circuits consisting of arbitrary single qubit gates and two-qubit CNOT gates. This shows, in particular, that the quantum circuits $\{C_n\}$ are a \emph{uniform} circuit family.

Additionally, it should be noted that even arbitrary single qubit gates have some capabilities which are beyond the reach of $\NC^0$ circuits with uniformly random input. In particular, applying a single controlled-X rotation to a qubit initially in the $\ket{0}$ state and then measuring in the computational basis results in a random bit sampled from a Bernoulli distribution with arbitrary bias (determined by the extent of the rotation). For most biases, reproducing this bias exactly with an $\NC^0$ given uniformly random input requires super-constant depth. It seems possible to build on this observation and produce a separation similar to the one appearing in \Cref{thm:main_introduction} -- indeed, independent from this observation being made here but while we were revising the paper to discuss this issue, this observation was also made formal in~\cite[Theorem 1.10]{kane2024locality}. The authors of that paper also show this observation gives a classical-quantum separation that holds even when the number of (uniformly random) input bits provided to the $\NC^0$ circuit is unbounded.

To address this issue, in \Cref{sec:biased-noise} we extend the classical lower bound of \Cref{thm:main_introduction} to $\NC^0$ circuits with biased inputs.  We also allow more input bits provided their total entropy is bounded. In particular, our lower bound holds when the circuit receives $kn+n^\delta$ independent Bernoulli random bits, each with entropy $1/k$.

Finally, we reemphasize that the classical lower bound in \Cref{thm:main_introduction}, as well as its extension in \Cref{sec:biased-noise}, applies only to $\NC^0$ circuits with at most $n + n^\delta$ random input bits for some $\delta < 1$. That is, the bound only applies to $\NC^0$ circuits with access to at most an extra $n^\delta$ bits of randomness on top of what is required to sample from the distribution $(X,f(X))$. Because of this restriction we took significant care to construct quantum circuits $C_n$ which only involve $n$ qubits, to ensure a fair comparison. Viola~\cite{viola2014extractors} proves sampling lower bounds against stronger circuit classes without the restriction on the number of random input bits, but we have not yet adapted those bounds to our setting; See \Cref{sec:discussion_opn_problems}. As a first step, in \Cref{apx:unlimited-input} we consider classical circuits with an \textit{unlimited} number of inputs but that have bounded fan-in \emph{and} fan-out. In this setting we show there is a distribution $D_n'$ that $\QNC^0$ circuits with quantum advice can approximately sample, but that classical circuits cannot sample in constant depth.

\begin{figure}[ht]
\begin{center}
    \small
    \begin{tabular}{ l| c c c c }
        \multirow{2}{4em}{\textit{Problem}}& \textit{classical} & \textit{constant} & \multirow{2}{6em}{\textit{unconditional}} & \textit{input-} \\ 
        & \textit{hardness} & \textit{depth} & & \textit{independent} \\
        \hline
        Factoring~\cite{shor1999polynomial} & \polytime & \xmark \footnotemark & \xmark & \xmark \\ 
        Sampling depth-3 quantum circuits \cite{terhal2002adaptive,aaronson2005quantum} & \polytime & \cmark & \xmark & \xmark \\
        Random Circuit Sampling \cite{bouland2018quantum,aaronson2016complexity,boixo2018characterizing} & \polytime & \cmark & \xmark & \xmark \\
        2D-HLF \cite{bravyi2018quantum}& \NCZ & \cmark & \cmark  & \xmark \\
        This work & \NCZ & \cmark & \cmark & \cmark 
    \end{tabular}
\end{center}
\caption{Table comparing a few different computational problems with either conditional or unconditional proof of quantum advantage.}
\end{figure}
\subsection{Technical Overview}
\label{subsec:tech_overview}

The distribution used to prove \Cref{thm:main_introduction} is a variation of the distribution $(X, \majmod{p}(X))$, where 
the function $\majmod{p}$ (``Majority mod $p$'') is defined as
\begin{align}
    \majmod{p}(x) =
    \begin{cases}
        0 & \text{if } |x| < p/2 \mod p \\
        1 & \text{if } |x| > p/2 \mod p
    \end{cases}
    && 
    \text{for each } x \in \{0,1\}^{n-1}, \text{ and prime } p.
\end{align}
Viola introduced the $\majmod{p}$ function in \cite{viola2012complexity}. In the same paper he showed a hardness of sampling result for the distribution $(X, \majmod{p}(X))$ similar to Item 2 of \Cref{thm:main_introduction}.

To illustrate some key ideas used in the proof of \Cref{thm:main_introduction}, we first sketch the proof of a weaker sampling separation which holds when $\QNC^0$ circuits take as input a GHZ state $\ket{\GHZ_n} = 1/\sqrt{2}\pbra{\ket{0^n} + \ket{1^n}}$. We refer to these circuits as $\QNC^0$ states with GHZ \textit{advice}. The statement of this separation is as follows. 
\footnotetext{Factoring \textit{can} be accomplished in logarithmic depth on a quantum computer with polynomial time classical post-processing~\cite{cleve2000fast}, or in constant-depth on quantum computer with unbounded fanout gates~\cite{hoyer2005quantum} or intermediate measurements~\cite{browne2010computational}, again with classical post-processing.}

\begin{thm}[Separation with \GHZ advice]\label{thm:sep_with_GHZ_introduction}
    For each $n\in \NN$, $\delta \in [0,1)$, there exists a prime $p$ such~that 
    \begin{enumerate}[label = (\arabic*)]
        \item \label{item:thm:withadvice:quantum-part} There exists a constant-depth quantum circuit that takes the $\GHZ_n$ state as input and produces a distribution which has total variation distance at most $1/6 + O(n^{-c})$ from $(X, \majmod{p}(X) \oplus \parity(X))$ for some $c\in (0,1)$.
        \item \label{item:thm:withadvice:classical-part}Each classical circuit with bounded fan-in which takes $n + n^\delta$ random bits as input and has total variation distance at most $\frac{1}{2} - \omega(1/\log n)$ from $(X, \majmod{p}(X) \oplus \parity(X))$ has depth at least $\Omega(\log\log(n))$.
    \end{enumerate}
\end{thm}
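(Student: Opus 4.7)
The plan is to prove the two items separately, following the technical overview already sketched.

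For item \ref{item:thm:withadvice:quantum-part} (the quantum construction), I would first rigorize the non-unitary pseudo-circuit of the overview. Starting from $\ket{\GHZ_n}$, apply $H^{\otimes n}$ to produce $\ket{\EVEN_n}$, then apply $\NUrot{\theta}^\dagger$ to the first $n-1$ qubits and measure in the computational basis. Using the identities in \eqref{eq:NUcircuit_identities} and the fact that single-qubit $X$-rotations on any qubit of $\ket{\EVEN_n}$ can be collected onto one fixed qubit, the outcome distribution has the first $n-1$ bits uniform in $\{0,1\}^{n-1}$ and the last bit $b$ satisfying $\Pr[b = \parity(x) \mid x] = \cos^2(\theta |x|)$. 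Choosing $\theta = \pi/(2p)$ for a suitable prime $p$ makes $b$ a biased estimator of $\majmod{p}(x)\oplus\parity(x)$, and averaging $1 - \max(\cos^2(\theta |x|), \sin^2(\theta|x|))$ over the distribution of $|x|$ (which concentrates around $n/2$ by Chernoff) yields total variation distance $1/6 + O(n^{-c})$ from the target.

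To convert this pseudo-circuit into a genuine constant-depth unitary circuit, I would follow the outline at the end of the overview: partition the first $n-1$ qubits of $\ket{\GHZ_n}$ into blocks of size $m$ and replace each $\NUrot{\theta}^{\otimes m}$ by the multi-qubit operator $\multiNUrot{m}{\theta}$, which agrees with $\NUrot{\theta}^{\otimes m}$ on $\mathrm{span}\{\ket{0^m}, \ket{1^m}\}$ (the only vectors reached by acting on a block of $\ket{\GHZ_n}$) and is $O(1/\sqrt{m})$-close to a unitary. Taking $m = n^{c'}$ for a small $c'>0$, and approximating $\multiNUrot{m}{\theta}$ by its nearest unitary implemented via a small constant-depth circuit, keeps the whole construction constant-depth and drives the TVD penalty from non-unitarity down to $O(n^{-c})$. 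The main technical step in Part 1 is bounding the output TVD by propagating the $O(1/\sqrt{m})$ operator-norm error through the circuit.

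For item \ref{item:thm:withadvice:classical-part} (the classical lower bound), I would adapt Viola's restriction-based argument~\cite{viola2012complexity}. In a fan-in-$2$ depth-$d$ classical circuit each output bit depends on at most $2^d$ input bits, which is $o(\log n)$ when $d = o(\log\log n)$. In particular, the last output bit $y_n$ depends on a set $S$ of $o(\log n)$ inputs, and these same inputs reach at most $o(\log^2 n)$ other outputs. Conditioning on the inputs in $S$ fixes $y_n$ to a constant while leaving the conditional distribution of the remaining $n-1$ outputs close to that of an $\NCZ$ circuit on uniformly random bits, with only $o(\log^2 n)$ outputs perturbed. A Fourier-analytic argument in the spirit of the central lemma of~\cite{viola2012complexity} then shows that the Hamming weight modulo $p$ of $(y_1,\ldots,y_{n-1})$ is within $o(1/\log n)$ TVD of uniform on $\mathbb{Z}/p\mathbb{Z}$, and hence $\majmod{p}(y_1,\ldots,y_{n-1})$ is nearly independent of $y_n\oplus\parity(y_1,\ldots,y_{n-1})$. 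This contradicts the hypothesized closeness to the target distribution at the $1/2 - \omega(1/\log n)$ level. The chief obstacle is quantitative sharpness: obtaining the $\omega(1/\log n)$ slack in the presence of $n^\delta$ extra random input bits requires careful accounting of the input-output dependency graph together with a tight Fourier estimate for correlations between local Boolean functions and the mod-$p$ sum of a near-uniform string.
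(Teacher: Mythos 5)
Your overall route is the same as the paper's (self-controlled non-unitary rotations applied to the Hadamard-transformed $\GHZ$ state, then unitarization by grouping qubits into blocks; a Viola-style locality argument for the lower bound), but several of the concrete choices you make would not yield the theorem. On the quantum side: with $\theta=\pi/(2p)$ and no offset, $\Pr[b=\parity(x)]=\cos^2\!\big(\pi|x|/(2p)\big)$ has period $2p$ in $|x|$, while $\majmod{p}$ has period $p$; on residues $|x|\bmod 2p\in[p,2p)$ the output bit is \emph{anti}-correlated with $\majmod{p}(x)\oplus\parity(x)$, and since $|x|\bmod 2p$ is essentially equidistributed for $p=n^{c}$ with $c<1/2$ (the relevant fact is equidistribution of $|x|$ modulo $p$, with error $\sqrt{p}\,e^{-\Omega(n/p^2)}$ --- not Chernoff concentration of $|x|$ near $n/2$, which is irrelevant at this scale), the correlated and anti-correlated ranges cancel and the resulting TVD tends to $1/2$, i.e.\ no advantage at all. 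The working choice is $\theta=\pi/p$ together with an extra $\exp(-i\pi X/4)$ on the last qubit, which gives failure probability $\cos^2(-\pi/4+\pi|x|/p)$ or $\sin^2(-\pi/4+\pi|x|/p)$ and an average of $1/2-1/\pi+O(1/p)$. The unitarization step is also off: the $m$-qubit surrogate $A_{m,\theta}$ is $O(\theta^{m})$-close to a unitary (not $O(1/\sqrt{m})$), so one takes $m$ a \emph{constant} exceeding $1/c$ and pays $O(n\cdot n^{-mc})=O(n^{-c})$ in total; taking $m=n^{c'}$ makes $U_{m,\theta}$ a gate on polynomially many qubits, which cannot be compiled in constant depth from bounded-arity gates, so the circuit would no longer be $\QNCZ$. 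Moreover the gates act \emph{after} the Hadamards, so each block of the state is a block of the even-parity superposition, supported on all of $\{0,1\}^m$ rather than on $\mathrm{span}\{\ket{0^m},\ket{1^m}\}$; the substitution of $A_\theta^{\otimes m}$ by $A_{m,\theta}$ is justified by a circuit identity (commuting $e^{i\theta X}$ through the CNOT ladder, conditioned on computational-basis outcomes), not by agreement on a two-dimensional subspace.

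On the classical side the sketch omits the mechanism that makes the argument go through, and one key claim is false as stated. Inputs have unbounded fan-out in this model, so the inputs feeding the last output bit may influence arbitrarily many outputs ("only $o(\log^2 n)$ outputs perturbed" is unjustified, though also unnecessary). More seriously, after fixing those inputs it is not true in general that the Hamming weight mod $p$ of the remaining outputs is $o(1/\log n)$-close to uniform: a local sampler may fix most output blocks (e.g.\ output near-constant strings), making the weight mod $p$ far from uniform. The proof has to rule such samplers out by a statistical test with several branches: a test for too many $y$-fixed blocks, whose analysis is an entropy/counting bound $|T_F|\le 2^{\ell-N_F}$ and is precisely where the hypothesis $\ell\le n+n^{\delta}$ enters; a test for too few all-zero blocks; and the "incorrect string" test --- all built on the block decomposition $f(x,y)=g_1(x_1,y)\circ\cdots\circ g_s(x_s,y)\circ h(y)$ that gives independent block weights once $y$ is fixed. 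Finally, because the target's last bit is $\majmod{p}\oplus\parity$ rather than $\majmod{p}$, one needs a dedicated lemma analyzing the joint distribution of the mod-$p$ sum and the parity of a sum of independent two-valued block increments (with separate cases according to whether the parity increments are even or odd); asserting that $\majmod{p}$ is "nearly independent" of the parity-corrected last bit is the conclusion of that lemma, not a step one can take for granted.
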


We sketch the proof of Items (1) and (2) of the above theorem in the next two subsections. 

\subsubsection{Approximately sampling from $(X, \majmod{p}(X) \oplus \parity(X))$ with GHZ advice}

We first introduce some non-unitary ``self-controlled" rotation operations, then describe a circuit that approximately samples from $(X, \majmod{p}(X) \oplus \parity(X))$ using those operations. We close this subsection by showing that it is possible to approximate those non-unitary operations with unitary gates. 

% In particular, we strengthen the computational power of the quantum circuit by providing it with two additional capabilities: 
% \begin{enumerate}
%     \item \textit{Non-unitary gates}: we construct a quantum circuit that makes use of single-qubit ``gates'' which are non-unitary linear operations.
%     \item \textit{Advice:} we give, as input to the circuit, a \GHZ state: $\ket{\GHZ_n} = \frac{1}{\sqrt{2}}\pbra{\ket{0^n} + \ket{1^n}}$.
% \end{enumerate}
% With these two resources granted, we construct a rather simple quantum pseudo-circuit which samples from a distribution closely related to the $(X, \majmod{p}(X))$ distribution and requires no auxiliary qubits. 

% We proceed by introducing our non-unitary gates, and illustrating our circuit construction with a $4$-qubit example. 

\paragraph{Introducing non-unitary gates} 
We introduce the following single-qubit non-unitary operator.
\begin{align}
    \NUrot{\theta} := \ketbra{0}{0} + e^{-i\theta X} \ketbra{1}{1}, && \theta \in \RR
\end{align}
It is straightforward to see that $\NUrot{\theta}$ is linear, but not unitary. $\NUrot{\theta}$ can be interpreted as a ``self-controlled'' $X$ rotation gate. That is, applied to the $\ket{0}$ state, it acts as the identity, and on the $\ket{1}$ state, an $e^{i\theta X}$ is applied. For this reason, it is convenient to draw the gate and its adjoint as

\begin{center}
    \begin{tikzpicture}
 % Add label "A ="
    \node at (-2.5, 0) {$A_\theta =$};
    % \node at (-2.5, 0) {};
    \node[left] (in) at (-1.5,0) {};
    \node[right] (out) at (1.5,0) {};
    \node[draw, rectangle, minimum height=0.5cm, anchor=center] (A) at (0,0) {$e^{-i\theta X}$};
        
    % control node
    \node[draw, circle, fill=black, inner sep=1pt] (c) at (-1,0) {};
    % Qubit wire
  \draw (in) -- (A) -- (out);
  \draw (c) to[out=90, in=90] (A);
\end{tikzpicture} \qquad
    \begin{tikzpicture}
 % Add label "A ="
    \node at (-2.5, 0) {$A_\theta^\dagger =$};
    % \node at (-2.5, 0) {};
    \node[left] (in) at (-1.5,0) {};
    \node[right] (out) at (1.5,0) {};
    \node[draw, rectangle, minimum height=0.5cm, anchor=center] (A) at (0,0) {$e^{i\theta X}$};
    
    % control node
    \node[draw, circle, fill=black, inner sep=1pt] (c) at (1,0) {};
    % Qubit wire
  \draw (in) -- (A) -- (out);
  \draw (c) to[out=90, in=90] (A);
\end{tikzpicture}.
\end{center}
Upon post-selection on the output of $\NUrot{\theta}^\dagger$ in the computational basis, we can simplify these operations as follows:
\begin{equation}\label{eq:NUcircuit_identities}
    \begin{tikzpicture}
     % post-selecting on 0
    \node[left] (in) at (-1.5,0) {};
    \node[right] (out) at (1.5,0) {$\bra{0}$};
    \node[draw, rectangle, minimum height=0.5cm, anchor=center] (A) at (0,0) {$e^{i\theta X}$};
    
    % control node
    \node[draw, circle, fill=black, inner sep=1pt] (c) at (1,0) {};
    % Qubit wire
    \draw (in) -- (A) -- (out);
    \draw (c) to[out=90, in=90] (A);

    \node at (3,0) {$=$};
    \node[left] (rin) at (3.8,0) {};
    \node[right] (rout) at (6.2,0) {$\bra{0}$};
    \draw (rin) -- (rout);

    % post-selecting on 1
    \node[left] (in) at (-1.5,-1) {};
    \node[right] (out) at (1.5,-1) {$\bra{1}$};
    \node[draw, rectangle, minimum height=0.5cm, anchor=center] (A) at (0,-1) {$e^{i\theta X}$};
    
    % control node
    \node[draw, circle, fill=black, inner sep=1pt] (c) at (1,-1) {};
    % Qubit wire
    \draw (in) -- (A) -- (out);
    \draw (c) to[out=90, in=90] (A);

    \node at (3,-1) {$=$};
    \node[left] (rin) at (3.8,-1) {};
    \node[right] (rout) at (6.2,-1) {$\bra{1}$};
    \node[draw, rectangle, minimum height=0.5cm, anchor=center] (rA) at (5,-1) {$e^{i\theta X}$};
    \draw (rin) -- (rA) -- (rout);

\end{tikzpicture}
\end{equation}

\paragraph{Approximate sampling with a non-unitary circuit}
\newcommand{\shrink}[1]{\scalebox{.9}{#1}}
In this section, we aim to illustrate how together these non-unitary gates, and $\GHZ$ advice can be used to produce an $n$-bit distribution where the first $(n-1)$ bits are uniformly random, and the final bit is a function of the Hamming weight of the first $(n-1)$ which approximates $(X, \majmod{p}(X) \oplus \parity(X))$.  
% Once endowed with a \GHZ state and our $\NUrot{\theta}$ gates, one can construct a rather simple circuit towards this goal. 
We proceed by walking through an example on $n=4$ qubits.

Starting with the $\ket{\GHZ_4}$ state, we apply a Hadamard gate to each qubit, and then apply our ``self-controlled'' rotation gates $\NUrot{\theta}^\dagger$ to all but the last qubit as shown on the left hand side below. To analyze the output distribution of this circuit we proceed with a series of circuit identities. The first is the fact that  $H^{\otimes n}$ maps the state $\ket{\GHZ_n}$ to the $\ket{\textrm{EVEN}_n}$ state. Here, the $\EVEN_n$ state denotes the uniform superposition over all even $n$-bit strings.

\begin{center}
    \shrink{\begin{tikzpicture}
  % Add curly braces to enclose the diagram
  % \draw[decorate,decoration={brace,amplitude=4pt,raise=2pt, mirror}] (-3.5,-0.5) -- (-3.5,-4.5);
  % \draw[decorate,decoration={brace, amplitude=8pt, mirror}] (-2,0) -- (-2,-4);
  % \node[left] (inp) at (-2.5, -2) {$\ket{\mathrm{EVEN}_4}$};

  \foreach \y in {0,-1,-2}
  {
    % Add label "A ="
    % \node at (-2.5, \y) {$A_\theta^\dagger =$};
    \node at (-2.5, \y) {};
    \node[left] (in) at (-1.5,\y) {};
    \node[draw, rectangle] (H) at (-1, \y) {$H$};
    \node[right] (out) at (1.5,\y) {};
    \node[draw, rectangle, minimum height=0.5cm, anchor=center] (A) at (0,\y) {$e^{i\theta X}$};

    % Control node
    \node[draw, circle, fill=black, inner sep=1pt] (c) at (1,\y) {};

    % Qubit wire
    \draw (in) -- (H) -- (A) -- (out);
    \draw (c) to[out=90, in=90] (A);
  }
  \def\y{-3}
  \node at (-2.5, \y) {};
    \node[left] (in) at (-1.5,\y) {};
    \node[draw, rectangle] (H) at (-1, \y) {$H$};
    \node[right] (out) at (1.5,\y) {};
    \node[draw, rectangle, minimum height=0.5cm, anchor=center] (coll) at (0.25,\y) {$e^{-i \pi X / 4}$};
    % Qubit wire
    \draw (in) -- (H) -- (coll) -- (out);

    \node at (2, -1.5) {$=$};
    \def\xshift{6.5}
    \node[left] (inp) at (-2.2, -1.5) {$\ket{\GHZ_4}$};
    \draw[decorate,decoration={brace, amplitude=8pt, mirror}] (-2,0) -- (-2,-3);

      % Add curly braces to enclose the diagram
  % \draw[decorate,decoration={brace,amplitude=4pt,raise=2pt, mirror}] (-3.5,-0.5) -- (-3.5,-4.5);
  \draw[decorate,decoration={brace, amplitude=8pt, mirror}] (-2+\xshift,0) -- (-2+\xshift,-3);
  \node[left] (inp) at (-2.2+\xshift, -1.5) {$\ket{\mathrm{EVEN}_4}$};

  \foreach \y in {0,-1,-2}
  {
    % Add label "A ="
    % \node at (-2.5, \y) {$A_\theta^\dagger =$};
    \node at (-2.5 +\xshift, \y) {};
    \node[left] (in) at (-1.5+\xshift,\y) {};
    \node[right] (out) at (1.5+\xshift,\y) {};
    \node[draw, rectangle, minimum height=0.5cm, anchor=center] (A) at (0+\xshift,\y) {$e^{i\theta X}$};

    % Control node
    \node[draw, circle, fill=black, inner sep=1pt] (c) at (1+\xshift,\y) {};

    % Qubit wire
    \draw (in) -- (A) -- (out);
    \draw (c) to[out=90, in=90] (A);
  }
  \def\y{-3}
  \node at (-2.5+\xshift, \y) {};
    \node[left] (in) at (-1.5+\xshift,\y) {};
    \node[right] (out) at (1.5+\xshift,\y) {};
    \node[draw, rectangle, minimum height=0.5cm, anchor=center] (coll) at (0.25+\xshift,\y) {$e^{-i \pi X / 4}$};
    % Qubit wire
    \draw (in) -- (coll) -- (out);
\end{tikzpicture}} 
\end{center}
Suppose we measure the first $3$ qubits in the $\{\ket{0}, \ket{1}\}$ basis, getting outcomes $x_1, x_2, x_3, x_4 \in \zo$. Using the circuit identities in \Cref{eq:NUcircuit_identities}, we have
\begin{center}
    \shrink{\begin{tikzpicture}
  % Add curly braces to enclose the diagram
  % \draw[decorate,decoration={brace,amplitude=4pt,raise=2pt, mirror}] (-3.5,-0.5) -- (-3.5,-4.5);
  \node[left] (inp) at (-2, -1.5) {$\ket{\mathrm{EVEN}_4}$};

  \foreach \y [count=\idx] in {0, -1, -2}
  {
    % Add label "A ="
    % \node at (-2.5, \y) {$A_\theta^\dagger =$};
    \node at (-1.5, \y) {};
    \node[left] (in\y) at (-1,\y) {};
    \node[right] (out) at (1.5,\y) {$\bra{x_{\idx}}$};
    \node[draw, rectangle, minimum height=0.5cm, anchor=center] (A) at (0,\y) {$e^{i\theta X}$};

    % Control node
    \node[draw, circle, fill=black, inner sep=1pt] (c) at (1,\y) {};

    % Qubit wire
    \draw (in\y) -- (A) -- (out);
    \draw (c) to[out=90, in=90] (A);
  }
  \def\y{-3}
  \node at (-1.5,  \y) {};
    \node[left] (in\y) at (-1,\y) {};
    \node[draw, rectangle, minimum height=0.5cm, anchor=center] (coll) at (0.25,\y) {$e^{-i \pi X / 4}$};
    \node[right] (out) at (1.5,\y) {};
    % \node[draw, rectangle, minimum height=0.5cm, anchor=center] (A) at (0,-4) {$e^{i\theta X}$};
    % Control node
    % \node[draw, circle, fill=black, inner sep=1pt] (c) at (1,-4) {};
    % Qubit wire
    \draw (in\y)  -- (coll) -- (out);
    % \draw (c) to[out=90, in=90] (A);

    % braket
  \draw[left,decorate,decoration={brace, amplitude=8pt, mirror}] (in0) -- (in\y);

  \node (eq) at (3, -1.5) {$=$};

    % now RHS

  \foreach \y [count=\idx] in {0, -1, -2}
  {
    % Add label "A ="
    % \node at (-2.5, \y) {$A_\theta^\dagger =$};
    % \node at (4, \y) {};
    \node[left] (in\y) at (6,\y) {};
    \node[right] (out) at (8.5,\y) {$\bra{x_{\idx}}$};
    \node[draw, rectangle, minimum height=0.5cm, anchor=center] (A) at (7,\y) {$e^{i\theta x_{\idx} X}$};

    \draw (in\y) -- (A) -- (out);

    % Control node
    % \node[draw, circle, fill=black, inner sep=1pt] (c) at (8,\y) {};
    % \draw (c) to[out=90, in=90] (A);

}
    % Qubit wire
    \def\y{-3}
  \node at (-1.5,  \y) {};
    \node[left] (in\y) at (6,\y) {};
    \node[draw, rectangle, minimum height=0.5cm, anchor=center] (coll) at (7.115,\y) {$e^{-i \pi X / 4}$};
    \node[right] (out) at (8.5,\y) {};
    % \node[draw, rectangle, minimum height=0.5cm, anchor=center] (A) at (0,-4) {$e^{i\theta X}$};
    % Control node
    % \node[draw, circle, fill=black, inner sep=1pt] (c) at (1,-4) {};
    % Qubit wire
    \draw (in\y)  -- (coll) -- (out);
    % \draw (c) to[out=90, in=90] (A);

    % braket
  \draw[left,decorate,decoration={brace, amplitude=8pt, mirror}] (in0) -- (in\y);
    % \draw (c) to[out=90, in=90] (A);
  \node[left] (inp) at (5.5, -1.5) {$\ket{\mathrm{EVEN}_4}$};
\end{tikzpicture}

% \begin{tikzpicture}
%   % Add curly braces to enclose the diagram
%   % \draw[decorate,decoration={brace,amplitude=4pt,raise=2pt, mirror}] (-3.5,-0.5) -- (-3.5,-4.5);
%   \node[left] (inp) at (-2, -2) {$\ket{\mathrm{EVEN}_4}$};

%   \foreach \y/\lab in {0/$\bra{x_1}$,-1/$\bra{x_2}$,-2/$\bra{x_3}$,-3/$\bra{x_4}$,-4/{}}
%   {
%     % Add label "A ="
%     % \node at (-2.5, \y) {$A_\theta^\dagger =$};
%     \node at (-2, \y) {};
%     \node[left] (in\y) at (-1.5,\y) {};
%     \node[right] (out) at (1.5,\y) {\lab};
%     \node[draw, rectangle, minimum height=0.5cm, anchor=center] (A) at (0,\y) {$e^{i\theta X}$};

%     % Control node
%     \node[draw, circle, fill=black, inner sep=1pt] (c) at (1,\y) {};

%     % Qubit wire
%     \draw (in\y) -- (A) -- (out);
%     \draw (c) to[out=90, in=90] (A);
%   }
%   \draw[decorate,decoration={brace, amplitude=8pt, mirror}] (-2,0) -- (-2,-4);
% \end{tikzpicture}}.
\end{center}
Next, we observe that if you apply the Pauli-$X$ operator to any single qubit of the $\ket{\EVEN_n}$ state, it becomes the $\ket{\mathrm{ODD}_n}$ state. Therefore, it has the same effect as if we instead applied $X$ to the last qubit. The same is true for $X$-rotation gates, so we can push all gates down to the last qubit.

\begin{center}
    \shrink{\begin{tikzpicture}
        %%%%%%%%%%%% LHS %%%%%%%%%%
        \def\xshift{0}
        \node[left] (inp) at (-2+\xshift, -1.5) {$\ket{\mathrm{EVEN}_4}$};
        
        \foreach \y [count=\idx] in {0, -1, -2}
        {
            % Add label "A ="
            % \node at (-2.5, \y) {$A_\theta^\dagger =$};
            \node at (-1.5+\xshift, \y) {};
            \node[left] (in\y) at (-1+\xshift,\y) {};
            \node[right] (out) at (1.5+\xshift,\y) {$\bra{x_{\idx}}$};
            \node[draw, rectangle, minimum height=0.5cm, anchor=center] (A) at (0,\y) {$e^{i\theta x_{\idx} X}$};

            % Qubit wire
            \draw (in\y)  -- (A) -- (out);
        }
        \def\y{-3}
        \node at (-1.5+\xshift,  \y) {};
        \node[left] (in\y) at (-1+\xshift,\y) {};
        \node[right] (out) at (1.5+\xshift,\y) {};
        % \node[draw, rectangle, minimum height=0.5cm, anchor=center] (A) at (1.5,\y) {$e^{i\theta X}$};
        \node[draw, rectangle, minimum height=0.5cm, anchor=center] (coll) at (0.115+\xshift,\y) {$e^{-i\pi X/4}$};
        % Control node
        % \node[draw, circle, fill=black, inner sep=1pt] (c) at (2.5,\y) {};
        % Qubit wire
        \draw (in\y) -- (coll) -- (out);
        % \draw (c) to[out=90, in=90] (A);
        
        % braket
        \draw[left,decorate,decoration={brace, amplitude=8pt, mirror}] (in0) -- (in\y);
    
        \node (eq) at (3, -1.5) {$=$};

  % %%%%%%%% now RHS %%%%%%%%

        \def\xshift{7}
        \node[left] (inp) at (-2+\xshift, -1.5) {$\ket{\mathrm{EVEN}_4}$};
    
        \foreach \y [count=\idx] in {0, -1, -2}
        {
            % Add label "A ="
            % \node at (-2.5, \y) {$A_\theta^\dagger =$};
            \node at (-1.5+\xshift, \y) {};
            \node[left] (in\y) at (-1+\xshift,\y) {};
            \node[right] (out) at (1.5+\xshift,\y) {$\bra{x_{\idx}}$};
            % Qubit wire
            \draw (in\y)  -- (out);
        }
        \def\y{-3}
        \node at (-1.5+\xshift,  \y) {};
        \node[left] (in\y) at (-1+\xshift,\y) {};
        \node[right] (out) at (1.5+\xshift,\y) {};
        % \node[draw, rectangle, minimum height=0.5cm, anchor=center] (A) at (1.5,\y) {$e^{i\theta X}$};
        \node[draw, rectangle, minimum height=0.5cm, anchor=center] (coll) at (0.25+\xshift,\y) {$e^{i (\theta |x| - \pi/4)X}$};
        % Control node
        % \node[draw, circle, fill=black, inner sep=1pt] (c) at (2.5,\y) {};
        % Qubit wire
        \draw (in\y) -- (coll) -- (out);
        % \draw (c) to[out=90, in=90] (A);
    
        % braket
        \draw[left,decorate,decoration={brace, amplitude=8pt, mirror}] (in0) -- (in\y);
    \end{tikzpicture}}
\end{center}

Finally, we note that the $\ket{\EVEN_n}$ state can be constructed by first initializing the first $n-1$ qubits to the $\ket{+} = \frac{1}{\sqrt{2}}\pbra{\ket{0} +\ket{1}}$ state, and the final qubit in the $\ket{0}$ state, and subsequently computing the parity of the first $n-1$ qubits into the final register
% \footnote{Parity can be implemented with a sequence of $\CNOT$ gates: $\prod_{i=1}^{n-1} \CNOT_{i, n}\ket{x}\ket{0} = \ket{x}\ket{\parity(x)}$.}
.
\begin{center}
    \shrink{\begin{tikzpicture}
        \node[left] (inp) at (-2, -1.5) {$\ket{\mathrm{EVEN}_4}$};

  \foreach \y [count=\idx] in {0, -1, -2}
  {
    % Add label "A ="
    % \node at (-2.5, \y) {$A_\theta^\dagger =$};
    \node at (-1.5, \y) {};
    \node[left] (in\y) at (-1,\y) {};
    \node[right] (out) at (1.5,\y) {$\bra{x_{\idx}}$};
    % Qubit wire
    \draw (in\y)  -- (out);
  }
  \def\y{-3}
  \node at (-1.5,  \y) {};
    \node[left] (in\y) at (-1,\y) {};
    \node[right] (out) at (1.5,\y) {};
    % \node[draw, rectangle, minimum height=0.5cm, anchor=center] (A) at (1.5,\y) {$e^{i\theta X}$};
    \node[draw, rectangle, minimum height=0.5cm, anchor=center] (coll) at (0.25,\y) {$e^{i (\theta |x| - \pi/4)X}$};
    % Control node
    % \node[draw, circle, fill=black, inner sep=1pt] (c) at (2.5,\y) {};
    % Qubit wire
    \draw (in\y) -- (coll) -- (out);
    % \draw (c) to[out=90, in=90] (A);

    % braket
  \draw[left,decorate,decoration={brace, amplitude=8pt, mirror}] (in0) -- (in\y);

  \node (eq) at (3.3, -1.5) {$=$};

% RHS:
  \def\xshift{6}

  \foreach \y [count=\idx] in {0, -1, -2}
  {
    % Add label "A ="
    % \node at (-2.5, \y) {$A_\theta^\dagger =$};
    \node at (-1.5 + \xshift, \y) {};
    \node[left] (in\y) at (-1+\xshift,\y) {$\ket{+}$};
    \node[right] (out) at (1.5+\xshift,\y) {$\bra{x_{\idx}}$};
    % Qubit wire
    \draw (in\y)  -- (out);
  }
  \def\y{-3}
  \node at (-1.5+\xshift,  \y) {};
    \node[left] (in\y) at (-1+\xshift,\y) {$\ket{\mathrm{parity}(x)}$};
    \node[right] (out) at (1.5+\xshift,\y) {};
    % \node[draw, rectangle, minimum height=0.5cm, anchor=center] (A) at (1.5+\xshift,\y) {$e^{i\theta X}$};
    \node[draw, rectangle, minimum height=0.5cm, anchor=center] (coll) at (0.25+\xshift,\y) {$e^{i (\theta |x| - \pi/4)X}$};
    % Control node
    % \node[draw, circle, fill=black, inner sep=1pt] (c) at (2.5+\xshift,\y) {};
    % Qubit wire
    \draw (in\y) -- (coll) -- (out);
    % \draw (c) to[out=90, in=90] (A);

    % braket
    \end{tikzpicture}}
\end{center}
And so our measurement outcomes on the first $n-1$ bits are uniformly random, as desired. As for the last qubit---let $b\in \zo$ be the outcome after measuring the last qubit in the standard basis. Then we have that
\begin{align}
    \Pr[b = \parity(x)] = \cos^2\pbra{\theta |x| - \pi / 4}.
\end{align}
This function is periodic in the Hamming weight of $x$, with periodicity $\pi/\theta$. If we set $\theta = \pi/p$ then the output bit $b$ approximately correlates with $\majmod{p}(x) \oplus \parity(x)$, as can be verified analytically or by inspection of the following figures: 

\begin{figure}[H]
    \centering
    \begin{subfigure}{0.45\textwidth}
    \begin{tikzpicture}
        \begin{axis}[
            xlabel={$|x|$},
            width=\textwidth,
            height=0.2\textheight,
            xmin=0, xmax=1,
            ymin=0, ymax=1.1,
            xtick={0.0, 0.25, 0.5, 0.75, 1.0},
            ytick={1/2, 1},
            xticklabels={$0$, $p/4$, $p/2$, $3p/4$, $p$},
            yticklabels={$1/2$, $1$},
            legend style={at={(0.5,1.1)}, anchor=south},
        ]
        \addplot [
            domain=0:1, 
            samples=100, 
            color = cb-blue-green,
            style =  ultra thick,
            ]
            {cos(deg(pi * x - pi/4))^2};
        \addlegendentry{\(\cos^2\pbra{\pi |x|/p - \pi / 4} \)}

        \addplot[
            color=cb-lilac,
            style =  ultra thick,
            ]
            coordinates {
            (0,0)(1/2, 0)(1/2,1)(1,1)
            };
        \addlegendentry{$\majmod{p}(|x|)$}
        \end{axis}
        \end{tikzpicture}
        % \caption{Inverse correlation of $\cos^2\pbra{\pi |x|/p - \pi / 4}$ and~$\majmod{p}(x)$.}
        % \label{subfig:correlation_graph_NU}
    \end{subfigure}
    \hfill
    \begin{subfigure}{0.45\textwidth}
        \begin{tikzpicture}
            \begin{axis}[
                width=\textwidth,
                height=0.2\textheight,
                xlabel={$|x|$},
                xmin=0, xmax=1,
                ymin=0, ymax=1.1,
                xtick={0.0, 0.25, 0.5, 0.75, 1.0},
                ytick={1/2, 1},
                xticklabels={$0$, $p/4$, $p/2$, $3p/4$, $p$},
                yticklabels={$1/2$, $1$},
                legend style={at={(0.5,1.1)}, anchor=south},
            ]
            \addplot [
                domain=0:1/2, 
                samples=50, 
                color = cb-salmon-pink,
                style =  ultra thick,
                ]
                {1 - cos(deg(pi * x - pi/4))^2};
            \addlegendentry{$\Pr[b\neq \majmod{p}(x)\oplus \parity(x)]$}
    
            \addplot [
                domain=1/2:1, 
                samples=50, 
                color = cb-salmon-pink,
                style =  ultra thick,
                ]
                {cos(deg(pi * x - pi/4))^2};
            \end{axis}
            \end{tikzpicture}
            % \caption{Probability that $b \neq \majmod{p}(x) + \parity(x)$  provided $\theta = \pi/p$.}
            % \label{subfig:prob_incorrect_for_x_ideal}
    \end{subfigure}
        % \caption{Plots displaying the correlation of the bit $b$ in the circuits above and the function $\majmod{p}(x) + \parity(x)$ provided $\theta = \pi/p$.} 
        % \label{fig:prob_parity_vs_mm_intro}
\end{figure}

\paragraph{Converting to Unitary}
Our non-unitary circuits are helpful for initial circuit design, but we need to somehow port them back over to be unitary --- while maintaining their low-depth. To this end, we make use of the following two insights to construct a unitary circuit such that the output distribution is very close to that of the non-unitary circuit.
\begin{enumerate}
    \item We do not need to find a unitary that is close to the circuit (In fact, this is likely not possible). It is sufficient to instead find a unitary that has the same behavior with respect to its action on the $\GHZ$ state.
    \item We introduce a multi-qubit non-unitary gate $\NUrot{m, \theta}$ acting on $m$-qubits that has the same action as $\NUrot{\theta}^{\otimes m}$ when applied to the $\GHZ$ state, and becomes closer to unitary as $m$ increases.
\end{enumerate}
In \Cref{sec:GHZ_state_sampling} we make this outline rigorous to construct a quantum circuit that, with advice, samples approximately from the distribution $(X, \majmod{p}(X)\oplus \parity(X))$, where $X\sim\unif\pbra{\zo^{n-1}}$.

% Before discussing the removal of the $\GHZ$ state, we first provide an overview of techniques for \Cref{item:thm:withadvice:classical-part}, the classical lower bound for $(X, \majmod{p}(X))$.
\subsubsection{Classical circuit lower bound for $(X, \majmod{p}(X) + \parity(X))$} 

The proof of a classical lower bound for the distribution $(X, \majmod{p}(X) + \parity(X))$ closely follows Viola's techniques in \cite{viola2012complexity}. Rather than explicitly lower bounding classical circuit depth, Viola proves lower bounds for the \emph{locality} of functions. To illustrate the relationship between locality and circuit depth let $f: \{0,1\}^\ell \to \{0,1\}^{n}$ be a function implemented by a classical circuit with bounded fan-in.
% attempting to sample from $(X, \majmod{p}\oplus \parity(X))$. 
We say that $f$ is $\localityf$-local if, for each $i\in [n]$, the $i$-th output bit of $f(u)$
depends on at most $\localityf$ bits of the input $u$. Any circuit with depth $d$ and fan-in $\chi$ will have locality at most $\chi^d$. 
And so, to prove a circuit lower bound of $\Omega(\log\log n)$ for sampling from the distribution $(X, \majmod{p}\oplus \parity(X))$ it suffices to prove that there exists some $k>0$ such that any function with locality at most $\Omega(\log^k n)$ cannot sample from the distribution $(X, \majmod{p}\oplus \parity(X))$ given access to uniformly random bits as input. 

% Our lower bound for the distribution $(X, \majmod{p}(X))$ in \Cref{thm:sep_with_GHZ_introduction} follows Viola's ideas closely. However, our lower bound against the distribution used in the main result (\Cref{thm:main_introduction}) requires distinct ideas as a result of the balancing act required to find distributions at the intersection of what is possible quantumly but not classically. 

Both our proof of sampling hardness for $(X, \majmod{p}(X) \oplus \parity(X))$ and Viola's original proof of hardness for $(X, \majmod{p}(X))$ begin with the observation that for any $\localityf$-local function $f: \{0,1\}^\ell \to \{0,1\}^{n}$ there exists a partition of the input $u = (x, y)$ and a permutation of output bits of $f(x,y)$ such that:\footnote{We use ``$\circ$'' to denote concatenation.}
\begin{align}
    f(x,y) = g_1(x_1, y) \circ g_2(x_2, y) \circ \dots \circ g_s(x_s, y) \circ h(y),
\end{align}
where each $g_i(x_i, y)$ is a subset (or ``block'') of the output bits that are completely determined by $y$ and a single bit of $x$, and $s = \Omega(n/\localityf^2)$. Therefore, if we fix the input bits $y$, each of the blocks $g_i$ are independent. Let $z\in \{0,1\}^{n-1}$ be the first $n-1$ outputs of $f(x,y)$ and let $b$ be the final output bit. We can also assume without loss of generality (by absorbing at most one $g_i$ into $h$) that the last output bit is not permuted, so $b$ only depends on $y$.  In order for the function $f$ to sample from the correct distribution the output bits $z$ must be uniformly distributed and, for every input $(x,y)$, we must have $\majmod{p}(z) \oplus \parity(z) = b$. After fixing the input bits $y$, the Hamming weight of $z$ is a sum of independent random variables but $b$ is fixed. Then (still following Viola) we show that if many of these independent variables are fixed the output distribution of $z$ will not have sufficiently high entropy. Alternatively, if they are unfixed, the condition $\majmod{p}(z) \oplus \parity(z) = b$ is unlikely to be satisfied. Making these observations formal completes the proof. The full details of this argument are given in ~\Cref{sec:classical_hardness_with_GHZ}.

\subsubsection{Removing the GHZ advice}

To complete the proof sketch of~\Cref{thm:main_introduction} we need to extend a proof of~\Cref{thm:sep_with_GHZ_introduction} to give a sampling separation \emph{without} $\GHZ$ advice. To do this, we replace the $\GHZ$ state in the quantum circuit used to prove \Cref{thm:sep_with_GHZ_introduction} with a ``Poor-Man's $\GHZ$ state'' (introduced in \cite{watts2019exponential}) defined over a binary tree $\bintree[]$. An $n$ qubit Poor-Man's GHZ state can be prepared by a constant-depth circuit acting on $2n-1$ qubits followed by a measurement of $n-1$ auxiliary qubits. The remaining state is equivalent to the GHZ state with some Pauli terms applied to it. We can determine which Pauli operations will ``correct'' the state back to the GHZ state as a function of our measurement outcomes. However, determining these corrections requires $\Omega(\log n)$ depth -- which we cannot afford in this shallow circuit setting. Instead, we absorb the Pauli corrections into the definition of the target distribution. 

The result is a circuit that samples approximately from a modified version of the $(X, \majmod{p}(X) \oplus \parity(X))$ distribution. Specifically, the new circuit (including the measured auxiliary qubits) approximately samples from a distribution of the form $(X, \MM{p}(S_X) \oplus \parity(X))$  where
\begin{align}
    \MM{p}(j) := \begin{cases}
    0 & \text{if } j  < p/2 \mod p \\
    1 & \text{if } j > p/2  \mod p
    \end{cases} && \text{for } j\in \ZZ.
\end{align}
and $S_X$ is a sum of terms that depends on output bits $X\in \{0,1\}^{n-1}$. In particular, $S_X$ is a weighted sum of the bits of $X$ where the sign of the weight for each bit $X_i$ may depend on many other output variables. In the body of the paper we introduce the notation $\MM{p}(S_X) \oplus \parity(X) : = \pmmajmod{p}(X)$ to describe this new function. 

Unlike the function $\majmod{p}(X) \oplus \parity(X)$, the function $\pmmajmod{p}(X)$ does not just depend on the Hamming weight of $X$. This introduces a complication when trying to show the classical hardness of sampling using Viola's previously discussed lower bounding technique, since this technique relied on the fact that the Hamming weight of $X$ could be written as a sum of the Hamming weights of disjoint blocks $g_i$ of output bits (and that these blocks became independent after fixing enough input bits). To get around this we show that, after fixing additional input bits (and hence some output bits), we can find blocks of unfixed output bits which each depend on disjoint single input bits and which contribute to disjoint terms in the sum $S_X$. After showing this additional detail, the proof of the lower bound proceeds similarly to Viola's. Although perhaps conceptually straightforward, this argument is mathematically delicate, and relies on careful counting related to the binary tree layout used to construct the poor man's $\GHZ$ state. Details of the new circuit construction (with the ``Poor Man's GHZ state'') are given in~\Cref{sec:quantum_circuit_no_GHZ}, while details of the classical lower bound for this new function are given in~\Cref{sec:classical_hardness_main}.

\section{Discussion and Open Problems}
\label{sec:discussion_opn_problems}
Our results show that $\QNCZ$ circuits can sample from distributions that $\NCZ$ circuits cannot. Below we list a few ways in which we think these results could potentially be extended. 

\begin{itemize}

\item In an experiment with the goal of demonstrating quantum advantage, one would like to not just construct a $\QNCZ$ circuit that samples from a distribution which $\NCZ$ circuits cannot, but also \textit{verify} that the distribution sampled from is indeed hard to sample from classically. How many samples are needed for this verification? Can the circuit be modified to make the verification easier? We point out here that the constant total variation distance in \Cref{cor:quantum_classical_TVD} means that only a few samples are needed to verify that the distribution produced by the described quantum circuit is not produced by a fixed $\NCZ$ circuit, for any specific choice of circuit. However, ruling out \textit{all} distributions producible by
$\NCZ$ circuits is a harder~task.

\item The procedure described in \Cref{sec:U_m_theta_compiling} for compiling the $\Urot{m}{\theta}$ unitary is unlikely to produce an ``optimal'' compilation. With careful thought it may be possible to find a more natural compilation technique that produces $\Urot{m}{\theta}$ gates while requiring many fewer elementary gates. Finding such a compilation would likely make an experimental implementation of the circuits described in this paper much more feasible. 

\item Can we get rid of the limitation on the number of inputs to the classical circuit? In \Cref{apx:unlimited-input} we make some progress in this direction. We consider classical circuits with an \textit{unlimited} number of inputs but that have bounded fan-in \emph{and} fan-out. We show that such classical circuits of depth $o(\log \log n)$ produce distributions far from $D = (X, \majmod{p}(X))$. Whereas, as we saw in \Cref{sec:GHZ_state_sampling} constant-depth quantum circuits with bounded fan-in and fan-out, when given a $\GHZ$ advice state, can sample close to~$D$. 

\item Can we prove an input-independent sampling separation between $\QNCZ$ and $\ACZ$ circuits? Notably, in \cite{viola2014extractors}, Viola proves certain distributions cannot be produced by $\ACZ$ circuits. Can these techniques be extended to $\QNCZ$ circuits? If so, we would have a novel technique for lower-bounding the circuit complexity of quantum states. If not, we should be able to find a $\QNCZ$ circuit that samples from one of these distributions, producing the desired sampling separation. 
\end{itemize}

\section{Acknowledgments}
The authors would like to thank David Gosset for helpful discussions, and Ansis Rosmanis for sharing an insightful note.
They thank Angus Lowe for insightful discussions which  motivating the study of classical circuits with biased input in \Cref{sec:biased-noise}.
They also thank  Michael Oliveira for insights regarding compilation of the $\Urot{m}{\theta}$ unitary, which are discussed in \Cref{sec:U_m_theta_compiling}.

\section{Reader's Guide}

The remainder of the main body of this paper is devoted to a formal proof of~\Cref{thm:main_introduction}. 
This proof involves the same high-level ideas as the proof sketch given in~\Cref{subsec:tech_overview}, but results are presented in a slightly different order.

\Cref{sec:GHZ_state_sampling} constructs a $\QNC^0$ with GHZ advice which samples from the distribution close to $(X,\majmod{p}(X) \oplus \parity(X))$. This is done in two steps. \Cref{ssec: non-unitary majority GHZ sampling} shows how to sample from a distribution close to $(X,\majmod{p}(X) \oplus \parity(X))$ using a shallow circuit with GHZ advice and non-unitary operations. Then \Cref{ssec: unitary majority GHZ sampling} shows how to replace the circuit's non-unitary operations with unitary operations while preserving the output distribution. This section ends with~\Cref{thm:Urot_majmod_sampling}, which gives a formal proof of Item (1) of~\Cref{thm:sep_with_GHZ_introduction}.

\Cref{sec:quantum_circuit_no_GHZ} builds on the techniques of the previous section to give a $\QNC^0$ circuit without GHZ advice which samples from a distribution close to the distribution $(Z, \pmmajmod{p}(Z))$. (Both $X$ and $Z$ are uniformly random bit strings, but we distinguish them because subsets of the $Z$ string are labeled differently than $X$). The function $\pmmajmod{p} : \{0, 1\}^n \rightarrow \{0,1\}$ is defined formally in this section (\Cref{defn:pmmajmod}). This section ends with~\Cref{thm:Urot_pmmajmod_sampling}, which gives a formal version of Item (1) of~\Cref{thm:main_introduction}. 

\Cref{sec:classical_hardness_with_GHZ} proves the classical hardness of sampling from the distribution $(X, \majmod{p}(X) \oplus \parity(X))$.  The main result in this section is~\Cref{thm:classical_LB_with_GHZ}, which gives a formal proof of Item (2) of~\Cref{thm:sep_with_GHZ_introduction}. This section is intended primarily as a ``warm-up'' for the next section, with a couple lemmas that will be reused in 
% and a reader could skip this section
% and refer back to it as needed
% without missing any results needed for 
the proof of~\Cref{thm:main_introduction}. A reader could skip this section and refer back to the lemmas as needed to understand the proof of ~\Cref{thm:main_introduction}. Still, the authors suggest a reader at least skim this section before reading the more complicated proof in the next section. 

\Cref{sec:classical_hardness_main} proves lower bounds on the depth of classical circuits which sample from the distribution $(Z, \pmmajmod{p}(Z))$. The main result of this section is~\Cref{thm:classical_LB_tree}, which gives a proof of Item (2) of~\Cref{thm:main_introduction}.  

The appendixes to this paper prove some extra results which add additional context to \Cref{thm:main_introduction}. In \Cref{sec:U_m_theta_compiling}  we outline an efficient algorithm for constructing our quantum circuits in \Cref{thm:main_introduction} --- showing that they form a uniform quantum circuit family. In \Cref{apx:unlimited-input} we show a classical circuit lower bound against $(X, \majmod{p}(X))$ in the setting where the circuit has unlimited inputs, but bounded fan-in \emph{and} fan-out. In \Cref{sec:biased-noise} we prove a more general version of the classical lower bound in \Cref{thm:main_introduction} that allows for biased inputs.

\section{Sampling From \texorpdfstring{$(X,\majmod{p}(X) \oplus \parity(X))$}{(X, majmod(X) + parity(X))} Using a GHZ State}

\label{sec:GHZ_state_sampling}

In this section we consider constant-depth quantum circuits with access to an $n$-qubit GHZ state as input. We show these circuits can produce samples close to the distribution $(X,\majmod{p}(X) \oplus \parity(X))$, where $X$ is a uniformly random bitstring of length $n-1$. We will prove this result in two steps -- in \Cref{ssec: non-unitary majority GHZ sampling} we give a ``quantum-like'' circuit that samples from the correct distribution but includes non-unitary single-qubit operations. In \Cref{ssec: unitary majority GHZ sampling} we show how to replace those non-unitary operations with multi-qubit (but still constant-sized) unitaries. Before beginning these proofs we review some details about GHZ states. 

\paragraph{Review of GHZ States}
\label{ssec: GHZ review}

An $n$-qubit GHZ state is defined to be the state 
\begin{align}
    \ket{\GHZ_n} = \frac{1}{\sqrt{2}} \left(\ket{0}^{\otimes n} + \ket{1}^{\otimes n} \right).
\end{align}
It is well-known that applying a Hadamard transform to each qubit of a GHZ state produces a uniform superposition over bitstrings with even Hamming weight: 
\begin{align}
    H^{\otimes n} \ket{\GHZ_n} = 2^{-n/2} \sum_{e \in E_n} \ket{e}
\end{align}
where $E_n$ is the set containing all even parity $n$-bit strings. We can equivalently describe this state as a coherent superposition of $n-1$ random bits and a final bit whose value equals the parity of the $n-1$ other bits, so 
\begin{align}
\label{eq:Hadarmarded_GHZ}
    H^{\otimes n} \ket{\GHZ_n} = \left(\prod_{i=1}^{n-1} \CNOT_{i,n} \right)\ket{+}^{\otimes n-1} \otimes \ket{0} 
\end{align}
where $\CNOT_{i,j}$ denotes a \CNOT gate controlled on qubit $i$ and applied to qubit $j$. \Cref{eq:Hadarmarded_GHZ} will be our starting point for designing circuits that use the GHZ state as a resource state. 

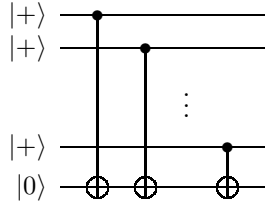
\begin{figure}[ht]
\centerline{
\Qcircuit @C=1em @R=1em {
\lstick{\ket{+}}    & \ctrl{4}  & \qw       & \qw           & \qw       & \qw \\
\lstick{\ket{+}}    & \qw       & \ctrl{3}  & \qw           & \qw       & \qw \\
                    &           &           & \push{\vdots} &           & \\
\lstick{\ket{+}}    & \qw       & \qw       & \qw           & \ctrl{1}  & \qw \\
\lstick{\ket{0}}    & \targ     & \targ     & \qw           & \targ     & \qw 
} 
}
\caption{A circuit constructing the state $H^{\otimes n} \ket{\GHZ_n}$, as described in \Cref{eq:Hadarmarded_GHZ}.} 
\label{fig:Hadamarded_GHZ_circuit}
\end{figure}

\subsection{Sampling with non-unitary operations}
\label{ssec: non-unitary majority GHZ sampling}

We now consider constant-depth quantum circuits augmented with specific single qubit non-unitary ``gates'' $\NUrot{\theta}$, which we will soon define. We show these circuits can sample (approximately) from the distribution $(X, \majority(X) \oplus \parity(X))$. While this model is non-physical, introducing it allows us to isolate some key ideas which we will reuse in the fully quantum circuit developed in the next section.

First, for each $\theta \in \mathbb{R}$, define the (non-unitary) matrix $\NUrot{\theta} \in \CC^{2\times 2}$ to be the matrix which acts on the single-qubit computational basis states as 
\begin{align}
    \NUrot{\theta} \ket{0} &= \ket{0} \\
    \NUrot{\theta} \ket{1} &= \exp(-i\theta X) \ket{1}
\end{align}
When drawing circuit diagrams in this section we sometimes include $\NUrot{\theta}$ gates, and understand that they represent the matrix $A$ acting on the qubits indicated. We also sometimes draw $\NUrot{\theta}\adj$ gates, which represent the adjoint of the matrix $\NUrot{\theta}$ acting on the qubits indicated. 

We now prove the following useful circuit identity. 
\begin{lem}
\label{lem:NUrot_identity}
For any one qubit state $\ket{\psi}$ and computational basis state $\ket{x}$ with $x \in \{0,1\}$, we have
\begin{align}
    \bra{x}_2 \left(\NUrot{\theta}\adj\right)_2 \CNOT_{2,1} \ket{\psi}_1 \ket{+}_2 = \frac{1}{\sqrt{2}} \exp(i (\theta + \pi/2) x X_1 ) \ket{\psi}_1
\end{align}
\end{lem}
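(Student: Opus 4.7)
The plan is to prove this by direct calculation, expanding both sides in the computational basis of qubit 2 and verifying the identity in the two cases $x=0$ and $x=1$.

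First, I would expand the CNOT applied to the input state. Since the CNOT is controlled on qubit 2 and acts on qubit 1, and $\ket{+}_2 = \frac{1}{\sqrt{2}}(\ket{0}_2 + \ket{1}_2)$, we immediately get
\begin{equation*}
\CNOT_{2,1} \ket{\psi}_1 \ket{+}_2 = \frac{1}{\sqrt{2}} \Big( \ket{\psi}_1 \ket{0}_2 + X_1 \ket{\psi}_1 \ket{1}_2 \Big).
\end{equation*}
Next I would use the action of $\NUrot{\theta}\adj$ on computational basis bras. From the definition $\NUrot{\theta} = \ketbra{0}{0} + e^{-i\theta X} \ketbra{1}{1}$, we get $\NUrot{\theta}\adj = \ketbra{0}{0} + \ketbra{1}{1} e^{i\theta X}$, so
\begin{equation*}
\bra{0}_2 (\NUrot{\theta}\adj)_2 = \bra{0}_2, \qquad \bra{1}_2 (\NUrot{\theta}\adj)_2 = \bra{1}_2 \, e^{i\theta X_2}.
\end{equation*}
These are essentially the circuit identities already established in \Cref{eq:NUcircuit_identities}, so I would cite them.

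For the case $x = 0$, applying $\bra{0}_2$ directly projects onto the first term of the expanded CNOT output and annihilates the second, yielding $\tfrac{1}{\sqrt{2}}\ket{\psi}_1$, which agrees with the right-hand side because $e^{i(\theta+\pi/2) \cdot 0 \cdot X_1} = I$. For the case $x=1$, I would compute the two overlaps $\bra{1} e^{i\theta X} \ket{0} = i\sin\theta$ and $\bra{1} e^{i\theta X} \ket{1} = \cos\theta$ using $e^{i\theta X} = \cos\theta\, I + i\sin\theta\, X$, obtaining
\begin{equation*}
\tfrac{1}{\sqrt{2}} \Big( i \sin\theta \, I + \cos\theta \, X_1 \Big) \ket{\psi}_1.
\end{equation*}
The main (but still routine) step is then to recognize this as the claimed single-qubit rotation. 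Using $e^{i\alpha X_1} = \cos\alpha\, I + i\sin\alpha\, X_1$ with $\alpha = \theta + \pi/2$ gives $-\sin\theta\, I + i\cos\theta\, X_1$, which is exactly $i$ times the expression above; hence the $x=1$ branch equals $\tfrac{1}{\sqrt{2}} e^{i(\theta+\pi/2) X_1} \ket{\psi}_1$ up to the global phase that the lemma statement records. Combining both cases and writing the dependence on $x \in \{0,1\}$ compactly as $\exp\bigl(i(\theta + \pi/2)\, x\, X_1\bigr)$ completes the proof. I do not anticipate a substantive obstacle here; the only bookkeeping worth care is keeping the $i$'s straight in the $x=1$ case.
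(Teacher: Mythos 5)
Your proof is correct, and it takes a slightly different (though equally routine) route from the paper's. The paper never expands $\ket{+}$ in the computational basis: it writes $\bra{x}_2(\NUrot{\theta}\adj)_2 = \bra{x}_2\exp(i\theta x X_2)$, commutes this rotation through the CNOT via $X_2\,\CNOT_{2,1} = \CNOT_{2,1}X_1X_2$, uses that $\ket{+}$ is a $+1$ eigenstate of $X$ to convert $X_1X_2$ into $X_1$, and finally absorbs the $X_1^x$ left over from projecting the CNOT into the extra $\pi/2$ in the rotation angle. You instead expand $\CNOT_{2,1}\ket{\psi}_1\ket{+}_2$ explicitly and verify the two cases $x=0,1$ by matching trigonometric coefficients. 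Both arguments are one-paragraph computations; the advantage of the paper's operator-identity formulation is that it handles both values of $x$ uniformly and transfers almost verbatim to the multi-qubit setting of \Cref{lem:multi_NUrot-NUrot_circuit_equivalence}, where basis casework would be clumsier. One point worth noting: as you observed in your $x=1$ case, the two sides agree only up to the global phase $i^x$ (since $X = -i\exp(i\pi X/2)$); the paper's proof has exactly the same phase slip in its fourth line and silently ignores it. This is harmless for everything downstream, because the lemma is only ever used to compute measurement probabilities of the post-selected (unnormalized) state, for which a phase depending only on the already-conditioned outcome $x$ is irrelevant — but your explicit acknowledgment of it is, if anything, more careful than the paper.
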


\begin{proof}
Direct computation gives 
\begin{align}
    \bra{x}_2 \left(\NUrot{\theta}\adj\right)_2 \CNOT_{2,1} \ket{\psi}_1 \ket{+}_2 &= \bra{x}_2 \exp(i \theta x X_2) \CNOT_{2,1} \ket{\psi}_1 \ket{+}_2 \\
    &= \bra{x}_2 \CNOT_{2,1} \exp(i \theta x X_1X_2) \ket{\psi}_1 \ket{+}_2 \\
    &= \bra{x}_2 \CNOT_{2,1} \exp(i \theta x X_1) \ket{\psi}_1 \ket{+}_2 \\
    &= \exp(i (\theta + \pi/2) x X_1) \ket{\psi}_1 \braket{x}{+}_2 \\
    &= \frac{1}{\sqrt{2}} \exp(i (\theta + \pi/2) x X_1) \ket{\psi}_1
\end{align}
where we used on the first line that 
\begin{align}
    \NUrot{\theta}{\ket{x}} = \exp(-i \theta X x) \ket{x}
\end{align}
by definition, the commutation relation\footnote{To prove the implication, use the standard decomposition $\exp(i \theta X) = \cos( \theta) + i \sin(\theta) X$, then commute the resulting terms.} 
\begin{align}
    X_2 \CNOT_{2,1} &= \CNOT_{2,1} X_{1}X_{2} \\
    \implies \exp(i \theta X_2) \CNOT_{2,1} &= \CNOT_{2,1} \exp(i \theta X_{1}X_{2})
\end{align}
on the second line, that $\ket{+}$ is a 1-eigenstate of the $X$ operator on the third line, and then the definition of the $\CNOT$ gate and the $\ket{+}$ state on the final two lines. \Cref{fig:NUrot_circuit_identity} gives a diagrammatic version of this proof. 
\end{proof}

\begin{figure}[ht]
\begin{align*}
&\Qcircuit @C=1em @R=1em {
\lstick{\ket{\psi}}     & \targ         & \qw                       & \qw   & \qw & & 
\lstick{\ket{\psi}}     & \targ         & \qw                       & \qw   & \qw\\
&&&&& \push{\rule{0em}{0em}=\rule{2em}{0em}} & \\
\lstick{\ket{+}} & \ctrl{-2}     & \gate{\NUrot{\theta}\adj} & \qw   &  {\bra{x}} & &
\lstick{\ket{+}} & \ctrl{-2}     & \gate{\exp(i \theta x X)} & \qw   &  {\bra{x}} &
} 
\\
\\
\cdashline{1-2}
\\
&\hspace{75pt}\Qcircuit @C=1em @R=1em {
&\lstick{\ket{\psi}}     & \multigate{2}{\exp\left(i \theta  x XX\right)}     & \targ         & \qw   & \qw \\
\push{\rule{0em}{0em}=\rule{2em}{0em}} & \\
&\lstick{\ket{+}}        & \ghost{\exp\left(i \theta  x XX\right)}            & \ctrl{-2}      & \qw   &  {\bra{x}}
}
\\
\\
\cdashline{1-2}
\\
&\hspace{75pt}\Qcircuit @C=1em @R=1em {
&\lstick{\ket{\psi}}     & \gate{\exp\left(i \theta  x X\right)}    & \targ         & \qw   & \qw \\
\push{\rule{0em}{0em}=\rule{2em}{0em}} & \\
&\lstick{\ket{+}}        & \qw                                       & \ctrl{-2}      & \qw   &  {\bra{x}}
}
\\
\\
\cdashline{1-2}
\\
&\hspace{75pt}\Qcircuit @C=1em @R=1em {
&\lstick{\ket{\psi}}     & \gate{\exp\left(i x(\theta +\pi/2) X\right)}     & \qw   & \qw \\
\push{\rule{0em}{0em}=\rule{2em}{0em}} & \\
&\lstick{\ket{+}}        & \qw                                       & \qw   &  {\bra{x}}
}   
\end{align*}

\caption{A diagrammatic proof of \Cref{lem:NUrot_identity}. The equivalence between each line is explained in the proof of the lemma.}
\label{fig:NUrot_circuit_identity}
\end{figure}
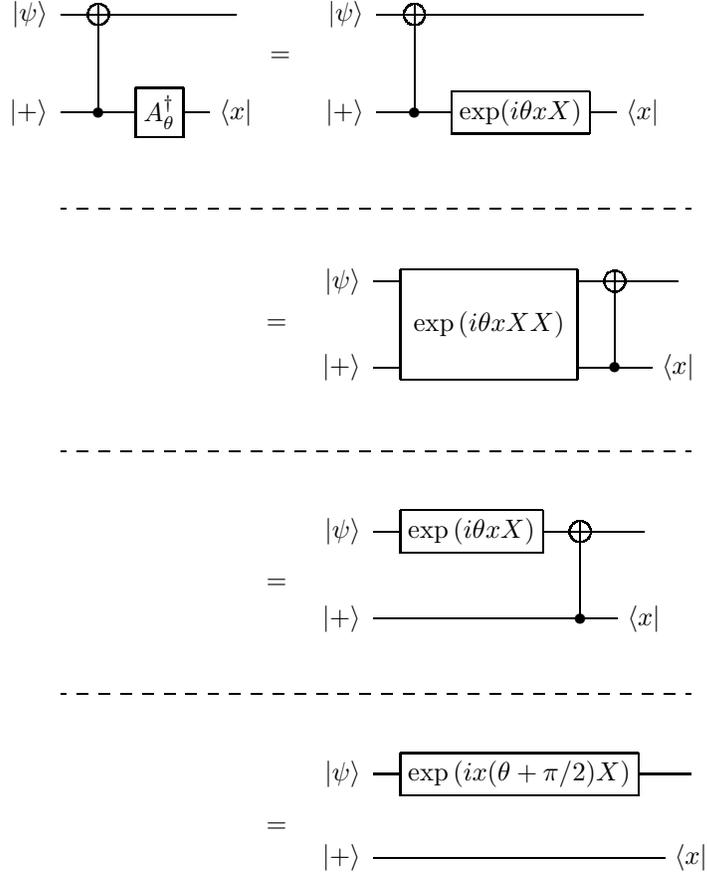

We now prove the main result of this section and construct a constant-depth circuit, with a $\GHZ$ state as input and $\NUrot{\theta}$ gates -- which samples approximately from the distribution $(X, \majmod{p}(X))$ for any $p$. The construction builds on \Cref{lem:NUrot_identity} as well as the observations about the $\GHZ$ state discussed in \Cref{ssec: GHZ review}. 

\begin{thm}
\label{thm:NUrot_majmod_sampling}
For each prime number $p$, there is a constant-depth circuit consisting of one and two-qubit unitary gates and $\NUrot{\theta}$ operations which takes a GHZ state as input and produces an output which, when measured in the computational basis, produces an output distribution $(X', Y)$ with 
\begin{align}
    \TVD((X', Y), (X, \majmod{p}(X) \oplus \parity(X))) \leq  \frac{1}{2} - \frac{1}{\pi} + \frac{1}{2p} +  O(p^{3/2}e^{-n/p^2}).
\end{align}
\end{thm}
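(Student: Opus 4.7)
The plan is to instantiate the non-unitary construction sketched in the technical overview, augmented with a constant single-qubit phase on the last qubit to optimize a Fourier overlap. Starting from $\ket{\GHZ_n}$, the circuit applies $H^{\otimes n}$, then $\NUrot{\theta}^\dagger$ with $\theta = \pi/p$ on each of the first $n-1$ qubits, then a single-qubit unitary $e^{i\phi X}$ on the last qubit for a constant $\phi = \phi(p)$ chosen as described below, and finally measures all $n$ qubits in the computational basis. Each layer acts on disjoint qubits, so the depth is $O(1)$.

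The output distribution follows from an inductive application of \Cref{lem:NUrot_identity}. Rewriting $H^{\otimes n}\ket{\GHZ_n} = \bigl(\prod_{i=1}^{n-1}\CNOT_{i,n}\bigr)\ket{+}^{\otimes n-1}\otimes\ket{0}$ via \Cref{eq:Hadarmarded_GHZ} puts each of the first $n-1$ qubits in the role of ``qubit 2'' of the lemma and the last qubit in the role of ``qubit 1''. Each application of the lemma contributes an independent uniform bit $x_i$ and pushes a rotation $\exp\bigl(i(\theta+\pi/2)x_i X\bigr)$ onto the last qubit. After iterating over all $i=1,\dots,n-1$, the outcomes $X' \in \{0,1\}^{n-1}$ are uniform, and conditional on $X'$ the last qubit lies in the state $e^{i\phi X}\exp\bigl(i(\pi/p+\pi/2)|X'|X\bigr)\ket{0}$ up to a global phase. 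A parity-bookkeeping step exploiting the $\pi$-periodicity of $\cos^2$ then yields $\Pr[Y = \parity(X') \mid X'] = \cos^2(\pi|X'|/p + \phi)$.

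Because the target is a deterministic function of a uniform bitstring, the TVD collapses to $1 - \mathbb{E}[s_\phi(X')]$, where $s_\phi(x) = \tfrac{1}{2}\bigl(1 + g(|x|\bmod p)\cos(2\pi|x|/p + 2\phi)\bigr)$ and $g:\mathbb{Z}_p\to\{\pm 1\}$ equals $+1$ on $\{0,\dots,(p-1)/2\}$ and $-1$ on $\{(p+1)/2,\dots,p-1\}$. I would bound this in two steps. First, replace the law of $|X'|\bmod p$ by the uniform law on $\mathbb{Z}_p$: using the identity $\mathbb{E}[e^{2\pi im|X'|/p}] = \cos(\pi m/p)^{n-1}$ and summing over $m \in \mathbb{Z}_p \setminus\{0\}$, the induced error is $O(p^{3/2}e^{-n/p^2})$. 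Second, for $K$ uniform on $\mathbb{Z}_p$, the expectation $\mathbb{E}[g(K)\cos(2\pi K/p + 2\phi)]$ reduces to the first DFT coefficient $\hat g_1 = (1/p)\sum_{k'} g(k') e^{-2\pi ik'/p}$; a closed-form evaluation of a truncated geometric sum gives $|\hat g_1| = 1/(p\sin(\pi/(2p)))$. Choosing $2\phi$ to match $\arg \hat g_1$ attains $\mathbb{E}[g(K)\cos(\cdot)] = |\hat g_1|$, and the elementary bound $\sin x \leq x$ gives $|\hat g_1|/2 \geq 1/\pi$. Triangle-inequality combination with the step-one error, together with the cosmetic relaxation $1/\pi \geq 1/\pi - 1/(2p)$, delivers the stated bound.

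The main obstacle is the concentration estimate in the first step: extracting the polynomial factor $p^{3/2}$ (rather than a naive $p^2$) requires balancing the contributions from Fourier modes near $m = p/2$ (which decay fastest) against those near $m=1, p-1$ (which decay slowest, as $e^{-\Omega(n/p^2)}$). All other pieces -- the constant-depth layout, the inductive application of \Cref{lem:NUrot_identity}, the parity-bookkeeping, and the closed-form DFT of the square wave $g$ -- reduce to standard calculations.
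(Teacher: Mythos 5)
Your construction and the reduction steps coincide with the paper's: the same circuit skeleton ($H^{\otimes n}$ on $\ket{\GHZ_n}$, $\NUrot{\pi/p}\adj$ on the first $n-1$ qubits, a fixed $X$-rotation on the last qubit), the same inductive use of \Cref{lem:NUrot_identity} via \Cref{eq:Hadarmarded_GHZ}, the same conclusion that $X'$ is exactly uniform with $\Pr[Y=\parity(x)\mid x]=\cos^2(\pi\abs{x}/p+\phi)$, and the same collapse of the TVD to the average error probability. Where you genuinely diverge is in the final averaging estimate, the content of \Cref{claim:ideal_circuit_prob_failure}: the paper fixes $\phi=-\pi/4$ and bounds $\mathbb{E}_x\Pr[Y_x\neq \majmod{p}(x)\oplus\parity(x)]$ by conditioning on $\abs{x}\bmod p$, invoking \Cref{fact:puniform_bits}, and evaluating a midpoint-Riemann-sum/integral of $\sin^2$; you instead write the success probability as $\tfrac12\bigl(1+g(\abs{x}\bmod p)\cos(2\pi\abs{x}/p+2\phi)\bigr)$, pass to the uniform law on $\mathbb{Z}_p$, and evaluate the resulting expectation as the first DFT coefficient of the square wave $g$, choosing $2\phi=\arg\hat g_1$ (which works out to $\phi=-\pi/4+\pi/(4p)$). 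Your closed form $\abs{\hat g_1}=1/(p\sin(\pi/(2p)))\geq 2/\pi$ is correct, so your circuit actually achieves error at most $\tfrac12-\tfrac1\pi+O(p^{3/2}e^{-n/p^2})$, slightly sharper than the stated bound (the $+\tfrac{1}{2p}$ is indeed only a relaxation), and the theorem only asserts existence of a circuit, so the altered phase is harmless. Two small remarks: the Fourier identity should read $\mathbb{E}[e^{2\pi i m\abs{X'}/p}]=e^{i\pi m(n-1)/p}\cos(\pi m/p)^{n-1}$ (only the magnitude matters for your bound, so nothing breaks); and the ``main obstacle'' you flag is not one --- since $0\leq s_\phi\leq 1$, the swap-to-uniform error is at most twice the total variation distance, which \Cref{fact:puniform_bits} already bounds by $\sqrt{p}e^{-n/p^2}$, or crudely $\sum_{m\neq 0}\abs{\cos(\pi m/p)}^{n-1}\leq O(pe^{-n/p^2})$, both comfortably inside $O(p^{3/2}e^{-n/p^2})$.
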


\begin{proof}
We first describe the circuit which, when measured in the computational basis, produces output which correlates with $(X, \majmod{p}(X) \oplus \parity(X))$. Fix $\theta = \pi/p$. The circuit takes as input a $\GHZ$ state, applies a Hadamard transform to each qubit of the state, then applies a  $\NUrot{\theta}\adj$ operation to the first $n-1$ qubits in the GHZ state and a $\exp(-i\pi X/4)$ rotation to the final qubit. This circuit is indicated diagrammatically in \Cref{fig:majmod_p_NUrot_circuit}.

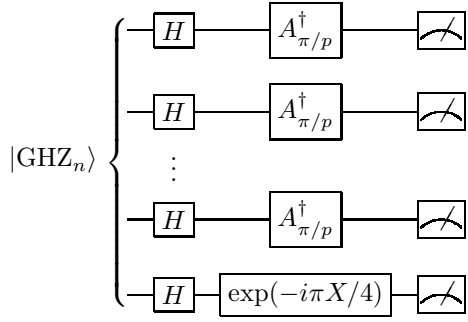
\begin{figure}

\begin{align*}
\Qcircuit @C=1em @R=1em {
\lstick{}   & \gate{H}  & \gate{\NUrot{\pi/p}\adj} & \meter \\
\lstick{}   & \gate{H}  & \gate{\NUrot{\pi/p}\adj} & \meter \\
& {\makecell{\vdots\\}} \\ 
\lstick{}   & \gate{H}  & \gate{\NUrot{\pi/p}\adj} & \meter \\  
\lstick{}   & \gate{H}  & \gate{\exp(-i \pi X / 4)}  & \meter 
\inputgroupv{1}{5}{0.8em}{4.9em}{\ket{\GHZ_n} \;\;\;\;\;\;\;\;} 
} 
\end{align*}
\caption{Constant-depth circuit producing approximate samples from the distribution $(X, \majmod{p}(X) \oplus \parity(X))$.}
\label{fig:majmod_p_NUrot_circuit}
\end{figure}

To prove this circuit samples (approximately) from the correct distribution we write the (unnormalized) output state of the circuit conditioned on the first $n-1$ qubits of the circuit being measured in computational basis state $\ket{x} = \ket{x_1} \otimes \ket{x_2} \otimes ... \otimes \ket{x_{n-1}}$ as:
\begin{align}
    &\bra{x}_{1...n-1}\left( \left(\NUrot{\pi/p}\adj\right)^{\otimes n-1} \otimes \exp(-i\pi X/4) \right) H^{\otimes n}\ket{\GHZ_n} \nonumber\\
    &\hspace{80pt}= \bra{x}_{1...n-1}\left( \left(\NUrot{\pi/p}\adj\right)^{\otimes n-1} \otimes \exp(-i\pi X/4) \right) \left(\prod_{i=1}^{n-1} \CNOT_{i,n} \right) \ket{+}^{\otimes n-1} \otimes \ket{0} \label{eq:NU_analysis1}\\
    &\hspace{80pt}= \prod_{i=1}^{n-1} \matrixel{x_i}{ \NUrot{\pi/p}\adj \left(\CNOT_{i,n}\right) }{+}_i \otimes \exp(-i\pi X/4) \ket{0}_n \label{eq:NU_analysis2}\\
    &\hspace{80pt}= 2^{-(n-1)/2} \exp(i X \left(-\frac{\pi}{4} + \sum_{i=1}^{n-1} x_i \left(\frac{\pi}{p} + \frac{\pi}{2}\right)\right) )\ket{0}_n \label{eq:NU_analysis3}
\end{align}
where we used \Cref{eq:Hadarmarded_GHZ} on the first line, reordered terms on the second (noting that $\exp(i\pi X/4)_n$ commutes with $\CNOT_{i,n}$ for any $i \in [n-1]$), and then used \Cref{lem:NUrot_identity} on the third. A diagrammatic version of this analysis is given in \Cref{fig:majmod_p_NUrot_circuit_analysis}.

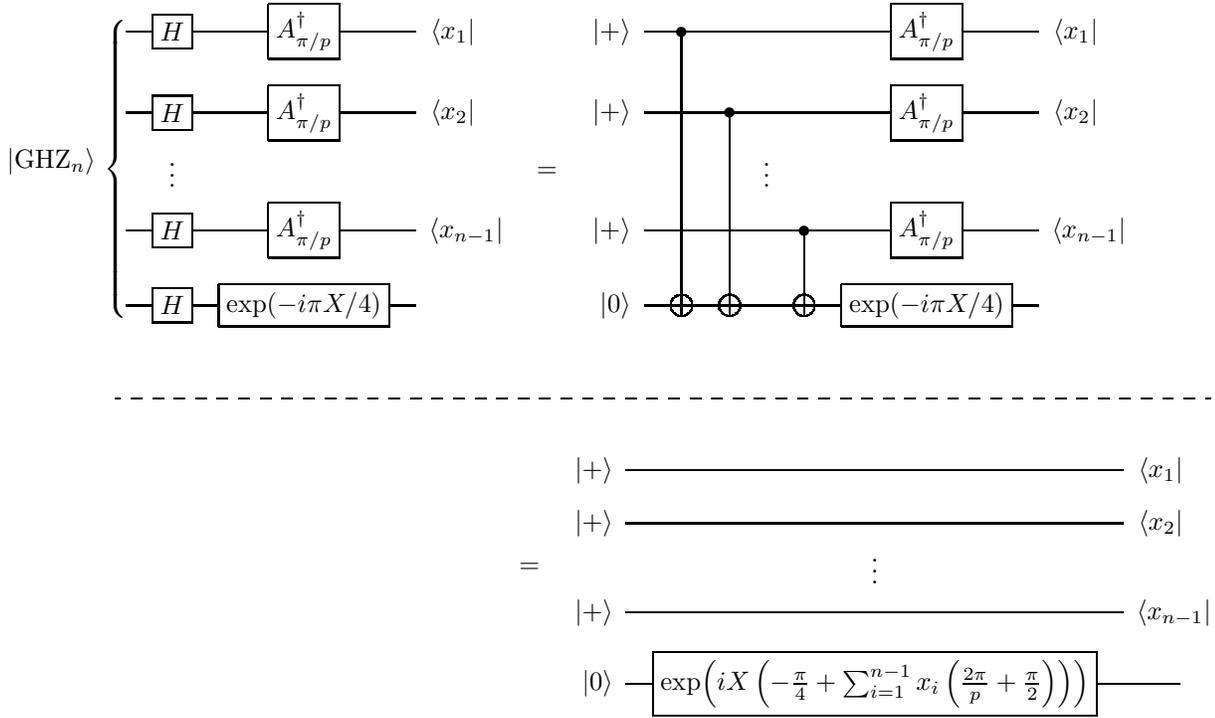
\begin{figure}[ht]
\begin{align*}
&\Qcircuit @C=1em @R=1em {
\lstick{}   & \gate{H}              & \gate{\NUrot{\pi/p}\adj} & \qw & \push{\hspace{-15pt}\bra{x_1}}  & & 
\lstick{\ket{+}}    & \ctrl{4}  & \qw       & \qw                   & \qw       & \gate{\NUrot{\pi/p}\adj} & \qw & \push{\hspace{-15pt}\bra{x_1}} 
\\
\lstick{}   & \gate{H}              & \gate{\NUrot{\pi/p}\adj} & \qw & \push{\hspace{-15pt}\bra{x_2}}  & &
\lstick{\ket{+}}    & \qw       & \ctrl{3}  & \qw                   & \qw       & \gate{\NUrot{\pi/p}\adj} & \qw & \push{\hspace{-15pt}\bra{x_2}} 
\\
\lstick{}   & {\makecell{\vdots\\}} &                           &     &                                 & \push{\rule{0em}{0em}=\rule{2em}{0em}} &
                     &           &           & {\makecell{\vdots\\}} &           &                           &     & 
\\ 
\lstick{}   & \gate{H}              & \gate{\NUrot{\pi/p}\adj} & \qw & \push{\!\!\!{\bra{x_{n-1}}}}    & &
\lstick{\ket{+}}    & \qw       & \qw       & \qw                   & \ctrl{1}  & \gate{\NUrot{\pi/p}\adj} & \qw & \push{\!\!\!{\bra{x_{n-1}}}}   
\\ 
\lstick{}   & \gate{H}              & \gate{\exp(-i \pi X / 4)}  & \qw &                                & &       
\lstick{\ket{0}}    & \targ     & \targ     & \qw                   & \targ     & \gate{\exp(-i \pi X / 4)}  & \qw 
\inputgroupv{1}{5}{0.8em}{4.9em}{\ket{\GHZ_n} \;\;\;\;\;\;\;\;} 
} 
\\
\\
\cdashline{1-2}
\\
&\hspace{150pt}\Qcircuit @C=1em @R=1em {
                                        & \lstick{\ket{+}}    &  \qw                      &  \qw & \push{\hspace{-15pt}\bra{x_1}} \\
                                        & \lstick{\ket{+}}    &  \qw                      &  \qw & \push{\hspace{-15pt}\bra{x_2}} \\
\push{\rule{0em}{0em}=\rule{2em}{0em}}  &                     &  {\makecell{\vdots\\}}    &      &                                \\
                                        & \lstick{\ket{+}}    &  \qw                      &  \qw & \push{\!\!\!{\bra{x_{n-1}}}}   \\ 
                                        & \lstick{\ket{0}}    &  \gate{\exp(iX \left(-\frac{\pi}{4} + \sum_{i=1}^{n-1} x_i \left(\frac{2 \pi}{p} + \frac{\pi}{2} \right) \right))} &  \qw & \qw 
}
\end{align*}
\caption{Diagrammatic analysis of the circuit presented in the proof of \Cref{thm:NUrot_majmod_sampling}. The first line follows from \Cref{eq:Hadarmarded_GHZ}, while the second follows from \Cref{lem:NUrot_identity}.}
\label{fig:majmod_p_NUrot_circuit_analysis}
\end{figure}

Now, tracing over the final qubit we see the probability of the first $n-1$ qubits being measured in any computational basis state $\ket{x}$ is $2^{-(n-1)}$ so the measurement of the first $n-1$ bits produces a uniformly random bit string, as desired. Additionally, conditioning on bit string $x = x_1x_2...x_{n-1}$ being measured, we see the state of the $n$-th qubit is 
\begin{align}
    &\exp(i X \left(-\frac{\pi}{4} + |x| \left(\frac{\pi}{p} + \frac{\pi}{2}\right)\right) )\ket{0}_n \label{eq:NU_analysis4}\\
    % &= \exp(i X \left(\frac{\pi}{4} + \frac{\pi}{p} |x| \right) )\exp(iX |x|) \ket{0}_n\\
    &= \exp(i X \left(-\frac{\pi}{4} + \frac{\pi}{p} |x| \right) ) \ket{\parity(x)}_n \label{eq:NU_analysis5}\\
    &= \cos(-\frac{\pi}{4} + \frac{\pi}{p} |x|) \ket{\parity(x)}_n + i\sin(-\frac{\pi}{4} + \frac{\pi}{p} |x|) \ket{1 \oplus \parity(x)}_n. \label{eq:NU_analysis6}
\end{align}
Where $|x| = \sum_{i=1}^{n-1} x_i$ denotes the Hamming weight of $x$.

Now let $Y_x$ be the random variable giving the outcome of a computational basis measurement performed on the $n$-th qubit, conditioned on a computational basis measurement of the first $n-1$ bits giving outcome $x$.  We bound the probability that this random variable does not equal $\parity(x) \oplus \majmod{p}(x)$.  Straightforward calculation gives that the probability that $Y_x$ equals $\parity(x)$ is given by
\begin{align}
    \Pr[Y_x = \parity(x)] = \cos^2\left(-\frac{\pi}{4} + \frac{\pi}{p} |x|\right).
\end{align}
It is then easy to see (see \Cref{fig:prob_parity_vs_mm}) that this function is inversely correlated with $\majmod{p}(x)$ (meaning that $Y_x$ more likely equals $\parity(x)$ when $\majmod{p}(x) = 0$ and likely does not equal $\parity(x)$ when $\majmod{p} = 1$). Expanding on this we can bound the average probability that $Y_x$ does not equal $\parity(x) \oplus \majmod{p}(x)]$:
\begin{align}
    \frac{1}{2^{n^{-1}}} \sum_{x \in \{0,1\}^{n-1}} \Pr[Y_x  \neq \parity(x) \oplus \majmod{p}(x)] \leq \frac{1}{2} - \frac{1}{\pi} + \frac{1}{2p} +  O(p^{3/2}e^{-n/p^2})
\end{align}
Details of this calculation are given after this proof, in \Cref{claim:ideal_circuit_prob_failure}. 

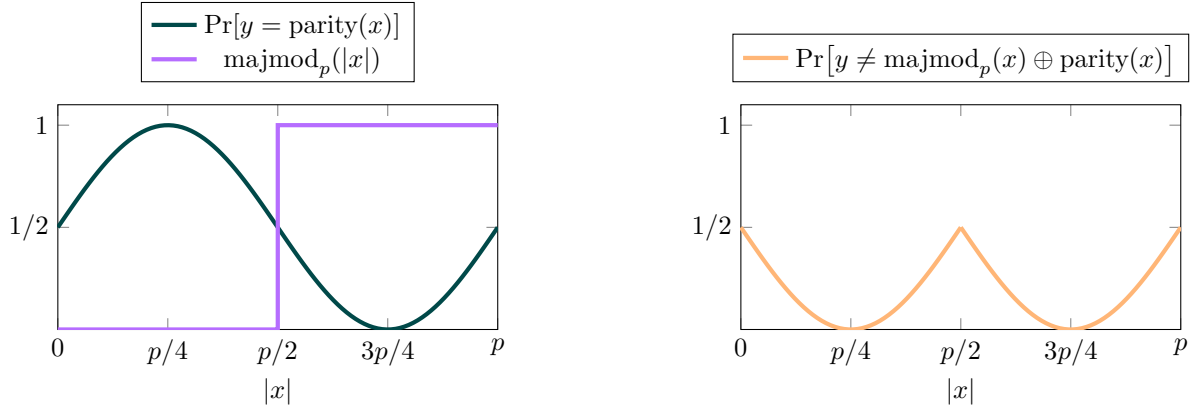
\begin{figure}[htb]
    \centering
    \begin{subfigure}[t]{0.45\textwidth}
    \begin{tikzpicture}
        \begin{axis}[
            xlabel={$|x|$},
            width=\textwidth,
            height=0.2\textheight,
            xmin=0, xmax=1,
            ymin=0, ymax=1.1,
            xtick={0.0, 0.25, 0.5, 0.75, 1.0},
            ytick={1/2, 1},
            xticklabels={$0$, $p/4$, $p/2$, $3p/4$, $p$},
            yticklabels={$1/2$, $1$},
            legend style={at={(0.5,1.1)}, anchor=south},
        ]
        \addplot [
            domain=0:1, 
            samples=100, 
            color = cb-blue-green,
            style =  ultra thick,
            ]
            {cos(deg(pi * x - pi/4))^2};
        \addlegendentry{\(\Pr[y = \parity(x)] \)}

        \addplot[
            color=cb-lilac,
            style =  ultra thick,
            ]
            coordinates {
            (0,0)(1/2, 0)(1/2,1)(1,1)
            };
        \addlegendentry{$\majmod{p}(|x|)$}
        \end{axis}
        \end{tikzpicture}
        \caption{Inverse correlation of $\Pr[Y_x = \parity(x)]$ and~$\majmod{p}(x)$}
        \label{subfig:correlation_graph_NU}
    \end{subfigure}
    \hfill
    \begin{subfigure}[t]{0.45\textwidth}
        \begin{tikzpicture}
            \begin{axis}[
                width=\textwidth,
                height=0.2\textheight,
                xlabel={$|x|$},
                xmin=0, xmax=1,
                ymin=0, ymax=1.1,
                xtick={0.0, 0.25, 0.5, 0.75, 1.0},
                ytick={1/2, 1},
                xticklabels={$0$, $p/4$, $p/2$, $3p/4$, $p$},
                yticklabels={$1/2$, $1$},
                legend style={at={(0.5,1.1)}, anchor=south},
            ]
            \addplot [
                domain=0:1/2, 
                samples=50, 
                color = cb-salmon-pink,
                style =  ultra thick,
                ]
                {1 - cos(deg(pi * x - pi/4))^2};
            \addlegendentry{$\Pr[y\neq \majmod{p}(x)\oplus \parity(x)]$}
    
            \addplot [
                domain=1/2:1, 
                samples=50, 
                color = cb-salmon-pink,
                style =  ultra thick,
                ]
                {cos(deg(pi * x - pi/4))^2};
            \end{axis}
            \end{tikzpicture}
            \caption{Probability that $Y_x$ is incorrect, $f(|x|)$}
            \label{subfig:prob_incorrect_for_x_ideal}
    \end{subfigure}
        \caption{Plots displaying the correlation of $Y_x$ and $\majmod{p}(x) \oplus \parity(x)$ where $Y_x$ is the last bit output by the circuit in \Cref{fig:majmod_p_NUrot_circuit} conditioned on the first $n-1$ measurements resulting in string $x\in \{0,1\}^{n-1}$.}
        \label{fig:prob_parity_vs_mm}
\end{figure}

Finally, we bound the total variation distance between the output of the quantum circuit depicted in \Cref{fig:majmod_p_NUrot_circuit} and the distribution $(X, \majmod{p}(X) \oplus \parity(X))$ with uniformly random $X$. Let $(X',Y)$ be the random variable giving the output of the quantum circuit. Then
    \begin{align}
        &\TVD((X, \majmod{p}(X) \oplus \parity(X)), (X',Y)) \nonumber \\
        &\hspace{20pt}= \frac{1}{2} \sum_{\substack{x \in \{0, 1\}^{n-1}\\ y \in \{0,1\}}} \Big\vert  \Pr[(X,\majmod{p}(X) \oplus \parity(X)) = (x,y)] - \Pr[(X',Y) = (x, y)]\Big\vert\\
        &\hspace{20pt}= \frac{1}{2} \sum_{\substack{x \in \{0, 1\}^{n-1}\\ y \in \{0,1\}}} \Big\vert \Pr[X = x] \Pr[\majmod{p}(x) \oplus \parity(x) = y] - \Pr[X' = x]\Pr[Y_x =  y]\Big\vert\\
        &\hspace{20pt}= \frac{1}{2^n} \sum_{\substack{x \in \{0, 1\}^{n-1}\\ y \in \{0,1\}}} \Big\vert  \Pr[\majmod{p}(x) \oplus \parity(x) = y] - \Pr[Y_x = y]\Big\vert\\
        &\hspace{20pt}= \frac{1}{2^{n-1}} \sum_{x \in \{0, 1\}^{n-1}}  \Pr[Y_x \neq \majmod{p}(x) \oplus \parity(x)] \leq \frac{1}{2} - \frac{1}{\pi} + \frac{1}{2p} +  O(p^{3/2}e^{-n/p^2})
    \end{align}
This completes the proof. 
\end{proof}

\begin{lem}\label{claim:ideal_circuit_prob_failure}
    Define the random variable $Y_x$ as in the proof of \Cref{thm:NUrot_majmod_sampling}, so $Y_x$ takes values in $\{0,1\}$ and 
    \begin{align}
    \Pr[Y_x = \parity(x)] = \cos^2\left(-\frac{\pi}{4} + \frac{\pi}{p} |x|\right). \label{eq:prob_icircuit_correct_for_x}
    \end{align}
    Then
    \begin{align}
        2^{-(n-1)} \sum_{x \in \{0, 1\}^{n-1}}  \Pr[Y_x \neq \majmod{p}(x) \oplus \parity(x)] \leq \frac{1}{2} - \frac{1}{\pi} + \frac{1}{2p} +  O(p^{3/2}e^{-n/p^2}).
    \end{align}
\end{lem}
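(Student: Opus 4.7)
The plan is to reduce the average over $x\in\{0,1\}^{n-1}$ to an average over the single statistic $r:=|x|\bmod p$, compute that reduced average in closed form, and bound the gap between the two. Because $\cos^2$ has period $\pi$, the formula in \Cref{eq:prob_icircuit_correct_for_x} depends on $x$ only through $r$, and since $\majmod{p}(x)$ also depends only on $r$, we have $\Pr[Y_x\neq\parity(x)\oplus\majmod{p}(x)]=f(r)$ where
\begin{align*}
f(r)=\begin{cases}\sin^2(-\pi/4+\pi r/p),&0\le r<p/2,\\ \cos^2(-\pi/4+\pi r/p),&p/2<r\le p-1.\end{cases}
\end{align*}
(For odd primes $p$ the case $r=p/2$ does not occur; one checks directly that the two pieces agree in the limit.) Setting $q_r:=\Pr_{X\sim\unif(\{0,1\}^{n-1})}[|X|\equiv r\pmod p]$, the quantity in the lemma equals $\sum_{r=0}^{p-1}q_r\,f(r)$.

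Next I would show $q_r$ is exponentially close to uniform on $\ZZ_p$. A standard character-sum expansion with $\omega=e^{2\pi i/p}$ gives $q_r=\tfrac1p+\tfrac1p\sum_{k=1}^{p-1}\omega^{-kr}\bigl((1+\omega^k)/2\bigr)^{n-1}$. Since $|(1+\omega^k)/2|=|\cos(\pi k/p)|\le\cos(\pi/p)\le e^{-\pi^2/(2p^2)}$ for every $k\in\{1,\dots,p-1\}$, the triangle inequality yields $\sum_r|q_r-1/p|\le\sum_{k=1}^{p-1}|\cos(\pi k/p)|^{n-1}=O(p\cdot e^{-n/p^2})$. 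Using $|f|\le 1$, this replaces $\sum_r q_rf(r)$ by $\tfrac1p\sum_rf(r)$ up to an additive error comfortably within the stated $O(p^{3/2}e^{-n/p^2})$ budget.

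It therefore suffices to prove $\tfrac1p\sum_{r=0}^{p-1}f(r)\le\tfrac12-\tfrac1\pi+\tfrac1{2p}$. I would evaluate this sum in closed form: applying $\sin^2u=(1-\cos 2u)/2$ and $\cos^2u=(1+\cos 2u)/2$, splitting the range at $(p\pm1)/2$, and combining using $\sum_{r=0}^{p-1}\sin(2\pi r/p)=0$, the computation collapses to a single arithmetic-progression sine sum that evaluates (via the Dirichlet kernel identity $\sum_{r=0}^{N-1}\sin(r\alpha)=\sin(N\alpha/2)\sin((N-1)\alpha/2)/\sin(\alpha/2)$) to $\tfrac12\cot(\pi/(2p))$. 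This gives the exact identity $\tfrac1p\sum_rf(r)=\tfrac12-\tfrac1{2p}\cot(\pi/(2p))$. Laurent-expanding $\cot(x)=1/x-x/3+O(x^3)$ at $x=\pi/(2p)$ shows $\tfrac1p\sum_rf(r)=\tfrac12-\tfrac1\pi+\tfrac{\pi}{12p^2}+O(1/p^4)\le\tfrac12-\tfrac1\pi+\tfrac1{2p}$, which combined with the Fourier estimate proves the lemma.

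The main obstacle is getting clean constants in the exponentially small Fourier tail uniformly in $p$: one must ensure that the bound $\cos(\pi/p)^{n-1}\le e^{-cn/p^2}$ holds with a constant $c$ independent of $p$, which follows from the elementary inequality $\cos(x)\le e^{-x^2/2}$ for $|x|\le 1$. Everything else is trigonometric bookkeeping, and the $\sqrt{p}$ of slack between the natural $O(p\,e^{-n/p^2})$ bound and the stated $O(p^{3/2}e^{-n/p^2})$ bound leaves ample room for any constants absorbed along the way.
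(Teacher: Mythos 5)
Your proposal is correct, and its skeleton matches the paper's: both reduce the average over $x$ to the distribution of $|x|\bmod p$, argue that this distribution is exponentially close to uniform on $\{0,\dots,p-1\}$, and then bound the uniform average of the per-residue error probability $f(r)$. The execution of the two sub-steps differs, though. For equidistribution, the paper simply invokes Viola's Fact 3.2 (restated as \Cref{fact:puniform_bits}), giving a TVD bound of $\sqrt{p}e^{-n/p^2}$ and hence the $O(p^{3/2}e^{-n/p^2})$ tail; you instead prove it from scratch with the character-sum expansion $q_r=\tfrac1p+\tfrac1p\sum_{k\neq 0}\omega^{-kr}\bigl(\tfrac{1+\omega^k}{2}\bigr)^{n-1}$, which is self-contained and even gives the marginally sharper $O(p\,e^{-n/p^2})$ error, comfortably inside the stated budget. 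For the main term, the paper exploits the symmetry of $f$ about $p/2$ together with convexity to dominate $\sum_{k}f(k)$ by the integral $\int\sin^2$, yielding $\tfrac1p\sum_k f(k)\le \tfrac12-\tfrac1\pi+\tfrac1{2p}$ directly; you instead evaluate the sum exactly, $\tfrac1p\sum_r f(r)=\tfrac12-\tfrac1{2p}\cot\bigl(\tfrac{\pi}{2p}\bigr)$ (which checks out, e.g.\ at $p=3$ both sides give $\approx 0.2113$), and then Laurent-expand the cotangent; since $\tfrac1x-\cot x\le \tfrac2\pi<1$ on $(0,\pi/2]$, the needed inequality in fact holds for every odd prime $p$, not just asymptotically. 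So your route buys an exact constant and a self-contained equidistribution estimate at the cost of more trigonometric bookkeeping, while the paper's route is shorter by leaning on a cited fact and a Riemann-sum/integral comparison; both reach the stated bound.
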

\begin{proof}
    Let $X$ be a random variable taking value uniformly at random from $\{0,1\}^{n-1}$. Then we have
    \begin{align}\label{eq:prob_Y_incorrect_initial_eq}
        &2^{-(n-1)} \sum_{x \in \{0, 1\}^{n-1}}  \Pr[Y_x \neq \majmod{p}(x) \oplus \parity(x)] \nonumber \\
        &\hspace{80pt}= \sum_{k=0}^{p-1} \Pr[Y_X \neq \majmod{p}(X) \oplus \parity(X) \big\vert |X| = k] \cdot \Pr[|X| = k] 
    \end{align}
    Let $f(k)$ be the probability that our output measurement is incorrect given that the Hamming weight of the first $n$ bits have Hamming weight $k$.
    \begin{align}
        f(k) := \Pr[Y \neq \majmod{p}(X) \oplus \parity(X) \big| |X| = k]
    \end{align}
    It follows from \Cref{eq:prob_icircuit_correct_for_x}, that
    \begin{align}
        f(k) =
        \begin{cases}
            \sin^2\left(-\frac{\pi}{4} + \frac{\pi}{p} k \right), & k \leq p/2   \mod p\\
            \cos^2\left(-\frac{\pi}{4} + \frac{\pi}{p} k\right), &  k > p/2 \mod p
        \end{cases}
    \end{align}
    which is plotted in \Cref{subfig:prob_incorrect_for_x_ideal}.
    Let $\delta$ be the total variation distance between $|X| \mod p$ and $U_p$, the uniform distribution over $\{0,1,\dots,p-1\}$. Then $\Pr[|X| = k \mod p] \leq \frac{1}{p} + \delta$. We can upper bound \Cref{eq:prob_Y_incorrect_initial_eq}, as 
    \begin{align}
        \Pr[Y \neq \majmod{p}(X) \oplus \parity(X)] &\leq \left(\frac{1}{p} + \delta\right) \sum_{k=0}^{p-1} f(k)\\
        &=  \left(\frac{1}{p} + \delta\right)\left(\frac{1}{2} + 2 \sum_{k = 1}^{(p-1)/2} f(k) \right)\\
        &=  \left(\frac{1}{p} + \delta\right)\left(\frac{1}{2} + 2\int_{1/2}^{p/2} f(k) \right)  \, dk \label{eq:prob_incorrect_as_integral}
    \end{align}
    Where in the second line we use the fact that $f(k)$ is symmetric about $p/2$, so $\sum_{k=1}^{\frac{p-1}{2}} f(k) = \sum_{k=\frac{p+1}{2}}^{p-1} f(k)$. In the third line we used that $f(k)$ is convex over $(0, p/2)$, and therefore $\sum_{i=1}^{(p-1)/2} f(k)$ is a (midpoint-Riemann sum) over-approximation of $\int_{1/2}^{p/2} f(k)$. Next, we evaluate the integral.
    \begin{align}
        \int_{1/2}^{p/2} f(k) \, dk &= \int_{0}^{p/2} \sin^2\left(-\frac{\pi}{4} +\frac{\pi}{p}k\right) \, dk \\
        &= \int_{0}^{p/2} \frac{1}{2}\left(1 + \cos\left(\frac{2\pi}{p}k + \frac{\pi}{2}\right)\right)\, dk\\
        &= \frac{1}{2} \left.\left(k + \frac{p}{2\pi}\sin\left(\frac{2\pi}{p}k + \frac{\pi}{2} \right) \right) \right\vert_{0}^{p/2}\\
        &= \frac{p}{4}\left(1 - \frac{2}{\pi}\right) 
    \end{align}
    Combining this with \Cref{eq:prob_incorrect_as_integral}, we get the probability we measure an incorrect string is at most
    \begin{align}
        \Pr[Y \neq \majmod{p}(X) \oplus \parity(X)] &\leq \left(\frac{1}{p} + \delta\right)\left(\frac{p}{2}\left(1 - \frac{2}{\pi}\right)+ \frac{1}{2} \right)\\
        &=  \frac{1}{2} - \frac{1}{\pi} + \frac{\delta p}{2}\left(1 - \frac{2}{\pi}\right) + \frac{1}{2}\left(\frac{1}{p} + \delta\right)\\
        &= \frac{1}{2} - \left(\frac{1}{\pi} - \frac{1}{2p}\right) + O(p\delta)
    \end{align}
    All that's left is to upper bound $\delta$, the total variation distance between $|X| \mod p$ and $U_p$. For this, we use the following Fact from \cite{viola2012complexity}.
    \begin{fact} [special case of Fact 3.2 in \cite{viola2012complexity}]\label{fact:puniform_bits}
        Let $(x_1, x_2, \dots, x_t) \in \{0,1\}^n$ be sampled uniformly. Then the total variation distance between $\sum_{i=1}^t x_i \mod p$ and $U_{p}$, the uniform distribution over $\{0, 1, \dots, p-1\}$ is at most $\sqrt{p} e^{-t/p^2}$
    \end{fact}
    Using this fact, we get the upper bound $\delta \leq p^{1/2}e^{{-n/p^2}}$. The probability the measured string is incorrect is then
    \begin{align}
        \Pr[Y \neq \majmod{p}(X) \oplus \parity(X)] \leq \frac{1}{2} - \frac{1}{\pi} + \frac{1}{2p} +  O(p^{3/2}e^{-n/p^2}).
    \end{align}
\end{proof}

\subsection{Removing non-unitary operations}
\label{ssec: unitary majority GHZ sampling}

We now construct a fully quantum circuit that takes a $\GHZ$ state as input and produces a state which, when measured in the computational basis, samples approximately from the distribution $(X, \majmod{p}(X) \oplus \parity(X))$. Our starting point is the non-unitary circuit constructed in \Cref{ssec: non-unitary majority GHZ sampling}. First, we modify this circuit by replacing the non-unitary $\NUrot{\theta}$ gates with a different set of non-unitary gates and show the classical distributions output by the two circuits after measurement are identical. Then we show these new non-unitary gates are close to unitary gates, and hence the circuit can be made fully unitary with minimal change to the output distribution. 

\subsubsection{Introducing multi-qubit non-unitary operations}
\label{sssec:multi_qubit_NU_sampling_circuit}

We start by defining the $m$-qubit non-unitary operation $\multiNUrot{\theta}{m}$ whose action on the $m$ qubit basis state $\ket{x} = \ket{x_1x_2...x_m}$ is given by:
\begin{align}
    \multiNUrot{\theta}{m}\ket{x_1x_2...x_m} = \exp(i\theta x_m) \ket{x_1} \otimes \exp(i \theta x_1) \ket{x_2} \otimes ... \otimes \exp(i \theta x_{m-1}) \ket{x_m}.
\end{align}
Intuitively, we can think of the $\multiNUrot{\theta}{m}$ operation as consisting of $m$ distinct $\NUrot{\theta}$ operations, just with the qubits they act on ``shifted'' away from the qubits controlling the gate by 1 modulo $m$. 

Now we observe that, in certain situations, an $\multiNUrot{\theta}{m}$ operation can replace a tensor product of $m$ different $\NUrot{\theta}$ operations. 
\begin{lem}
\label{lem:multi_NUrot-NUrot_circuit_equivalence}
For any $m$-qubit computational basis state $\ket{x} = \ket{x_1x_2...x_m}$ and arbitrary one qubit state $\ket{\psi}$, the following equivalence holds: 
\begin{align}
    &\bra{x}_{1 ... m} \left(\multiNUrot{\theta}{m} \adj \right)_{1...m} \left(\prod_{i=1}^m \CNOT_{i, m+1}\right) \ket{+}^{\otimes m} \otimes \ket{\psi} \nonumber \\
    &\hspace{80pt}= \bra{x}_{1 ... m} \left(\prod_{i=1}^m \left(\NUrot{\theta}\adj\right)_i \CNOT_{i, m+1}\right) \ket{+}^{\otimes m} \otimes \ket{\psi}
\end{align}
\end{lem}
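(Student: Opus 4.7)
The plan is to compute both sides of the claimed identity explicitly by first expanding the CNOT-entangled state in the computational basis. Using the standard identity
\begin{align*}
\prod_{i=1}^m \CNOT_{i,m+1}\ket{+}^{\otimes m}\otimes\ket{\psi} = \frac{1}{2^{m/2}}\sum_{y\in\{0,1\}^m}\ket{y}\otimes X^{\parity(y)}\ket{\psi},
\end{align*}
and then applying $\bra{x}_{1\dots m}$, both sides reduce to an expression of the form $\frac{1}{2^{m/2}}\bigl(E_x\ket{\psi}+O_x\,X\ket{\psi}\bigr)$, where $E_x$ and $O_x$ are the even-parity and odd-parity partial sums of matrix elements $\bra{x}\cdot(\text{NU})\ket{y}$ for the respective operator. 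Since $\ket{\psi}$ is arbitrary, and since $\ket{\psi}$ and $X\ket{\psi}$ can be chosen linearly independent, it suffices to show these two scalar sums agree for the two operators.

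For the right-hand side, the matrix element $\bra{x}(\NUrot{\theta}\adj)^{\otimes m}\ket{y}$ factors across qubits as $\prod_i \bra{x_i}\NUrot{\theta}\adj\ket{y_i}$, with each factor computable directly from the definition of $\NUrot{\theta}$ (giving $\delta_{y_i,0}$ when $x_i=0$, and $\cos\theta\,\delta_{y_i,1}+i\sin\theta\,\delta_{y_i,0}$ when $x_i=1$). Summing over $y$ restricted to the $|x|$ free coordinates and combining via the identity $\prod(\cos\theta X + i\sin\theta I) = X^{\parity(x)}e^{i\theta|x|X}$ yields a closed-form single-qubit operator on qubit $m+1$. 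Alternatively, one can apply \Cref{lem:NUrot_identity} pairwise to each $i$ in turn --- which is justified because the $(\NUrot{\theta}\adj)_i\CNOT_{i,m+1}$ blocks on distinct $i$ commute once the input on each control qubit is $\ket{+}$ --- arriving at the same expression.

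For the left-hand side, I would unpack the definition of $\multiNUrot{\theta}{m}$ and compute the matrix elements $\bra{x}\multiNUrot{\theta}{m}\adj\ket{y}$ from the action on basis states, then sum over $y$ with fixed parity. The main obstacle is that the cyclically-shifted control structure of $\multiNUrot{\theta}{m}$ makes the nonzero pattern of its matrix elements depend on $x_{i-1}$ rather than $x_i$, so the per-$y$ terms of the LHS and RHS are structurally different; the equality emerges only after carrying out the parity-restricted sum, and the bookkeeping of trigonometric factors along a cyclic index is the technical heart of the argument. Once that sum is evaluated, both sides reduce to the same single-qubit rotation on qubit $m+1$, completing the proof. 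The conceptual reason this works is that the CNOT-entangled state on qubits $1\ldots m$ lives in the two-dimensional subspace $\mathrm{span}\{\ket{\mathrm{EVEN}_m},\ket{\mathrm{ODD}_m}\}$, and $\multiNUrot{\theta}{m}$ is designed to reproduce the action of $\NUrot{\theta}^{\otimes m}$ precisely on this subspace.
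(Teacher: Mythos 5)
Your overall route---expanding $\prod_{i}\CNOT_{i,m+1}\ket{+}^{\otimes m}\otimes\ket{\psi}=2^{-m/2}\sum_{y}\ket{y}\otimes X^{\parity(y)}\ket{\psi}$ and reducing the claim to equality of the parity-restricted sums $E_x=\sum_{y:\,\parity(y)=0}\bra{x}M\ket{y}$ and $O_x=\sum_{y:\,\parity(y)=1}\bra{x}M\ket{y}$ for $M=\multiNUrot{\theta}{m}\adj$ versus $M=(\NUrot{\theta}\adj)^{\otimes m}$---is sound, and it is genuinely different from the paper's proof, which never computes matrix elements: there, $\bra{x}\multiNUrot{\theta}{m}\adj$ is rewritten as $\bra{x}\prod_j\exp(i\theta x_{j-1}X_j)$, each rotation is commuted through its $\CNOT$ to become $\exp(i\theta x_{j-1}X_jX_{m+1})$, the $X_j$ part is absorbed by $\ket{+}$, the rotations collect into $\exp(i\theta X\sum_j x_{j-1})$ on qubit $m+1$, the cyclic relabeling $\sum_j x_{j-1}=\sum_j x_j$ is applied, and the steps are then reversed to reinstate the $(\NUrot{\theta}\adj)_j$ gates. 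However, as written your proposal has a gap exactly where the content of the lemma lies: you evaluate the right-hand sums explicitly, but for the left-hand side you only assert that ``once that sum is evaluated, both sides reduce to the same single-qubit rotation.'' Equality of those sums for the cyclically shifted operator \emph{is} the lemma, so stopping there leaves the statement unproved.

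The good news is that the missing step is short, and the observation that closes it is precisely the paper's relabeling step. Since $X^{\parity(y)}=\prod_j X^{y_j}$ and each factor of $\bra{x}\multiNUrot{\theta}{m}\adj\ket{y}$ involves only $(x_{j-1},x_j,y_j)$ (indices mod $m$, $x_0=x_m$), the left-hand sum factorizes over $j$ exactly as your right-hand computation does: the $j$-th factor is $X^{x_j}e^{i\theta x_{j-1}X}$, i.e.\ the same as the right-hand factor $X^{x_j}e^{i\theta x_j X}$ except that the rotation is keyed to $x_{j-1}$ rather than $x_j$. Multiplying over $j$ gives $X^{\parity(x)}e^{i\theta X\sum_j x_{j-1}}$ versus $X^{\parity(x)}e^{i\theta X\sum_j x_j}$, and these coincide because $\sum_j x_{j-1}=\sum_j x_j=|x|$ cyclically. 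So the ``cyclic bookkeeping'' you flag as the technical heart is a one-line factorization, but it must actually appear. Two smaller points: your closing conceptual remark is stated for the un-daggered operators and in that form is false---already for $m=2$ one has $\multiNUrot{\theta}{2}\ket{\mathrm{ODD}_2}\neq\NUrot{\theta}^{\otimes 2}\ket{\mathrm{ODD}_2}$; what is true, and what the lemma uses, is that the \emph{adjoints} agree on $\mathrm{span}\{\ket{\mathrm{EVEN}_m},\ket{\mathrm{ODD}_m}\}$ (equivalently, the operators agree after projecting their output onto that subspace). Also, the linear-independence remark about $\ket{\psi}$ and $X\ket{\psi}$ is unnecessary for the direction you need: matching $E_x$ and $O_x$ already yields the identity for every $\ket{\psi}$.
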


\begin{proof}
The proof is similar to the proof of \Cref{lem:NUrot_identity}. In what follows we identify indices mod $m$ so, in particular, we have $x_0 = x_m$. Then we see: 
\begin{align}
    &\bra{x}_{1 ... m} \left(\multiNUrot{\theta}{m} \adj \right)_{1...m} \left(\prod_{j=1}^m \CNOT_{j, m+1}\right) \ket{+}^{\otimes m} \otimes \ket{\psi} \nonumber \\
    &\hspace{80pt}=\bra{x}_{1 ... m} \left(\prod_{j=1}^m \exp(i \theta X_j x_{j-1} ) \CNOT_{j, m+1}\right) \ket{+}^{\otimes m} \otimes \ket{\psi} \\
    &\hspace{80pt}= \bra{x}_{1...m} \left(\prod_{j=1}^m \CNOT_{j, m+1} 
    \exp(i \theta X_j X_{m+1} x_{j-1})
    \right) \ket{+}^{\otimes m} \otimes \ket{\psi} \\
    &\hspace{80pt}= \bra{x}_{1...m} 
    \left(\prod_{j=1}^m \CNOT_{j, m+1} 
    \right) \ket{+}^{\otimes m} \otimes \exp(i \theta X \sum_{j=1}^m x_{j-1}) \ket{\psi} \\
    &\hspace{80pt}= \bra{x}_{1...m} 
    \left(\prod_{j=1}^m \CNOT_{j, m+1} 
    \right) \ket{+}^{\otimes m} \otimes \exp(i \theta X \sum_{j=1}^m x_{j}) \ket{\psi} \\
    &\hspace{80pt}= \bra{x}_{1...m} 
    \left(\prod_{j=1}^m \exp(i \theta X_j x_{j}) \CNOT_{j, m+1} 
    \right) \ket{+}^{\otimes m} \otimes \ket{\psi} \\
    &\hspace{80pt}= \bra{x}_{1...m} 
    \left(\prod_{j=1}^m \left(\NUrot{\theta}\adj\right)_j \CNOT_{j, m+1} 
    \right) \ket{+}^{\otimes m} \otimes \ket{\psi}. 
\end{align}
Here the first line follows from the definition of $\multiNUrot{\theta}{m}$, the second line follows from commuting an $\exp(i \theta X)$ gate past a $\CNOT$ gate as in the proof of \Cref{lem:NUrot_identity}, the third line follows because $\ket{+}$ is a 1 eigenstate of the $X$ operator and the fourth line follows from a simple relabeling of indices. The fifth line follows from applying the same argument as in the second and third lines, just in the reverse direction, and the sixth line follows by definition of~$\NUrot{\theta}$. \Cref{fig:multi_NUrot-NUrot_circuit_equivalence} gives a diagrammatic version of this proof. 
\end{proof}

\begin{figure}[htp]
\begin{align*}
&\Qcircuit @C=1em @R=1em {
\lstick{\ket{+}}    & \ctrl{4}  & \qw       & \qw                   & \qw       & \multigate{3}{\multiNUrot{\theta}{m}\adj} & \qw & \push{\!\!\!\bra{x_1}}
&&
\lstick{\ket{+}}    & \ctrl{4}  & \qw       & \qw                   & \qw       & \gate{\exp(i\theta X x_{m})}         & \qw & \push{\!\!\!\bra{x_1}}
\\
\lstick{\ket{+}}    & \qw       & \ctrl{3}  & \qw                   & \qw       & \ghost{\multiNUrot{\theta}{m}\adj}        & \qw & \push{\!\!\!\bra{x_2}} 
&&
\lstick{\ket{+}}    & \qw       & \ctrl{3}  & \qw                   & \qw       & \gate{\exp(i\theta X x_{1})}         & \qw & \push{\!\!\!\bra{x_2}} 
\\
                    &           &           & {\makecell{\vdots\\}} &           &                                           &     & 
&\push{\rule{0em}{0em}=\rule{2em}{0em}} &
                    &           &           & {\makecell{\vdots\\}} &           &                                      &     & 
\\
\lstick{\ket{+}}    & \qw       & \qw       & \qw                   & \ctrl{1}  & \ghost{\multiNUrot{\theta}{m}\adj}        & \qw & \push{\!\!\!\bra{x_{m}}}   
&&
\lstick{\ket{+}}    & \qw       & \qw       & \qw                   & \ctrl{1}  & \gate{\exp(i\theta X x_{m-1})}       & \qw & \push{\!\!\!\bra{x_{m}}}   
\\
\lstick{\ket{\psi}}    & \targ     & \targ     & \qw                   & \targ     & \qw                                       & \qw &
&&
\lstick{\ket{\psi}}    & \targ     & \targ     & \qw                   & \targ     & \qw                                  & \qw &
} 
\\
\\
\cdashline{1-2}
\\
&\hspace{150pt}\Qcircuit @C=1em @R=1em {
&\lstick{\ket{+}}                                               & \qw & \ctrl{4}  & \qw       & \qw                   & \qw       & \qw & \push{\!\!\!\bra{x_1}}
\\
&\lstick{\ket{+}}                                               & \qw & \qw       & \ctrl{3}  & \qw                   & \qw       & \qw & \push{\!\!\!\bra{x_2}} \\
\push{\rule{0em}{0em}=\rule{2em}{0em}} 
&                                                               &     &           &           & {\makecell{\vdots\\}} &           &     & 
\\
&\lstick{\ket{+}}    & \qw                                      & \qw       & \qw       & \qw                   & \ctrl{1}  & \qw & \push{\!\!\!\bra{x_{m}}}   
\\ 
&\lstick{\ket{\psi}}    & \gate{\exp(i \theta X \sum_{j=1}^m x_j)} & \targ     & \targ     & \qw                   & \targ     & \qw &
}
\\
\\
\cdashline{1-2}
\\
&\hspace{150pt}\Qcircuit @C=1em @R=1em {
&\lstick{\ket{+}}    & \ctrl{4}  & \qw       & \qw                   & \qw       & \gate{\exp(i\theta X x_{1})}         & \qw & \push{\!\!\!\bra{x_1}}
\\
&\lstick{\ket{+}}    & \qw       & \ctrl{3}  & \qw                   & \qw       & \gate{\exp(i\theta X x_{2})}         & \qw & \push{\!\!\!\bra{x_2}} \\
\push{\rule{0em}{0em}=\rule{2em}{0em}} 
&                    &           &           & {\makecell{\vdots\\}} &           &                                      &     & 
\\
&\lstick{\ket{+}}    & \qw       & \qw       & \qw                   & \ctrl{1}  & \gate{\exp(i\theta X x_{m})}       & \qw & \push{\!\!\!\bra{x_{m}}}   
\\
&\lstick{\ket{\psi}}    & \targ     & \targ     & \qw                   & \targ     & \qw                                  & \qw &
}
\\
\\
\cdashline{1-2}
\\
&\hspace{150pt}\Qcircuit @C=1em @R=1em {
&\lstick{\ket{+}}    & \ctrl{4}  & \qw       & \qw                   & \qw       & \gate{\NUrot{\theta}\adj}         & \qw & \push{\!\!\!\bra{x_1}}
\\
&\lstick{\ket{+}}    & \qw       & \ctrl{3}  & \qw                   & \qw       & \gate{\NUrot{\theta}\adj}         & \qw & \push{\!\!\!\bra{x_2}} \\
\push{\rule{0em}{0em}=\rule{2em}{0em}} 
&                    &           &           & {\makecell{\vdots\\}} &           &                                      &     & 
\\
&\lstick{\ket{+}}    & \qw       & \qw       & \qw                   & \ctrl{1}  & \gate{\NUrot{\theta}\adj}       & \qw & \push{\!\!\!\bra{x_{m}}}   
\\
&\lstick{\ket{\psi}}    & \targ     & \targ     & \qw                   & \targ     & \qw                                  & \qw &
}
\end{align*}
\caption{Diagrammatic proof of \Cref{lem:multi_NUrot-NUrot_circuit_equivalence}. $\ket{\psi}$ is an arbitrary single qubit state. The equivalence between lines is explained in the proof of the lemma.}
\label{fig:multi_NUrot-NUrot_circuit_equivalence}
\end{figure}
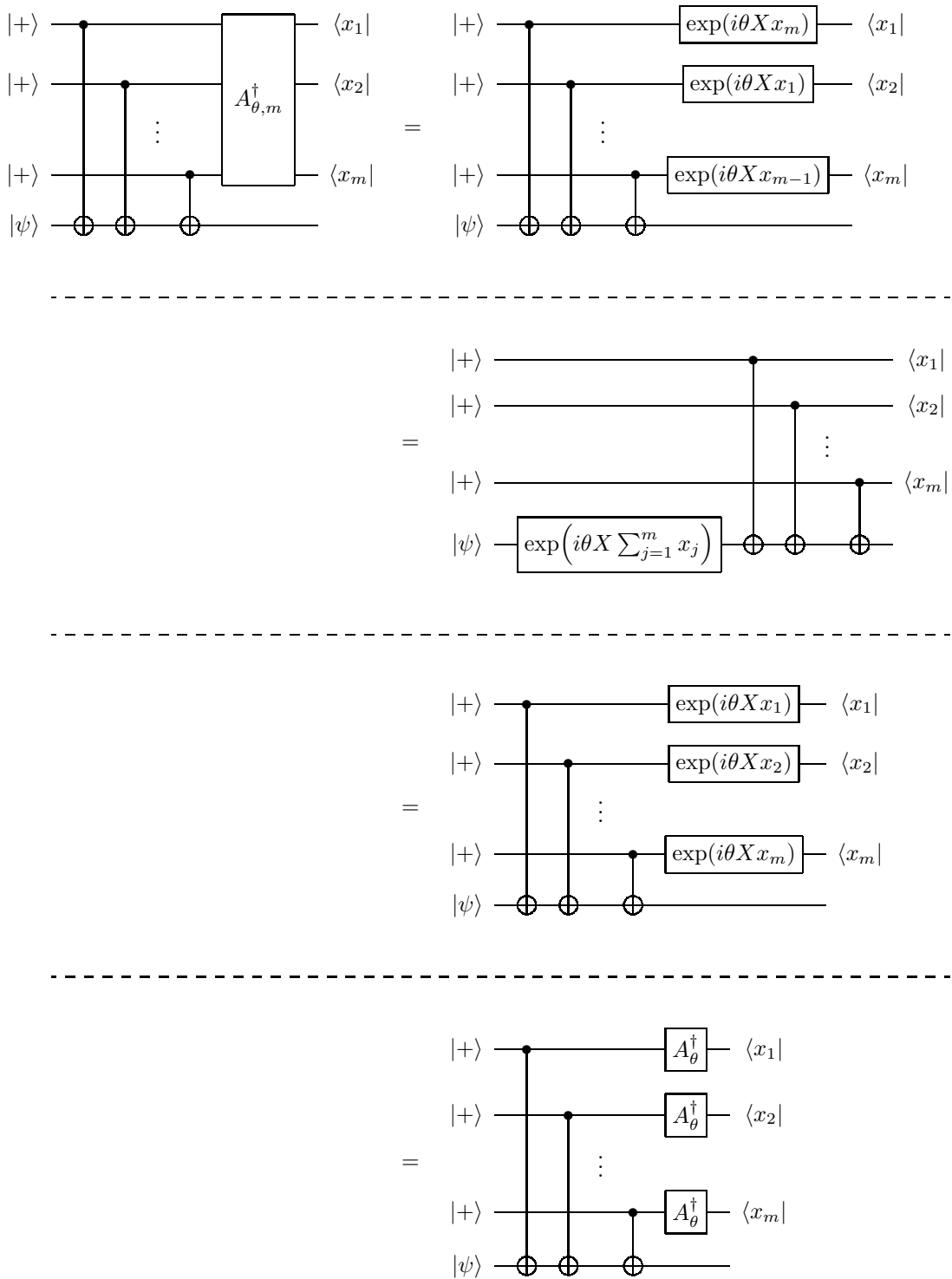

A straightforward consequence of \Cref{lem:multi_NUrot-NUrot_circuit_equivalence} and the arguments of \Cref{ssec: non-unitary majority GHZ sampling} is that constant-depth quantum circuits augmented with $\multiNUrot{\theta}{m}$ gates and acting on a GHZ state can also approximately sample from the distribution $(X, \majmod{p}(X) \oplus \parity(X))$.

\begin{cor}

\label{cor:multiNUrot_majmod_sampling}
Let $m$ and $D$ be integers, and $n = Dm + 1$. Then the state 
\begin{align}
    \left(\left(\multiNUrot{\pi/p}{m}\adj \right)^{\otimes D} \otimes \exp(-i \pi X/4)\right) H^{\otimes n} \ket{GHZ_n},
\end{align}
when measured in the computational basis, produces an output distribution $(X', Y)$ with 
\begin{align}
    \TVD((X', Y), (X, \majmod{p}(X) \oplus \parity(X))) \leq  \frac{1}{2} - \frac{1}{\pi} + \frac{1}{2p} +  O(p^{3/2}e^{-n/p^2}).
\end{align}
\end{cor}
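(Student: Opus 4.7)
The plan is to reduce directly to \Cref{thm:NUrot_majmod_sampling} by using \Cref{lem:multi_NUrot-NUrot_circuit_equivalence} to replace each $\multiNUrot{\pi/p}{m}\adj$ with a tensor product $(\NUrot{\pi/p}\adj)^{\otimes m}$ without changing any computational-basis amplitude. First I would apply \Cref{eq:Hadarmarded_GHZ} to rewrite $H^{\otimes n}\ket{\GHZ_n}$ as $\left(\prod_{i=1}^{n-1} \CNOT_{i,n}\right) \ket{+}^{\otimes(n-1)}\otimes\ket{0}_n$. Partition the first $n-1 = Dm$ qubits into $D$ disjoint blocks of $m$ qubits, so that $\left(\multiNUrot{\pi/p}{m}\adj\right)^{\otimes D}$ acts as one $\multiNUrot{\pi/p}{m}\adj$ per block, and each block's CNOTs target the shared qubit $n$. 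This is exactly the form of the left-hand side of \Cref{lem:multi_NUrot-NUrot_circuit_equivalence} applied block-by-block (with the role of qubit $m+1$ in the lemma played globally by qubit $n$).

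Next I would establish the block-wise substitution. Because the $D$ copies of $\multiNUrot{\pi/p}{m}\adj$ act on disjoint registers and the final rotation $\exp(-i\pi X/4)_n$ commutes with all CNOTs targeting qubit $n$, I can expand the amplitude for any outcome $(x^{(1)},\dots,x^{(D)},y) \in \{0,1\}^n$ and apply \Cref{lem:multi_NUrot-NUrot_circuit_equivalence} sequentially, one block at a time. At the $j$-th step the lemma is invoked with $\ket{\psi}$ equal to the (possibly unnormalized) single-qubit state on qubit $n$ produced by absorbing the first $j-1$ substitutions and the final rotation; since the lemma's statement allows an arbitrary $\ket{\psi}$, the substitution goes through and preserves the amplitude exactly. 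Iterating over $j = 1, \dots, D$ shows that
\begin{equation*}
\bra{x^{(1)}\cdots x^{(D)} y}\left(\left(\multiNUrot{\pi/p}{m}\adj\right)^{\otimes D}\!\otimes e^{-i\pi X/4}\right) H^{\otimes n}\ket{\GHZ_n} = \bra{x^{(1)}\cdots x^{(D)} y}\left(\left(\NUrot{\pi/p}\adj\right)^{\otimes(n-1)}\!\otimes e^{-i\pi X/4}\right) H^{\otimes n}\ket{\GHZ_n},
\end{equation*}
so the joint distribution of computational-basis outcomes is identical to that of the circuit in \Cref{fig:majmod_p_NUrot_circuit}.

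The total variation bound then follows immediately by quoting \Cref{thm:NUrot_majmod_sampling}. The only nontrivial point is justifying the sequential application of \Cref{lem:multi_NUrot-NUrot_circuit_equivalence} across blocks that share the target qubit $n$, and as explained above this is handled by using the lemma's freedom in the choice of $\ket{\psi}$. I do not anticipate a serious obstacle here; the work has already been done in \Cref{lem:multi_NUrot-NUrot_circuit_equivalence} and \Cref{thm:NUrot_majmod_sampling}, and this corollary is essentially a bookkeeping observation that the multi-qubit non-unitary gate can be swapped in block-wise without changing the output statistics.
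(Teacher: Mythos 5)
Your proposal is correct and follows essentially the same route as the paper: rewrite $H^{\otimes n}\ket{\GHZ_n}$ via \Cref{eq:Hadarmarded_GHZ}, use \Cref{lem:multi_NUrot-NUrot_circuit_equivalence} (block by block, exploiting that the lemma holds for an arbitrary, possibly unnormalized state on the shared target qubit and that the $\exp(-i\pi X/4)$ rotation commutes with the CNOTs) to replace each $\multiNUrot{\pi/p}{m}\adj$ by $(\NUrot{\pi/p}\adj)^{\otimes m}$ without changing any measurement amplitude, and then quote \Cref{thm:NUrot_majmod_sampling}. The paper states the same substitution more tersely as an equality of the full pre-measurement states; your amplitude-by-amplitude bookkeeping is just a more explicit version of that step.
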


\begin{proof}
By \Cref{lem:multi_NUrot-NUrot_circuit_equivalence} and \Cref{eq:Hadarmarded_GHZ} we have 
\begin{align}
    &\left(\left(\multiNUrot{\pi/p}{m}\adj \right)^{\otimes D} \otimes \exp(-i \pi X/4)\right) H^{\otimes n} \ket{GHZ_n} \nonumber \\
    &\hspace{80pt}= 
     \left(\left(\multiNUrot{\pi/p}{m}\adj \right)^{\otimes D} \otimes \exp(-i \pi X/4)\right) \left(\prod_{i=1}^{n-1} \CNOT_{i,n} \right)\ket{+}^{\otimes n-1} \otimes \ket{0} \\
    &\hspace{80pt}= 
     \left(\left(\NUrot{\pi/p}\adj \right)^{\otimes n-1} \otimes \exp(-i \pi X/4)\right) \left(\prod_{i=1}^{n-1} \CNOT_{i,n} \right)\ket{+}^{\otimes n-1} \otimes \ket{0} \\
     &\hspace{80pt}= 
     \left(\left(\NUrot{\pi/p}\adj \right)^{\otimes n-1} \otimes \exp(-i \pi X/4)\right) H^{\otimes n} \ket{GHZ_n}
\end{align}
In the proof of \Cref{thm:NUrot_majmod_sampling} we show this state, when measured in the computational basis, is close to the distribution $(X, \majmod{p}(X) \oplus \parity(X))$.
\end{proof}

\subsubsection{Replacing multi-qubit non-unitary operations with unitary operations}

In this section, we construct a fully unitary circuit which takes a GHZ state as input and produces an output which, when measured in the computation basis, samples for a distribution close in Total Variation Distance to the distribution $(X, \majmod{p}(X) \oplus \parity(X))$. We do this by proving that we can replace the non-unitary operations $\multiNUrot{m}{\theta}$ introduced in the previous section with unitary operations while causing minimal change to a circuit using these elements. 

To make these statements formal, we first recall some definitions and useful standard facts about matrix norms. 

\begin{defn}
The Frobenius norm of a matrix $M$, denoted $\norm{M}_F$, is defined by
\begin{align}
    \norm{M}_F = \sqrt{\tr[M^*M]}
\end{align}
\end{defn}

\begin{defn}
The infinity (or operator) norm of a matrix M, denoted $\norm{M}_\infty$,
is defined by
\begin{align}
    \norm{M}_\infty = \max_{\ket{\psi} : \norm{\ket{\psi}} = 1} \norm{M \ket{\psi}},
\end{align}
where $\norm{\ket{\psi}}$ denotes the regular Euclidean norm of any vector $\ket{\psi}$.
\end{defn}

\begin{fact} \label{fact:Frobenius_Operator_bound}
For any matrix $M$, the Frobenius norm upper bounds the operator norm
\begin{align}
    \norm{M}_\infty \leq \norm{M}_F.
\end{align}
\end{fact}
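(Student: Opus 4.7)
The plan is to reduce both norms to simple expressions in the singular values of $M$ and then observe that the maximum of a set of non-negative numbers is bounded by their sum. Since this is a standard linear algebra identity, the approach is essentially to unpack the definitions via the singular value decomposition.

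First, I would write $M = U \Sigma V^*$ in singular value decomposition form, where $U$ and $V$ are unitary and $\Sigma$ is diagonal with non-negative entries $\sigma_1 \geq \sigma_2 \geq \dots \geq \sigma_r \geq 0$. Since unitaries preserve both the Euclidean norm on vectors and the trace, I would observe the two key reductions:
\begin{align}
\norm{M}_\infty &= \max_{\ket{\psi} : \norm{\ket{\psi}}=1} \norm{M \ket{\psi}} = \sigma_1, \\
\norm{M}_F^2 &= \tr(M^* M) = \tr(V \Sigma^2 V^*) = \tr(\Sigma^2) = \sum_{i} \sigma_i^2.
\end{align}

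The conclusion is immediate: since each $\sigma_i^2 \geq 0$, we have $\sigma_1^2 \leq \sum_i \sigma_i^2$, and taking square roots gives $\norm{M}_\infty \leq \norm{M}_F$ as desired.

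There is no real obstacle here; the only decision is whether to invoke the singular value decomposition or instead argue directly by writing $\norm{M \ket{\psi}}^2 = \bra{\psi} M^* M \ket{\psi}$, bounding this by the largest eigenvalue of the positive semidefinite matrix $M^* M$, and comparing to $\tr(M^* M)$ which is the sum of eigenvalues. Both paths are short; I would likely pick whichever matches the conventions used elsewhere in the paper, and keep the proof to just a few lines.
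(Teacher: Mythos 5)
Your proof is correct and takes essentially the same approach as the paper, which compares the largest eigenvalue of $M^*M$ (the square of the operator norm) to the sum of all its eigenvalues (the square of the Frobenius norm); your SVD phrasing and your noted alternative via $\bra{\psi}M^*M\ket{\psi}$ are just two framings of that same argument.
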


\begin{proof}
For an arbitrary matrix $M$, let $\lambda_1, ..., \lambda_d$ denote the eigenvalues of $M^* M$, with $\lambda_1 \geq \lambda_2 \geq ... \lambda_d$. Note all $\lambda_i$ are positive. Then we have 
\begin{align}
\norm{M}_\infty^2 = \lambda_1 \leq \sum_{i=1}^d \lambda_i = \norm{M}_F^2
\end{align} 
as desired. \qedhere
\end{proof}

\begin{fact}
\label{fact:Operator_Tensor_Product_Bound}
Given matrices $A_1, A_2, ... A_s$ and $B_1, B_2, ... , B_s$ with 
\begin{align}
    \norm{A_i - B_i}_\infty &\leq \epsilon, \\
    \norm{A_i}_\infty &\leq 1
\end{align}
for all $i \in [s]$, and 
\begin{align}
    s \epsilon < 1,
\end{align}
we also have 
\begin{align}
    \norm{ \bigotimes_{i \in [s]} A_i - \bigotimes_{i \in [s]} B_i}_\infty \leq 2s\epsilon. 
\end{align}
\begin{proof}
First note that $\norm{M}_\infty$ is equal to the largest singular value of the matrix $M$, from which it follows that 
\begin{align}
\norm{M \otimes N}_\infty = \norm{M}_\infty \norm{N}_\infty
\end{align}
for any matrices $M$ and $N$. Then an inductive argument gives
\begin{align}
        \norm{ \bigotimes_{i = 1}^s A_i - \bigotimes_{i = 1}^s B_i}_\infty &= \norm{ \bigotimes_{i = 1^s} A_i - B_1 \bigotimes_{i = 2}^s A_i +  B_1 \bigotimes_{i = 2}^s A_i - \bigotimes_{i = 1}^s B_i}_\infty \\ 
        &\leq \norm{\left(A_1 - B_1\right)\bigotimes_{i = 2}^s A_i} + \norm{B_1 \otimes \left(\bigotimes_{i = 2}^s A_i - \bigotimes_{i = 2}^s B_i \right)} \\
        &\leq \epsilon + (1+\epsilon)\norm{\bigotimes_{i = 2}^s A_i - \bigotimes_{i = 2}^s B_i} \\
        &= \epsilon + (1+ \epsilon)(2\epsilon(s-1)) \leq 2s\epsilon
\end{align}
as desired.
\end{proof}
\end{fact}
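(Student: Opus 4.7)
The plan is a standard hybrid/telescoping argument, relying on two ingredients: multiplicativity of the operator norm under tensor products (i.e.\ $\norm{M\otimes N}_\infty = \norm{M}_\infty \cdot \norm{N}_\infty$, which follows from the fact that $\norm{\cdot}_\infty$ equals the largest singular value and singular values of $M\otimes N$ are products of those of $M$ and $N$), together with the triangle inequality.

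First I would note a simple consequence of the hypotheses: for each $i$,
\begin{align}
    \norm{B_i}_\infty \leq \norm{A_i}_\infty + \norm{A_i - B_i}_\infty \leq 1 + \epsilon.
\end{align}
Next, I would introduce the hybrids $C_k := B_1 \otimes \cdots \otimes B_k \otimes A_{k+1} \otimes \cdots \otimes A_s$ for $k = 0,1,\ldots,s$, so that $C_0 = \bigotimes_i A_i$ and $C_s = \bigotimes_i B_i$. Writing the target difference as a telescoping sum,
\begin{align}
    \bigotimes_{i=1}^s B_i - \bigotimes_{i=1}^s A_i = \sum_{k=1}^{s} (C_k - C_{k-1}),
\end{align}
each summand factors as $C_k - C_{k-1} = B_1 \otimes \cdots \otimes B_{k-1} \otimes (B_k - A_k) \otimes A_{k+1} \otimes \cdots \otimes A_s$. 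Applying multiplicativity of the operator norm and the bounds above yields $\norm{C_k - C_{k-1}}_\infty \leq (1+\epsilon)^{k-1}\cdot \epsilon \cdot 1^{s-k} = \epsilon(1+\epsilon)^{k-1}$.

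Summing the geometric series gives $\sum_{k=1}^s \epsilon(1+\epsilon)^{k-1} = (1+\epsilon)^s - 1$, so by the triangle inequality $\norm{\bigotimes B_i - \bigotimes A_i}_\infty \leq (1+\epsilon)^s - 1$. To convert this to the stated bound, I would use $(1+\epsilon)^s \leq e^{s\epsilon}$ together with the elementary scalar inequality $e^x - 1 \leq 2x$ for $x \in [0,1]$; under the hypothesis $s\epsilon < 1$, this immediately gives $(1+\epsilon)^s - 1 \leq 2s\epsilon$, as required. (An equally natural route would be induction on $s$, peeling off one factor at a time and using the same $1+\epsilon$ bound on $\norm{B_1}_\infty$; this mirrors the structure above.)

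The argument is routine, so the main ``obstacle'' is only bookkeeping: one has to be careful that in the hybrid sequence the $B_i$ factors (whose operator norm is only bounded by $1+\epsilon$, not $1$) appear at most $k-1$ times in the $k$-th term, otherwise the geometric factor blows up. With that in hand, the hypothesis $s\epsilon < 1$ is exactly what one needs to pass from $(1+\epsilon)^s - 1$ to a clean linear bound $2s\epsilon$.
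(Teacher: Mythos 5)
Your proposal is correct and follows essentially the same route as the paper: both swap one tensor factor at a time (your telescoping hybrids are just the unrolled form of the paper's induction on $s$), and both rely on multiplicativity of $\norm{\cdot}_\infty$ under tensor products together with the bound $\norm{B_i}_\infty \leq 1+\epsilon$. The only difference is the final bookkeeping, where your explicit geometric sum $(1+\epsilon)^s - 1$ combined with $e^x - 1 \leq 2x$ for $x \in [0,1]$ is in fact a slightly cleaner and tighter way to reach $2s\epsilon$ under the hypothesis $s\epsilon < 1$ than the paper's concluding inductive arithmetic.
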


\begin{fact} \label{fact:l2_distance_trace_distance_bound}
Given two states $\ket{\rho}$ and $\ket{\sigma}$, let $p(x)$ and $q(x)$ denote the resulting classical distributions when $\ket{\rho}$ and $\ket{\sigma}$ are measured in some basis $\{\ket{x}\}$. Then we have 
\begin{align}
    \sum_x \abs{p(x) - q(x)} \leq  4 \norm{\ket{\rho} - \ket{\sigma}}
\end{align}
\end{fact}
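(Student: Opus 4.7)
The plan is to reduce the bound on $\sum_x |p(x)-q(x)|$ to a bound on $\| \ket{\rho} - \ket{\sigma}\|$ via Cauchy--Schwarz, using the standard factorization of a difference of squares. Write $p(x) = |\braket{x|\rho}|^2$ and $q(x) = |\braket{x|\sigma}|^2$, and for each $x$ factor
\begin{align}
    |p(x)-q(x)| = \bigl| |\braket{x|\rho}| - |\braket{x|\sigma}| \bigr| \cdot \bigl( |\braket{x|\rho}| + |\braket{x|\sigma}| \bigr).
\end{align}
Summing over $x$ and applying Cauchy--Schwarz yields
\begin{align}
    \sum_x |p(x) - q(x)| \leq \sqrt{\sum_x \bigl( |\braket{x|\rho}| - |\braket{x|\sigma}| \bigr)^2} \cdot \sqrt{\sum_x \bigl( |\braket{x|\rho}| + |\braket{x|\sigma}| \bigr)^2}.
\end{align}

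Next I would bound the two factors separately. For the first, the reverse triangle inequality $\bigl| |a| - |b| \bigr| \leq |a-b|$ applied pointwise gives
\begin{align}
    \sum_x \bigl( |\braket{x|\rho}| - |\braket{x|\sigma}| \bigr)^2 \leq \sum_x |\braket{x|\rho} - \braket{x|\sigma}|^2 = \| \ket{\rho} - \ket{\sigma} \|^2,
\end{align}
since $\{\ket{x}\}$ is an orthonormal basis. For the second, the elementary inequality $(a+b)^2 \leq 2(a^2+b^2)$ and Parseval give
\begin{align}
    \sum_x \bigl( |\braket{x|\rho}| + |\braket{x|\sigma}| \bigr)^2 \leq 2 \bigl( \|\ket{\rho}\|^2 + \|\ket{\sigma}\|^2 \bigr) \leq 4,
\end{align}
where in the last step I use that $\ket{\rho}$ and $\ket{\sigma}$ are (sub)normalized quantum states.

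Combining these two bounds gives $\sum_x |p(x)-q(x)| \leq 2\|\ket{\rho}-\ket{\sigma}\|$, which is in fact stronger than the stated $4\|\ket{\rho}-\ket{\sigma}\|$ so the fact follows immediately. There is no real obstacle here; the only small subtlety is being careful about whether the states are assumed normalized (the factor of $4$ in the statement is almost certainly a deliberately loose constant that allows for subnormalized states such as those arising from the post-measurement analyses earlier in the paper). If one wanted to avoid Cauchy--Schwarz altogether, an alternative route is to invoke the standard inequality $\text{TVD}(p,q) \leq \tfrac{1}{2}\| \ket{\rho}\bra{\rho} - \ket{\sigma}\bra{\sigma}\|_1 \leq \| \ket{\rho} - \ket{\sigma}\|$ for pure states, which gives the same conclusion; I would prefer the direct Cauchy--Schwarz approach above since it is self-contained and doesn't require introducing density matrices.
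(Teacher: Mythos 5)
Your proof is correct, and it takes a genuinely different route from the paper's. The paper argues via a measurement operator: it shows that for any PSD $M \leq I$ one has $\norm{M\ket{\rho}}^2 - \norm{M\ket{\sigma}}^2 \leq 2\norm{\ket{\rho}-\ket{\sigma}}$, then takes $M$ to be the projector onto the set $\{x : p(x) \geq q(x)\}$ and uses the identity $\sum_{x: p(x)\geq q(x)}(p(x)-q(x)) = \tfrac{1}{2}\sum_x \abs{p(x)-q(x)}$ (which requires both distributions to be normalized) to land on the constant $4$. You instead factor $\abs{p(x)-q(x)}$ as a difference of squares of amplitudes and apply Cauchy--Schwarz, bounding one factor by $\norm{\ket{\rho}-\ket{\sigma}}$ via the reverse triangle inequality and Parseval, and the other by $2$ since the states have norm at most $1$. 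Your route is more self-contained and yields the sharper constant $2\norm{\ket{\rho}-\ket{\sigma}}$, which of course implies the stated bound; it also avoids needing $\sum_x p(x) = \sum_x q(x) = 1$, requiring only $\norm{\ket{\rho}}, \norm{\ket{\sigma}} \leq 1$, so it in fact covers the subnormalized case you speculate about. The paper's projector argument buys nothing extra here beyond being the textbook route toward trace-distance-style bounds; the looser constant $4$ is simply what that argument produces, not a deliberate allowance. One small caveat: your Parseval step uses that $\{\ket{x}\}$ is a complete orthonormal basis, which is exactly the setting of the Fact, so no issue arises.
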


\begin{proof}
First, we note that for any two states $\ket{\rho}$ and $\ket{\sigma}$ and PSD matrix $M \leq I$ we have 
\begin{align}
    2 \norm{\ket{\rho} - \ket{\sigma}} &\geq 2 \norm{M (\ket{\rho} - \ket{\sigma})} \\
    &\geq 2\left(\norm{M \ket{\rho}} - \norm{M \ket{\sigma}}\right) \\
    &\geq \left(\norm{M \ket{\rho}} - \norm{M \ket{\sigma}}\right)\left(\norm{M \ket{\rho}} + \norm{M \ket{\sigma}}\right) \\
    &= \norm{M \ket{\rho}}^2 - \norm{M \ket{\sigma}}^2 
\end{align}
Then defining probability distributions $p(x)$ and $q(x)$ and the basis $\{\ket{x}\}$ as above, let 
\begin{align}
    P_x := \{x : p(x) \geq q(x)\}
\end{align}
and 
\begin{align}
    M_x = \sum_{x \in P_x} \ketbra{x}.
\end{align}
Then note 
\begin{align}
    \norm{M_x \ket{\rho}}^2 - \norm{M_x \ket{\sigma}}^2 &= \sum_{x \in P_x} \abs{\braket{x}{\rho}}^2 - \abs{\braket{x}{\sigma}}^2 \\
    &= \sum_{x \in P_x} (p(x) - q(x)) \\
    &= \frac{1}{2} \sum_x \abs{p(x) - q(x)} 
\end{align}
with the final inequality holding because both $p(x)$ and $q(x)$ must sum to one. Combining the two inequalities above proves the result. 
\end{proof}
Next, we recall the definition of the matrix $\multiNUrot{m}{\theta}$ in terms of its action on computational basis states. 
\begin{align}
    \multiNUrot{m}{\theta} \ket{x_1 x_2 ... x_m} := \exp(i \theta X x_m) \ket{x_1} \otimes \exp(i \theta X x_1) \ket{x_2} \otimes ... \otimes \exp(i \theta X x_{m-1}) \ket{x_m}.
\end{align}
The matrix $\multiNUrot{m}{\theta}$ would be a unitary matrix iff it mapped computational basis states to some set of orthonormal basis states.\footnote{More generally it is unitary iff it maps any set of orthonormal basis states to some other orthonormal basis.} The following lemma shows that this condition is close to being satisfied. In what follows, for any bitstring $x = x_1 x_2 ... x_m \in \{0,1\}^m$ we let $\overline{x}$ denote the bitwise compliment of $x$. We also interpret all subscripts in the remainder of this section mod $m$ so, in particular, $x_0 = x_m$. 

\begin{lem}
\label{lem:multiNUrot_inner_products}
For any $\theta \in \mathbb{R}, m \in \mathbb{Z}^+$ and $x = x_1x_2...x_m \in \{0,1\}^m$ the matrix $\multiNUrot{\theta}{m}$ satisfies the following properties: 
\begin{enumerate}
    \item $\matrixel{x}{\multiNUrot{\theta}{m}\adj\multiNUrot{\theta}{m}}{x} = 1$. \label{lemitem:multiNUrot_normalized} 
    \item $\matrixel{\overline{x}}{\multiNUrot{\theta}{m}\adj\multiNUrot{\theta}{m}}{x} = -i^{m + 2\abs{x}} \sin^m(\theta)$. 
    \label{lemitem:multiNUrot_compliments}
    \item $\matrixel{y}{\multiNUrot{\theta}{m}\adj\multiNUrot{\theta}{m}}{x} = 0$ for any $y \in \{0,1\}^m \backslash \{ \overline{x}, x\} $.  
    \label{lemitem:multiNUrot_orthogonality}
\end{enumerate}
\end{lem}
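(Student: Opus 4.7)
The plan is to expand the matrix element as a product of $m$ single-qubit matrix elements using the tensor-product structure $\multiNUrot{\theta}{m}\ket{x} = \bigotimes_{j=1}^m \exp(i\theta x_{j-1} X)\ket{x_j}$ (with the cyclic convention $x_0 = x_m$), and then characterize which pairs $(x,y)$ produce a nonzero answer. Concretely, pairing the factors from $\multiNUrot{\theta}{m}\adj$ and $\multiNUrot{\theta}{m}$ and using $\exp(-i\theta a X)\exp(i\theta b X)=\exp(i\theta(b-a)X)$ on each qubit, I would first derive
\begin{align}
    \matrixel{y}{\multiNUrot{\theta}{m}\adj\multiNUrot{\theta}{m}}{x} = \prod_{j=1}^m \matrixel{y_j}{\exp\pbra{i\theta(x_{j-1}-y_{j-1})X}}{x_j}.
\end{align}

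Next I would case-split each factor on whether $x_{j-1}=y_{j-1}$. When $x_{j-1}=y_{j-1}$ the $j$-th factor is $\braket{y_j}{x_j}\in\cbra{0,1}$; when $x_{j-1}\neq y_{j-1}$ it equals $\cos\theta$ if $y_j=x_j$ and $\pm i\sin\theta$ (with sign set by $x_{j-1}-y_{j-1}\in\cbra{\pm 1}$) if $y_j\neq x_j$. The key structural observation is that whenever $y_j\neq x_j$ but $y_{j-1}=x_{j-1}$, the entire product collapses to zero. Contrapositively, for a nonzero matrix element the disagreement set $S := \cbra{j\in[m] : y_j\neq x_j}$ must be closed under $j\mapsto j-1 \pmod m$, and iterating around the length-$m$ cycle forces $S=\emptyset$ or $S=[m]$, i.e.\ $y\in\cbra{x,\overline{x}}$. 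This establishes \Cref{lemitem:multiNUrot_orthogonality}.

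For \Cref{lemitem:multiNUrot_normalized} the case $y=x$ makes every factor equal $\braket{x_j}{x_j}=1$, so the product is $1$. For \Cref{lemitem:multiNUrot_compliments} the case $y=\overline{x}$ forces each factor to equal $\matrixel{\overline{x_j}}{\exp(i\theta(2x_{j-1}-1)X)}{x_j} = i\sin\theta\cdot(2x_{j-1}-1)$, using $\matrixel{\overline{a}}{X}{a}=1$ and $\sin(-\alpha)=-\sin\alpha$. Multiplying the $m$ factors and using the cyclic identity $\sum_j x_{j-1}=|x|$ gives $(i\sin\theta)^m\prod_{j=1}^m(2x_{j-1}-1) = i^m(-1)^{m-|x|}\sin^m\theta$, which I would then rewrite via $-1=i^2$ to land on the stated form $-i^{m+2|x|}\sin^m\theta$. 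I expect this closing bookkeeping of powers of $i$ to be the most error-prone piece of the argument; the real conceptual content is the cyclic rigidity forcing $y\in\cbra{x,\overline{x}}$, and once that is in hand items 1 and 2 are direct computations.
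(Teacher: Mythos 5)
Your proposal is correct and follows essentially the same route as the paper: expand $\matrixel{y}{\multiNUrot{\theta}{m}\adj\multiNUrot{\theta}{m}}{x}$ as a product of single-qubit matrix elements, note the product vanishes whenever some factor is $\braket{y_k}{x_k}$ with $y_k \neq x_k$ but $y_{k-1}=x_{k-1}$ (your cyclic-closure argument is exactly the contrapositive of the paper's choice of such a $k$), and evaluate the two surviving cases $y=x$ and $y=\overline{x}$ directly. One remark on the step you flagged as error-prone: your intermediate value $i^m(-1)^{m-\abs{x}}\sin^m\theta$ is the correct one (it equals $i^{m+2\abs{\overline{x}}}\sin^m\theta$, which is also what the paper's computation actually yields), but the proposed rewrite to the stated form $-i^{m+2\abs{x}}\sin^m\theta$ only goes through when $m$ is odd, since $i^{m+2\abs{\overline{x}}}=-i^{m+2\abs{x}}$ forces $(-1)^m=-1$; the paper's own final line makes the same slip using $\abs{x}+\abs{\overline{x}}=m$. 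The discrepancy is a global sign for even $m$ and is harmless downstream, because the construction of $\Urot{m}{\theta}$ only uses this quantity in the form $-\matrixel{\overline{x}}{\multiNUrot{\theta}{m}\adj\multiNUrot{\theta}{m}}{x}$ together with its magnitude $\sin^m\theta$.
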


\begin{proof}
The proof of \Cref{lemitem:multiNUrot_normalized,lemitem:multiNUrot_compliments} are purely computational. For any $x = x_1x_2...x_m \in \{0,1\}^m$ we have
\begin{align}
      \bra{x} \multiNUrot{m}{\theta}\adj \multiNUrot{m}{\theta} \ket{x} 
    &= \prod_{j \in [m]} \bra{x_j} \exp(-i \theta x_{j-1}) 
    \exp(i \theta x_{j-1})
    \ket{x_j} \\
    &= \prod_{j \in [m]} \braket{x_j} = 1,
\end{align}
proving \Cref{lemitem:multiNUrot_normalized}. A similar calculation gives 
\begin{align}
    \matrixel{\overline{x}}{\multiNUrot{m}{\theta}\adj \multiNUrot{m}{\theta}}{x} 
    &= \prod_{j \in [m]} \matrixel{\overline{x}_j}{
    \exp(-i \theta X \overline{x}_j) 
    \exp(i \theta X x_j)}{x_j} \\
    &= \prod_{j \in [m]} \matrixel{\overline{x}_j}{
    \exp(i^{1 + 2\overline{x}_j} \theta X)}{x_j} \\
    &= \prod_{j \in [m]} \matrixel{\overline{x}_j}{\cos(\theta) + i^{1 + 2\overline{x}_j} \sin(\theta) X}{x_j} \\
    &= \prod_{j \in [m]} i^{1 + 2\overline{x}_j} \sin(\theta) \\
    &= i^{m + 2\abs{\overline{x}}} \sin^m(\theta) \\
    &= -i^{m + 2\abs{x}} \sin^m(\theta),
\end{align}
where we used that $X \ket{\overline{x}_j} = \ket{x_j}$ by definition of the compliment on the fourth line and that $\abs{\overline{x}} + \abs{x} = m$ for any $x$ in the final line. This proves \Cref{lemitem:multiNUrot_compliments}.

To prove \Cref{lemitem:multiNUrot_orthogonality} note that for any $m$ bit strings $x$ and $y$ with $x \notin \{\overline{y}, y\}$ there exists a $k \in [m]$ with $x_{k-1} = y_{k-1}$ and $x_{k} \neq y_{k}$. Fixing $k$ to be that value we find:
\begin{align}
    \matrixel{y}{\multiNUrot{m}{\theta}\adj \multiNUrot{m}{\theta}}{x} &= \prod_{j=1}^m \matrixel{x_j}{\exp(-i\theta X y_{j-1})\exp(i\theta X x_{j-1})}{y_j} \\
    &= \matrixel{y_{k}}{\exp(i\theta X (x_{k} - y_{k}))}{x_{k}} 
    \times \prod_{j \in [m]\backslash \{k\}} \matrixel{y_j}{\exp(i\theta X (x_{j-1} - y_{j-1}))}{x_j} \\
    &= \braket{y_{k}}{x_{k}} 
    \times \prod_{j \in [m]\backslash \{k\}} \matrixel{y_j}{\exp(i\theta X (x_{j-1} - y_{j-1}))}{x_j} \\
    &= 0 
\end{align}
since $y_k \neq x_k$ by definition. This completes the proof of \Cref{lemitem:multiNUrot_orthogonality}. 
\end{proof}

We show that, as a consequence of \Cref{lem:multiNUrot_inner_products}, there exists an $m$ qubit unitary matrix which is close (in Frobenius norm) to the non-unitary matrix $\multiNUrot{\theta}{m}$. We construct this unitary by applying Gram-Schmidt orthnomalization applied to the state's output by $\multiNUrot{m}{\theta}$ acting on computational basis states. 

\begin{lem}
\label{lem:Frobenius_Urot_NUrot_bound}
For any $m$, there exists unitary matrices $\Urot{m}{\theta}$ satisfying
\begin{align}
    \norm{\multiNUrot{m}{\theta} - \Urot{m}{\theta}}_F \in O\left(\theta^{-m}\right)
\end{align}
as $\theta \rightarrow 0$.
\end{lem}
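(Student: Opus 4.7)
The plan is to construct $\Urot{m}{\theta}$ by applying Gram--Schmidt orthonormalization to the columns of $\multiNUrot{m}{\theta}$ written in the computational basis, and then bound the column-by-column perturbation using the structural facts established in \Cref{lem:multiNUrot_inner_products}.

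First, I observe that by \Cref{lemitem:multiNUrot_normalized} and \Cref{lemitem:multiNUrot_orthogonality} of that lemma, the vectors $\{\multiNUrot{m}{\theta}\ket{x}\}_{x \in \{0,1\}^m}$ are all unit vectors, and $\multiNUrot{m}{\theta}\ket{x}$ is orthogonal to $\multiNUrot{m}{\theta}\ket{y}$ for every $y \notin \{x, \overline{x}\}$. So the only ``bad'' inner products come from pairs of complementary basis states, and by \Cref{lemitem:multiNUrot_compliments} each such inner product has magnitude exactly $\sin^m(\theta)$. This natural pairing partitions $\{0,1\}^m$ into $2^{m-1}$ disjoint pairs $\{x,\overline{x}\}$, and Gram--Schmidt can be carried out independently within each pair, because fixing or modifying a vector inside one pair never introduces a component along any vector in a different pair (again by \Cref{lemitem:multiNUrot_orthogonality}).

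For each pair $\{x,\overline{x}\}$, I hold $\multiNUrot{m}{\theta}\ket{x}$ fixed and replace $\multiNUrot{m}{\theta}\ket{\overline{x}}$ with its component orthogonal to $\multiNUrot{m}{\theta}\ket{x}$, then renormalize. A direct computation shows that the resulting unit vector differs from $\multiNUrot{m}{\theta}\ket{\overline{x}}$ by a vector of norm $O(\sin^m(\theta))$: the subtracted overlap has norm exactly $\sin^m(\theta)$, and the renormalization factor $1/\sqrt{1 - \sin^{2m}(\theta)}$ differs from $1$ by $O(\sin^{2m}(\theta))$, so both contributions are at most $O(\sin^m(\theta))$. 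Define $\Urot{m}{\theta}$ to be the unitary whose $x$-th column is the output of this procedure.

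Finally, I bound $\norm{\multiNUrot{m}{\theta} - \Urot{m}{\theta}}_F^2$ by summing squared column differences: $2^{m-1}$ columns are unchanged, and the remaining $2^{m-1}$ each contribute $O(\sin^{2m}(\theta))$, giving a total of $O(2^{m-1}\sin^{2m}(\theta))$. For fixed $m$ as $\theta \to 0$ this is $O(\theta^{2m})$, and taking square roots yields the claimed Frobenius bound. I don't anticipate any substantive obstacle; the proof is essentially bookkeeping once the almost-orthonormality in \Cref{lem:multiNUrot_inner_products} is in hand, with the only mildly delicate step being the Taylor expansion of the renormalization factor to check that it does not dominate the overlap term.
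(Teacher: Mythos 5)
Your proposal is correct and follows essentially the same route as the paper: pairwise Gram--Schmidt on the columns $\multiNUrot{m}{\theta}\ket{x}$ within the complementary pairs $\{x,\overline{x}\}$, justified by \Cref{lem:multiNUrot_inner_products}, followed by a column-by-column Frobenius bound in which half the columns are unchanged and the rest contribute $O(\sin^{2m}(\theta))$ each. Note only that the exponent in the lemma statement, $O(\theta^{-m})$, is evidently a typo for $O(\theta^{m})$, which is exactly what both your argument and the paper's establish.
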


\begin{proof}
We will define $\Urot{m}{\theta}$ by its action on computational basis states. First, fix $\halfbitstring^m$ to be any set containing half the bit strings of length $m$ with the property that for any $x \in \{0,1\}^m$ either $x \in \halfbitstring^m$ or $\overline{x} \in \halfbitstring^m$.  
(That is, $\halfbitstring^m$ contains one representative element from the equivalence classes of the set $\{0,1\}^m$ induced by the equivalence relation $x \sim y$ if $x = y$ or $\overline{x} = y$). Then define: 
\begin{align}
    \Urot{m}{\theta} \ket{x} := 
    \begin{cases}
    \multiNUrot{m}{\theta} \ket{x} &\text{ if } x \in \halfbitstring^m \\
    C^{-1}\left(\multiNUrot{m}{\theta} \ket{x} + i^{m + 2 \abs{x}}\sin^{m}(\theta)\multiNUrot{m}{\theta}\ket{\overline{x}}\right) &\text{ otherwise.}
    \end{cases}
\end{align}
with $C:= \sqrt{1 - \sin^{2m}(\theta)}$ a normalizing constant. Observe that, by \Cref{lemitem:multiNUrot_compliments} of \Cref{lem:multiNUrot_inner_products}, for $x \notin \halfbitstring^m$ we can also write 
\begin{align}
    \Urot{m}{\theta} \ket{x} &=  C^{-1}\left(\multiNUrot{m}{\theta} \ket{x} - \matrixel{\overline{x}}{\multiNUrot{m}{\theta}\adj\multiNUrot{m}{\theta}}{x} \multiNUrot{m}{\theta}\ket{\overline{x}}\right) \label{eq:Urot_action_as_inner_product}
\end{align}
and 
\begin{align}
    C = \left(1 - \abs{\matrixel{\overline{x}}{\multiNUrot{m}{\theta}\adj\multiNUrot{m}{\theta}}{x}}^2 \right)^{1/2}.
    \label{eq:Urot_normalization_as_inner_product}
\end{align}

We now prove that $\Urot{m}{\theta}$ is unitary. To do this, we prove $\Urot{m}{\theta}$ maps computational basis states to an orthonormal basis. First note that \Cref{lemitem:multiNUrot_normalized} of \Cref{lem:multiNUrot_inner_products} gives that for any $x \in \halfbitstring^m$:
\begin{align}
    \matrixel{x}{\Urot{m}{\theta}\adj \Urot{m}{\theta}}{x} = 
    \matrixel{x}{\multiNUrot{m}{\theta}\adj \multiNUrot{m}{\theta}}{x} = 1
\end{align}
while a similar calculation gives for any $x \notin \halfbitstring^m$:
\begin{align}
    \matrixel{x}{\Urot{m}{\theta}\adj \Urot{m}{\theta}}{x} &= C^{-2} \left(\bra{x} \multiNUrot{m}{\theta}\adj - \matrixel{\overline{x}}{\multiNUrot{m}{\theta}\adj\multiNUrot{m}{\theta}}{x}\adj \bra{\overline{x}}\multiNUrot{m}{\theta}\adj \right)\left(\multiNUrot{m}{\theta} \ket{x} - \matrixel{\overline{x}}{\multiNUrot{m}{\theta}\adj\multiNUrot{m}{\theta}}{x}\multiNUrot{m}{\theta}\ket{\overline{x}}\right) \\
    &= C^{-2} \left(1 -  \abs{\matrixel{\overline{x}}{\multiNUrot{m}{\theta}\adj\multiNUrot{m}{\theta}}{x}}^2 \right)  = 1\label{eq:Urot_normalization_halfway}.
\end{align}
Where we used \Cref{eq:Urot_action_as_inner_product,eq:Urot_normalization_as_inner_product} on the first and second lines, respectively. Then we see the states $\{\Urot{m}{\theta}\ket{x}\}$ for $x \in \{0,1\}^m$ acting on computational basis states are correctly normalized. 

It remains to show that these states are orthogonal. First, we note that \Cref{lemitem:multiNUrot_orthogonality} of \Cref{lem:multiNUrot_inner_products} gives that for any $x, y \in \{0,1\}^m$ with $y \notin \{x, \overline{x}\}$ we have 
\begin{align}
    \matrixel{y}{\multiNUrot{\theta}{m}\adj\multiNUrot{\theta}{m}}{x} 
    = \matrixel{\overline{y}}{\multiNUrot{\theta}{m}\adj\multiNUrot{\theta}{m}}{x}
    = \matrixel{y}{\multiNUrot{\theta}{m}\adj\multiNUrot{\theta}{m}}{\overline{x}} 
    = \matrixel{\overline{y}}{\multiNUrot{\theta}{m}\adj\multiNUrot{\theta}{m}}{\overline{x}} 
    = 0
\end{align}
and then a quick proof by cases shows that $\matrixel{y}{\Urot{\theta}{m}\adj\Urot{\theta}{m}}{x} = 0$ for any $x \in \{0,1\}^m$ and $y \notin \{x, \overline{x}\}$. Finally, we consider the inner product $\matrixel{\overline{x}}{\Urot{\theta}{m}\adj\Urot{\theta}{m}}{x}$. By definition of $\halfbitstring^m$, exactly one of $x$ or $\overline{x}$ is in $\halfbitstring^m$. Assume for the moment that $x \notin \halfbitstring^m$. Then using \Cref{eq:Urot_action_as_inner_product} we have 
\begin{align}
    \matrixel{\overline{x}}{\multiNUrot{\theta}{m}\adj\multiNUrot{\theta}{m}}{x} &= C^{-1} \left( \bra{\overline{x}} \multiNUrot{m}{\theta}\adj \right) \left(\multiNUrot{m}{\theta} \ket{x} - \matrixel{\overline{x}}{\multiNUrot{m}{\theta}\adj\multiNUrot{m}{\theta}}{x} \multiNUrot{m}{\theta}\ket{\overline{x}}\right) \\
    &= C^{-1}\left(\matrixel{\overline{x}}{\multiNUrot{m}{\theta}\adj\multiNUrot{m}{\theta}}{x} - \matrixel{\overline{x}}{\multiNUrot{m}{\theta}\adj\multiNUrot{m}{\theta}}{x} \matrixel{\overline{x}}{\multiNUrot{m}{\theta}\adj\multiNUrot{m}{\theta}}{\overline{x}} \right) \\
    &= C^{-1} \left(\matrixel{\overline{x}}{\multiNUrot{m}{\theta}\adj\multiNUrot{m}{\theta}}{x} - \matrixel{\overline{x}}{\multiNUrot{m}{\theta}\adj\multiNUrot{m}{\theta}}{x} \right) = 0
\end{align}
as desired. We conclude $\Urot{m}{\theta}$ is unitary. 

Finally, to show $\Urot{m}{\theta}$ is close to $\multiNUrot{m}{\theta}$ we compute 
\begin{align}
    \norm{\multiNUrot{m}{\theta} - \Urot{m}{\theta}}_F^2 
    &= \sum_{x \in \{0,1\}^m} \abs{\left(\multiNUrot{m}{\theta} - \Urot{m}{\theta}\right) \ket{x} }^2 \\
    &= \sum_{x \in \halfbitstring^m} \abs{\left(1 - C^{-1}\right)\multiNUrot{m}{\theta}\ket{x} - i^{m + 2 \abs{x}}C^{-1}\sin^{m}(\theta)\multiNUrot{m}{\theta} \ket{\overline{x}} }^2 \\
    &\leq \sum_{x \in \halfbitstring^m} \left( 1 - C^{-1} \right)^2 + C^{-2} \sin^{2m}(\theta) \\
    &\leq 2^{m/2} \left(\frac{\sin^{4m}(\theta)}{2}  + \frac{\sin^{2m}(\theta)}{1 - \sin^{2m}(\theta)}  \right) \in O\left(\theta^{2m}\right)
\end{align}
where the final big $O$ approximation holds for any fixed $m$ as $\theta \rightarrow 0$. Taking a square root then completes the proof. 
\end{proof}

Finally, we are in a position to describe the fully unitary $(X, \majmod{p}(X) \oplus \parity(X))$ sampling circuit.

\begin{thm}
\label{thm:Urot_majmod_sampling}
For $n$ sufficiently large and $p = n^c$ for any constant $c \in (0,1/2)$ there is a constant-depth circuit consisting of one and two-qubit unitary gates and $\Urot{m'}{\theta'}$ gates with $m' = \lceil c^{-1} + 1 \rceil$ and $\theta' = \pi/p$ which takes an $n$ qubit GHZ state as input and produces an output which, when measured in the computational basis, produces an output distribution $(X', Y)$ with 
\begin{align}
    \TVD((X', Y), (X, \majmod{p}(X) \oplus \parity(X))) \leq  \frac{1}{2} - \frac{1}{\pi} +  O(1/p).
\end{align}
\end{thm}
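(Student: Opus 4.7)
The plan is to take the non-unitary circuit of \Cref{cor:multiNUrot_majmod_sampling} with block size $m=m'=\lceil c^{-1}+1\rceil$ and replace each occurrence of $\multiNUrot{m'}{\pi/p}\adj$ by its unitary approximant $\Urot{m'}{\pi/p}\adj$ from \Cref{lem:Frobenius_Urot_NUrot_bound}, leaving the layers $H^{\otimes n}$ and $\exp(-i\pi X/4)$ unchanged. Writing $n = Dm'+1$ with $D=(n-1)/m'$, the resulting circuit has depth $3$ and uses only one- and two-qubit unitaries together with $\Urot{m'}{\pi/p}$ gates. The rest of the proof will track the propagation of the per-gate approximation error through the circuit and to the measurement distribution.

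First I would bound the single-block error. By \Cref{lem:Frobenius_Urot_NUrot_bound} and \Cref{fact:Frobenius_Operator_bound},
\[
\|\multiNUrot{m'}{\pi/p}\adj - \Urot{m'}{\pi/p}\adj\|_\infty \;\le\; \|\multiNUrot{m'}{\pi/p} - \Urot{m'}{\pi/p}\|_F \;=\; O(p^{-m'}).
\]
Next I would lift this to the $D$-fold tensor product of blocks acting on the first $n-1$ qubits via \Cref{fact:Operator_Tensor_Product_Bound}, obtaining $\|(\multiNUrot{m'}{\pi/p}\adj)^{\otimes D} - (\Urot{m'}{\pi/p}\adj)^{\otimes D}\|_\infty = O(D p^{-m'})$. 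Because the remaining layers $H^{\otimes n}$ and $\exp(-i\pi X/4)$ are unitary, tensoring/composing with them preserves this operator-norm bound, so the two full circuits differ in operator norm by $O(Dp^{-m'})$ and therefore the two output states (both obtained by applying the circuit to $\ket{\GHZ_n}$) differ in Euclidean norm by the same $O(Dp^{-m'})$.

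The choice $m'=\lceil c^{-1}+1\rceil$ is engineered exactly so $c m' \ge 1+c$, hence $Dp^{-m'} = O(n\cdot n^{-cm'}) = O(n^{-c}) = O(1/p)$. Applying \Cref{fact:l2_distance_trace_distance_bound} converts this state-vector distance to a bound of $O(1/p)$ on the total variation distance between the computational-basis measurement outcomes of the two circuits. Combining this with \Cref{cor:multiNUrot_majmod_sampling} via the triangle inequality gives
\[
\TVD((X',Y),(X,\majmod{p}(X)\oplus\parity(X))) \;\le\; \tfrac{1}{2}-\tfrac{1}{\pi}+\tfrac{1}{2p}+O(p^{3/2}e^{-n/p^2})+O(1/p).
\]
Because $c<1/2$ we have $n/p^2 = n^{1-2c}\to\infty$, so $p^{3/2}e^{-n/p^2}=o(1/p)$ and the bound collapses to $\tfrac{1}{2}-\tfrac{1}{\pi}+O(1/p)$ as claimed.

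The main delicate step will be the tensor-product bound: \Cref{fact:Operator_Tensor_Product_Bound} is stated for matrices with $\|A_i\|_\infty\le 1$, whereas $\multiNUrot{m'}{\pi/p}$ is not unitary. This is easy to patch, however, since $\|\multiNUrot{m'}{\pi/p}\|_\infty \le \|\Urot{m'}{\pi/p}\|_\infty + O(p^{-m'}) = 1+O(p^{-m'})$, and replaying the telescoping argument in the proof of \Cref{fact:Operator_Tensor_Product_Bound} with the slightly larger bound still yields a total error of $O(D p^{-m'})$ under our regime where $Dp^{-m'}=o(1)$. Everything else is routine estimation; the combinatorial content of the theorem is entirely captured by matching $m'$ to $c$ so that the $D$-fold amplification of the per-block error is still $O(1/p)$.
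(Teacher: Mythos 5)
Your proposal is correct and follows essentially the same route as the paper's proof: replace the $\multiNUrot{m'}{\pi/p}\adj$ blocks by the unitaries of \Cref{lem:Frobenius_Urot_NUrot_bound}, bound the per-block error in operator norm via \Cref{fact:Frobenius_Operator_bound}, amplify over the $D$ blocks with \Cref{fact:Operator_Tensor_Product_Bound}, convert to a TVD bound on the measurement statistics with \Cref{fact:l2_distance_trace_distance_bound}, and finish by the triangle inequality with \Cref{cor:multiNUrot_majmod_sampling}, noting $p^{3/2}e^{-n/p^2}=o(1/p)$ for $c<1/2$. Your worry about \Cref{fact:Operator_Tensor_Product_Bound} is unnecessary once you take the $A_i$ in that fact to be the unitaries $\Urot{m'}{\theta'}\adj$ (which have operator norm $1$) and the $B_i$ to be the non-unitary $\multiNUrot{m'}{\theta'}\adj$, exactly as the paper does, though your patch is also harmless.
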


\begin{proof}
For convenience, we assume $n = Dm' + 1$ for some constant $D$. This circuit consists of a Hadamard gate applied to each qubit of the $\GHZ$ state, followed by $\Urot{m'}{\theta'}\adj$ gates applied to all qubits except the final qubit and an $\exp(-i \pi X/4)$ rotation applied to the final qubit. \Cref{fig:majmod_p_Urot_circuit} illustrates this circuit. Note the quantum state produced by this circuit pre-measurement is 
\begin{align}
    \left(\left(\Urot{\theta'}{m'}\adj \right)^{\otimes D} \otimes \exp(-i \pi X/4)\right) H^{\otimes n} \ket{\psi}.
\end{align}

To prove this circuit samples from the correct distribution first note that \Cref{lem:Frobenius_Urot_NUrot_bound} and \Cref{fact:Frobenius_Operator_bound} give that \begin{align}
    \norm{\Urot{\pi/p}{m} -  \multiNUrot{\pi/p}{m}}_\infty \in O(\theta'^m) = O(n^{-mc}) \leq O(n^{-(1 + c)})
\end{align}
Them, \Cref{fact:Operator_Tensor_Product_Bound} gives that 
\begin{align}
    \norm{\left(\left(\Urot{\theta'}{m'}\adj \right)^{\otimes D} \otimes \exp(-i \pi X/4)\right) H^{\otimes n} - \left(\left(\multiNUrot{\pi/p}{m}\adj \right)^{\otimes D} \otimes \exp(-i \pi X/4)\right) H^{\otimes n}}_\infty 
    &\in O(Dn^{-(1+c)}) \\
    &\leq O(n^{-c}).
\end{align}
Combining this observation with \Cref{fact:l2_distance_trace_distance_bound} and the definition of the operator norm $\norm{}_\infty$ gives that the classical distributions resulting from computation basis measurements of the states
\begin{align}
    \left(\left(\Urot{\theta'}{m'}\adj \right)^{\otimes D} \otimes \exp(-i \pi X/4)\right) H^{\otimes n} \ket{\psi}.
\end{align}
and 
\begin{align}
    \left(\left(\multiNUrot{\pi/p}{m}\adj \right)^{\otimes D} \otimes \exp(-i \pi X/4)\right) H^{\otimes n} \ket{\psi}
\end{align}
are $O(n^{-c})$ in total variation distance away from each other. Then \Cref{cor:multiNUrot_majmod_sampling}, together with the fact that $O(p^{3/2} e^{-n/p^2}) \leq O(1/p)$ since $p = n^{-c}$ for $c< 1/2$ completes the proof. 
\end{proof}

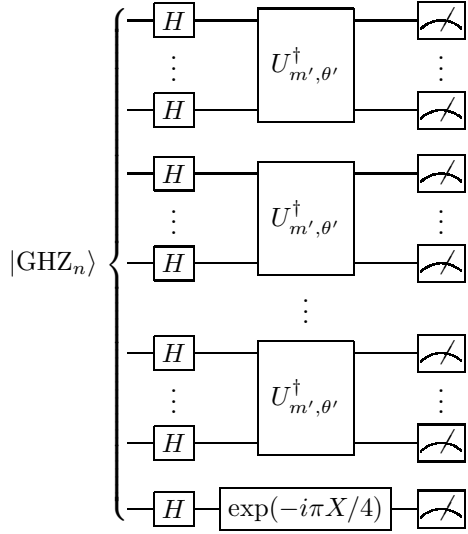
\begin{figure}

\begin{align*}
\Qcircuit @C=1em @R=1em {
\lstick{}   & \gate{H}  & \multigate{2}{\Urot{m'}{\theta'}\adj} & \meter \\
& {\makecell{\vdots\\}} & & {\makecell{\vdots\\}} \\ 
\lstick{}   & \gate{H}  & \ghost{\Urot{m'}{\theta'}\adj}        & \meter \\
\lstick{}   & \gate{H}  & \multigate{2}{\Urot{m'}{\theta'}\adj} & \meter \\
& {\makecell{\vdots\\}} & & {\makecell{\vdots\\}} \\ 
\lstick{}   & \gate{H}  & \ghost{\Urot{m'}{\theta'}\adj}        & \meter \\
& & {\makecell{\mbox{}\\\vdots\\ \mbox{}\\}} \\ 
\lstick{}   & \gate{H}  & \multigate{2}{\Urot{m'}{\theta'}\adj} & \meter \\
& {\makecell{\vdots\\}} & & {\makecell{\vdots\\}} \\ 
\lstick{}   & \gate{H}  & \ghost{\Urot{m'}{\theta'}\adj}        & \meter \\
\lstick{}   & \gate{H}  & \gate{\exp(-i \pi X / 4)}  & \meter 
\inputgroupv{1}{11}{0.8em}{9.2em}{\ket{\GHZ_n} \;\;\;\;\;\;\;\;} 
} 
\end{align*}
\caption{Constant-depth fully unitary circuit producing approximate samples from the distribution $(\majmod{p}(X) \oplus \parity(X), X)$. Here $p = n^c$ for some constant $c \in (0,1]$, $\theta' = \pi/p$, $m' = \left\lceil c^{-1} + 1 \right \rceil$ and $n = Dm' + 1$ for some large integer $D$.  }
\label{fig:majmod_p_Urot_circuit}
\end{figure}

\section{Sampling From $(Z, \pmmajmod{p}(Z))$ Without a GHZ State} \label{sec:quantum_circuit_no_GHZ}

\newcommand{\PM}[1]{\operatorname{PM}(#1)}
\newcommand{\BPM}[1]{\operatorname{PM}_{#1}} %poor mans GHZ state defined on a binary tree.
\newcommand{\pathsum}[1]{\operatorname{h}(#1)} %to convert an "edge" bitstring to a vertex bitstring by summing along paths.
\newcommand{\bpathsum}{\operatorname{h}} %same as above, but specifically defined on a binary tree. 

In this section we define sampling tasks related to the $(X, \majmod{p}(X) \oplus \parity(X))$ sampling task considered in \Cref{sec:GHZ_state_sampling}, but which can be performed (approximately) by a constant-depth quantum circuit without access to a GHZ input state. At a high level, the approach we use to construct these tasks mirrors the approach used in~\cite{watts2019exponential} to find a relational problem that can be solved by a $\QNC^0$ circuit without access to a GHZ state. First, we review ``Poor Man's GHZ States'': GHZ-like states which (unlike the GHZ state) can be constructed by $\QNC^0$ circuits. Then we modify the circuit constructed in \Cref{ssec: unitary majority GHZ sampling} by replacing the GHZ input state with a circuit constructing a poor man's GHZ state. Finally, we define a new sampling task based on the output of these modified circuits. 

\subsection{Review of Poor Man's GHZ States}

\begin{defn}\label{defn:binary_tree}
For any integer $n$ let $\bintree[n]$ be the balanced binary tree on $n$ vertices. Label its edges $e_1, ..., e_{n-1}$ and vertices $v_0, ..., v_{n-1}$ (note the vertex labels start at $0$), with $v_0$ the root of $T$. For every non-root vertex $v_i \in \{v_1, ..., v_{n-1}\}$ define $P(v_i)$ to be the set of edges contained in the (unique) path going from $v_0$ to $v_i$. Finally, define the function $\pathsum{d} : \{0,1\}^{n-1} \rightarrow \{0,1\}^{n-1}$ by
\begin{align}
    \pathsum{d}_i = \bigoplus_{j :\ e_j \in P(v_i)} d_j && i \in \{1, 2, \dots, n-1\}.
\end{align}
That is, thinking of the bitstring $d$ as assigning values to the edges of $\bintree[n]$, $\pathsum{d}$ assigns a value to every non-root vertex $v_i$ of $\bintree[n]$ equal to the parity of the edge values going from $v_0$ to $v_i$.
\end{defn}

\begin{defn} 
\label{defn:btpmGHZ}
Define the (binary tree) Poor Man's GHZ state: 
\begin{align}
    \ket{\BPM{n}} = \sum_{d \in \{0,1\}^{n-1}} \frac{1}{2^{(n-1)/2}} \ket{d} \otimes \frac{1}{\sqrt{2}} \left(\ket{\bpathsum(d) 0  \vphantom{\overline{\bpathsum(d)}}} + \ket{\overline{\bpathsum(d)} 1} \right) 
\end{align}
We call the first $n-1$ qubits of $\ket{\BPM{n}}$ ``edge'' qubits, and the last $n$ qubits ``vertex'' qubits. Note that the $n$ in $\ket{\BPM{n}}$ gives the number of vertex qubits in the state, not the total number of qubits. 
\end{defn}

Intuitively, it is occasionally helpful to think of the $n$ vertex qubits of the state $\ket{\BPM{n}}$ as being in an ``almost-\GHZ state'', or a \GHZ state with additional Pauli $X$ type ``error'' terms specified by the edge qubits. 
To explain this intuition, note that we can also write the state $\ket{\BPM{n}}$ as 
\begin{align}
    \ket{\BPM{n}} = \frac{1}{2^{(n-1)/2}} \sum_{d \in \{0,1\}^{n-1}}  \left( \ket{d} \otimes \left(\left(\bigotimes_{i=1}^{n-1} X^{\bpathsum(d)_i} \right) \otimes I_2 \right) \ket{\GHZ_n} \right)  \label{eq:PMGHZ_X_error}
\end{align}
We will make use of \Cref{eq:PMGHZ_X_error} when working with the state $\ket{\BPM{n}}$ later in this section. 

\begin{thm} \label{thm:Binary_Tree_Poor_Man_Construction}
For any $n$, the state $\ket{\BPM{n}}$ can be constructed by a depth-3 circuit consisting of $1$ and $2$ qubit gates acting on $2n-1$ qubits. 
\end{thm}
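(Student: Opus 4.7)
The plan is to realize $\ket{\BPM{n}}$ as a uniform superposition over assignments to the vertex qubits in which each edge qubit stores the XOR of its two endpoints. Concretely, I would first rewrite
\[
\ket{\BPM{n}} \;=\; \frac{1}{2^{n/2}} \sum_{v \in \{0,1\}^n} \ket{d(v)}_E \otimes \ket{v}_V,
\qquad d(v)_j := v_p \oplus v_c \text{ for } e_j = (v_p,v_c).
\]
To justify this identity, introduce the root value $r$ and expand the factor $(\ket{\pathsum{d}\,0}+\ket{\overline{\pathsum{d}}\,1})/\sqrt{2}$ in \Cref{defn:btpmGHZ} as $\tfrac{1}{\sqrt{2}}\sum_{r \in \{0,1\}}\ket{r \oplus \pathsum{d}}\ket{r}$. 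Since the path $P(v_c)$ is the path $P(v_p)$ extended by $e_j$, we have $\pathsum{d}_p \oplus \pathsum{d}_c = d_j$, so the map $(r,d) \mapsto v$ with $v_0 = r$ and $v_i = r \oplus \pathsum{d}_i$ is a bijection $\{0,1\}\times \{0,1\}^{n-1} \to \{0,1\}^n$ whose inverse satisfies $d_j = v_p \oplus v_c$, confirming the rewrite.

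Given this form, the circuit is direct. Initialize each of the $n$ vertex qubits $V_0,\ldots,V_{n-1}$ in $\ket{+}$ and each of the $n-1$ edge qubits $E_1,\ldots,E_{n-1}$ in $\ket{0}$, giving $2^{-n/2}\sum_{v}\ket{v}_V \otimes \ket{0}_E$. Then, for every tree edge $e_j=(v_p,v_c)$, apply the two gates $\CNOT_{V_p,E_j}$ and $\CNOT_{V_c,E_j}$. Each such pair writes $v_p \oplus v_c = d(v)_j$ into $E_j$ coherently across the superposition, so the resulting state is exactly the displayed form of $\ket{\BPM{n}}$.

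It remains to bound the depth. The $2(n-1)$ CNOTs just listed form a bipartite graph between $\{V_0,\ldots,V_{n-1}\}$ and $\{E_1,\ldots,E_{n-1}\}$; each $E_j$ has degree $2$ (its two endpoints), while each $V_i$ has degree equal to the degree of $v_i$ in $\bintree[n]$, which is at most $3$ (one parent and at most two children). By K\"onig's edge-coloring theorem for bipartite graphs the chromatic index equals the maximum degree, so these CNOTs partition into $3$ matchings and can be executed in $3$ parallel layers, yielding a depth-$3$ circuit on $2n-1$ qubits. The only mild subtlety is bookkeeping for the initial $\ket{+}$ preparation: under the standard convention that product-state initialization is free, depth $3$ follows immediately; otherwise, prepending a single Hadamard layer gives depth $4$, with partial absorption into the CNOT schedule possible at every vertex of degree less than $3$. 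I do not expect any genuine obstacle beyond this accounting.
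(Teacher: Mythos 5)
Your construction and state analysis are the same as the paper's: prepare the vertex qubits in $\ket{+}$ (Hadamards on $\ket{0}$), and for every edge $e_j=(v_p,v_c)$ of $\bintree[n]$ apply the two gates $\CNOT_{v_p\to e_j}$ and $\CNOT_{v_c\to e_j}$; your identity $d_j=v_p\oplus v_c$, obtained by introducing the root value $r$ and using that the path to a child extends the path to its parent by one edge, is just a repackaging of the paper's check that vertex strings $\ket{x\,0}$ and $\ket{x\,1}$ produce edge registers $h^{-1}(x)$ and $h^{-1}(\overline{x})$. Where you genuinely differ is the depth accounting, and there your argument is in fact the sounder one: the paper schedules all $2(n-1)$ CNOTs in two layers by appealing to $2$-colorability of the bifurcated tree, but an internal vertex of $\bintree[n]$ has degree $3$, so its three CNOTs pairwise share that vertex qubit and must occupy three distinct layers; your K\"onig/chromatic-index argument giving exactly $3$ matchings is the correct statement.

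The one loose end is your final bookkeeping. The theorem starts from $\ket{0}^{\otimes 2n-1}$ and allows arbitrary $1$- and $2$-qubit gates, so you should not stop at ``depth $4$, or depth $3$ if product-state initialization is free.'' For each vertex, merge its Hadamard with whichever of its incident CNOTs is scheduled in the earliest layer into a single two-qubit gate; since a vertex appears in at most one gate per layer this merge is always available, and it is needed precisely at the degree-$3$ vertices (whose CNOTs fill all three layers). Consequently your parenthetical that absorption is possible ``at every vertex of degree less than $3$'' is backwards — absorption is possible at every vertex and is essential exactly at the degree-$3$ ones. With that merge the circuit has depth exactly $3$ on $2n-1$ qubits, matching the statement.
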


\begin{proof}
    This state can be constructed by following the procedure outlined in Theorem 17 of~\cite{watts2019exponential}, but omitting the measurement of the edge qubits. We recap this procedure here. 
    
    Begin with $2n-1$ qubits, $n$ of which we identify with the vertices $v_0, ..., v_{n-1}$ of the tree $B_n$ and $n-1$ of which we identify with edges $e_1, ... e_{n-1}$ of the same tree. Apply a Hadamard gate to each vertex qubit. Then, for every pair of vertices $v_i$ and $v_j$ connected by an edge $e_{k}$, apply CNOT gates with controls on vertex qubits $v_i$ and $v_j$ and target on the edge qubit $e_{k}$. Order the edge qubits as in the tree $B_n$; these form the first $n-1$ qubits of $\ket{\BPM{n}}$. Order the vertex qubits $v_1...v_{n-1}v_0$ (note the qubit identified with the root vertex comes last in this ordering); these form remaining $n$ qubits of the state $\ket{\PM{n}}$.
    
    To see that this circuit produces the correct state first observe that after the Hadamard gates are applied and before the CNOT gates are applied, the vertex qubits are in a uniform superposition over all computational basis states. 
    We order the vertex qubits as in the state $\ket{\BPM{n}}$, so the final vertex qubit is associated with the root vertex of the graph $B_n$. 
    It is then straightforward to check that, for any $n-1$ bit string $x = x_1...x_{n-1}$, if the vertex qubits are in state $\ket{x 0}$  then applying the $\CNOT$ gates puts the edge qubits in the state $\bpathsum^{-1}(x)$. Similarly, if vertex qubits are in the state $\ket{x 1}$, applying the $\CNOT$ gates puts the edge qubits in the state $\bpathsum^{-1}(\overline{x})$. Then we can write the state produced by our circuit as
    \begin{align}
        &\frac{1}{2^{n/2}} \left( \sum_{x \in \{0,1\}^{n-1}} \ket{\bpathsum^{-1}(x)} \otimes \ket{x 0} + \sum_{x \in \{0,1\}^{n-1}} \ket{\bpathsum^{-1}(\overline{x})} \otimes \ket{x 1} \right) \\
        &\hspace{30pt}= \frac{1}{2^{n/2}} \left( \sum_{d \in \{0,1\}^{n-1}} \ket{d} \otimes \ket{\bpathsum(d) 0} + \sum_{d \in \{0,1\}^{n-1}} \ket{d} \otimes \ket{\overline{\bpathsum(d)} 1} \right) \\
        &\hspace{30pt}=\frac{1}{2^{(n-1)/2}} \left( \sum_{d \in \{0,1\}^{n-1}} \ket{d} \otimes \left( \frac{1}{\sqrt{2}} \ket{\bpathsum(d) 0} + \ket{\overline{\bpathsum(d)} 1}\right) \right) = \ket{\BPM{n}}
    \end{align}
    where we used on the second line that the function $\bpathsum$ was one-to-one. 
    
    Finally, we show this circuit can be implemented in depth 3. Consider the $2n-1$ vertex graph obtained from $B_n$ by bifurcating each edge of $B_n$ -- that is, replacing each edge of $B_n$ connecting vertices $v_i$ and $v_j$ with a new vertex connected to both $v_i$ and $v_j$. This graph is still a tree, hence 2-colorable, and the edges of this graph are in one-to-one correspondence with $\CNOT$ gates which need to be implemented in the circuit described above. All $\CNOT$ gates in the same color class touch disjoint qubits and be applied simultaneously, so we see all $\CNOT$ gates can be applied in depth 2. Adding the layer of Hadamard gates required as the first step shows this whole circuit can be implemented in depth 3. 
    
\end{proof}
\subsection{Sampling with \texorpdfstring{$\QNC^0$}{QNC0} Circuits}

We begin with a description of the distribution which we will show can be sampled from (approximately) by a $\QNC^0$ circuit. Like the distributions considered in \Cref{sec:GHZ_state_sampling}, it will be a distribution of the form $(Z, f(Z))$ where $Z$ is a uniformly random bitstring and $f(Z) : \{0,1\}^n \rightarrow \{0,1\}$ is some function. However, the function $f$ considered here is substantially more complicated than the functions considered in \Cref{sec:GHZ_state_sampling}. We define this function next.

\begin{defn}\label{defn:pmmajmod}
For any prime $p$ define the function $\pmmajmod{p} : \{0,1\}^{2n-2} \rightarrow \{0,1\}$ to act on a $2n-2$ bit string $z$ via the following procedure:
\begin{enumerate}
    \item Associate the first $n-1$ bits of $z$ with edges of the complete binary tree $\bintree[n]$ and the next $n-1$ bits with the non-root vertices $v_1...v_{n-1}$, following the same ordering as in \Cref{defn:binary_tree}. Label bits associated with edges $d$ and the bits associated with vertices $x$. 
    \item For any integer $a$ define
    \begin{align}
    \MM{p}(a) :=
    \begin{cases}
    0 \text{ if } a < p/2 \\
    1 \text{ otherwise.}
    \end{cases}
    \end{align}
    \item Set 
    \begin{align}
        \pmmajmod{p}(z) = \MM{p}\left(\sum_{i=1}^{n-1} x_i (-1)^{\bpathsum(d)_i} \right) \bigoplus \parity(x)
    \end{align}
\end{enumerate}
\end{defn}

Now we construct a quantum circuit that samples approximately from $(Z, \pmmajmod{p}(Z))$ without requiring a $\GHZ$ state input. As in \Cref{sec:GHZ_state_sampling}, we begin by describing a circuit that performs the sampling task and involves single qubit non-unitary rotations $\NUrot{\theta}$. 

\begin{thm} \label{thm:pmmajmod_ideal_sampling}
For any $p \in \mathbb{Z}^+$ there is a constant-depth circuit consisting of one and two-qubit unitary gates and $\NUrot{\theta}$ operations which takes the $(2n-1)$-qubit all zeros state as input and produces an output which, when measured in the computational basis, produces an output distribution $(Z', Y)$ with 
\begin{align}
    \TVD((Z', Y), (Z, \pmmajmod{p}(Z))) \leq  \frac{1}{2} - \frac{1}{\pi} + \frac{1}{2p} + O(p^{3/2}e^{-n/4p^2}).
\end{align}
\end{thm}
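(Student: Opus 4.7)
The plan is to adapt the proof of \Cref{thm:NUrot_majmod_sampling} by replacing the $\ket{\GHZ_n}$ input with a Poor Man's GHZ state $\ket{\BPM{n}}$, which can be prepared in constant depth from $\ket{0}^{\otimes 2n-1}$ via \Cref{thm:Binary_Tree_Poor_Man_Construction}. The circuit I would construct first runs that depth-3 preparation of $\ket{\BPM{n}}$, then applies a Hadamard to each of the $n$ vertex qubits, then applies $\NUrot{\pi/p}\adj$ to the first $n-1$ vertex qubits together with $\exp(-i\pi X/4)$ on the last vertex qubit, and finally measures every qubit in the computational basis. I would label the edge qubit outcomes $d \in \{0,1\}^{n-1}$, the first $n-1$ vertex qubit outcomes $x \in \{0,1\}^{n-1}$ (so that $z=(d,x)$ matches the input format of \Cref{defn:pmmajmod}), and the last vertex qubit outcome $y \in \{0,1\}$.

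The key analytic step is to compute the post-measurement state of the last qubit conditional on the first $2n-2$ outcomes being $(d,x)$. Using the form of $\ket{\BPM{n}}$ given in \Cref{eq:PMGHZ_X_error}, conditioning the edge register on $\ket{d}$ leaves the vertex register in $(\bigotimes_{i=1}^{n-1} X^{\bpathsum(d)_i} \otimes I)\ket{\GHZ_n}$. The Hadamards conjugate these $X$ corrections into $Z$ corrections, which commute with the $\CNOT$ gates appearing in the alternate form \Cref{eq:Hadarmarded_GHZ} of $H^{\otimes n}\ket{\GHZ_n}$, so the vertex state becomes $(\prod_{i=1}^{n-1}\CNOT_{i,n}) \cdot \bigotimes_{i=1}^{n-1} \frac{1}{\sqrt{2}}(\ket{0}+(-1)^{\bpathsum(d)_i}\ket{1}) \otimes \ket{0}$. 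Next I would prove a small generalization of \Cref{lem:NUrot_identity}: when the ancilla starts in $\frac{1}{\sqrt{2}}(\ket{0}+(-1)^c\ket{1})$ rather than $\ket{+}$, the identity becomes $\bra{x}_2 (\NUrot{\theta}\adj)_2 \CNOT_{2,1} \ket{\psi}_1 \otimes \frac{1}{\sqrt{2}}(\ket{0}+(-1)^c\ket{1})_2 \propto \exp(i(\pi/2 + (-1)^c \theta)\, x\, X_1)\ket{\psi}_1$, where the hidden overall phase does not affect computational-basis probabilities; the proof only needs that $X_2$ acts as the scalar $(-1)^c$ on the ancilla state. Iterating this identity over all $n-1$ non-last vertex qubits and combining with the final $\exp(-i\pi X/4)$ puts the last qubit in a state proportional to $\exp(iX(\pi S/p - \pi/4))\ket{\parity(x)}$, where $S = \sum_{i=1}^{n-1}(-1)^{\bpathsum(d)_i} x_i$ is exactly the integer plugged into $\MM{p}$ in \Cref{defn:pmmajmod}.

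The TVD bound then closely parallels \Cref{claim:ideal_circuit_prob_failure}. First, the joint distribution of $(d,x)$ is uniform on $\{0,1\}^{2n-2}$: the edge register is uniform by construction of $\ket{\BPM{n}}$, and the local $Z$ corrections on the vertex register do not affect computational-basis measurement statistics. Second, conditioned on $(d,x)$, the probability that $y = \pmmajmod{p}(d,x) = \MM{p}(S) \oplus \parity(x)$ equals $\cos^2(\pi S/p - \pi/4)$ when $\MM{p}(S)=0$ and $\sin^2(\pi S/p - \pi/4)$ when $\MM{p}(S)=1$, reproducing the same pointwise correctness curve as in \Cref{fig:prob_parity_vs_mm} with $|x|$ replaced by $S$.

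The main technical obstacle, and the only place where the argument departs substantively from the GHZ-state proof, is showing that $S \bmod p$ is close to uniform. Because $\bpathsum$ is a bijection on $\{0,1\}^{n-1}$, the signs $\sigma_i := (-1)^{\bpathsum(d)_i}$ are jointly uniform in $\{-1,+1\}^{n-1}$ and independent of $x$; conditioning on $\sigma$, we have $S = \sum_i \sigma_i x_i$ with fixed nonzero coefficients $\sigma_i \in \{-1,+1\}$ and uniform Bernoulli $x_i$, so the general-coefficient version of \Cref{fact:puniform_bits} (Fact 3.2 of~\cite{viola2012complexity}) gives $\TVD(S \bmod p,\, U_p) \leq \sqrt{p}\, e^{-(n-1)/p^2}$ uniformly in $\sigma$, and hence unconditionally. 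Substituting this bound into the same integral estimate carried out in \Cref{claim:ideal_circuit_prob_failure} then yields the claimed TVD bound of $\frac{1}{2} - \frac{1}{\pi} + \frac{1}{2p} + O(p^{3/2} e^{-n/(4p^2)})$, where the $n/(4p^2)$ in the exponent leaves slack relative to the tighter rate just established.
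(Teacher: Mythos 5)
Your proposal is correct, and at the level of overall structure it is the paper's proof: prepare $\ket{\BPM{n}}$ via \Cref{thm:Binary_Tree_Poor_Man_Construction}, run the circuit of \Cref{thm:NUrot_majmod_sampling} on the vertex qubits, view the conditioned state as $\ket{\GHZ_n}$ with Pauli-$X$ corrections (\Cref{eq:PMGHZ_X_error}), commute those through the Hadamards, and show the last qubit ends up proportional to $\exp\left(iX\left(\tfrac{\pi}{p}S - \tfrac{\pi}{4}\right)\right)\ket{\parity(x)}$ with $S=\sum_i (-1)^{\bpathsum(d)_i}x_i$, so that the error analysis of \Cref{claim:ideal_circuit_prob_failure} carries over with $|x|$ replaced by $S$. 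Two of your steps differ in execution. First, you absorb the $Z$ corrections into the control states, turning each $\ket{+}$ into $\tfrac{1}{\sqrt{2}}(\ket{0}+(-1)^{\bpathsum(d)_i}\ket{1})$ and proving a signed variant of \Cref{lem:NUrot_identity}, whereas the paper commutes $Z$ past $\NUrot{\theta}\adj$ to flip $\theta\mapsto-\theta$ (\Cref{eq:NUrot_X_commute}); these are equivalent bookkeeping choices and both yield the same conditional state up to outcome-dependent phases. Second, and more substantively, your treatment of the near-uniformity of $S \bmod p$ (the analogue of \Cref{lem:pmmajmod_circuit_failure}) is a genuinely different argument: you condition on the signs $\sigma_i=(-1)^{\bpathsum(d)_i}$, which are uniform because $\bpathsum$ is a bijection and independent of $x$, and apply \Cref{fact:puniform} with coefficients $\pm 1$ (nonzero mod $p$) to get $\TVD(S \bmod p, U_p)\le \sqrt{p}\,e^{-(n-1)/p^2}$ uniformly in $\sigma$, hence after averaging. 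The paper instead writes $S=|X|-2\sum_{i:X_i=1}D_i$, uses Hoeffding to ensure $|X|\ge n/4$, and applies \Cref{fact:puniform} to the $D$-sum, which is what produces the weaker $e^{-n/4p^2}$ rate in the stated bound. Your route is cleaner, avoids the Hoeffding step, and gives a slightly tighter exponent that still implies the claimed $O(p^{3/2}e^{-n/4p^2})$ term, so the theorem follows as stated.
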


\begin{proof}
The first step is preparing the state $\ket{\BPM{n}}$, which can be done in constant-depth by \Cref{thm:Binary_Tree_Poor_Man_Construction}. After that, the same non-unitary circuit as described in the proof of \Cref{thm:NUrot_majmod_sampling} is applied to the vertex qubits of the poor man's GHZ state. This is illustrated in \Cref{fig:pmmajmod_NUrot_circuit}.

\begin{figure}[ht]

\begin{align*}
\Qcircuit @C=1em @R=1em {
\lstick{}   & \qw  & \qw & \meter \\
& {\makecell{\vdots\\}} \\ 
\lstick{}   & \qw  & \qw & \meter \\
&&&\\
\lstick{}   & \gate{H}  & \gate{\NUrot{\pi/p}\adj} & \meter \\
& {\makecell{\vdots\\}} \\ 
\lstick{}   & \gate{H}  & \gate{\NUrot{\pi/p}\adj} & \meter \\  
\lstick{}   & \gate{H}  & \gate{\exp(-i \pi X / 4)}  & \meter 
\gategroup{1}{1}{3}{4}{1.5em}{--}
\gategroup{5}{1}{8}{4}{1.5em}{--}
\inputgroupv{1}{8}{3em}{6.5em}{\ket{\BPM{n}} \hspace{35pt}} 
% \inputgroupv{1}{2}{0.8em}{1.5em}{test} 
} 
\end{align*}
\caption{Constant-depth non-unitary circuit producing approximate samples from the distribution $(Y, \pmmajmod{p}(Y))$. The upper box indicates the $n-1$ ``edge'' qubits of the state $\ket{\BPM{n}}$. The lower box indicates the $n$ ``vertex'' qubits of the same state.}
\label{fig:pmmajmod_NUrot_circuit}
\end{figure}
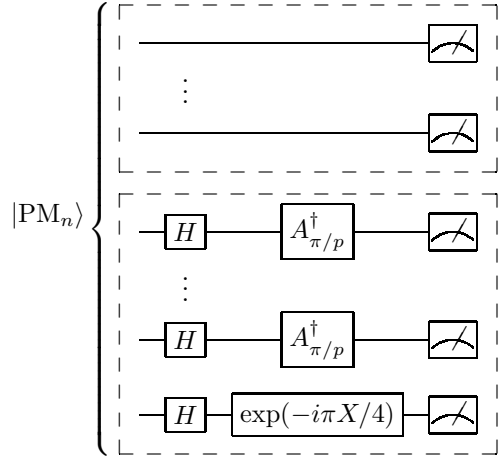

\newcommand{\altNUrot}[1]{A\left(#1\right)}

To see that this circuit approximately samples from the correct distribution we write the state $\ket{\BPM{n}}$ as a GHZ state with additional controlled $X$ ``error'' terms, then commute those through the rest of the circuit. In the following argument we will need to pay close attention to the rotation angle $\theta$ in the non-unitary operator $\NUrot{\theta}$. For this reason, \textit{for the remainder of this section only}, we change notation and write $\NUrot{\theta}$ as $\altNUrot{\theta}$. 

The key observation is the operator identity 
\begin{align}
     \altNUrot{\theta}\adj = \altNUrot{-\theta}\adj Z \label{eq:NUrot_X_commute}
\end{align}
which holds for any $\theta$ and can quickly be verified by checking the action of $Z \altNUrot{\theta}$ and $\altNUrot{-\theta} Z$ on $\ket{0}$ and $\ket{1}$ basis states. Then (using \Cref{eq:PMGHZ_X_error} as a starting point) we can write the pre-measurement state produced by the circuit above as:
\begin{align}
    &\frac{1}{2^{(n-1)/2}} \sum_{d \in \{0,1\}^{n-1}} \left(I_{2^{n-1}} \otimes \left( \bigotimes_{j=1}^{n-1} \altNUrot{\frac{\pi}{p}} \adj H \right) \otimes \exp(\frac{-i \pi X}{4}) H   \right) \left( \ket{d} \otimes \left(\left(\bigotimes_{j=1}^{n-1} X^{\bpathsum(d)_j} \right) \otimes  I_2 \right) \ket{\GHZ_n} \right) \nonumber \\
    &\hspace{10pt}=\frac{1}{2^{(n-1)/2}} \sum_{d \in \{0,1\}^{n-1}} \left(I_{2^{n-1}}  \otimes \left(\bigotimes_{j=1}^{n-1} \altNUrot{\frac{\pi}{p}} \adj H X^{\bpathsum(d)_j}\right) \otimes \exp(\frac{-i \pi X}{4}) H \right) \left(\ket{d} \otimes \ket{\GHZ_n} \right) \\
    &\hspace{10pt}=\frac{1}{2^{(n-1)/2}} \sum_{d \in \{0,1\}^{n-1}} \left( \ket{d} \otimes \left( \left( \bigotimes_{j=1}^{n-1} Z^{\bpathsum(d)_j}  \altNUrot{(-1)^{\bpathsum(d)_j} {\frac{\pi}{p}}} \adj \right) \otimes   \exp(\frac{-i \pi X}{4})  \right) H ^{\otimes n}\ket{\GHZ_n}  \right).
\end{align}
Where the rearrangement on the third line used the operator identity discussed above (\Cref{eq:NUrot_X_commute}).

From this it is clear that the measurement of the first $n-1$ edge qubits produces a uniformly random bitstring. We assume that such a measurement has been carried out, producing some bitstring $d$. Then, following the same analysis as used in the proof of \Cref{thm:NUrot_majmod_sampling}, we consider the (unnormalized) state of the first vertex qubit when the first $n-1$ vertex qubits have been measured and bitstring $x = x_1 x_2 ... x_{n-1}$ is observed:
\begin{align}
        &\bra{x}_{1...n-1}  \left(\bigotimes_{j=1}^{n-1} Z^{\bpathsum(d)_j}  \altNUrot{(-1)^{\bpathsum(d)_j} {\frac{\pi}{p}}} \adj \right)  \otimes \exp(\frac{-i \pi X}{4})  \left( H ^{\otimes n}\ket{\GHZ_n}\right) \nonumber \\
        &\hspace{30pt}= (-1)^{\abs{x}} \bra{x}_{1...n-1} \left(\bigotimes_{j=1}^{n-1} \altNUrot{(-1)^{\bpathsum(d)_j} {\frac{\pi}{p}}} \adj   \right) \otimes \exp(\frac{-i \pi X}{4})  \left( H ^{\otimes n}\ket{\GHZ_n}\right)  \\
        &\hspace{30pt}= (-1)^{\abs{x}} 2^{-(n-1)} \exp(iX \left(- \frac{\pi}{4} +   \frac{\pi}{p}  \sum_{j=1}^{n-1} \left( x_j(-1)^{\bpathsum(d)_j} \right)  \right))\ket{\parity(x)} \label{eq:pmmajmod_last_qubit_state}, 
\end{align}
where the final line followed from the same series of identities as used in \Cref{eq:NU_analysis1,eq:NU_analysis2,eq:NU_analysis3,eq:NU_analysis4,eq:NU_analysis5,eq:NU_analysis6}. The key features of this argument are illustrated in \Cref{fig:pmmajmod_NUrot_circuit_analysis}, where we focus just on the analysis of the vertex qubits when the edge qubits are measured and classical bitstring $d$ is observed. 

\begin{figure}[htp]

\begin{align*}
&\Qcircuit @C=1em @R=1em {
\lstick{}   & \gate{X^{\bpathsum(d)_1}} & \gate{H}  & \gate{\altNUrot{\pi/p}\adj} & \meter & \cw & \rstick{x_1} \\
\lstick{}& {\makecell{\vdots\\}} &&&\\
\lstick{}   & \gate{X^{\bpathsum(d)_{n-1}}} & \gate{H}  & \gate{\altNUrot{\pi/p}\adj} &\meter & \cw & \rstick{x_{n-1}} \\  
\lstick{}   & \qw & \gate{H}  & \gate{\exp(-i \pi X / 4)} & \qw
\inputgroupv{1}{4}{1em}{3.25em}{\ket{\GHZ_n} \;\;\;\;\;\;\;}
}
\\
\\
\cdashline{1-2}
\\
&\hspace{45pt} = \hspace{45pt}
\begin{aligned}
\Qcircuit @C=1em @R=1em {
\lstick{}   & \gate{H}  & \gate{Z^{\bpathsum(d)_1}}     &  \gate{\altNUrot{\pi/p}\adj} & \meter & \cw & \rstick{x_1} \\
\lstick{}   &  {\makecell{\vdots\\}}     &       && \\
\lstick{}   & \gate{H}  & \gate{Z^{\bpathsum(d)_{n-1}}} &  \gate{\altNUrot{\pi/p}\adj} & \meter & \cw & \rstick{x_{n-1}} \\  
\lstick{}   & \gate{H}  & \qw                           &  \gate{\exp(-i \pi X / 4)}  & \qw
\inputgroupv{1}{4}{1em}{3.25em}{\ket{\GHZ_n} \;\;\;\;\;\;\;}
}
\end{aligned}
\\
\\
\cdashline{1-2}
\\
&\hspace{45pt} = \hspace{45pt}
\begin{aligned}
\Qcircuit @C=1em @R=1em {
\lstick{}   & \gate{H}  &  \gate{\altNUrot{(-1)^{\bpathsum(d)_1} \pi/p}\adj} & \gate{Z^{\bpathsum(d)_1}} & \meter & \cw & \rstick{x_1} \\
\lstick{}   & {\makecell{\vdots\\}}  &    && \\
\lstick{}  & \gate{H}  &  \gate{\altNUrot{(-1)^{\bpathsum(d)_{n-1}} \pi/p}\adj} & \gate{Z^{\bpathsum(d)_{n-1}}} & \meter & \cw & \rstick{x_{n-1}} \\  
\lstick{}   & \gate{H}  &  \gate{\exp(-i \pi X / 4)} & \qw & \qw
\inputgroupv{1}{4}{1em}{3.25em}{\ket{\GHZ_n} \;\;\;\;\;\;\;}
}
\end{aligned}
\\
\\
\cdashline{1-2}
\\
&\hspace{45pt} = \hspace{45pt}
\begin{aligned}
\Qcircuit @C=1em @R=1em {
\lstick{}   & \gate{H}              &  \qw                              & \meter & \cw & \rstick{x_1} \\
\lstick{}   & {\makecell{\vdots\\}} &                                   & \\
\lstick{}   & \gate{H}              &  \qw                              & \meter & \cw & \rstick{x_{n-1}} \\  
\lstick{}   & \gate{H}              &  \gate{\exp(-i X \left(\pi /4 + \pi/p \sum_j x_j (-1)^{\bpathsum(d)_j} \right) )}        & \qw
\inputgroupv{1}{4}{1em}{3.25em}{\ket{\GHZ_n} \;\;\;\;\;\;\;}
}
\end{aligned}
\end{align*}
\caption{The state of the final vertex qubit of the circuit described in \Cref{fig:pmmajmod_NUrot_circuit} when all other vertex qubits (and edge qubits) are measured in the computational basis. Equivalence between lines is explained in the proof of \Cref{thm:pmmajmod_ideal_sampling}.}
\label{fig:pmmajmod_NUrot_circuit_analysis}
\end{figure}
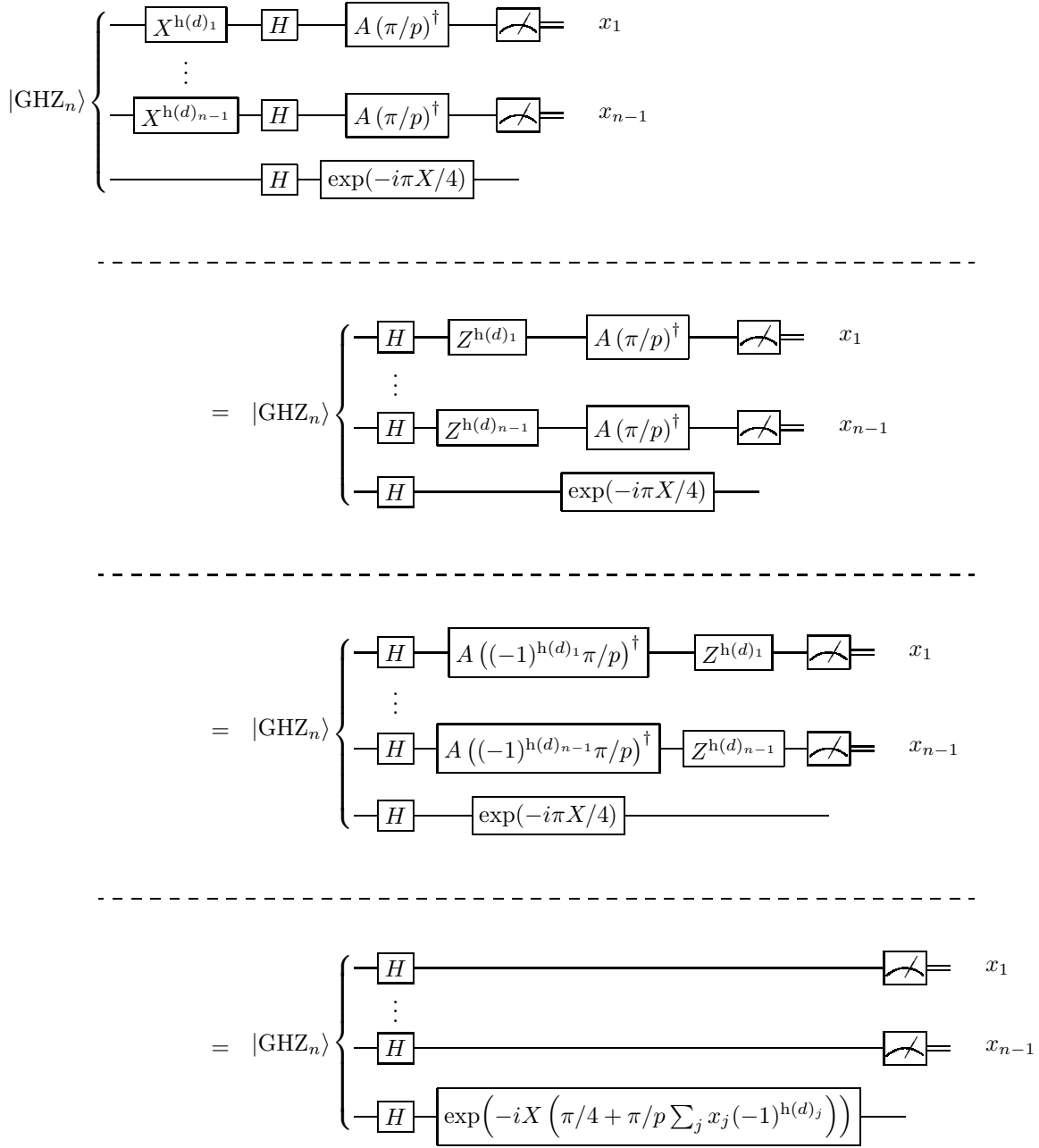

Next (still following the analysis used in \Cref{ssec: non-unitary majority GHZ sampling}) we note that the vector above has norm $2^{-(n-1)}$ for any string $x$, and hence the bitstring $x$ observed when measuring the first $n-1$ vertex qubits is uniformly random. Additionally, we let $Y_{d,x}$ be the random variable representing the outcome measurement applied to the final qubit of the circuit depicted in \Cref{fig:pmmajmod_NUrot_circuit}, conditioned on the measurement of the previous $2n-2$ qubits giving the bitstring $(d,x)$. Straightforward calculation applied to \Cref{eq:pmmajmod_last_qubit_state} gives
\begin{align}
    \Pr[Y_{d,x} = \parity(x)] = \cos^2\left( - \frac{\pi}{4} + \frac{\pi}{p} \left(\sum_i x_i (-1)^{\bpathsum(d)_i} \right) \right)
\end{align}
Then, small extension of \Cref{claim:ideal_circuit_prob_failure} (proven next, in \Cref{lem:pmmajmod_circuit_failure}) gives 
\begin{align}
    \frac{1}{2^{2n-2}} \sum_{(d,x) \in \{0,1\}^{2n-2}}\Pr[Y_{d,x} \neq \pmmajmod{p}(d,x)] \leq \frac{1}{2} - \frac{1}{\pi} + \frac{1}{2p} +  O(p^{3/2}e^{-n/4p^2}).
\end{align}

Finally, we let $D', X'$ be random variables representing the output of measuring the edge qubits and first $n-1$ vertex qubits of the circuit depicted in \Cref{fig:pmmajmod_NUrot_circuit}, respectively. We have already shown that the marginal distributions of $D'$ and $X'$ are uniformly random and so we find 
\begin{align}
    \TVD((D',X',Y_{D',X'}), (Z, \pmmajmod{p}(Z))) \leq \frac{1}{2} - \frac{1}{\pi} + \frac{1}{2p} +  O(p^{3/2}e^{-n/4p^2})
\end{align}
by exactly the same argument as used to finish the proof of \Cref{thm:NUrot_majmod_sampling}.

\end{proof}

\begin{lem}
    \label{lem:pmmajmod_circuit_failure} Define the random variable $Y_{d,x}$ as in the proof of \Cref{thm:pmmajmod_ideal_sampling}, so 
    \begin{align}
        \Pr[Y_{d,x} = \parity(x)] = \cos^2\left( - \frac{\pi}{4} + \frac{\pi}{p} \left(\sum_i x_i (-1)^{\bpathsum(d)_i} \right) \right)
    \end{align}
    Then 
    \begin{align}
        \frac{1}{2^{2n-2}} \sum_{(d,x) \in \{0,1\}^{2n-2}}\Pr[Y_{d,x} \neq \pmmajmod{p}(d,x)] \leq \frac{1}{2} - \frac{1}{\pi} + \frac{1}{2p} +  O(p^{3/2}e^{-n/4p^2}).
    \end{align}
\end{lem}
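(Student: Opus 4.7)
The plan is to run through essentially the same argument as in \Cref{claim:ideal_circuit_prob_failure}, but with the signed sum $S_{d,x} := \sum_{i=1}^{n-1} x_i(-1)^{\bpathsum(d)_i}$ playing the role that $|x|$ played before. The first observation is that the failure probability depends on $(d,x)$ only through $S_{d,x} \bmod p$, via the exact same function $f$ used in \Cref{claim:ideal_circuit_prob_failure}: if $\MM{p}(S_{d,x}) = 0$ the failure probability is $\sin^2(-\pi/4 + \pi S_{d,x}/p)$, and if $\MM{p}(S_{d,x}) = 1$ it is $\cos^2(-\pi/4 + \pi S_{d,x}/p)$. Since both $\cos^2$ and $\sin^2$ are $\pi$-periodic, $f$ is $p$-periodic and we may regard it as a function on $\ZZ/p\ZZ$. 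So the quantity to bound becomes
\begin{align*}
\frac{1}{2^{n-1}} \sum_{d \in \{0,1\}^{n-1}} \mathbb{E}_{X}\bigl[ f(S_{d,X} \bmod p) \bigr],
\end{align*}
where $X$ is uniform on $\{0,1\}^{n-1}$.

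Next, I would fix $d$ and argue that $S_{d,X} \bmod p$ is close to uniform on $\{0,1,\dots,p-1\}$. For this I apply \Cref{fact:puniform} with $t = n-1$ and coefficients $a_i = (-1)^{\bpathsum(d)_i} \in \{-1,+1\}$; these are nonzero mod $p$ for every prime $p$, so the fact applies and gives
\begin{align*}
\TVD\bigl(S_{d,X} \bmod p,\; U_p \bigr) \;\leq\; \sqrt{p}\, e^{-(n-1)/p^2} \;=\; O\bigl(\sqrt{p}\, e^{-n/4p^2}\bigr).
\end{align*}
Importantly, this bound is uniform in $d$, so averaging over $d$ at the end costs nothing.

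Given this, I follow the remainder of the proof of \Cref{claim:ideal_circuit_prob_failure} verbatim. Writing $\delta = \sqrt{p}\,e^{-(n-1)/p^2}$, I bound
$\mathbb{E}_X[f(S_{d,X} \bmod p)] \leq (1/p + \delta)\sum_{k=0}^{p-1} f(k)$, use the symmetry of $f$ about $p/2$ and its convexity on $[0,p/2]$ to upper bound $\sum_{k=0}^{(p-1)/2} f(k)$ by $\tfrac12 + \int_{1/2}^{p/2} f(k)\,dk$, and evaluate the integral to $\tfrac{p}{4}(1 - 2/\pi)$ exactly as before. Putting everything together yields the claimed bound $\tfrac12 - \tfrac{1}{\pi} + \tfrac{1}{2p} + O(p^{3/2} e^{-n/4p^2})$, and since the bound on the inner expectation is independent of $d$, the outer average over $d$ preserves it.

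The only genuinely new ingredient compared to \Cref{claim:ideal_circuit_prob_failure} is the observation that \Cref{fact:puniform} tolerates signed coefficients, which is immediate since $\pm 1$ are nonzero modulo any prime $p \geq 2$. I don't anticipate a serious obstacle; the mild care needed is simply to track that $f$ is the same function, and to confirm that periodicity (rather than positivity of $S_{d,x}$) is what makes the argument go through.
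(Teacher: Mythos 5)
Your proof is correct, but the one genuinely new step is handled differently from the paper. Both arguments reduce the lemma to the computation in \Cref{claim:ideal_circuit_prob_failure}, which only needs a bound on the total variation distance between the signed sum $S_{d,x}=\sum_i x_i(-1)^{\bpathsum(d)_i}$ taken mod $p$ and the uniform distribution on $\{0,\dots,p-1\}$. The paper obtains this bound by using the randomness of the edge bits $D$: it rewrites $\sum_i X_i(-1)^{D_i} = |X| - 2\sum_{i:X_i=1}D_i$, conditions on $|X|\geq n/4$ via Hoeffding (contributing an extra $e^{-n/8}$ term), and then applies \Cref{fact:puniform} to the sum over the $D_i$ with coefficients $2$, yielding $O(\sqrt{p}\,e^{-n/4p^2})$. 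You instead fix $d$ and use only the randomness of $X$, applying \Cref{fact:puniform} directly with coefficients $(-1)^{\bpathsum(d)_i}\in\{\pm 1\}$, which are indeed nonzero mod any prime; this gives $\sqrt{p}\,e^{-(n-1)/p^2}$ uniformly in $d$, so the average over $d$ is free. Your route is simpler (no decomposition, no Hoeffding step) and in fact gives a slightly stronger exponent than the stated $e^{-n/4p^2}$, which it of course implies. One shared caveat, not a gap in your argument relative to the paper's: both proofs treat the failure probability as a $p$-periodic function of $S_{d,x}$, which requires reading the $\MM{p}$ in \Cref{defn:pmmajmod} modulo $p$ (as in the classical-hardness sections); with that reading your identification of the function $f$ from \Cref{claim:ideal_circuit_prob_failure} is exact, since $\cos^2$ and $\sin^2$ are $\pi$-periodic.
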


\begin{proof}
    Let $D,X$ be random variables each taking value uniformly at random from $\{0,1\}^{n-1}$. Then we can write 
    \begin{align}
            &\frac{1}{2^{2n-2}} \sum_{(d,x) \in \{0,1\}^{2n-2}}\Pr[Y_{d,x} \neq \pmmajmod{p}(d,x)] \nonumber \\
            &\hspace{20pt}= \Pr[Y_{D,X} \neq \parity(x) \oplus \MM{p}\left(\sum_i x_i (-1)^d_i \right)] \\
            &\hspace{20pt}= \sum_k \Pr[Y_{D,X} \neq \parity(x) \oplus \MM{p}\left( k \right) \Big\vert \sum_i X_i (-1)^D_i = k] \Pr[\sum_i X_i (-1)^D_i = k] \label{eq:ideal_majmod_circuit_failure_prob_initial}
    \end{align}
    We compare this equation to \Cref{eq:prob_Y_incorrect_initial_eq}, and note that (after rewriting $\majmod{p}(X) = \MM{p}(\abs{X})$) the two probabilities are identical except that the random variable $\abs{X}$ has been replaced by $\sum X_i (-1)^D_i$. Then the proof of the bound proceeds identically to the proof of bound in \Cref{claim:ideal_circuit_prob_failure}, except that we need a bound on the total variation distance between the distribution of the random variable $\sum_i X_i (-1)^{D_i} \pmod{p}$ and the uniform distribution over $\{0,1,...,p-1\}$.
    
    To do this, we write 
    \begin{align}
        \sum_i X_i (-1)^{D_i} = \sum_i X_i - 2 \sum_{i : X_i = 1} D_i
    \end{align}
    and note that both terms in the right-hand side equation give uniform distributions mod $p$ by \Cref{fact:puniform} (provided that close to half the bits of $X_i$ are ones, which happens with high probability). 
    
    Formally, let $\tilde{X}$ be the random variable taking value uniformly at random from the set of $n$-bit strings with Hamming weight at least $n/4$. Then we have 
    \begin{align}
        \TVD\left(\sum_i X_i - 2 \sum_{i : X_i = 1} D_i, \sum_i \tilde{X}_i - 2 \sum_{i : \tilde{X}_i = 1} D_i
        \right) \leq \TVD(X, \tilde{X}) \leq \exp(-n/8), \label{eq:bpath_sum_uniform_dist_pt1}
    \end{align}
    where the first inequality follows because for any distributions $X$ and $\tilde{X}$ and (possibly random) function $f$ we have $\TVD(X, X') \geq \TVD(f(X), f(X'))$, and the second inequality follows from Hoeffding's. Then, letting $U_p$ denote the uniform distribution mod $p$, for any $\tilde{x}$ in the support of $\tilde{X}$ we have, by \Cref{fact:puniform}, that
    \begin{align}
        \TVD \left( 2 \sum_{i : \tilde{x}_i = 1} D_i \pmod{p}, U_p \right) \leq \sqrt{p} \exp(-n/4p^2)
    \end{align}
    and hence 
    \begin{align}
        \TVD\left(\abs{\tilde{x}} - 2 \sum_{i : \tilde{x}_i = 1} D_i \pmod{p}, U_p 
        \right) \leq \sqrt{p} \exp(-n/4p^2)
    \end{align}
    since shifting a distribution doesn't change its distance from the uniform distribution. Then summing over all possible $\tilde{x}$ we see
        \begin{align}
        \TVD\left(\abs{\tilde{X}} - 2 \sum_{i : \tilde{X}_i = 1} D_i \pmod{p}, U_p 
        \right) \leq \sqrt{p} \exp(-n/4p^2). \label{eq:bpath_sum_uniform_dist_pt2}
    \end{align}
    Combining \Cref{eq:bpath_sum_uniform_dist_pt1,eq:bpath_sum_uniform_dist_pt2} gives 
    \begin{align}
        \TVD\left(\sum_i X_i - 2 \sum_{i : X_i = 1} D_i \pmod{p}, U_p
        \right) \leq \exp(-n/8) + \sqrt{p} \exp(-n/4p^2) = O(\sqrt{p} \exp(-n/4p^2)).
    \end{align}
    Then, following the same proof as in \Cref{claim:ideal_circuit_prob_failure} and plugging the above inequality in place of \Cref{fact:puniform} gives the desired bound. 
\end{proof}

Then, following the same arguments as used in \Cref{ssec: unitary majority GHZ sampling}, we show that we can replace the non-unitary rotation gates used in the circuit described above with actual unitary gates, while causing small disturbance to the output distribution. The result of this procedure is a $\QNC^0$ circuit that takes the all zeros state as input and whose output samples approximately from the distribution $(Z, \pmmajmod{p}(Z))$.

\begin{thm}
\label{thm:Urot_pmmajmod_sampling}
For $n$ sufficiently large and $p = n^c$ for some constant $c \in (0,1/2)$ there is a constant-depth circuit consisting of one and two qubit unitary gates and $\Urot{m'}{\theta'}$ gates with $m' = \lceil c^{-1} + 1 \rceil$ and $\theta' = \pi/p$ which takes the $(2n-1)$-qubit all zeros state as input and produces an output which, when measured in the computational basis, produces a distribution $(Z', Y)$ with an $n$-bit output which correlates approximately with the distribution $(Z, \pmmajmod{p}(Z))$. 
\end{thm}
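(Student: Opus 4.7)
The plan mirrors the passage from the non-unitary circuit of \Cref{thm:NUrot_majmod_sampling} to the fully unitary circuit of \Cref{thm:Urot_majmod_sampling}. The target circuit prepares $\ket{\BPM{n}}$ in constant depth via \Cref{thm:Binary_Tree_Poor_Man_Construction}, applies a Hadamard to each of the $n$ vertex qubits, applies $D := (n-1)/m'$ disjoint copies of $\Urot{m'}{\pi/p}\adj$ to the first $n-1$ vertex qubits (assuming $m' \mid (n-1)$ for notational convenience), applies $\exp(-i\pi X/4)$ to the last vertex qubit, and measures in the computational basis.

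First I would establish the non-unitary analog: show that replacing each $\Urot{m'}{\pi/p}\adj$ by $\multiNUrot{m'}{\pi/p}\adj$ yields a measurement distribution within TVD at most $\tfrac{1}{2} - \tfrac{1}{\pi} + \tfrac{1}{2p} + O(p^{3/2} e^{-n/4p^2})$ of $(Z, \pmmajmod{p}(Z))$. The argument fuses the poor-man's-state analysis of \Cref{thm:pmmajmod_ideal_sampling} with the multi-qubit-gate identity $\bra{x}\multiNUrot{m}{\theta}\adj = \bra{x}\prod_j \exp(i\theta X_j x_{j-1})$ extracted from the proof of \Cref{lem:multi_NUrot-NUrot_circuit_equivalence} (with indices taken modulo $m$). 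Using \Cref{eq:PMGHZ_X_error} and conditioning on the edge outcome $d$, the $X$-corrections in $\ket{\BPM{n}}$ commute past the Hadamards into $Z$-corrections, which then pass through the CNOTs in $H^{\otimes n}\ket{\GHZ_n}$ (they sit on the control qubits) to rest directly on the vertex qubits just before the $\multiNUrot{m'}{\pi/p}\adj$ gates. Combining the identity above with $Z \exp(i\theta X) = \exp(-i\theta X) Z$ and $X\ket{+} = \ket{+}$, the effective final-qubit rotation angle conditioned on vertex outcomes $x$ takes the form $(\pi/p) \sum_i x_i (-1)^{\bpathsum(d)_{\sigma(i)}} - \pi/4$, where $\sigma$ is the cyclic shift within each block of $m'$ induced by the definition of $\multiNUrot{m'}{\pi/p}$. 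Choosing the vertex labeling of $\bintree[n]$ so that the block structure of the multi-qubit gates corresponds to a structured grouping of tree vertices (under which the within-block cyclic shift acts as a harmless relabeling compatible with $\bpathsum$), the functional form matches that of \Cref{thm:pmmajmod_ideal_sampling}, and the failure probability bound follows by the same analysis as in \Cref{lem:pmmajmod_circuit_failure}.

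Next I would replace $\multiNUrot{m'}{\pi/p}\adj$ by the unitary $\Urot{m'}{\pi/p}\adj$. By \Cref{lem:Frobenius_Urot_NUrot_bound} and \Cref{fact:Frobenius_Operator_bound}, $\norm{\Urot{m'}{\pi/p} - \multiNUrot{m'}{\pi/p}}_\infty \in O((\pi/p)^{m'}) \le O(n^{-(1+c)})$ for $m' = \lceil c^{-1}+1\rceil$. \Cref{fact:Operator_Tensor_Product_Bound} lifts this to tensor-product operator-norm error $O(D \cdot n^{-(1+c)}) = O(n^{-c})$, and since the rest of the circuit (the preparation of $\ket{\BPM{n}}$, the Hadamards, and the final single-qubit rotation) is norm-bounded, the pre-measurement quantum states of the two circuits differ by at most $O(n^{-c})$ in Euclidean norm. \Cref{fact:l2_distance_trace_distance_bound} then converts this to an $O(n^{-c})$ bound on TVD between the two measurement distributions. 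Combining via the triangle inequality with the TVD bound from the first step, and using that $p = n^c$ with $c < 1/2$ makes $O(p^{3/2} e^{-n/4p^2})$ super-polynomially small in $n$, yields total TVD at most $\tfrac{1}{2} - \tfrac{1}{\pi} + O(1/p)$ between the output of the unitary circuit and $(Z, \pmmajmod{p}(Z))$.

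The main obstacle will be the bookkeeping in the first step: the cyclic-shift permutation $\sigma$ from the multi-qubit gate causes the natural vertex ordering used to define $\pmmajmod{p}$ to differ from the qubit-block ordering imposed by the $\multiNUrot{m'}{\pi/p}\adj$ gates. This is resolved by choosing the labeling of edges and vertices of $\bintree[n]$ (a free parameter in the construction of $\ket{\BPM{n}}$) so that the two orderings agree up to a within-block relabeling under which $\bpathsum$ transforms covariantly, at which point the failure-probability analysis of \Cref{thm:pmmajmod_ideal_sampling} and \Cref{lem:pmmajmod_circuit_failure} carries over with only cosmetic changes.
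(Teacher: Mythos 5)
Your proposal is correct and follows essentially the same route as the paper: reduce to the non-unitary multi-qubit circuit acting on $\ket{\BPM{n}}$ (reusing the analysis of \Cref{thm:pmmajmod_ideal_sampling} and \Cref{lem:pmmajmod_circuit_failure}), then swap $\multiNUrot{\pi/p}{m'}\adj$ for $\Urot{m'}{\pi/p}\adj$ via \Cref{lem:Frobenius_Urot_NUrot_bound}, \Cref{fact:Frobenius_Operator_bound}, \Cref{fact:Operator_Tensor_Product_Bound}, and \Cref{fact:l2_distance_trace_distance_bound} at an $O(n^{-c})$ cost in total variation distance. The only divergence is how the within-block cyclic ``shuffling'' of $\multiNUrot{\theta}{m}$ is handled: the paper appends an explicit constant-size permutation gate $C_m$ to each block, while you absorb it by shifting the vertex-to-wire assignment by one (cyclically) within each block, which works because the $Z^{\bpathsum(d)}$ correction travels with the physical vertex qubit while the gate's control is shifted -- an equivalent and equally valid fix, though your phrase ``compatible with $\bpathsum$'' should be made concrete as exactly this one-step cyclic shift of the labeling.
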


\begin{proof}

The desired circuit can be constructed from the circuit presented in \Cref{fig:pmmajmod_NUrot_circuit} following the same procedure as used in \Cref{ssec: unitary majority GHZ sampling}. Specifically, we first replace blocks of $m$ parallel $\NUrot{\theta}$ gates with $\multiNUrot{\theta}{m}$ gates, then replace those with $\Urot{\theta}{m}$ gates. The only additional complication we encounter is that we must apply a final permutation to our output bits to accommodate a ``shuffling effect" caused by replacing blocks of the $\NUrot{\theta}$ gates by $\multiNUrot{\theta}{m}$. The final circuit is presented in \Cref{fig:pmmajmod_unitary_circuit}, where the $C_m$ gate denotes a permutation whose action on the $m$ qubit computational basis state $\ket{x_1 x_2 ... x_m}$ is given by
\begin{align}
    C_m \ket{x_1 x_2 ... x_m} = \ket{x_2 x_3 ... x_m x_1}.
\end{align}

\begin{figure}[htb]

\begin{align*}
\Qcircuit @C=1em @R=1em {
\lstick{}   & \qw  & \qw & \qw & \meter \\
& & {\makecell{\vdots\\}} \\ 
\lstick{}   & \qw  & \qw & \qw & \meter \\
&&&\\
\lstick{}   & \gate{H}  & \multigate{2}{\Urot{m'}{\theta'}\adj} & \multigate{2}{C_m} & \meter \\
& {\makecell{\vdots\\}} & &  & &  \\ 
\lstick{}   & \gate{H}  & \ghost{\Urot{m'}{\theta'}\adj}        & \ghost{C_m}       & \meter \\
& & {\makecell{\mbox{}\\\vdots\\ \mbox{}\\}} & {\makecell{\mbox{}\\\vdots\\ \mbox{}\\}} \\ 
\lstick{}   & \gate{H}  & \multigate{2}{\Urot{m'}{\theta'}\adj} & \multigate{2}{C_m} & \meter \\
& {\makecell{\vdots\\}} & &  & & \\ 
\lstick{}   & \gate{H}  & \ghost{\Urot{m'}{\theta'}\adj}         & \ghost{C_m}      & \meter\\
\lstick{}   & \gate{H}  & \gate{\exp(-i \pi X / 4)}  & \qw& \meter  
\gategroup{1}{1}{3}{5}{1.5em}{--}
\gategroup{5}{1}{12}{5}{1.5em}{--}
\inputgroupv{1}{12}{3em}{9.7em}{\ket{\BPM{n}} \hspace{35pt}} 
% \inputgroupv{1}{2}{0.8em}{1.5em}{test} 
} 
\end{align*}
\caption{Constant-depth unitary circuit producing approximate samples from the distribution $(Y, \pmmajmod{p}(Y))$. Note that $m$ is constant, and so the unitaries acting on $m$ qubits have constant size. The upper box indicates the $n-1$ ``edge'' qubits of the state $\ket{\BPM{n}}$. The lower box indicates the $n$ ``vertex'' qubits of the same state.}
\label{fig:pmmajmod_unitary_circuit}
\end{figure}
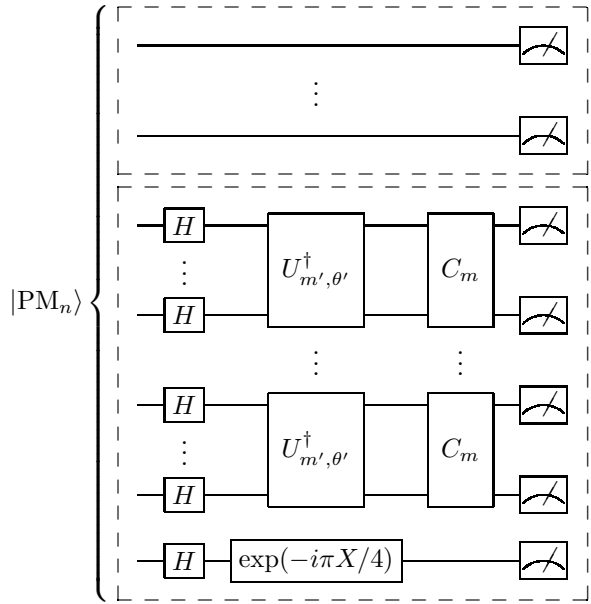

As a first step towards showing this circuit samples from the desired distribution, we show that replacing the parallel $\NUrot{\theta}$ gates in the circuit of \Cref{fig:pmmajmod_NUrot_circuit} with $\multiNUrot{\theta}{m}$ gates followed by a $C_m$ gates doesn't change the post-measurement distribution produced by the circuit. To see why, we consider the state of the final vertex qubit in both circuits after a measurement is performed on all edge qubits, producing bitstring $d$, and the first $m$ vertex qubits, producing bitstring $x_1x_2...x_m$. In the circuit described in \Cref{fig:pmmajmod_NUrot_circuit}, the state of the final qubit is given by
\begin{align}
    &\bra{x_1 x_2 ... x_m} \bigotimes_{i=1}^m \NUrot{\theta} Z^{\bpathsum(d)_i} \prod_i \CNOT_{i,n} \ket{+}^{\otimes m} \otimes \ket{0} \\
    &= \bra{x_1 x_2 ... x_m } \bigotimes_{i=1}^m \exp(i \theta X x_{i}) Z^{\bpathsum(d)_i}\prod_i \CNOT_{i,n} \ket{+}^{\otimes m} \otimes \ket{0} \\
    &= \bra{x_1 x_2 ...  x_m} \bigotimes_{i=1}^m Z^{\bpathsum(d)_i} \ket{+}^{\otimes n} \otimes \exp(i \theta X \sum_i x_i (-1)^{\bpathsum(d)_i}) \ket{\parity(x_1x_2...x_m)}
\end{align}
and, if the $\NUrot{\theta}$ gates are replaced by a $C_m$ gate and $\multiNUrot{\theta}{m}$ gate the state of the final qubit is given by 
\begin{align}
    &\bra{x_1 x_2 ... x_m} C_m \multiNUrot{\theta}{m} \bigotimes_{i=1}^m Z^{\bpathsum(d)_i} \prod_i \CNOT_{i,n} \ket{+}^{\otimes m} \otimes \ket{+}_n \\
    &=\bra{x_2 ... x_m x_1} \multiNUrot{\theta}{m} \bigotimes_{i=1}^m Z^{\bpathsum(d)_i} \prod_i \CNOT_{i,n} \ket{+}^{\otimes m} \otimes \ket{+}_n \\
    &= \bra{x_2 ... x_m x_1} \bigotimes_{i=1}^m \exp(i \theta X x_{i}) Z^{\bpathsum(d)_i}\prod_i \CNOT_{i,n} \ket{+}^{\otimes m} \otimes \ket{+}_n \\
    &= \bra{x_2 ... x_m x_1} \bigotimes_{i=1}^m Z^{\bpathsum(d)_i} \ket{+}^{\otimes n} \otimes \exp(i \theta X \sum_i x_i (-1)^{\bpathsum(d)_i}) \ket{\parity(x_2...x_mx_1)}.
\end{align}
Since these states are the same up to an overall phase we see the change does not affect the probability of observing outcomes $d$ and $x_1,...,x_m$ or the state of the unmeasured qubit. 
It is straightforward to extend this analysis to the case where the same replacement is made to all $D$ blocks of $\NUrot{\theta}$ gates in the circuit of \Cref{fig:pmmajmod_NUrot_circuit}.

It remains to show that replacing the $\multiNUrot{\theta}{m}$ gates (in the circuit produced by the replacement discussed above) with $\Urot{\theta}{m}$ gates causes a negligible change to the distribution output by the circuit after a computational basis measurement. Following the same argument as used to prove \Cref{thm:classical_LB_with_GHZ} we see
\begin{align}
    &\Biggl\lVert I_2^{\otimes (n-1)} \otimes \left(\left(C_m \Urot{\theta'}{m'}\adj \right)^{\otimes D} \otimes \exp(-i \pi X/4)\right) H^{\otimes n} \nonumber \\
    &\hspace{50pt}- I_2^{\otimes (n-1)} \otimes \left(\left(C_m \multiNUrot{\pi/p}{m}\adj \right)^{\otimes D} \otimes \exp(-i \pi X/4)\right)H^{\otimes n} \Biggr\rVert_\infty 
    \in O(Dn^{-(1+c)}) \leq O(n^{-c}).
\end{align}
and so the classical distributions produced by computational basis measurements of the states 
\begin{align}
    I_2^{\otimes n-1} \otimes \left(\left(C_m \Urot{\theta'}{m'}\adj \right)^{\otimes D} \otimes \exp(-i \pi X/4)\right) H^{\otimes n}  \ket{\BPM{n}}
\end{align}
and 
\begin{align}
    I_2^{\otimes n-1} \otimes \left(\left(C_m \multiNUrot{\pi/p}{m}\adj \right)^{\otimes D} \otimes \exp(-i \pi X/4)\right)H^{\otimes n} \ket{\BPM{n}}
\end{align}
also differ by at most $O(n^{-c})$ in total variation distance. Combining \Cref{thm:pmmajmod_ideal_sampling} with the fact that $O(p^{3/2}e^{n/4p^2}) \leq O(1/p)$ for $p=n^{-c}$ with $c<1/2$ completes the proof.

\end{proof}

\section{Classical Hardness of Sampling   \texorpdfstring{$(X,\majmod{p}(X) \oplus \parity(X))$}{(X, majmod(X) + parity(X))}}
\label{sec:classical_hardness_with_GHZ}
In this section we prove the classical hardness of sampling from the distribution $(X, \majmod{p}(X)\oplus \parity(X))$ for each prime number $p$, where $X$ is sampled from the uniform distribution over $\{0,1\}^n$.
Recall that the total variation distance distributions $D_1, D_2$ over $\{0,1\}^m$ is
\begin{align}
    \TVD(D_1, D_2) := \max_{T \subseteq \{0,1\}^m} \bigg|\Pr[D_1 \in T] - \Pr[D_2 \in T]\bigg|
\end{align}
By the definition of $\TVD$, each set $T\subseteq \{0,1\}^m$, witnesses a lower bound on $\TVD(D_1, D_2)$ of $\big|\Pr[D_1 \in T] - \Pr[D_2 \in T]\big|$. To prove a lower bound on $\TVD(D_1, D_2)$, we construct a particular $T\in \{0, 1\}^m$ and refer to it as our \emph{statistical test}, and we say that $D_i$ ``passes'' the statistical test with probability $\Pr[D_i \in T]$.

We are interested in the total variation distance between the true distribution $D = (X, \majmod{p}(X) \oplus \parity(X))$, and the output distribution of some local function $f: \{0,1\}^\ell \to \{0,1\}^{n+1}$ that takes a uniformly random $\ell$-bit string $U$ as input. That is, we aim to lower bound $\TVD(f(U), D)$. We prove such a lower bound in the following theorem.

\begin{thm} \label{thm:classical_LB_with_GHZ}
For all $\delta <1$ there exists an $\epsilon >0$ such that for all sufficiently large $n$ and prime number $p = \Theta(n^\alpha)$ for $\alpha\in (\delta/3, 1/3)$: Let $f: \{0, 1\}^{\ell} \to \{0, 1\}^{n+1}$ be an $\epsilon \log(n)$-local function, with $\ell \leq n + n^\delta$. Then $\TVD(f(U), (X, \majmod{p}(X) \oplus \parity(X))) \geq 1/2 - O(1/\log n)$
\end{thm}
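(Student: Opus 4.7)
The plan is to follow Viola's block-decomposition strategy and combine two statistical tests to witness $\TVD(f(U), D) \geq 1/2 - O(1/\log n)$, where $D := (X, \majmod{p}(X) \oplus \parity(X))$. Since any fan-in-$2$ depth-$d$ circuit computes a $2^d$-local function, it suffices to prove the claim when $f: \{0,1\}^\ell \to \{0,1\}^{n+1}$ is $\epsilon \log n$-local with $\ell \leq n + n^\delta$. I would begin with the standard combinatorial lemma: after permuting inputs and outputs,
\[
f(x,y) \;=\; g_1(x_1, y) \circ g_2(x_2, y) \circ \cdots \circ g_s(x_s, y) \circ h(y),
\]
where $s = \Omega(n/d^2) = \Omega(n/(\epsilon \log n)^2)$, each block $g_i$ depends only on $y$ and the single input bit $x_i$, and (by absorbing at most one block into $h$) the last output bit $b = b(y)$ is determined by $y$ alone.

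For a threshold $\tau := s/2$, call index $i$ \emph{alive at $y$} if $g_i(0,y) \neq g_i(1,y)$, write $A(y)$ for the number of alive indices, and set $R := \{y : A(y) < \tau\}$. The first test is $T_1 := \{(z, b) : b = \majmod{p}(z) \oplus \parity(z)\}$, which $D$ passes with probability $1$. For $y \notin R$, the Hamming weight $|z|$ of the first $n$ output bits can be written as $c(y) + \sum_i \Delta_i(y) X_i$ with $\Delta_i(y) \in [-d, d]$ and (after ensuring that most alive blocks are also weight-alive, i.e.\ $|g_i(0,y)| \neq |g_i(1,y)|$) at least $\Omega(\tau)$ nonzero summands. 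Since $d = \epsilon \log n < 2p$ for large $n$ (as $p = n^\alpha$), each nonzero $\Delta_i(y)$ is also nonzero mod $2p$; a Fourier-analytic generalization of \Cref{fact:puniform_bits} then bounds $\prod_i \cos(\pi k \Delta_i(y)/(2p))$ by $e^{-\Omega(\tau/p^2)}$ uniformly over nontrivial $k \in \mathbb{Z}_{2p}$, so $|z| \bmod 2p$ is within TV distance $\sqrt{2p}\,e^{-\Omega(\tau/p^2)} = n^{-\omega(1)}$ of uniform on $\mathbb{Z}_{2p}$. Because $\majmod{p}(\cdot) \oplus \parity(\cdot)$ is a balanced function of $|z| \bmod 2p$, this yields $\Pr[f(U) \in T_1 \mid Y = y] \leq 1/2 + n^{-\omega(1)}$.

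For $y \in R$ the first $n$ output bits deterministically take at most $2^{A(y)} \leq 2^\tau$ values, so their union $S' \subseteq \{0,1\}^n$ over $y \in R$ has $|S'| \leq 2^{\ell - s + \tau} \leq 2^{n + n^\delta - s/2} = 2^{n - \Omega(n/\log^2 n)}$ using $\delta < 1$. Setting $T_2 := S' \times \{0,1\}$ gives $\Pr[D \in T_2] \leq |S'|/2^n = 2^{-\Omega(n/\log^2 n)}$, while $Y \in R$ deterministically forces $f(U) \in T_2$. The combined test $T^* := T_1 \setminus T_2$ then satisfies $\Pr[D \in T^*] \geq 1 - 2^{-\Omega(n/\log^2 n)}$ and $\Pr[f(U) \in T^*] \leq \Pr[Y \notin R] \cdot (1/2 + n^{-\omega(1)}) \leq 1/2 + o(1)$, yielding $\TVD(f(U), D) \geq 1/2 - o(1) \geq 1/2 - O(1/\log n)$. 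I expect two main obstacles: first, proving the Fourier bound on $|z| \bmod 2p$ for signed coefficients at the composite modulus $2p$ (a bona-fide extension of the prime-modulus \Cref{fact:puniform_bits}), which requires bounding $|\cos(\pi k/(2p))|$ by $1 - \Omega(1/p^2)$ for each nontrivial $k \in \mathbb{Z}_{2p}$ and applying it $\tau$-fold; second, handling alive blocks that are \emph{weight-dead} (they permute bits within the block while preserving the block's Hamming weight) so that the Fourier argument still applies, either by refining the decomposition to single-bit blocks or by adding a Hamming-weight-based auxiliary test. The parameter range $\alpha \in (\delta/3, 1/3)$ is precisely what balances the Fourier decay $e^{-\Omega(\tau/p^2)}$ against the support bound $2^{n^\delta - \Omega(n/\log^2 n)}$ so that both errors fall well below $1/\log n$.
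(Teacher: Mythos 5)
Your overall architecture (Viola's block decomposition, a ``correctness'' test combined with a small-support test for inputs $y$ that fix too many blocks) matches the paper's strategy, and the support-bound half of your argument is essentially the paper's $T_F$ test and works. But the core analytic step has a genuine gap that you flag without resolving: for $y$ with many \emph{alive} blocks you need the conditional probability of $b = \majmod{p}(z)\oplus\parity(z)$ to be at most $1/2+o(1)$, and your argument only controls the Hamming weight $|z| = c(y) + \sum_i \Delta_i(y)X_i$ through the \emph{weight-alive} blocks. Nothing forces alive blocks to be weight-alive, and there is a concrete counterexample: take every block to be two output bits equal to $(x_i, 1-x_i)$ and set $b(y)$ to the (now deterministic) value of $\majmod{p}(z)\oplus\parity(z)$. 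Then every block is alive (so $Y\notin R$ always and your $T_2$ never fires), $|z|$ is constant, and $f$ passes $T_1$ with probability $1$ --- so $T^\ast$ does not separate $f(U)$ from $D$ at all, even though the true distance is large. This is exactly the role of the paper's extra test $T_0$ (``at most $N_0$ blocks are all-zero''): the target distribution has many all-zero blocks with overwhelming probability, so either $f$ has few zero-able blocks and is caught by $T_0$, or it has many blocks that are all-zero for one value of $x_i$ and nonzero for the other, and those blocks have \emph{strictly positive} weight coefficients $a_i$, which is what feeds \Cref{lem:MMP_of_sum}. Your proposed fixes (``refine to single-bit blocks'' or an unspecified auxiliary weight test) do not obviously supply this; the bits inside a block are jointly controlled by one input bit, so they cannot be split into independent single-bit blocks, and the needed auxiliary test is precisely the nontrivial ingredient.

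The second gap is the claimed uniform character bound modulo $2p$. For the character indexed by $k=p$ each factor is $\abs{\cos(\pi\Delta_i(y)/2)}$, which equals $1$ whenever $\Delta_i(y)$ is even; so if all surviving weight-differences are even, the parity of $|z|$ is deterministic and $|z| \bmod 2p$ is \emph{not} close to uniform, contradicting the stated $e^{-\Omega(\tau/p^2)}$ bound ``uniformly over nontrivial $k$.'' This failure mode is exactly why the paper proves \Cref{lem:MMP_of_sum} by a case analysis on the parities of the coefficients (fixing the variables with odd coefficients when they are a minority, and conditioning on the parity of the remaining string otherwise), reducing in each case to near-uniformity modulo the prime $p$ via \Cref{fact:puniform_bits}; the all-even case is still fine for the final bound because $\MM{p}$ is nearly balanced on each parity class, but that requires the extra argument rather than mod-$2p$ equidistribution. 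So the proposal as written does not go through: it needs both the all-zero-block mechanism (or an equivalent device guaranteeing many strictly nonzero weight coefficients) and the parity case analysis; with those two repairs it becomes essentially the paper's proof.
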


\begin{proof}
    This proof follows closely to the analogous proof for $(X, \majmod{p}(X))$ in \cite{viola2012complexity}, with similar notation. Let $\localityf$ be the locality of $f$, $\localityf = \epsilon\log(n)$.
    We start by permuting the outputs, as shown in \cite{viola2012complexity}.  Note that $\circ$~denotes concatenation.
    \begin{lem}[\cite{viola2012complexity}]
        There exists a partition of the input $u\in \{0,1\}^\ell$ into $u=(x,y)$, and permutation of the output bits such that
        \begin{align}
            f(x,y) =  g_1(x_1, y) \circ g_2(x_2, y) \circ \dots \circ g_s(x_s, y) \circ h(y).
        \end{align}
        With $g_i: \{0,1\}\times \{0,1\}^{\ell-s} \to \{0, 1\}^{|\blockidx{i}|}$, $|\blockidx{i}| \leq O(\localityf)$ and $s \geq \Omega(n/\localityf^2)$. 
    \end{lem}    
    We will refer to each $g_i(x_i, y)$ as the $i$th \emph{block} of the output, indexed by $\blockidx{i} \subseteq [n+1]$ in the initial permutation, for $i\in [s]$. Note that if we fix $y$, each block is independent, and block $i \in [s]$ only depends on $x_i$. We say that $g_i$ is \emph{$y$-fixed} for some $y\in \{0,1\}^{\ell-s}$ if $g_i(0, y) = g_i(1, y)$.

    Without loss of generality, and for simplicity of notation, let's assume that the last output bit does not get permuted, so $f(x,y)_{n+1}$ is still the output bit which should (ideally) correspond to $\majmod{p} \oplus \parity$ of the first $n$ outputs, and that it only depends on $y$. Next we define our statistical test.
    
    \paragraph{Statistical Test:} Let $N_0 := 3n^{3\alpha}, N_F := 2n^{3\alpha}$, we define our statistical test as $T := T_0 \cup T_F \cup T_S$, with
    \begin{align}
        T_0 &:=  \{z \in \{0,1\}^{n+1} : z\at{\blockidx{i}} = 0^{|\blockidx{i}|}  \text{ for} \ \leq N_0 \ \text{blocks}\ i\in [s] \}\\
        T_F &:= \{z \in \{0,1\}^{n+1} : \exists (x,y): f(x,y) = z \ \text{and} \ \geq N_F \ \text{blocks} \ g_i(x_i, y) \ \text{are} \ y\text{-fixed} \}\\
        T_S &:= \{(z', b) \in \{0,1\}^n \times \{0,1\} : b \neq \majmod{p}(z') \oplus \parity(z') \} \qquad \text{(``incorrect strings'')}
    \end{align}

    We will show that $f(U)$ passes the statistical test ($f(U)\in T$) with probability at least $1/2 -O(1/\log n)$ and $(X, \majmod{p}(X) \oplus \parity(X))$ passes with probability at most $1/n$. 

    Since both of the functions $\majmod{p}$ and $\parity$ only depend on the Hamming weight of their input, it is useful to define $\MM{p}$ and $\intparity$ as functions over integers, such that $\majmod{p}(z) = \MM{p}(|z|)$ and $\parity(z) = \intparity(|z|)$ for any $z\in \{0,1\}^n$, where we use $|\cdot|$ to denote Hamming weight $|z| = \sum_{i=1}^n z_i$.
        \begin{align}
            \MM{p}(j) := \begin{cases}
            0 & \text{if } j  < p/2 \mod p \\
            1 & \text{if } j > p/2  \mod p
            \end{cases},
            &&
            \intparity(j) := j \mod 2,
            &&
            \text{for } j\in \mathbb{Z}.
        \end{align}

    Upon fixing $y$, the Hamming weight $|f(x,y)|_{1:n}$ is a sum of independent random variables $|g_i(x_i, y)|$ which take on at most 2 different values. The following Fact, Corollary, and Lemma will be useful in analyzing this independent sum of random variables in the context of the $\majmod{p}\oplus \parity$ function. 
    \begin{fact} [Fact 3.2 in \cite{viola2012complexity}]\label{fact:puniform}
        Let $a_1, a_2, \dots a_t$ be nonzero integers modulo $p$, and let $(x_1, x_2, \dots, x_t) \in \{0,1\}^n$ be sampled uniformly. Then the total variation distance between $\sum_{i=1}^t a_ix_i \mod p$ and $U_{p}$, the uniform distribution over $\{0, 1, \dots, p-1\}$ is at most $\sqrt{p} e^{-t/p^2}$
    \end{fact}
    \begin{cor}\label{cor:prob_sum_in_set}
        For each prime $p = \Theta(n^\alpha)$ with $\alpha <1$,  $t = \Omega(p^3)$, $a_0, a_1, \dots a_t$ nonzero integers modulo $p$, and $A \subseteq \{0, 1, \dots p-1\}$
        \begin{align}
            \frac{|A|}{p} - O(1/n) \leq \Pr_{x\in \{0, 1\}^t}\left[a_0 + \sum_{i=1}^t a_i x_i \in A\right] \leq \frac{|A|}{p} + O(1/n)
        \end{align}
    \end{cor}
    \begin{proof}
        By the definition of total variation distance, it is sufficient to prove that $\TVD(U_p, a_0 + \sum_{i=1}^t a_i x_i) \leq O(1/n)$. 
        \begin{align}
            \TVD(U_p, a_0 + \sum_{i=1}^t a_i x_i) \leq \sqrt{p}e^{-t/p^2} = \sqrt{p} e^{-\Omega(p)} = \Theta(n^{\alpha/2}) e^{-\Omega(n^{\alpha})} \leq O(1/n).
        \end{align}
    \end{proof}
    
    \begin{lem} \label{lem:MMP_of_sum} 
        For each $\alpha \in (0, 1)$, and prime number $p=\Theta(n^\alpha)$, define the sums $S = a_0 + \sum_{i=1}^t a_i x_i$ and $U = u_0 + \sum_{i=1}^t u_i x_i$. Also let $t= \Omega(p^3)$ and $a_0, a_1, \dots, a_t$ and $u_0, u_1, \dots, u_t$ be integers with $0 < a_i \leq O(p/\log n)$ for each  $i\in [t]$. Then
        \begin{align}
            \Pr_{x} [\MM{p}(S) \oplus 
            \intparity(U) = b] \geq \frac{1}{2} - O(1/\log n). 
        \end{align}
    \end{lem}
    \begin{proof}
    Let's consider the case that at least $1/2$ of the $u_i$ for $i\in [t]$ are even. Then we arbitrarily fix all $x_i$ such that $u_i$ is odd, and let $E = \{i \in [t] : u_i \text{ even}\}$. Note that now the parity is fixed to $c := \intparity(u_0 + \sum_{i \in [t]\setminus E} u_ix_i)$. Let $a_i' = a_{E_i}$ for each $i\in \{1, 2, \dots, |E|\}$, and $a_0' = a_0 + \sum_{i \notin E} a_i$.
    \begin{align}
        \Pr_{x_E} \left[\MM{p}(S) \oplus 
        \intparity(U) = b\right] &= \Pr_{r\in \{0,1\}^{|E|}} \left[\majmod{p}(a'_0 + \sum_{i=1}^{|E|} a_i'r_i ) \oplus c = b\right]\\
        &= \Pr_r\left[a'_0 + \sum_{i=1}^{|E|} a_i'r_i \in M_{c\oplus b}\right]
    \end{align}
    Where $M_0 = \{0, 1, \dots, (p-1)/2\}$ and $M_1 = \{(p+1)/2, \dots, p-2, p-1\}$. Since $|M_0| = (p+1)/2$, $|M_1|=(p-1)/2$, and $|E| = \Theta(n^{\alpha})$, it follows from \Cref{cor:prob_sum_in_set} that
    \begin{align}
        \Pr_{x_E} \left[\MM{p}(S) \oplus 
        \intparity(U) = b\right] \geq (p-1)/2p - O(1/n)= 1/2 - O(1/n^\alpha).
    \end{align}
    All that's left is to consider the case where more than half of the $u_i$ for $i \in [t]$ are odd. In this case we will fix $x_i$ for each $i\in [t]$ with $u_i$ even, setting $a_0' := a_0 + \sum_{i\in E} S_i$, and $u_0' = u_0 + \sum_{i\in E} u_i$. We denote the set of indices of such ``odd'' elements as $O = \{i\in [t] : u_i \text{ odd} \}$, and set $a_i' = a_{O_i}$ and $u_i' = u_{O_i}$ for each $i\in [|O|]$. Note that since each $u_i'$ is odd, we have $\intparity(u_0'+ \sum_{i\leq t} u_i'r_i) = u_0' + (\parity(r_1, \dots, r_{|O|})) \mod 2$, which is denoted as $\parity(r) \oplus u_0'$.
   \begin{align}
       \Pr_{x_O}\left[\MM{p}(S) \oplus 
       \intparity(U) = b\right] =& \Pr_{r\in \{0,1\}^{|O|}} \left[\majmod{p}\bigg(a_0'{+}\sum_{i\leq t} a_i'r_i\bigg) \oplus \parity(r) = b\oplus u_0'\right]\\
        =& \frac{1}{2}\Pr_r\left[\MM{p}\bigg(a_0'{+}\sum_{i\leq t} a_i'r_i \bigg) = b \oplus u_0'\bigg| \parity(r) = 0\right] \\
        &+  \frac{1}{2}\Pr_r\left[\MM{p}\bigg(a_0'+ \sum_{i\leq t} a_i'r_i \bigg) \neq b \oplus u_0'\bigg| \parity(r) = 1\right]
    \end{align}
    Sampling a uniformly random $t$ bit string $r_1r_2\dots r_t$ with even Hamming weight is equivalent to sampling the first $t-1$ bits uniformly at random, and setting the last bit to $r_t =\parity(r_1, \dots, r_{t-1})$. So the equation above is equal to
    \begin{align}
        =&\frac{1}{2} \Pr_{r_1, \dots r_{t-1}}\left[\majmod{p}\bigg(a_0'{+}\sum_{i=1}^{|O|-1} a_i'r_i  + a'_{|O|}\cdot\parity(r_1,\dots ,r_{t-1})\bigg)= b \oplus u_0'\right] \label{eq:MMP_of_sum_eq1} \\
        &+  \frac{1}{2}\Pr_{r_1, \dots r_{t-1}}\left[\majmod{p}\bigg(a_0'+ \sum_{i=1}^{|O|-1} a_i'r_i  + a'_{|O|}\cdot\parity(1,r_1,\dots, r_{t-1})\bigg) \neq b \oplus u_0'\right]. \label{eq:MMP_of_sum_eq2}
    \end{align}
    For any fixed $r_1, ..., r_{t-1}$ at least one of the expressions above must be satisfied unless we have
    \begin{align}
        \majmod{p}\bigg(a_0'+ \sum_{i=1}^{|O|-1} a_i'r_i \bigg) &\neq \majmod{p}\bigg(a_0'+ \sum_{i=1}^{|O|-1} a_i'r_i  + a'_{|O|} \bigg) \\
        \implies\left(a_0' +\sum_{i=1}^{|O| -1} a_i' r_i \right) &\in [p_/2 - a_{|O|}', p/2] \cup [p - 1 - a_{|O|}', p -1].
    \end{align}
    Then we can lower bound the sum in \Cref{eq:MMP_of_sum_eq1,eq:MMP_of_sum_eq2} as
    \begin{align}
        \labelcref{eq:MMP_of_sum_eq1} + \labelcref{eq:MMP_of_sum_eq2}\geq& \frac{1}{2} \left(\Pr\left[ a_0'+ \sum_{i=1}^{|O|-1} a'_ir_i \in [0, p/2 - a'_{|O|})\right] + \Pr\left[ a_0'+ \sum_{i=1}^{|O|-1} a'_ir_i \in (p/2, p-1 - a'_{|O|}]\right] \right)\\
        \geq& \frac{1}{2p}((p+1)/2 - a'_{|O|} + (p-1)/2 - a'_{|O|}) - O(1/n)\\
        =& \frac{1}{2} - \frac{a'_{|O|}}{2p} - O(1/n)\\
        =& \frac{1}{2} - \frac{O(p/\log n)}{2p} - O(1/n) \geq \frac{1}{2} - O(1/\log n).
   \end{align}
   Where we used \Cref{cor:prob_sum_in_set}, and the Lemma's assumption that $ 0< a_i \leq p/\log n$ for each $i\in [t]$ and $p = \Theta(n^\alpha)$. 
\end{proof}

We are now ready to prove the following claims. 
    \begin{claim}\label{claim:f_passes}
    $\Pr[f(U)\in T] \geq 1/2 - O(1/\log n)$
    \end{claim}
    
    \begin{proof}
        We will show that for each $y$, $\Pr_x[f(x, y) \in T] \geq 1/2 - 1/\log n$. Suppose we fix $y$ arbitrarily.
        
        If $y$ fixes at least $N_F$, blocks $g_i(x_i, y)$, then $\Pr_x[f(x,y) \in T_F] =1$. Moreover, if there are $\leq N_0$ blocks $g_i$ such that $g_i(x_i, y) = 0^{|\blockidx{i}|}$ for some $x_i\in \{0,1 \}$, then for each $x$, there will also be $\leq N_0$ blocks with $g_i(x_i, y) = 0^{|\blockidx{i}|}$, so $\Pr_x[f(x,y) \in T_0] = 1$.
        
        Therefore, we assume that there are $< N_F$ blocks $g_i$ that are $y$-fixed, and $>N_0$ blocks with $g_i(x_i, y)=0^{|\blockidx{i}|}$ for some $x\in \{0,1\}^s$. Thus, there are more than $N_0 - N_F = n^{3\alpha}$ blocks $g_i$ such that for some $x_i\in \{0,1\}$, $g_i(x_i, y) = 0^{|\blockidx{i}|}$ and $g_i(1-x_i, y) \neq 0^{|\blockidx{i}|}$. 
        Let $J\subseteq [s]$ denote this subset of blocks, with $|J|\geq n^{3\alpha}$. 
        We arbitrarily fix the $x_i$ for $i\in [s]\setminus J$. Now, the total Hamming weight of the first $n$ bits of $f(x,y)$ (denoted as $|f(x,y)_{1:n}|$) only depends on the $x_i$ for $i\in J$.

        Let $S_i$ denote the Hamming weight of the $i$th block for each $i\in [s]$. Note that for each $i\in J$, $S_i =0$ with probability $1/2$, and $S_i$ is some positive integer modulo $p$, with probability $1/2$, since $|\blockidx{i}| \leq O(\localityf) = O(\epsilon \log n) < p$. Moreover, for each $i\in [s]\setminus J$, $S_i$ is fixed. Therefore,
        \begin{align}
            |f(x,y)_{1:n}| = a + \sum_{j\in J} |g_i(x_i, y)| = a + \sum_{i\in J} S_i
        \end{align}
        for some positive integer $a$ that does not depend on $\{x_i\}_{i\in J}$.
        
        Since the last bit $b:= f(x,y)_{n+1}$ is fixed, the correctness of the output is determined by the $\majmod{p}$  and $\parity$ of $f(x,y)_{1:n}$. 
        We have that $f(x,y)\in T_S \iff \MM{p}(a + \sum_{i\in J} S_i) \oplus \intparity(a + \sum_{i\in J} S_i) \neq b$. Note that we can write $a + \sum_{i\in J} S_i = a + \sum_{i\leq |J|} a_ir_i$ for some uniformly random $r\in\{0,1\}^{|J|}$, and for each $a_i$ a fixed positive integer mod $p$.  Therefore,
        \begin{align}
            \Pr_{x_J}[f(x,y) \in T_S] &= \Pr_{r\in \{0, 1\}^{|J|}}[\majmod{p}(a + \sum_{i=1}^{|J|} a_i r_i) \oplus \intparity(a + \sum_{i=1}^{|J|} a_i r_i) \neq b].
        \end{align}
        
        Furthermore, each $a_i$ is at most $O(\localityf) = O(\epsilon\log n)$ since $|B_j|\leq O(\localityf)$ for each $j\in [s]$.
        By \Cref{lem:MMP_of_sum}, it immediately follows that $\Pr_{x_J}[f(x,y) \in T_S] \geq \frac{1}{2} - O(1/\log n)$. 
       In conclusion, we've showed that after arbitrarily fixing $y$, $\Pr_x[f(x,y) \in T]\geq \frac{1}{2} - O(1/\log n) $. Therefore, $\Pr_{x,y}[f(x,y) \in T] \geq \frac{1}{2} - O(1/\log n)$, as desired.
        
    \end{proof}
    
    \begin{claim}\label{claim:mmp_fail_stat_test}
        $\Pr[(X, \majmod{p}(X) \oplus \parity(X)) \in T] \leq O(1/n)$
    \end{claim}
    \begin{proof}
        This proof follows that of Claim 3.3 in \cite{viola2012complexity}.
        Let $D := (X, \majmod{p}(X) \oplus \parity(X))$. By the union bound $\Pr[D\in T] \leq \Pr[D\in T_0] + \Pr[D\in T_F] + \Pr[D\in T_S]$. Clearly $\Pr[D\in T_S] = 0$, since $T_S$ is the set of invalid strings. Therefore, it is sufficient for us to show that $\Pr[D\in T_F], \Pr[D\in T_0] \leq \frac{1}{2n}$. 
        
        $\Pr[D\in T_F] = |T_F|/2^{n+1} \leq |T_F|/2^n$, so it is sufficient to upper bound $|T_F|$. Recall that $z\in T_F$ if $z= f(x,y)$ for some $x,y$ such that at least $N_F$ blocks are $y$-fixed. Thus each $z\in T_F$ is characterized by $y$, and the bits of $x$ that do not belong to fixed blocks. That is, we need at most $\ell - N_F$ bits to characterize $z$. Since $\ell\leq n + n^\delta$ and $N_F = 2n^{3\alpha}$,
        \begin{align}
            |T_F| &\leq 2^{ n + n^{\delta} - 2n^{3\alpha}}\\
            &\leq 2^{n - n^{3\alpha}}
        \end{align}
        since $\delta < 3\alpha$. So
        \begin{align}
            \Pr[D\in T_F] \leq 2^{-n^{3\alpha}} \leq \frac{1}{2n}.
        \end{align}
        All that's left is to bound $\Pr[D\in T_0]$, the probability that at most $N_0 = 3n^{3\alpha}$ blocks $i$ are all zero, $D_{\blockidx{i}} = 0^{|\blockidx{i}|}$. Since the first $n$ bits of $D$ are independently random, the probability that the block $D_{\blockidx{i}}$ is all zero is independent of other blocks $D_{\blockidx{j}}$ for $i\neq j \in [s]$. The probability that block $i\in [s]$ is all zero is
        \begin{align}
            \Pr[D_{\blockidx{i}} = 0^{|\blockidx{i}|}] = (1/2)^{|\blockidx{i}|}\geq (1/2)^{O(\localityf)}= (1/2)^{O(\epsilon \log n)} = \left(\frac{1}{n}\right)^{O(\epsilon)}.
        \end{align}
        Now, the probability that at most $N_0 = 3n^{3\alpha}$ are all zero is 
        \begin{align}
            \Pr[D \in T_0] &= \Pr[\bigcup_{\substack{T\subseteq [s]:\\|T|=N_0}}  \{ D_{\blockidx{i}} \neq 0^{|\blockidx{i}|} \text{ for each } i\in [s]\setminus T \} ]\\
            &\leq \binom{s}{N_0} \left(1 - \frac{1}{n^{O(\epsilon)}}\right)^{s-N_0}\\
            &\leq \binom{s}{N_0} e^{-\frac{s-N_0}{n^{O(\epsilon)}}}.
        \end{align}
        Since $s\geq \Omega(N/\localityf^2) = \Omega(\frac{n}{\epsilon^2 \log^2 n})$, $s\leq n$ and $N_0 = 3n^{3\alpha}$,
        \begin{align}
            &\leq \binom{n}{3n^{3\alpha}} e^{-n^{-O(\epsilon)}(\frac{n}{\epsilon^2 \log^2 n}-3n^{3\alpha})}\\
            &\leq \left(\frac{n}{3n^{3\alpha}}\right)^{3n^{3\alpha}} e^{-n^{1-O(\epsilon)}/\log^2 n} e^{3n^{3\alpha}}\\
            &\leq n^{3n^{3\alpha}}e^{-n^{1-O(\epsilon)}/\log^2 n}\\
            &\leq \frac{1}{2n}
        \end{align}
        for sufficiently large $n$ and small $\epsilon$. In conclusion, $\Pr[D\in T]\leq \frac{1}{n}$, as desired.
    \end{proof}
    Using \Cref{claim:f_passes,claim:mmp_fail_stat_test}, we can lower bound the total variation distance between the target distribution $D = (X, \majmod{p}(X) \oplus \parity(X))$ and $f(U)$.
    \begin{align}
        \TVD(D, f(U)) &\geq \left|\Pr[f(U) \in T] - \Pr[D \in T]\right|\\
        &\geq \frac{1}{2} - O(1/\log n),
    \end{align}
    completing the proof of \Cref{thm:classical_LB_with_GHZ}. 
\end{proof}

\section{Classical Hardness of Sampling \texorpdfstring{$(Z,\pmmajmod{p}(Z))$}{(Z, pmmajmod(Z))}} 

\label{sec:classical_hardness_main}
This section concerns the hardness of classically sampling from the distribution $(Z, \pmmajmod{p}(Z))$, where $Z$ is a random variable $Z\sim \unif(\{0,1\}^N)$ and the function $\pmmajmod{p}$ is defined in \Cref{defn:pmmajmod}, and recalled below.

\paragraph*{$\pmmajmod{p}$}
The input to $\pmmajmod{p}$ is an $N = 2n-2$ bit string, $(x_1, x_2, \dots x_{n-1}, d_1, d_2, \dots, d_{n-1})$. Each $x_i$ corresponds to the vertex $v_i$ of the balanced binary tree $\bintree[n]$, and each $d_i$ corresponds to the edge $e_i$ of $\bintree[n]$ per the ordering in \Cref{defn:binary_tree}. 
\begin{align}
    \pmmajmod{p}(x, d) = \MM{p}\left(\sum_{i=1}^{n-1} x_i (-1)^{\pathsum{d}_i}\right) \oplus \parity(x) && x, d\in \{0,1\}^{n-1}. 
\end{align}
Where $\MM{p}$ is defined in \Cref{defn:btpmGHZ} and $\pathsum{d}$ is defined in \Cref{defn:binary_tree}.

In \Cref{sec:classical_hardness_with_GHZ} we proved the classical hardness of sampling from the slightly different distribution $(X, \majmod{p}(X) \oplus \parity(X))$ where $X\sim \unif(\{0,1\}^n)$. For the sake of comparing with $\pmmajmod{p}$ we list this function below.
\paragraph*{$\majmod{p}\oplus \parity$}
\begin{align}
    \majmod{p}(x) \oplus \parity(x) = \MM{p}\left(\sum_{i =1}^n x_i \right) \oplus \parity(x) && x\in \{0,1\}^n
\end{align}
Both of these distributions have the form $(Y, \MM{p}(S_Y) \oplus \parity(Y))$ for a uniformly random bitstring $Y$, and $S_Y$ a sum that depends on $Y$. For the $\majmod{p}(S_x)\oplus \parity(x)$ function, the relevant sum is simply the Hamming weight of the input $x\in \{0,1\}^n$, denoted as $|x|$. A nice property of the Hamming weight, $|x| = \sum_{i} x_i$ is that each of the terms in the sum depends on a different bit of the input, and thus if many of the bits of $x_i$ are sampled independently, then so are their corresponding terms in the sum. The key challenge in applying the framework from the proof of \Cref{thm:classical_LB_with_GHZ} is that the terms in $S = \sum_i x_i (-1)^{\pathsum{d}_i}$ no longer depend on disjoint variables. In particular, flipping the bit $d_j$ corresponding to edge $e_j$ flips the sign of all terms $x_i (-1)^{\pathsum{d}_i}$ for $v_i$ downstream from $e_j$ in the balanced binary tree $\bintree[n]$. To accommodate for this dependence, we will partition the tree $\bintree[n]$ into subtrees, then identify subtrees corresponding to output variables which are independent when a large chunk of the input variables are fixed. 

We show that for some choice of $p$, any function $f$ which takes as input a uniformly random $(N + N^\delta)$-bit string and is $(1/2 - \omega(1/\log n))$-close in total variation distance with $(Z, \pmmajmod{p}(Z))$, must have locality $\localityf\geq \Omega(\log^{1/2} N)$. If we consider $f$ as a classical circuit with fan-in 2, this corresponds to a circuit depth lower bound of $\Omega(\log \log N)$. 

\begin{thm} \label{thm:classical_LB_tree}
    For each $\delta < 1$, there exists an $\epsilon >0$ such that for all sufficiently large even integer $N$ and prime number $p = \Theta(N^\alpha)$ for $\alpha \in (\delta/3, 1/3)$: Let $f:\{0,1\}^\ell \to \{0,1\}^{N+1}$ be an $(\epsilon \log N)^{1/2}$-local function, with $\ell \leq N + N^\delta$. Then $\TVD(f(U), (Z, \pmmajmod{p}(Z))) \geq 1/2 - O(1/\log N)$.
\end{thm}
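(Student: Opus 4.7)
The plan is to adapt the argument of \Cref{thm:classical_LB_with_GHZ} to the tree-structured sum $S = \sum_{i=1}^{n-1} x_i (-1)^{\pathsum{d}_i}$ that appears inside $\pmmajmod{p}$. I would first apply the Viola partitioning lemma used in the proof of \Cref{thm:classical_LB_with_GHZ} to split the inputs as $u = (x,y)$ and, up to a permutation of the output bits, write
\[
f(x,y) = g_1(x_1,y) \circ g_2(x_2,y) \circ \cdots \circ g_s(x_s,y) \circ h(y),
\]
where each block $g_i$ outputs $|B_i| = O(d)$ bits with $d = (\epsilon \log N)^{1/2}$, the number of blocks is $s = \Omega(N/d^2) = \Omega(N/(\epsilon \log N))$, and I may assume the final correctness bit is contained in $h(y)$. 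The target, as in the proof of \Cref{claim:f_passes}, is then to show that for every fixing of $y$, the distribution of $S$ has enough uniform randomness so that, combined with $\parity(x)$, \Cref{lem:MMP_of_sum} forces the output bit to be wrong with probability at least $1/2 - O(1/\log N)$.

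The substantive new issue is that, unlike the Hamming-weight sum $\sum_i x_i$ appearing in $\majmod{p} \oplus \parity$, the sum $S$ depends both on the vertex outputs (contributing terms $\pm x_i$) and on the edge outputs (flipping the signs of many such terms through $\pathsum{d}$). A single block $g_i$ can therefore both contribute to $S$ and alter the signs attached to other blocks' contributions, so the per-block independence one obtains after fixing $y$ does not directly translate into independence of the terms in $S$. To decouple these effects, I would partition the balanced binary tree $\bintree[n]$ into roughly $\sqrt{n}$ primary subtrees rooted at depth $\tfrac{1}{2}\log_2 n$, each containing about $\sqrt{n}$ vertices, with a small ``trunk'' consisting of the edges above this depth. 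For any vertex $v_i$ inside a primary subtree $T_k$ one has $\pathsum{d}_i = t_k \oplus q_i^{(k)}$, where $t_k$ depends only on the trunk edges and $q_i^{(k)}$ depends only on edges inside $T_k$; this makes the sign of $v_i$'s contribution to $S$ controllable once the trunk edges and (if needed) the internal edges of $T_k$ are fixed.

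The core combinatorial lemma I would aim to prove is that, after fixing $y$ and a small additional set of input bits (absorbed into $h$), there remain at least $\Omega(N^{3\alpha})$ \emph{clean} blocks: each depends on a single fresh input bit $x_i$, flips at least one vertex output $x_j$ lying inside a primary subtree, touches no trunk edge, and has vertex footprint disjoint from the other clean blocks, so that its net contribution to $S$ is $c_i x_i$ for a nonzero integer $c_i$ with $|c_i| = O(d) = O(\sqrt{\log N})$. The locality bound $d = (\epsilon \log N)^{1/2}$ --- rather than the full $\epsilon \log N$ of \Cref{thm:classical_LB_with_GHZ} --- is what makes this counting feasible: a block touches only $O(\sqrt{\log N})$ output positions, which is small compared with both the subtree size $\sqrt{n}$ and the trunk size $\sqrt{n}$, so a pigeonhole argument over the $\sqrt{n}$ primary subtrees together with a modest number of input fixings lets us discard blocks whose footprints straddle multiple subtrees or intersect the trunk in more than a few edges. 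Given clean blocks, the statistical test $T = T_0 \cup T_F \cup T_S$ from the proof of \Cref{thm:classical_LB_with_GHZ} can be used verbatim with $N_0 = 3n^{3\alpha}$ and $N_F = 2n^{3\alpha}$; the inequality $\Pr[f(U) \in T] \geq 1/2 - O(1/\log N)$ follows from \Cref{lem:MMP_of_sum} applied to the clean-block sum (noting $|c_i| = O(\sqrt{\log N}) \ll p/\log N$), and $\Pr[(Z,\pmmajmod{p}(Z)) \in T] \leq O(1/N)$ follows from the same counting as in \Cref{claim:mmp_fail_stat_test}, using $\ell \leq N + N^\delta$ and $\delta < 3\alpha$.

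The main obstacle will be the clean-block lemma: showing that after $O(N^\delta)$ additional fixings we can simultaneously (i) freeze the trunk edges and hence the subtree-level signs $(-1)^{t_k}$, (ii) eliminate blocks whose footprints span multiple subtrees or include too many trunk edges, and (iii) still retain $\Omega(N^{3\alpha})$ blocks with pairwise-disjoint vertex footprints. This is precisely the ``careful counting related to the binary tree layout'' mentioned in the technical overview, and the rest of the proof is a faithful adaptation of the $\majmod{p}\oplus\parity$ argument.
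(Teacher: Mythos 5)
Your skeleton (block decomposition, statistical test, \Cref{lem:MMP_of_sum}) matches the paper's, but the load-bearing step --- your ``clean-block lemma'' built on $\sqrt{n}$-sized primary subtrees plus a trunk --- has genuine gaps, and it is exactly the step the paper resolves differently. First, your decoupling conditions are too weak: requiring clean blocks to have disjoint \emph{vertex} footprints and to avoid trunk edges does not prevent one block's \emph{edge} outputs from lying on the within-subtree path from the subtree root to another unfixed block's vertices (or to its own vertices through edges owned by other blocks or by $h(y)$). In that case the sign $(-1)^{\pathsum{d}_j}$ attached to a vertex in one block is flipped by a different block's input, so a clean block's contribution to $S$ is not $c_i x_i$ as a function of its own bit alone, and the per-block independence needed to invoke \Cref{lem:MMP_of_sum} fails. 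Your proposed fix of ``fixing the internal edges of $T_k$'' means fixing output bits, hence fixing the inputs that determine them; with subtrees of depth $\tfrac12\log_2 n$ each clean vertex has $\Theta(\log n)$ such path edges, and nothing in the proposal accounts for how these fixings interact with keeping the clean blocks themselves random. If instead you repair this by demanding that clean blocks occupy distinct primary subtrees, you are capped at $\sqrt{n}$ blocks, which is not enough: the test thresholds are $\Theta(N^{3\alpha})$ with $3\alpha$ allowed to approach $1$ (since $\alpha>\delta/3$ and $\delta$ may be close to $1$). The paper avoids all of this by cutting the tree the other way: small trees of size $O(d)$ at the bottom (so there are $n/d$ of them), a top tree $T_0$ \emph{all} of whose inputs are fixed into $y$, and blocks that are unions of whole small trees via $\treeneighb$ (\Cref{lem:partition_inputs_tree}), which guarantees every edge that can flip a block's signs is either in $y$ or in the block itself.

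Second, you cannot use the statistical test of \Cref{thm:classical_LB_with_GHZ} ``verbatim.'' The set $T_0$ (few all-zero blocks) certifies that many blocks change their \emph{Hamming weight}, which is what $\majmod{p}\oplus\parity$ needs, but for $\pmmajmod{p}$ a block changing Hamming weight need not change the signed sum $S$, and vice versa; so the hypothesis of \Cref{lem:MMP_of_sum} (nonzero coefficients in the $S$-sum) is not delivered by $T_0$ alone. This is precisely why the paper adds the extra test set $T_M$ of strings with few \emph{minimal} blocks, proves the unique-minimizer property (\Cref{claim:unique_min_forest}) so that $\Pr[(Z,\pmmajmod{p}(Z))\in T_M]$ can be bounded, and then in \Cref{claim:pmmajmod_statistical_test_pass} writes $S$ and the Hamming weight as linear forms over two index sets $J$ and $K$ (\Cref{claim:S_hw_as_lincombs}) with a two-case analysis on $|J\cap K|$. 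Your proposal has no counterpart to $T_M$ and no argument for the case where the blocks affecting $S$ and those affecting parity are essentially disjoint, so even granting a clean-block lemma the final probability bound $\Pr[f(U)\in T]\ge 1/2-O(1/\log N)$ does not follow as written. (A smaller point: the reason the paper needs locality $(\epsilon\log N)^{1/2}$ is that its blocks have size $O(d^2)$ and the true-distribution bounds need $2^{-O(d^2)}\ge N^{-O(\epsilon)}$, not the pigeonhole-versus-subtree-size consideration you cite.)
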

\begin{proof}
    The function $f$ takes input an $\ell$-bit string we label as $(u_1, u_2, \dots, u_\ell)$ and outputs $(N+1)$-bit output string we label as $(z_1, \dots, z_{N}, b)$. Let $n$ be the integer such that $N = 2n-1$. Just as in the definition of $\pmmajmod{p}$ in \Cref{defn:pmmajmod}, we consider the partition of $z = (x, d) \in \{0, 1\}^{n-1} \times \{0,1\}^{n-1}$, where $x_1, \dots, x_{n-1}$ are the first $n-1$ bits of $z$, and $d_1, \dots d_{n-1}$ are the next $n-1$ bits of $z$, and $b\in \{0, 1\}$ is the last bit which is considered  ``correct'' if $b = \pmmajmod{p}(z)$. 

    The output variables $x_1, \dots, x_{n-1}$ are associated with $v_1, \dots, v_{n-1}$, the non-root vertices of the balanced binary tree $\bintree[n]$. The output variables $d_1, \dots, d_{n-1}$ are associated with the edges $e_1, \dots, e_{n-1}$, where we use the ordering as defined in \Cref{defn:binary_tree}. As is standard in graph theory, for any graph $G$ we use $V(G)$ and $E(G)$ to denote $G$'s vertices and edges respectively. To understand the correlations between each of the output bits $z_i$, it is useful to partition $\bintree[n]$ as follows. 
    
    \begin{defn}[$\bintree$ partition $(T_0, T_1, \dots, T_k)$]
        Let $D:= \log(2\localityf)$, we partition the vertices of the balanced binary tree $\bintree[n]$ into the bottom $D$ layers and the top $\log n - D$ layers as shown in  \Cref{fig:tree_partition}. Let the \emph{top tree} $T_0$ be the tree induced by the top $\log(n) - \log(2\localityf)$ layers of vertices in $\bintree[n]$. The subgraph induced by the bottom $D$ layers is a forest of trees which we denote as $\allsmalltrees = \{T_1, T_2, \dots, T_k\}$ and refer to as the \emph{small trees}. In order to make sure that each edge and vertex of $\bintree[n]$ is accounted for in $\{T_0\} \cup \allsmalltrees$, for each $i\in [k]$ we consider the edge which connects the root of $T_i$ to a leaf of $T_0$ as in the small tree $T_i$. Thus, each small tree $T\in \allsmalltrees$ has an edge with the root of $T$ as its only endpoint as shown in \Cref{fig:tree_partition}.  
    \end{defn}

    \newcommand{\drawtree}[1][]{
        \node [draw, isosceles triangle, isosceles triangle apex angle=50, shape border rotate=90, inner sep = -1.1cm, anchor=north, rounded corners =1.9cm, fill=none] {
            \begin{tikzpicture}[every node/.style={circle,draw, fill, inner sep=0.7mm},level 1/.style={sibling distance=3cm,level distance=2cm},level 2/.style={sibling distance=1.5cm,level distance=1.2cm},level 3/.style={sibling distance=0.8cm,level distance=1.2cm}, level 4/.style={sibling distance=0.5cm,level distance=0cm}, level 5/.style={level distance=0cm}, level 6/.style={level distance=2.3cm}]
            \node (main) {} % root
                    child {
                        node {} % level 1, node 1
                        child {
                            node {}   % level 2, node 1
                            child {
                                node {} % level 3 node 1
                                child {
                                    node[fill=none, draw=none] {} edge from parent[dotted]
                                }
                                child {
                                    node[fill=none, draw=none] {} edge from parent[dotted]
                                    child {
                                        node[fill=none, draw=none] {$\vdots$} edge from parent[draw=none]
                                        child {
                                        node[fill=none, draw=none] {} edge from parent[draw=none] % padding, change level 6 level distance to increase
                                    }
                                    }
                                }
                            }
                            child {
                                node {} % level 3, node 2
                                child {
                                    node[fill=none, draw=none] {} edge from parent[dotted]
                                }
                                child {
                                    node[fill=none, draw=none] {} edge from parent[dotted]
                                    % child {
                                    %     node[fill=none, draw=none] {$\vdots$} edge from parent[draw=none]
                                    % }
                                }
                            }
                        }
                        child {
                            node {}  % level 2, node 2
                            child {
                                node {} % level 3, node 3
                                child {
                                    node[fill=none, draw=none] {} edge from parent[dotted]
                                }
                                child {
                                    node[fill=none, draw=none] {} edge from parent[dotted]
                                    child {
                                        node[fill=none, draw=none] {$\vdots$} edge from parent[draw=none]
                                    }
                                }
                            }
                            child {
                                node {} % level 3, node 4
                                child {
                                    node[fill=none, draw=none] {} edge from parent[dotted]
                                }
                                child {
                                    node[fill=none, draw=none] {} edge from parent[dotted]
                                    % child {
                                    %     node[fill=none, draw=none] {$\vdots$} edge from parent[draw=none]
                                    % }
                                }
                            }
                        }
                    }
                    child {
                        node {} % level 1, node 2
                        child {
                            node {}  % level 2, node 3
                            child {
                                node {} % level 3, node 5
                                child {
                                    node[fill=none, draw=none] {} edge from parent[dotted]
                                }
                                child {
                                    node[fill=none, draw=none] {} edge from parent[dotted]
                                    child {
                                        node[fill=none, draw=none] {$\vdots$} edge from parent[draw=none]
                                    }
                                }
                            }
                            child {
                                node {} % level 3, node 6
                                child {
                                    node[fill=none, draw=none] {} edge from parent[dotted]
                                }
                                child {
                                    node[fill=none, draw=none] {} edge from parent[dotted]
                                    % child {
                                    %     node[fill=none, draw=none] {$\vdots$} edge from parent[draw=none]
                                    % }
                                }
                            }
                            }
                        child {
                            node {}  % level 2, node 4
                            child {
                                node {} % level 3, node 7
                                child {
                                    node[fill=none, draw=none] {} edge from parent[dotted]
                                }
                                child {
                                    node[fill=none, draw=none] {} edge from parent[dotted]
                                    child {
                                        node[fill=none, draw=none] {$\vdots$} edge from parent[draw=none]
                                    }
                                }
                            }
                            child {
                                node {} % level 3, node 8
                                child {
                                    node[fill=none, draw=none] {} edge from parent[dotted]
                                }
                                child {
                                    node[fill=none, draw=none] {} edge from parent[dotted]
                                    % child {
                                    %     node[fill=none, draw=none] {$\vdots$} edge from parent[draw=none]
                                    % }
                                }
                            }
                        }
                    };
                \ifthenelse{\equal{#1}{small}}{
                    \draw (main) -- +(0,2);
                }{}
                \end{tikzpicture}
                };
    }
    \begin{figure}[hbtp]
        \begin{center}
            \begin{tikzpicture}
                \node[scale=0.7]{
                \begin{tikzpicture}[fill=none, draw=none]
                    \node[label={[yshift=-.6cm]\Large{$T_0$}}] (T0) {
                        \begin{tikzpicture}
                            \drawtree
                        \end{tikzpicture}
                    };
                    % \node (topdepth) at (-6.5 + 4.7*3.2, -1){\Large $\log n - D$};
                    % \draw [-|] (topdepth) -- +(0, 4.5);
                    % \draw [-|] (topdepth) -- +(0, -4.5);

                    \foreach \n/\treelabel in {0/1, 1/2, 2/3, 4.4/k} {
                        \node[scale=0.2, label={\Large $T_\treelabel$}] (T\treelabel) at (-6.5 + 3*\n, -7) {
                            \begin{tikzpicture}
                                \drawtree[small]
                            \end{tikzpicture}
                        };
                    }
                    \node at (-6.5 + 3*3.2, -7) {\ldots};
                    % \node (D) at (-6.5 + 4.7*3.2, -8.5){\Large $D$};
                    \draw [|-|] ([yshift=-1.8cm, xshift=0.2cm]T0.north east) -- ([yshift=.5cm, xshift=0.2cm]T0.south east) node [midway,xshift=0.6cm,right] {\Large $\log n - D$};
                    \draw [|-|] ([yshift=-.5cm, xshift=0.4cm]Tk.north east) -- ([yshift=.2cm, xshift=0.4cm]Tk.south east) node [midway,xshift=0.6cm,right] {\Large $D$};
                    % \draw [-|] (D) -- +(0, 1);
                    % \draw [-|] (D) -- +(0, -1);
                    \draw[decorate,decoration={brace,amplitude=10pt,mirror}]
                    ([yshift=.8cm, xshift=-0.2cm]T1.north west) -- ([yshift=-.5cm, xshift=-0.2cm]T1.south west) node [midway,xshift=-0.4cm,left] {\Large $\allsmalltrees$};
                \end{tikzpicture}
                };
        \end{tikzpicture}
        \end{center}
        \caption{Partition of the balanced binary tree $\bintree[n]$ into $k+1$ subtrees. The top tree $T_0$ consists of the subtree induced by the first $\log n - D$ layers of $\bintree$. The $k$ bottom trees $\allsmalltrees = \{T_1, T_2, \dots, T_k\}$ include all vertices in the bottom $D$ layers of $\bintree$ and all incident edges. Note that for each $i\in [h]$, $T_i$ contains a single edge that only has one endpoint, this edge corresponds to the edge in $\bintree$ that connects the root of $T_i$ with its parent in $T_0$.}
        \label{fig:tree_partition}
    \end{figure}
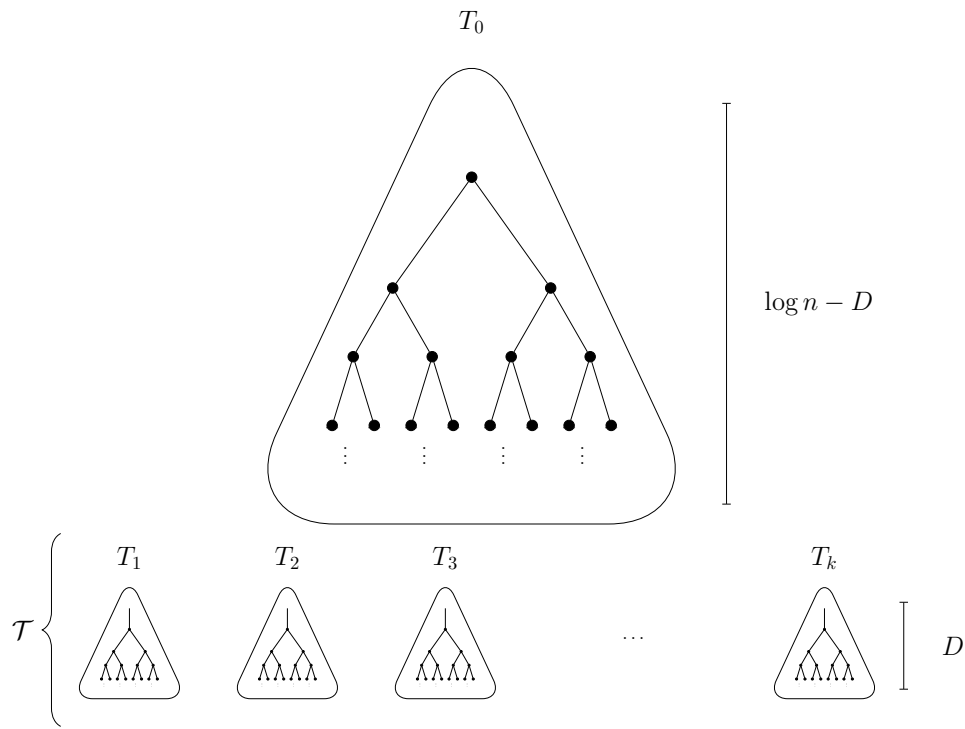

    Although a subtree $T$ of $\bintree[n]$ consists of vertices and edges labeled as $\{v_i\}_i$ and $\{e_i\}_i$, we slightly abuse notation and say that the output variable $z_i$ is ``in'' $T$ (denoted $z_i \in T$) if the edge or vertex which is associated with $z_i$ is in $E(T) \cup V(T)$. And will sometimes use $T$ to denote the subset of variables $\{z_i\}$ which are associated with the tree $T$. Moreover, we define the \emph{size} of a subtree $T$ of $\bintree[n]$ be $|T| = |V(T)| + |E(T)|$. Note that since each $T\in \allsmalltrees$ has an extra edge at the root, with no other endpoint, $|E(T)| = |V(T)| \leq 2\localityf$. 

    The top tree $T_0$ has $|V(T_0)| = 2^{\log n - D} - 1 = \frac{n}{2\localityf} - 1$ vertices, and $|E(T_0)| = |V(T_0)| -1 = \frac{n}{2\localityf} -2$ edges. For each $i\in [k]$ the small tree $T_i$ has at most $2^D -1 = 2\localityf - 1$ vertices $V(T_i)$, and the same number of edges $|E(T_i)| = |V(T_i)| = 2\localityf - 1$. In total, the top tree has size $|T_0| \leq n/\localityf - 3$ and each bottom tree $T_i\in \allsmalltrees$ has size at most $|T_i| \leq 4\localityf$. Since the root vertex of each small tree is at the $(\log n - D + 1)$-level of the balanced binary tree $\bintree[n]$, there are $k = 2^{\log n - D} = n/\localityf$ small trees.

    For each output variable $z_i$ in the string $z$, we consider the other output variables which are in the same tree as $z_i$ as the \emph{tree neighborhood} of $z_i$. 
    \begin{defn}[Tree Neighbors, $\treeneighb$] \label{defn:tree_neighbours}
        For each variable $z_i$ for $i\in[N]$, let $\treeneighb(z_i) \subseteq \{z_i\}_{i\in [N]}$, be the subset of outputs in the same tree $T\in \allsmalltrees \cup \{T_0\}$ as $z_i$. Moreover, for any subset of outputs $S \subseteq \{z_i\}_{i\in[N]}$, let $\treeneighb(S) := \bigcup_{z_i \in S} \treeneighb(z_i)$.
    \end{defn}
    Recall that the variables $\{z_i\}_{i\in[N]}$ only correspond to the \emph{non-root} vertices of $\bintree[n]$, but the root vertex $v_0$ is in the top tree $T_0$. Thus for vertices $v_j, v_k$ corresponding to the left and right children of root $v_0$, we have that $z_j \in \treeneighb(z_k)$, despite there being no variable in $\treeneighb(z_k)$ associated with the root.
    Note that for any output in a small tree $z_i\in \bigcup_{T \in \allsmalltrees} T$, $\treeneighb(z_i)$ has size at most $2\localityf$ since $|T| \leq 2\localityf$ for each $T\in \allsmalltrees$. Moreover, for any subset of small tree outputs $S\subseteq \bigcup_{T \in \allsmalltrees} T$, $|\treeneighb(S)| \leq 2\localityf|S|$.

    \begin{defn}[Forest Partition] \label{defn:forest_partition}
        $F_0, F_1, \dots, F_s \subseteq \{z_i\}_{i\in [N]}$ is a \emph{forest partition} if both of the following hold.
        \begin{enumerate}
            \item $F_0, \dots, F_s$ is a partition of all variables $\{z_i\}_{i\in [N]}$
            \begin{align}
                F_0 \uplus \dots \uplus F_s = \{z_i\}_{i\in [N]}
            \end{align}
            \item Each $F_i$ contains a union over a subset of trees from $\allsmalltrees \cup \{T_0\}$.
            \begin{align}
                \treeneighb(F_i) = F_i && \text{for each } i \in [s]
            \end{align}
        \end{enumerate}
    \end{defn}

    The next lemma shows that we can construct a forest partition with the property that, after a large fraction of the input bits to our $(\epsilon \log N)^{1/2}$ local function have been fixed, each of the remaining unfixed bits controls a single (independent) subset of trees in the partition.

    \begin{lem}\label{lem:partition_inputs_tree}
        There exists a forest partition $\forest{0}, \forest{1}, \dots, \forest{s}$
        for some $s \geq \Omega(N/\localityf^3)$, with $|\forest{i}| \leq O(\localityf^2)$ for each $i\in [s]$; and a partition of the input $u\in \{0,1\}^\ell$ into $u= (w, y)$, with $w\in \{0,1\}^s$ such that 
        \begin{align}
            f(w, y) \big\vert_{\forest{0}} &= h(y),\\
            f(w,y)\big\vert_{\{N+1\}} &= b(y),\\ 
            f(w,y)\big\vert_{\forest{i}} &= g_i(w_i, y) && \text{for each } i\geq 1,\\
            \text{and} \quad T_0 &\subseteq F_0.
        \end{align}
        For some functions $h:\{0,1\}^{\ell - s} \to \{0,1\}^{|\forest{0}|}, b:\{0,1\}^{\ell - s} \to \{0,1\}$, and $g_i : \{0,1\} \times \{0,1\}^{\ell - s} \to \{0,1\}^{|\forest{i}|}$ for each $i\in [s]$.
    \end{lem}
        We refer to $g_i(w_i, y)$ as the $i$th \emph{block} of the output, assigning values to the variables in $\forest{i}$, for $i\in [s]$. Note that if we fix the input $y$, each block $g_i(w_i, y)$ is a function only of the input bit $w_i$. Since the input $w\in \{0,1\}^s$ is uniformly random, the value of each of the blocks is independent conditioned on $y$. 
    \begin{proof}[Proof of \Cref{lem:partition_inputs_tree}]
        Consider the bipartite graph with the $\ell$ input variables to $f$ as the left vertices, and the $N+1$ output variables as the right vertices, where each input $j\in [\ell]$ and output $i\in [N+1]$ vertex share an edge iff the $i$th output bit of $f$, denoted as $f_i$ is a function of the $j$th input bit. We refer to this graph as the \emph{input-output dependency graph} of $f$. For each vertex $v$ in the dependency graph, let the \emph{neighborhood} of $v$, $\fneighb(v)$, be the set of vertices adjacent to $v$. Similarly, for any subset $S$ of vertices, let $\fneighb(S) := \bigcup_{v\in S}\fneighb(v)$. Since by assumption, $f$ is $\localityf$-local, the degree of the output vertices is at most $\localityf$.

        Let $L$ be the set of input vertices of the dependency graph for $f$ which are adjacent to the output vertices in $T_0$ or $b$, that is $L:= \fneighb(T_0 \cup \{b\})$ (or we could associate $b$ with the root $v_0$ in $T_0$). If we fix the inputs in $L$, then $b$, and the outputs in $T_0$ are also fixed. For this reason we refer to $L$ as the \emph{fixed} inputs, and the remaining inputs $U = \{u_i\}_{i\in [\ell]} \setminus L$ as the \emph{unfixed} inputs. 
        \begin{align}
            |L| \leq \localityf( |T_0|) \leq \localityf \left(|V(T_0)| + |E(T_0)|\right) \leq n - 3\localityf.
        \end{align}
        Therefore, there are at least $N - |L| \geq 2n -1 - (n - 2\localityf) \geq  n$ unfixed inputs $U$. Since $|V(T_0)| = \frac{n}{2\localityf} -1$, and $|E(T_0)| = |V(T_0)| - 1$.

        As mentioned above, the locality of $f$ implies that the degree of the output vertices in the dependency graph is at most $\localityf$. Using the following claim, we can also bound the degree of half of the input vertices in $U$.
        \begin{claim}
            There is a subset of inputs $\tilde{U} \subseteq U$ with size $|\tilde{U}| \geq |U|/2 \geq n/4$ such that the degree of the vertices in $\tilde{U}$ in the dependency graph of $f$ is at most $O(\localityf)$.
        \end{claim} 
        \begin{proof}
            Since there are at most $N \leq 2n$ output vertices, each of degree at most $\localityf$, there are at most $2n\localityf$ edges in the input/output dependency graph. Therefore, at least half of the vertices in $U$ have degree at most $4\localityf$ since otherwise there would be $|U|/2$ vertices with degree greater than $4\localityf$, and the total number of edges would be strictly greater than $\frac{|U|}{2} \cdot 4\localityf \geq \frac{n}{2} \cdot 4\localityf = 2\localityf n$ edges.
        \end{proof}
        
        Within these bounded degree input vertices $\tilde{U}$, we next find a subset $W$ such that each pair of vertices in $W$ are adjacent to disjoint trees.
        \begin{claim}\label{claim:size_of_independent_sources_tree}
            There exists a subset of inputs $W\subseteq \tilde{U}$ of size $|W| \geq \Omega(N/\localityf ^3)$ such that for each pair $u_i\neq u_j \in W$, the neighborhoods $\fneighb(u_i), \fneighb(u_j)$ intersect with disjoint trees. That is, for each $u_i\neq u_j\in W$, $\treeneighb(\fneighb(u_i)) \cap \treeneighb(\fneighb(u_j))  = \emptyset$.
        \end{claim}
        \begin{proof}
            We greedily build $W$ as follows: Initialize the set $V$ as the inputs $\tilde{U}$. While $V$ is non-empty, choose any $v\in V$, add it to $W$ and remove $\fneighb(\treeneighb(\fneighb(v)))$ from $V$. 
            
            Note that the size of $V$ decreases by at most $O(\localityf^3)$ in each iteration since for any subset of outputs $S$, $|\fneighb(S)| \leq \localityf |S|$, and $|\treeneighb(S)| \leq 2\localityf |S|$, and for any subset of inputs $S_{in}$, $|\fneighb(S_{in})| \leq O(\localityf)$.  Therefore, $|W| = |\tilde{U}|/O(\localityf^3) \geq  \Omega(n/\localityf^3) = \Omega(N/\localityf^3)$.
        \end{proof}
        
        We set $w$ as the input bits of $u$ which are indexed by $W$ from \Cref{claim:size_of_independent_sources_tree}, and let $y$ be the remaining bits of $u$. For each $i\in [s]$, let $\forest{i} = \treeneighb(\fneighb(w_i))$ and let $\forest{0}$ be the remaining $\{z_i\}$ variables:  $\forest{0} = \{z_i\}_{i\in [n]}\setminus (\bigcup_{i\in [s]} \forest{i})$. 
        
        We first show that $\forest{0}, \dots, \forest{s}$ is a \emph{forest partition} as defined in \Cref{defn:forest_partition}. By the definition of $\forest{0}$ it is clear that $\bigcup_{i=1}^s \forest{i} = \{z_i\}_{i\in [N]}$. Furthermore, these forests are disjoint since for each $i\neq j \in [s]$, $\forest{i} \cap \forest{j} = \treeneighb(\fneighb(w_i)) \cap \treeneighb(\fneighb(w_j)) = \emptyset$ by \Cref{claim:size_of_independent_sources_tree}, and since $\forest{0} \cap (\bigcup_{i\in [s]} \forest{i}) = \emptyset$ by definition. All that's left to show that this is a forest partition is that $\treeneighb(\forest{i}) = \forest{i}$ for each $i\in \{0, \dots, s\}$. This is clearly true for each $i\in [s]$ since $\treeneighb(\forest{i}) = \treeneighb(\treeneighb(\fneighb(w_i))) = \treeneighb(\fneighb(w_i)) = \forest{i}$. To show that $\treeneighb(\forest{0}) = \forest{0}$, suppose for the sake of contradiction that this is not the case, that there exists some $a \in \treeneighb(\forest{0})\setminus \forest{0}$. Since $\bigcup_{j=0}^s \forest{j} = \{z_i\}_{i\in [N]}$, $a$ is in some other forest $\forest{j}$ with $j\neq 0$. But this implies that $\treeneighb(\forest{j}) \cap \forest{0} \neq \emptyset$, and so $\forest{j} \cap \forest{0} \neq \emptyset$, a contradiction. Therefore, $\forest{0}, \forest{1}, \dots, \forest{s}$ is a forest partition as defined in \Cref{defn:forest_partition}.

        Next, we show that for each $i \in [s]$, $f(w, y) \big|_{\forest{i}}$ is a function of only $w_i$ and $y$. This is because for each $j\in[s]$, such that $j\neq i$, we have $\fneighb(w_j) \cap \forest{i} \subseteq \forest{j} \cap \forest{i} = \emptyset$. Similarly, the outputs $\forest{0}$ do not depend on any bits of $w$ since for each $i\in [s]$, $\fneighb(w_i) \cap \forest{0} \subseteq \forest{i} \cap \forest{0} = \emptyset$. 

        Since we initialized our set of fixed variables $L$ with $\fneighb(T_0 \cup \{b\})$, and we chose $W$ such that $W\cap L = \emptyset$, it follows that both $b$ and the outputs in $T_0$ can be written as functions of $y$. Furthermore, this implies that $T_0 \subseteq \forest{0}$. 

        All that's left to prove \Cref{lem:partition_inputs_tree} is to show $|F_i| \leq O(\localityf^2)$ for each $i\in[s]$. Note that for each $i\in [s]$, $|\forest{i}| = |\treeneighb(\fneighb(w_i))|$. Since $w_i$ was chosen from the subset of input variables that are not adjacent to $T_0$ in $f$'s dependency graph (those indexed by $U$), and have degree at most $O(\localityf)$ (indexed by $\tilde{U}\subseteq U$), it follows that $|\treeneighb(\fneighb(w_i))| \leq 2\localityf |\fneighb(w_i)|$ and $|\fneighb(w_i)| \leq O(\localityf)$. Therefore, $|\forest{i}| \leq O(\localityf^2)$ for each $i\in [s]$. 
        
    \end{proof}
    
    Next we consider how the $\pmmajmod{p}$ function evaluates on $(x, d)$.
    We partition the terms of the sum $S = \sum_{i=1}^{n-1} x_i (-1)^{\pathsum{d}_i}$ into $s+1$ parts according to the forest partition $\forest{0}, \forest{1}, \dots, \forest{s}$ from \Cref{lem:partition_inputs_tree}. 
    \begin{align}
        \blocksum{i} = \sum_{v_j \in V(F_i)} x_j (-1)^{\pathsum{d}_i} && \text{for each } i \in \{0,1, \dots, s\}.
    \end{align}
    Where $V(\forest{i})$ denotes the set of vertices $v_j \in V(\bintree[n])$ such that $x_j \in \forest{i}$ and $E(\forest{i})$ denotes the set of edges $e_j \in E(\bintree[n])$ such that $d_j \in \forest{i}$ for $i\in \{0,1, \dots, s\}$. Again, note that $v_0\notin V(F_0)$.
    We can rewrite the sum as $S = \sum_{i=0}^s \blocksum{i}$.

    Let's consider the sum $S$ for a particular assignment $z = (x,d) \in \{0,1\}^{N}$, where for each $i\in \{0,1,\dots,s\}$, $z\at{\forest{i}}$ denotes the assignment to $\forest{i}$. Note that $\blocksum{0}$ depends only on $z\at{\forest{0}}$, and each term $\blocksum{i}$ for $i\geq 1$ depends only on $z\at{\forest{0}}$ and $z\at{\forest{i}}$. 
    \begin{align}
        S(z) = \blocksum{0}(z\at{\forest{0}}) + \sum_{i=1}^s \blocksum{i}(z\at{\forest{i}}, z\at{\forest{0}})
    \end{align}
    This is because $x_j (-1)^{\pathsum{d}_i}$ depends on $x_j$ as well as each $d_{j'}$ where $e_{j'}$ is along the path from $v_0$ to $v_j$ in $\bintree$.

    \begin{defn}[Minimal Block]\label{defn:minimal_block}
        For some assignment $z\in \{0,1\}^{N}$, we say that the $i$th block is \emph{minimal} if 
        \begin{align}
            \blocksum{i}(z\at{\forest{i}}, z\at{\forest{0}})  = \min_{z'\at{\forest{i}}\in \{0,1\}^{|\forest{i}|}} \blocksum{i}(z'\at{\forest{i}}, z\at{\forest{0}}).
        \end{align}
    \end{defn}
    \begin{claim}\label{claim:unique_min_forest}
        For each fixed assignment to $z_{\forest{0}}$, and any $i\in [s]$, there is a unique minimal assignment to $z_{\forest{i}}$. That is, for each $z_{\forest{0}} \in \{0,1\}^{|\forest{0}|}$, there exists a $z^*_{\forest{i}} \in \{0,1\}^{|\forest{i}|}$ such that 
        \begin{align}
            \blocksum{i}(z^*_{\forest{i}}, z_{\forest{0}}) < \blocksum{i}(z_{\forest{i}}, z_{\forest{0}}) && \text{for each } \ z_{\forest{i}} \in \{0,1\}^{|\forest{i}|} \setminus \{z^*_{\forest{i}}\}.
        \end{align}
    \end{claim}
    \begin{proof}
        For each $i\in [s]$, the sum $\blocksum{i}$ can be broken into terms for each of the small trees $T_j \in \allsmalltrees$ in the forest $\forest{i}$.
        \begin{align}
            \blocksum{i} = \sum_{j \in [k] : T_j \subseteq \forest{i}} \treesum{T_j}
        \end{align}
        Where $\treesum{T_j} := \sum_{v_i \in V(T_j)} x_i (-1)^{\pathsum{d}_i}$.
        Note that the value each of $\treesum{T_j}$ for $j\in [s]$ depends on $z\at{\forest{0}}$ and the variables in $T_j$. Since each $T_j$ for $j\in [s]$ are disjoint, it is sufficient for us to show that for a fixed $z\at{\forest{0}}$, there is a unique minimal assignment to the variables of $T_j$ for each $j\in [s]$.

        For any two vertices $v_j \neq v_k \in V(\bintree[n])$, let $\path{j, k}\subseteq E(\bintree[n])$ be the subset of edges $\{e_1, \dots, e_{n-1}\}$ along the path from $v_j$ to $v_k$. Note that for any vertex $v_i$, $P(v_i)$ as defined in \Cref{defn:binary_tree} is equivalent to $\path{0,i}$. 
        Consider some $T\in \allsmalltrees$ with root $v_r$, and single-endpoint root edge $e_r$. We can rewrite $\treesum{T}$ as
        \begin{align}
            \treesum{T} &= \sum_{v_i \in V(T)} x_i \prod_{e_j \in \path{0,i} }  (-1)^{d_j}\\
            &= (-1)^{\pathsum{d}_r}\left( x_r + \sum_{v_i \in V(T)\setminus \{v_r\}} x_i \prod_{e_j \in \path{r,i} }  (-1)^{d_j}\right).
        \end{align}
        Note that $\pathsum{d}_r$ is a function of $z\at{\forest{0}}$ and $d_r$, and for a fixed $z\at{\forest{0}}$, we can fix $d_r$ such that $\pathsum{s}_r = -1$. Consider that we set $d_r$ in this way.
        \begin{align}
            \treesum{T}  &= - x_r + \sum_{v_i \in V(T)\setminus \{v_r\}} -x_i \prod_{e_j \in \path{r,i} }  (-1)^{d_j}
        \end{align}
        Now, $\treesum{T}$ is minimized if each of the $V(T)$ terms are minimized (value $-1$). This is achieved by setting $x_i = 1$ for each $v_i\in V(T)$ and $d_j = 0$ for each $e_j \in E(T) \setminus \{e_r\}$. Note that any other assignment to the variables will result in one of the terms being either $0$ or $1$, therefore this is the unique minimal assignment to the tree $T$. 
    \end{proof}

    Next, we design a statistical test similar to that in the proof of classical hardness of $(X, \majmod{p}\oplus \parity(X))$ (\Cref{thm:classical_LB_with_GHZ}) in \Cref{sec:classical_hardness_with_GHZ} with the additional set $T_M$ consisting of strings with a limited number of minimal blocks. We define the statistical test as follows.
    \paragraph{Statistical Test:} Let $N_0, N_M := 3N^{3\alpha}$ and $N_F := 2N^{3\alpha}$. The statistical test is $T := T_M \uplus T_0 \uplus T_F \uplus T_S$, where 
    \begin{align}
        T_M &:=  \{ z'\in \{0,1\}^{N+1} : \  \leq N_M \text{ blocks } i\in [s] \text{ of } z' \text{ are \emph{minimal}}\} \\
        T_0 &:=  \{ z'\in \{0,1\}^{N+1} : \  z'\at{\forest{i}} = 0^{|\forest{i}|} \text{ for } \leq N_0 \text{ blocks } i\in [s]\} \\
        T_F &:=  \{z' \in \{0,1\}^{N+1} : \exists (w,y) : f(w,y) = z' \text{ and } \geq N_F \text{ blocks } g_i(w_i, y) \text{ are } y\text{-fixed} \}\\
        T_S &:=  \{(z, b)\in \{0,1\}^{N}\times \{0,1\} : b \neq \pmmajmod{p}(z) \} \qquad \text{(``incorrect strings'')}
    \end{align}

    We will show that the function $f(U)$ passes the statistical test with probability at least $\frac{1}{2} - O(1/\log N)$ whereas the true distribution $D = (Z, \pmmajmod{p}(Z))$ passes with probability at most $1/N$ for sufficiently large $N$.

    \begin{claim} \label{claim:pmmajmod_statistical_test_pass}
        $\Pr[f(U) \in T] \geq \frac{1}{2} - O(1/\log N)$.
    \end{claim}
    \begin{proof}
        Using our partition of random input $u$ into $(w,y)$, our goal is to upper bound $\Pr_{w,y}[f(w,y) \in T]$, where the probability is taken over the randomness of $(w,y)$ chosen uniformly at random from $\{0,1\}^s \times \{0,1\}^{\ell -s}$. Since $\Pr_{w,y}[f(w,y) \in T] \geq \min_y \Pr_x[f(w,y) \in T]$, it is sufficient for us to upper bound $\Pr_w[f(w,y)\in T]$ for arbitrarily chosen $y\in \{0, 1\}^{\ell-s}$.

        Suppose we arbitrarily fix $y\in \{0,1\}^{\ell -s}$. If $\geq N_F$ blocks of $f(w,y)$ are $y$-fixed, then $f(w,y)\in T_F$ for each $w\in \{0,1\}^s$. Moreover, if at most $N_M$ blocks $g_i(w_i, y)$ are minimal for some choice of $w_i\in \{0,1\}$, then for each $w\in \{0,1\}^s$, $f(w, y) \in T_M$. Similarly, if at most $N_0$ blocks evaluate to zero $g_i(w_i, y) = 0^{|\forest{i}|}$ for some choice of $w_i \in \{0,1\}$, then for each $w\in \{0,1\}^s$, $f(w, y) \in T_0$. Therefore, we assume that less than $N_F$ blocks of $f$ are $y$-fixed, greater than $N_F$ of the forests of $f(w,y)$ take on their minimal value for some choice of $w$, and greater than $N_0$ blocks are all zeros for some choice of $w$. Therefore, the following two hold:
        \begin{enumerate}
            \item There are at least $N_M - N_F = N^{3\alpha}$ blocks $i\in [s]$ such that $\blocksum{i}(0, y) \neq \blocksum{i}(1, y)$.
            \item There are at least $N_0 - N_F = N^{3\alpha}$ blocks $i\in [s]$ such that $|g_i(0, y)| \neq |g_i(1, y)|$.
        \end{enumerate}
        Let $J\subseteq [s]$ be the indices of the blocks that change their respective terms of $S$, and let $K\subseteq [s]$ be the indices of the blocks with Hamming weight that change.
        \begin{align}
            J := \{i\in [s] : \blocksum{i}(0, y) \neq \blocksum{i}(1, y)\} && K := \{i\in [s] : |g_i(0, y)| \neq |g_i(1, y)|\}
        \end{align}
        We denote $|x, d|$ as the Hamming weight of the first $N$ output bits of $f(w,y)$, and recall that $b$ is the last bit of $f(w, y)$. Note that $|x,d|=|h(y)| + \sum_{i=1}^s |g_i(w_i, y)|$.  
        \begin{claim}\label{claim:S_hw_as_lincombs}
            Over the randomness of $w\in \{0,1\}^s$, the random variables $S$ and $|x, d|$ can be written as 
            \begin{align} \label{eq:S_hw_as_lincombs}
                S = a + \sum_{i\in J} a_i r_i, && |x, d| = e + \sum_{i \in K} e_i r_i && \text{where } r\sim \unif (\{0,1\}^{|J \cup K|}).
            \end{align}
            For some integers $a, e$, positive integers $a_1, \dots, a_{|J|} \leq O(\localityf^2) = O(\epsilon\log N)$, and nonzero integers $e_1, \dots, e_{|K|}$.
        \end{claim}
        \begin{proof}
            Note that over the randomness of $w\in \{0,1\}^s$, 
            for each $j'\notin J$ and $k'\notin K$, $\blocksum{j'}$ and $|g_{k'}(w_{k'}, y)|$ are fixed. Therefore, there exists some integers $\alpha, \beta$ such that 
            \begin{align}
                S = \alpha + \sum_{j\in J} \blocksum{j} && |x,d| = \beta + \sum_{k\in K} |g_k(w_k, y)|.
            \end{align}
            Moreover, each $\blocksum{j}$ for $j\in J$  are independent random variables which take on two different integer values with equal probability. Likewise, the $|g_k(w_k, y)|$ for $k\in K$ are independent random variables which take on two distinct values with equal probability. Although for $i\in J\cap K$, $\blocksum{i}$ and $|g_i(w_i, y)|$ are not independent. Thus for each $j\in J$ and $k\in K$, there exists integers $\alpha_0, \alpha_1, \beta_0, \beta_1$ such that $\alpha_0 \neq \alpha_1$,  $\beta_0\neq \beta_1$, and
            \begin{align}
                \blocksum{j} = \begin{cases}
                    \alpha_0 & \text{if } w_j = 0\\
                    \alpha_1 & \text{if }w_j = 1
                \end{cases}
                && |g_k(w_k, y)| = \begin{cases}
                    \beta_0 & \text{if } w_j = 0\\
                    \beta_1 & \text{if }w_j = 1
                \end{cases}
                && w \sim \unif(\{0,1\})^{|J \cup K|}.
            \end{align}
            For each $i\in J \cup K$, we will assign $r_i$ to either $w_i$ or $1-w_i$. Since each $w_i$ is independently uniformly random over $\{0,1\}$, so is each $r_i$. 

            Note that we can write the term $\blocksum{j}$ as  either $\blocksum{j} = \alpha_0 + (\alpha_1 - \alpha_0) w_j$, or $\blocksum{j} = \alpha_1 + (\alpha_0 - \alpha_1) (1- w_j)$. Thus, it is possible to rewrite $\blocksum{j}$ as $c + a_j r_j$ for some integer $c$ and positive integer $a_j$, by setting $r_j = w_j$ and $a_j = (\alpha_1 - \alpha_0)$ if $\alpha_1 > \alpha_0$ and setting $r_j = 1 - w_j$ and $a_j = (\alpha_0 - \alpha_1)$ if $\alpha_0 > \alpha_1$. Furthermore, the value of $a_j$ is  $|\alpha_0 - \alpha_1|$, and is at most $2 \cdot |V(F_j)| \leq \localityf \cdot 2^D = 2\localityf^2$ since the value of $|\blocksum{j}|$ is at most the number of vertices in $\forest{j}$. 
            Therefore, we can write $S = a + \sum_{i\in J} a_ir_i$ for some integer $a$ and positive integers $a_i \leq 2\localityf^2$ for $i\in J$. 

            For each $k\in K$, we can also write the term $|g_k(w_k, y)|$ as either $\beta_0 + (\beta_1 - \beta_0)x_k$ or $\beta_1 + (\beta_0 - \beta_1)(1- x_k)$. Therefore, regardless of whether $r_k$ was assigned as $x_k$ or $1-x_k$, the term can be written as $c + e_kr_k$ for some (not necessarily positive) integers $c$ and $e_k$.
            And, as desired, the entire Hamming weight sum can be written as $|x,d| = b + \sum_{i \in K} e_i r_i$ for some integers $b$ and $e_i$ for $i\in K$.
        \end{proof}
        Next, we consider how much the sums in \Cref{eq:S_hw_as_lincombs} depend on the same bits of $r$.
        Suppose that $|J \cap K| \leq \frac{1}{2} N^{3\alpha}$. Then $|J\setminus K| \geq \frac{1}{2}N^{3\alpha}$. If we fix $r_K$ arbitrarily, the value of $|x, d|$ is fixed, and therefore so is $\parity(x, d)$. Letting $c = \parity(x,d)$,  $a' = a + \sum_{i\in J\cap K} a_i r_i$, and $J' = J\setminus K$, we can simplify the probability that the output is ``incorrect'' over the randomness of $r_{J'}$ as follows.
        \begin{align}
            \Pr_{r_{J'}}\left[f(w, y) \in T_S\right] &= \Pr_{r_{J'}}[\MM{p}(S)\oplus \parity(x, d) \neq b]\\
            &= \Pr_{r_{J'}}\left[\MM{p}\left(a' + \sum_{i\in J'} a_i r_i\right) \neq c \oplus b\right]\\
            &= \Pr_{r_{J'}}\left[a' + \sum_{i\in J'} a_ir_i \in M_{c\oplus b \oplus 1} \mod p \right]
        \end{align}
        Where $M_0 = \{0, 1, \dots, (p-1)/2\}$ and $M_1 = \{(p+1)/2, \dots, p-1\}$. Since $|M_0|, |M_1| \geq (p-1)/2$, and $a_i$ is nonzero modulo $p$ (since $a_i \leq O(\epsilon\log N)$ for $i\in J$, and $p = \Theta(N^\alpha)$)) it follows from \Cref{cor:prob_sum_in_set} that 
        \begin{align}
            \Pr_{r_{J'}}\left[f(w, y) \in T_S\right] \geq \frac{p-1}{2p} - O(1/N) \geq 1/2 - O(1/p).
        \end{align}
        Where we used that $|J'| \geq \frac{1}{2}N^{3\alpha} \geq \Omega(p^3)$. Since the bits of $r_K$ were fixed arbitrarily, it holds that $\Pr_w[f(w, y) \in T_S] = \Pr_r[\MM{p}(S) \oplus \parity(x, d) \neq b] \geq 1/2 - O(1/p)$. Therefore we assume that $| J \cap K| > \frac{1}{2}N^{3\alpha}$. 
        
        If we fix all $r_i$ for $i\notin J\cap K$, the remaining non-fixed blocks $i \in J \cap K$ have possible assignments which give different values to both $|g_i(w_i, y)|$ and $S_i$. Letting $a' = a + \sum_{i \notin J\cap K} a + a_ir_i$, and $e' = \sum_{i \notin J \cap K} e_ir_i$, we simplify the probability that $f(w,y)$ is ``incorrect'' over the randomness of $r_{J\cap K}$ as follows.
        \begin{align}
            \Pr_{r_{J\cap K}}\left[f(w, y) \in T_S\right] = \Pr_{r_{J\cap K}}\left[\MM{p}\left(a' + \sum_{i\in J\cap K} a_i r_i\right)\oplus \intparity\left(e' + \sum_{i\in J \cap K}e_ir_i\right)\right]
        \end{align}
        Since $a_i\leq O(\localityf^2) \leq O(\epsilon\log N)$ for each $i\in [s]$ (by \Cref{claim:S_hw_as_lincombs}) and $|J \cap K | \geq \frac{1}{2}N^{3\alpha} = \Omega(p^3)$, it directly follows from \Cref{lem:MMP_of_sum} that
        \begin{align}
            \Pr_{r_{J\cap K}}\left[f(w, y) \in T_S\right]  \geq \frac{1}{2} - O(1/\log N)
        \end{align}
        Therefore, $\Pr_w[f(w, y) \in T_S] \geq \frac{1}{2} - O(1/\log N)$.
    \end{proof}
    \begin{claim} \label{claim:pmmmp_fail_stat_test}
        $\Pr[(Z, \pmmajmod{p}(Z)) \in T ] \leq 1/N$ for sufficiently large $N$.
    \end{claim}
    \begin{proof}
        This proof is almost identical to that of \Cref{claim:mmp_fail_stat_test}, which follows closely to the proof of Claim 3.3 in \cite{viola2012complexity}. The main difference in this proof accounts for the additional term $T_M$ in the statistical test -- so in addition to upper bounding the probability that $\mathcal{D} = (Z, \pmmajmod{p}(Z))$ is in  $T_0, T_S$, or $T_F$, we will also upper bound the probability that $\mathcal{D}\in T_M$.  Since $\mathcal{D}$ always outputs a ``correct'' string, $\Pr[\mathcal{D} \in T_S] = 0$. Thus, by the union bound it is sufficient for us to prove that $\Pr[\mathcal{D} \in T_0], \Pr[\mathcal{D} \in T_F], \Pr[\mathcal{D} \in T_M] \leq \frac{1}{3N}$.

        We start by showing that $\Pr[\mathcal{D} \in T_M] \leq \frac{1}{3N}$. To this end, we consider the probability that $\mathcal{D}\in T_M$ conditioned on the value of $Z\at{\forest{0}}$. Since $Z\at{\forest{0}} \in \{0,1\}^{|\forest{0}|}$ is uniformly random, 
        \begin{align}
        \Pr[\mathcal{D} \in T_M] = \frac{1}{2^{|\forest{0}|}} \sum_{t_0\in \{0,1\}^{|\forest{0}|}}  \Pr[\mathcal{D} \in T_M |Z\at{\forest{0}} = t_0].
        \end{align}
        Thus it is sufficient for us to show that $\Pr[\mathcal{D} \in T_M | Z\at{\forest{0}} = t_0] \leq \frac{1}{3N}$ for each $t_0 \in \{0,1\}^{|\forest{0}|}$.

        As shown in \Cref{claim:unique_min_forest}, for each forest $\forest{i}$ for $i\in [s]$, and some fixed $z\at{\forest{0}} \in \{0,1\}^{|\forest{0}|}$, there is a unique assignment for $z\at{\forest{i}}$ to minimize $\blocksum{i}(z\at{\forest{i}}, z\at{\forest{0}})$. Additionally, the minimality of each block is independent conditioned on the value of $Z\at{\forest{0}}$ since for each $i\in [s]$, $\blocksum{i}(Z)$ is a function of only $Z\at{\forest{i}}$ and $Z\at{\forest{0}}$.

        We lower bound the probability that any given forest is minimal conditioned on the value of $Z\at{\forest{0}}$. For any $i\in [s]$ and $t_0 \in \{0,1\}^{|\forest{0}|}$, we have
        \begin{align}
            \Pr_\mathcal{D}[\text{block } i \text{ is minimal } | Z\at{\forest{0}} = t_0] = \frac{1}{2^{|\forest{i}|}} \geq 2^{-O(\localityf^2)} = 2^{-O(\epsilon \log N)} \geq N^{-O(\epsilon)}. \label{eq:prob_block_minimal}
        \end{align}
        Where we used that $|\forest{i}| \leq   O(\localityf^2) \leq O(\epsilon\log n)$ for $i \in [s]$.

        Since the minimality of each forest are independent conditioned on the value of $Z\at{\forest{0}}$, for any subset of forests $U\subseteq [s]$, the probability that none of them are minimal conditioned on $Z\at{\forest{0}}$ is
        \begin{align}
            \Pr_\mathcal{D}[\text{all forests of } U \text{ are \emph{not} minimal} | Z\at{\forest{0}} = t_0] = \prod_{i\in U} \Pr[\text{forest } i \text{ is not minimal} | Z\at{\forest{0}} = t_0]. \label{eq:forest_conditional_independence}
        \end{align}
        Therefore, for each $t_0\in \{0,1\}^{|F_0|}$,
        \begin{align} 
            \Pr_\mathcal{D}[\mathcal{D} \in T_M | Z\at{\forest{0}} = t_0] &= \Pr_\mathcal{D} \left[\bigcup_{\substack{U\subseteq [s]: \\ |U| = s-N_M}} \left\{ \text{all forests of } U \text{ are \emph{not} minimal } \right\} \Bigg| Z\at{\forest{0}} = t_0 \right]\\
            &\leq \sum_{\substack{U\subseteq [s]: \\ |U| = s-N_M}} \Pr\left[ \text{all forests of } U \text{ are \emph{not} minimal } \Big\vert  Z\at{\forest{0}} = t_0  \right]\\
            &= \sum_{\substack{U\subseteq [s]: \\ |U| = s-N_M}} \prod_{i\in U} \Pr\left[\text{forest } i \text{ is not minimal} \Big| Z\at{\forest{0}} = t_0\right]\\
            &\leq \binom{s}{N_M} \left(1- N^{-O(\epsilon)}\right)^{s- N_M} \label{eq:cond_prob_DinT_M_start}
        \end{align}
        In the second line we used the union bound, the third line we used the independence of the block's minimality conditioned on $Z_{\forest{0}}$ (\Cref{eq:forest_conditional_independence}), the fourth line we used \Cref{eq:prob_block_minimal}.
        We can further simplify, beginning with standard bounds and then using that $\Omega(N/\localityf^3) \leq s \leq N$, $\localityf \leq (\epsilon \log N)^{1/2}$, and $N_M = 3N^{3\alpha}$
        \begin{align}
            &= \binom{s}{N_M} \left(\left(1- N^{-O(\epsilon)}\right)^{N^{O(\epsilon)}} \right)^{ N^{-O(\epsilon)}(s- N_M) } \\
            &\leq \left(\frac{s}{N_M}\right)^{N_M} \exp(-N^{-O(\epsilon)}(s - N_M))\\
            &= s^{N_M} \exp(-N^{-O(\epsilon)} s) \left(\frac{\exp(N^{-O(\epsilon)})}{N_M}\right)^{N_M}\\
            &\leq N^{3N^{3\alpha}} \exp\left(\frac{n^{1 - O(\epsilon)}}{ \log^{3/2}N}\right)\\
            &\leq \frac{1}{3N}  \label{eq:cond_prob_DinT_M_end}
        \end{align}
        for sufficiently large $N$ and small $\epsilon$ (such that $3\alpha < 1 - O(\epsilon)$). Therefore $\Pr[\mathcal{D} \in T_M]\leq \frac{1}{3N}$. 

        Next, we show using similar calculations that $\Pr[\mathcal{D} \in T_0] \leq \frac{1}{3N}$. Since each of the blocks $i\in [s]$, $Z\at{\forest{i}}$ is uniformly random, whether each of them is all zeros is independent. Therefore the probability that block $i\in [s]$ is all zeros is.
        \begin{align}
            \Pr[Z_{\forest{i}} = 0^{|\forest{i}|}] = 2^{-|\forest{i}|}  \leq 2^{-O(\localityf^2)} = N^{-O(\epsilon)}  && \text{for each } i \in [s]
        \end{align}
        Since $N_0 = 3N^{3\alpha}$, we can use the calculations from \Crefrange{eq:cond_prob_DinT_M_start}{eq:cond_prob_DinT_M_end} to bound $\Pr[\mathcal{D} \in T_0]$.
        \begin{align}
            \Pr[\mathcal{D} \in T_0 ] &\leq \sum_{\substack{T\subseteq [s]: \\ |T| = s-N_M}} \prod_{i\in T} \Pr\left[ Z\at{\forest{i}} \neq 0^{|\forest{i}|}\right]\\
            &\leq \binom{s}{N_0} \left(1 - N^{-O(\epsilon)}\right)^{s - N_0}\\
            &\leq \frac{1}{3N}
        \end{align}
        For sufficiently large $N$ and small $\epsilon$.

        All that's left is to show $\Pr[\mathcal{D} \in T_F] \leq \frac{1}{3N}$. For this we use the same calculations from the proof of \Cref{claim:mmp_fail_stat_test}, but in this scenario we have $\ell \leq N + N^{3\alpha}$, and the size of the support of $\mathcal{D}$ is $2^N$.
        \begin{align}
            \Pr[\mathcal{D} \in T_F] \leq \frac{|T_F|}{2^{N}} \leq \frac{2^{\ell - N_F}}{2^{N}} \leq 2^{N^{3\alpha} - 2N^{3\alpha}} \leq 2^{-N^{3\alpha}} \leq \frac{1}{3N}.
        \end{align}
        Where we used $\ell \leq N + N^{\delta}$, $\delta \geq 3\alpha$, and $N_F = 2N^{3\alpha}$.
        Therefore, applying the union bound we get
        \begin{align}
            \Pr[\mathcal{D} \in T] &\leq \Pr[\mathcal{D} \in T_S] + \Pr[\mathcal{D} \in T_M] + \Pr[\mathcal{D} \in T_0] + \Pr[\mathcal{D} \in T_F]\\
            &\leq 0 + \frac{1}{3N} + \frac{1}{3N} + \frac{1}{3N} = \frac{1}{N}
        \end{align}
    \end{proof}
\end{proof}

\appendix

\section{Implementing the \texorpdfstring{$\Urot{\theta}{m}$}{U m theta } Unitary. } 
\label{sec:U_m_theta_compiling}

The quantum circuits constructed in this paper involved $m$-qubit unitary gates, which we denoted by $\Urot{m}{\theta}$. These gates were chosen to implement unitary operations close to some desired non-unitary operation $\multiNUrot{m}{\theta}$. In the body of the paper we showed existence of these unitaries, but avoided a discussion of how to construct these gates out of a more elementary gate set. In this appendix we briefly outline one approach to answering this question, when the elementary gate set chosen contains arbitrary one qubit gates along with CNOT gates. While we do not give an explicit compilation of the $\Urot{m}{\theta}$ unitary in terms of these elementary gates, we outline the steps that can be used to find such a construction. Implementing this algorithm (or finding some other ad-hoc compilation of the $\Urot{m}{\theta}$ unitary) would be a necessary step before implementing the circuits described in this paper on a near-term quantum computer. Additionally, existence of this algorithm implies that the quantum circuits considered in this paper are uniform (meaning a description of them can be found in polynomial time) provided arbitrary one qubit gates and two qubit CNOT gates are allowed as elementary gates in the quantum circuit.  

The first thing to note is that an explicit description of the $\Urot{m}{\theta}$ can be obtained by starting with \Cref{lem:Frobenius_Urot_NUrot_bound} and then partitioning the set $\{0,1\}^m$ as described in the lemma. For completeness, one such possible definition is given below. 

\begin{defn}
    For any $m, \theta$ define $\Urot{m}{\theta}$ to be the unitary that acts on computational basis states $\ket{x} = \ket{x_1x_2...x_m}$ as
    \begin{align}
        \Urot{m}{\theta}\ket{x} &= \begin{cases}
        \multiNUrot{m}{\theta}\ket{x} & \text{ if } x_1 = 0  \\
        \sqrt{1- \sin^{2m}}\left(\multiNUrot{m}{\theta} \ket{x} + i^{m + 2 \abs{x}}\sin^{m}(\theta)\multiNUrot{m}{\theta}\ket{\overline{x}}\right) & \text{ if } x_1 = 1 
        \end{cases}
    \end{align}
\end{defn}
\noindent It is immediate from \Cref{lem:Frobenius_Urot_NUrot_bound} that $\Urot{m}{\theta}$ as defined above is unitary and satisfies $\norm{\Urot{m}{\theta} - \multiNUrot{m}{\theta}}_F \in O(\theta^{-m})$.

Now it remains to show how the $\Urot{m}{\theta}$ unitary described above can be compiled in terms of arbitrary one qubit gates and two qubit CNOT gates. This argument follows from a chain of results.\footnote{ Which the authors are very grateful to Michael Oliveira for pointing out to us. } First, as noted in page 12 of \cite{green2001counting}, the arguments of \cite{reck1994experimental} give a algorithm whose runtime is bounded as a function of $m$ that compiles any $m$ qubit unitary into a sequence of at most $O(m^3 4^m)$ two qubit gates. Taking $m$ to be a constant this gives an algorithm which has constant runtime and allows any $m$-qubit gate to be rewritten as a sequence of a constant number of two-qubit gates. Additionally, Section 5.1 of \cite{barenco1995elementary} shows how any two qubit gate can be rewritten as a length 5 sequence of one qubit gates and CNOT gates. Putting these results together we see that for constant $m$ and any $\theta$ there is a constant time algorithm that rewrites the $\Urot{m}{\theta}$ unitary as a constant length sequence of one qubit gates and CNOT gates. 

We close this section with two observations. Firstly, we point out that the compilation procedure described above requires arbitrary one qubit rotations. This is necessary -- as a straightforward counting argument shows that (even for constant $m$) it is impossible to compile all $\Urot{m}{\theta}$ unitaries exactly in constant depth with a finite sized set of elementary gates. Indeed, finding a quantum-classical sampling separation where the quantum circuit has constant depth and only involves gates drawn from a constant size gate set is one important open question left by this work. Secondly, we point out the procedure described in the previous paragraph for compiling the $\Urot{m}{\theta}$ unitary is unlikely to produce an ``optimal'' compilation. With careful thought it may be possible to find a more natural compilation technique that produces $\Urot{m}{\theta}$ gates while requiring many fewer elementary gates. Finding such a compilation would likely make an experimental implementation of the circuits described in this paper much more feasible. 

\section{Lower Bounds Against Classical Circuits with Unlimited Inputs but Bounded Fan-out and Fan-in}
\label{apx:unlimited-input}
\newcommand{\Inf}{\mathrm{Inf}}
One important limitation of our main result is the restriction on the number of input bits to the classical circuit. While the role that the number of inputs plays in the complexity of distributions is puzzling, we make some progress on this front by considering a tradeoff between bounded fan-out and the number of inputs.

So far, this document only considers classical circuits that have bounded fan-in, but unbounded fan-out. In contrast, the quantum circuits considered have both fan-in and fan-out bounded. So it is reasonable to consider how the two compare when the classical circuits are restricted to have bounded fan-out.  In this section, we consider classical circuits with an unlimited number of inputs, but that have bounded fan-out in addition to fan-in. With this exchange of constraints, we are able to maintain our circuit lower bound against $(X, \majmod{p}(X))$--- The distribution that can be approximated by a constant-depth quantum circuit with a $GHZ$ advice state. The general structure of the proof in this section gains inspiration from a mixture of \cite{viola2012complexity,viola2014extractors} -- although the details specific to this distribution call for novel techniques. 
\begin{defn}
    A function $f :\zo^\ell \to \zon$ is \emph{$\localityf$-local in input and output} if each output bit is a function of at most $\localityf$ of the input bits, and each input bit influences at most $\localityf$ of the output bits.
\end{defn}
In other words, $f$ is $\localityf$-local in input and output if the input-output dependency graph of $f$ has bounded degree $\localityf$ on both the input and output nodes.
\begin{thm}\label{thm:mm-lb-bounded-fanout}
    Suppose $f : \zo^\ell \to \zon$ is $\localityf$-local in both input and output. Let $U \sim \unif(\zo^\ell)$, and $X \sim \unif(\zon)$. Then for each $c\in (0,1/3)$ and prime  $p \leq \Theta(n^{c})$,  
    there exists an $\epsilon \in (0,1)$ such that if $\localityf = \epsilon\log^{1/2} n$ then
    \begin{align}
        \TVD\pbra{f(U), \pbra{X, \majmod{p}(X)}} \geq \frac{1}{2} - O(1/\log n).
    \end{align}
\end{thm}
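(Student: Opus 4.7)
The plan is to adapt the proof of \Cref{thm:classical_LB_with_GHZ} to handle unlimited inputs by exploiting bounded fan-out. First I would preprocess $f$: any input bit that influences no output can be dropped without changing the distribution of $f(U)$, so we may assume every input bit is ``active.'' Since each output bit has fan-in at most $d$ and each input bit has fan-out at most $d$, the dependency graph has at most $nd$ edges, forcing $\ell \leq nd$. I would then apply Viola's partition lemma to write $f(w,y) = g_1(w_1,y) \circ \cdots \circ g_s(w_s,y) \circ h(y)$ with $s = \Omega(n/d^2)$ blocks of size $O(d)$, absorbing one block into $h$ so that the last output bit depends only on $y$. The new structural feature granted by bounded fan-out is that each $y$-bit influences at most $d$ output bits and hence contributes to at most $d$ blocks.

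Following the template of \Cref{thm:classical_LB_with_GHZ}, I would use a statistical test $T = T_S \cup T_0 \cup T_F$, where $T_S$ is the set of ``incorrect'' strings with $b \neq \majmod{p}(z)$, $T_0$ consists of outputs with at most $N_0$ all-zero blocks, and $T_F$ consists of outputs attainable via some $(w,y)$ with at least $N_F$ $y$-fixed blocks (with $N_0, N_F$ tuned as in the earlier proof). The argument that $\Pr[f(U) \in T] \ge 1/2 - O(1/\log n)$ transfers essentially unchanged: for any fixed $y$ not producing $T_F$ or $T_0$ outputs, the remaining blocks contribute independent nonzero terms to the partial Hamming weight $|f(w,y)_{1:n-1}|$, whose distribution modulo $p$ is close to uniform by \Cref{fact:puniform}, so the fixed last bit is incorrect with probability close to $1/2 - 1/(2p)$. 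Since the target distribution here is $(X, \majmod{p}(X))$ without the $\parity$ twist, I would not need the more delicate \Cref{lem:MMP_of_sum}; the direct use of \Cref{cor:prob_sum_in_set} suffices. The bounds $\Pr[D \in T_S] = 0$ and $\Pr[D \in T_0] \leq O(1/n)$ are independent of $\ell$ and transfer unchanged from \Cref{claim:mmp_fail_stat_test}.

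The main obstacle is bounding $\Pr[D \in T_F] = |T_F|/2^n$. The original bound $|T_F| \leq 2^{\ell - N_F}$ relied crucially on $\ell \leq n + n^\delta$; here $\ell$ can be as large as $nd$, so this naive estimate fails. To recover the bound I would exploit bounded fan-out to argue that although $y$ lives in a large space, each $y$-fixed block $g_i$ is ``certified'' by only the $O(d^2)$ $y$-bits that actually feed into $g_i$, and (crucially) each $y$-bit participates in the certification of at most $d$ blocks. Concretely, I would enumerate elements of $T_F$ by first choosing the set $S$ of $\geq N_F$ $y$-fixed blocks, then describing the $y$-restriction only on the $y$-bits feeding into $\bigcup_{i \in S} \blockidx{i}$ together with the non-fixed $w$-coordinates; the $y$-fixed blocks' output bits are then determined by this restriction, so $|T_F|$ factors cleanly. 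Establishing that the effective entropy saved by forcing $N_F$ blocks to be $y$-fixed is $\Omega(N_F)$ rather than zero is the delicate step, and is where bounded fan-out enters essentially. Once this counting gives $|T_F| \leq 2^{n - \Omega(N_F)}$ with $N_F = \omega(\log n)$, combining with the other pieces yields $\TVD(f(U), (X, \majmod{p}(X))) \geq 1/2 - O(1/\log n)$ as desired.
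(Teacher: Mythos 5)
The step you flag as delicate is in fact where the approach breaks, and no refinement of the counting can save it: with an unlimited number of input bits, the existence-based set $T_F$ can be \emph{all} of $\zo^{n+1}$ for a perfectly legitimate circuit. Concretely, give each block an extra ``mode'' bit among the $y$-inputs: if $y_i^{*}=0$ the block outputs a fresh tuple of $y$-bits (ignoring $w_i$, hence $y$-fixed), and if $y_i^{*}=1$ it outputs those $y$-bits XORed with $w_i$. This circuit is $O(1)$-local in both input and output and its first $n$ output bits are exactly uniform, yet \emph{every} string $z$ has a preimage $(w,y)$ in which all blocks are $y$-fixed, so $\Pr[(X,\majmod{p}(X))\in T_F]=1$ and the test is useless. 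This is exactly the failure mode that the restriction $\ell\le n+n^{\delta}$ was protecting against in \Cref{claim:mmp_fail_stat_test}; bounding $\ell\le nd$ via fan-out does not help, since $2^{nd-N_F}\gg 2^{n}$, and your proposed encoding (the set $S$, the $y$-bits feeding the fixed blocks, and the unfixed $w$-coordinates) does not even determine $z$, because the outputs of the unfixed blocks and of $h(y)$ still depend on unspecified $y$-bits, and describing a fixed block of size $O(d)$ by the $O(d^{2})$ $y$-bits feeding it is a loss, not a gain. Even switching to a probability-mass version of $T_F$ (as the paper does in \Cref{sec:biased-noise}) requires a bound on the total input entropy, which is unavailable here.

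The paper's proof of \Cref{thm:mm-lb-bounded-fanout} therefore takes a genuinely different route that avoids $T_F$ altogether and uses bounded fan-out where it actually bites: (i) if some set of at least $3$ output bits has a constant sum, a direct test already gives $\TVD\ge 5/8$; (ii) otherwise one greedily selects $\Omega(n/d^{5})$ input bits with pairwise disjoint $2$-neighborhoods, none touching the last output bit, each of which has influence at least $2^{-d^{2}}$ on the total Hamming weight --- this is the fan-out step, since flipping one input changes at most $d$ outputs, each depending on at most $d$ inputs, so the weight difference is a not-identically-zero function of at most $d^{2}$ bits; (iii) because the selected inputs control disjoint blocks, a Chernoff bound shows that with probability $1-O(1/\log n)$ over the remaining inputs, at least $n^{1-O(\epsilon^{2})}$ blocks have unfixed weight; (iv) the usual anti-concentration modulo $p$ (\Cref{cor:prob_sum_in_set} / \Cref{lem:MMP_of_sum}) then makes the last bit wrong with probability $1/2-O(1/\log n)$. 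In short, the statement about ``many unfixed blocks'' must be proved with high probability over the actual input distribution, not via a cardinality bound on outputs that merely admit a heavily-fixed preimage; your upper-bound-on-$|T_F|$ plan cannot be made to work as stated.
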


\begin{proof}
    We denote the inputs to $f$ as variables $u_i\in \zo$ for each $i\in [\ell]$, the first $n$ output variables as $z_j \in \zo$ for each $j\in [n]$, and let $b \in \zo$ be the final output bit (the one that is supposed to be $\majmod{p}(z)$. Throughout, we will refer to $\{z_i\}\cup \{x_i\} \cup \{b\}$ as both variable and nodes of the input-output dependency graph for $f$. We use $N^k(V)$ to denote the set of nodes within distance $k$ from a node in $V$. We will refer to $N^k(V)$ as the $k$-neighborhood of $V$, and denote $N(\cdot) = N^1(\cdot)$. 

    For each function $g: \zo^m \to \RR$ and input index $i\in [m]$, we define the \emph{influence} of the $i$th input variable as $\Inf_i(g) := \Pr_{x\in \zo^m}\sbra{g(x) \neq g(x^i)}$, where we use $x^i$ to denote the bitstring $x$ with the $i$th bit flipped ($x^i = x_1 \dots x_{i-1} (1 - x_i) x_{i+1} \dots x_m$) 

    We begin by proving the \nameCref{thm:mm-lb-bounded-fanout} for a very special case.
    \begin{claim}\label{claim:subset-hw-unfixed}
        Suppose that there exists a subset of the first $n$ output variables $S\subseteq [n]$ of size $|S|\geq 3$ such that $\sum_{j\in S} z_j$ is constant (does not depend on inputs $u_i$)). Then for any Boolean function $b' : \zon \to \zo$, $\TVD\pbra{f(U), \pbra{X, b'(X)}} \geq \frac{5}{8}$
    \end{claim}
    \begin{proof}
        Without loss of generality, suppose $|S| = 3$ (if not, remove elements until it is). Let $c = \sum_{j\in S} z_i$, and let $T\subseteq \zon$ be the subset of strings such that the hamming weight on $S$ is consistent with $z$: $T := \cbra{z'\in \zon : \sum_{j \in S} z'_i = c}$. By assumption, $\Pr[z \in T] = 1$. On the other hand, for $X\sim \unif(\zon)$, 
        \[
        \Pr[X \in T] = \frac{|T|}{2^n} = \frac{2^{n-|S|} \cdot\binom{|S|}{c}}{2^n} = 2^{-|S|}\binom{|S|}{c} = \frac{1}{8} \cdot \binom{3}{c} \leq \frac{3}{8}.
        \] 
        The last inequality used the fact that for any $c\in \{0, 1, 2, 3\}$, $\binom{3}{c} \leq 3$. Using statistical test with set $T \times \zo$ to bound the total variation distance between $f(U)$ and $(X, b'(X))$ we see that
        \begin{align}
            \TVD(f(U), (X, b'(X))\geq \big\vert\Pr[z \in T] - \Pr[X \in T]\big\vert =  1 - \frac{3}{8} = \frac{5}{8}.
        \end{align}
    \end{proof}
    Therefore, we assume that the sum of any $3$ output variables $\{z_i\}$ is not fixed. In this case, we can show the following.
    \begin{claim}\label{claim:high_inf_set}
        If all sums of $k$ or more of the variables $\{z_i\}$ are not constant, then there exists a subset of inputs $W\subseteq \{u_i\}_{i\in [\ell]}$ of size $|W| \geq \Omega\pbra{(n-k)/\localityf^5}$ such that both of the following are true:
        \begin{enumerate}
            \item \label{item:apx-claim:high-influence} (High influence)\ $\Inf_i[f] \geq \frac{1}{2^{\localityf^2}}$ for each $u_i\in W$
            \item \label{item:apx-claim:indep} (Independent)  $N^2(u_i) \cap N^2(u_j) = \emptyset$ for each $u_i\neq u_j \in W$.
            \item \label{item:apx-claim:no-aff-b}(No affect on $b$) $b\notin N(W)$
        \end{enumerate}
    \end{claim}
    \begin{proof}
        We will construct $W \subseteq \{u_i\}_{i\in [\ell]}$, the set of high-influence inputs, iteratively. We initialize sets $W = \emptyset$ and $F = [n]\setminus N^2(b)$. The set $W$ corresponds to our ``chosen inputs'' and the set $F$ our ``free outputs.'' We will iteratively update these sets, maintaining the following invariant:
        \begin{align}
            N^2(W) \cap N^3\pbra{F} = \emptyset \label{eq:mm-lb-fanout-invar}.
        \end{align} 
        Where $W^\complement = [\ell]\setminus W$. For string $s\in \zo^m$, we denote the Hamming weight as $|s| = \sum_{i=1}^m s_i$, and for each $S\subseteq[m]$, the Hamming weight over $S$ as $|s|_S = \sum_{i\in S} s_i$. We further extend this notation so that $|f| : \zo^\ell \to [n]$ is the function mapping $u\in \zo^\ell$ to  $|f(u)|$, and  for $S\subseteq [n]$, $|f|_S : \zo^\ell \to [|S|]$ is the function mapping $u\in \zo^\ell$ to  $|f(u)|_S$
        
        Our algorithm proceeds as follows:
        
        \begin{center}
        \fbox{%
            \parbox{0.8\textwidth}{%
                While $|F| \geq k$:
                \begin{itemize}
                    \item Since $|F| \geq k$, $|f(u)|_F = \sum_{z_j\in F} z_j$ is not fixed. Therefore, there exists some input variable $u_i$ with $\Inf_{i}\pbra{|f |_F} >0$. Our invariant in \Cref{eq:mm-lb-fanout-invar} guarantees that $u_i$ is not already in $W$ and that $N(u_i) \subseteq F$. Therefore  $\Inf_i(|f|) >0$.
                    \item Update $W \gets W \cup \{x_j\}$ and $F \gets F \setminus N^5(u_i)$, ensuring that \Cref{eq:mm-lb-fanout-invar} is still satisfied.
                \end{itemize}
                Return $W$.
            }%
        }
        \end{center}
        
        We now analyze the algorithm. First note that the final $W$ has size $|W| \geq \frac{n-\localityf^2 -k}{\localityf^5} = \Omega(\frac{n-k}{\localityf^5})$ since it grows by one in each iteration, we start with $|F| \geq n - \localityf^2$, end when $|F|<k$ and decrease $F$ by at most $\localityf^5$ in each iteration.

        The algorithm guarantees that $\Inf_i(|f|) >0$ for each $u_i \in W$. Using the fact that $f$ is $\localityf$-local we will see that the influence of any input $u_i \in W$ on $|f|$ is actually at least $\frac{1}{2^{\localityf^2}}$. To this end, we observe that for input variable $u_i$ and output variable $z_j$, if $u_i \notin N(z_j)$, then $f(u)_j$ is independent of $u_i$, so $|f(u)| - |f(u^i)| = 0$. Therefore
        \begin{align}
            |f(u)| - |f(u^i)| = \sum_{j=1}^n f(u)_j - f(u^i)_j = \sum_{j\in N(u_i)} f(u)_j - f(u^i)_j
        \end{align}
        is a function of a subset of the variables in $N^2(u_i)$, of which there are at most $\localityf^2$. We denote these variables as $\wt{u}$. Therefore, we see that the influence of each variable is a factor of $\frac{1}{2^{\localityf^2}}$
        \begin{align}
            \Inf_i(|f|) = \Pr_{\wt{u}}\sbra{|f(u)| - |f(u^i)| \neq 0} \geq 2^{-\localityf^2} K && \text{for some } K\in \ZZ^+
        \end{align}
        Since our algorithm guarantees that $\Inf_i(|f|) > 0$ for each $u_i \in W$, we have that actually $\Inf_i(|f|) \geq \frac{1}{2^{\localityf^2}}$ for each $u_i \in W$. This proves \Cref{item:apx-claim:high-influence} in the \nameCref{item:apx-claim:high-influence}.
        
        Next, we prove \Cref{item:apx-claim:indep}: Assume for the sake of contradiction that there exist $u_i\neq u_j \in W$ such that $N^2(u_i) \cap N^2(u_j) \neq \emptyset$. Suppose without loss of generality that $u_i$ was added to $W$ first. Now consider the iteration that $u_j$ is added to $W$, so $u_i \in W$. Since $u_j$ was the chosen input this round, it must be that $u_j \in N(F)$. But using our invariant $N^2(W) \cap N^3(F) = \emptyset$ we reach the following contradiction.
        \begin{align}
            \emptyset \neq N^2(u_i) \cap N^2(u_j) \subseteq N^2(W) \cap N^3(F) = \emptyset.
        \end{align}
        Therefore, it must be the case that \Cref{item:apx-claim:indep} is satisfied.

        Finally, we observe that \Cref{item:apx-claim:no-aff-b} is satisfied since we initialize $F = [n] \setminus N^2(b)$ and never add variables to $F$. Specifically, note that we build $W$ by looking at input variables that influence the variables in $F$ which never contains any variables in $N^2(b)$, so $F$ is not influenced by $N(b)$ and so the final $W$ will not intersect $N(b)$.
    \end{proof}
    Combining \Cref{claim:subset-hw-unfixed} and \Cref{claim:high_inf_set} with $k=3$, we have a set $W$ of input variables of size $s:= |W| \geq \Omega(n/\localityf^5)$ satisfying \Cref{item:apx-claim:high-influence,item:apx-claim:indep,item:apx-claim:no-aff-b}. We partition the input variables $u = (x, y)$ Where $x\in \zo^s$ are the variables in $W$ and $y\in \zo^{n-s}$ are the variables in $W^\complement$. As done throughout this paper, \Cref{item:apx-claim:indep} and permutting the outputs of $f$, allows us to express $f(u) = f(x,y)$ in blocks $g_i(x_i, y)$ for each $i\in [s]$, $h(y)$ 
    \begin{align}
        f(x,y) = g_1(x_1, y) \circ g_2(x_2, y) \circ \dots \circ g_s(x_1, y) \circ h(y).
    \end{align}
    Let $\mu$ be the average influence of $x_i \in W$. So $\mu \geq \frac{s}{2^{\localityf^2}}$. For each $y\in \zo^{\ell - s}$, we refer to $y$ as \emph{bad} if $|g_i|$ is $y$-fixed for greater than $\frac{\mu}{2}$ of the $i\in [s]$. Otherwise, $y$ is good. Suppose we sample and fix the input variables $y\sim\unif{\zo^{\ell - s}}$. We proceed in two steps. First, we show that $y$ is good with high probability; secondly, we show that conditioned on $y$ being good, then with high probability the bit $b$ output by the circuit is wrong ($b \neq \majmod{p}(z)$) with high probability.  
    \begin{claim}
        $\Pr_y[y \text{ is good}] \geq 1 - O(1/\log n)$
    \end{claim}
    \begin{proof}
        
    \end{proof}
    
    \Cref{item:apx-claim:indep} guarantees that each of the blocks $g_i$ are independent -- in addition to depending on disjoint variables of $x$ they also depend on disjoint variables of $y$.
    \Cref{item:apx-claim:no-aff-b} ensures that the final output variable $b$ is in the $h(y)$ block, and \Cref{item:apx-claim:high-influence} ensures that if we choose a $y$ uniformly at random, then 
    \begin{align}
        \Pr_y[|g_i(x_i,y)| \text{ is not fixed}] = \Inf_{x_i}(|g_i|) \geq \frac{1}{2^{\localityf^2}}.
    \end{align} 
    Therefore, over the random choice of $y$, the expected number of blocks with unfixed Hamming weight is $\mu \geq \frac{s}{2^{\localityf^2}}= \geq \frac{n}{\localityf^5 2^{\localityf^2}}$.
    Since each of the $g_i$ blocks are independent, it follows the Chernoff bound that
    \begin{align}
        \Pr_y[y \text{ is bad}] \leq \exp(\frac{-\mu}{8}) \leq 2^{-\Omega\pbra{\frac{n}{\localityf^5 2^{\localityf^2}}}} \leq O(1/\log n).
    \end{align}
    \begin{claim}
        $\Pr[b \neq \majmod{p}(z) \oplus \parity(z)\  \vert \ y \text{ good}] \geq \frac{1}{2} - O(1/\log n)$
    \end{claim}
    \begin{proof}
        Conditioned on the event that $y$ is good, at least $t = \frac{\mu}{2} = \frac{n}{\localityf^5 2^{\localityf^2+1}} = n^{1- O(\epsilon^2)}$ blocks $g_i$ have unfixed hamming weight. That is, over the randomness of $x_i$, $|g_i(x_i, y)|$ is a random variable that takes on different integer values for $t$ of the $i\in [s]$. Since the size of each block is at most $\localityf$, these integers are between $0$ and $\localityf = \epsilon \log n$. Therefore, we can write the total hamming weight of the first $n$ outputs as  $|z| = a_0 + \sum_{i=1}^t a_i x_i$ for $a_1, \dots, a_t$ positive integers that are at most $\localityf \leq p/\log n$. If we set $\epsilon$ to be sufficiently small so that $t = n^{1 - O(\epsilon^2)}$ is on the order of $\Omega(p^3) = \Omega(n^{3c})$ (recall that $c\in (0,1)$, then we can apply \Cref{{lem:MMP_of_sum}}, which gives us that $\Pr[\MM{p}(|z|) \oplus \parity(|z|) \neq b] \geq \frac{1}{2} - O(1/\log n)$, completing the proof.
    \end{proof}
    Finally, we combine the last two Claims to conclude that there exists an $\epsilon\in (0,1)$ such that setting $\localityf = \epsilon \log^{1/2}(n)$, we get that
    \begin{align}
        \Pr[b \neq \majmod{p}(z) \oplus \parity(z)] 
        &= \Pr[y \text{ is good}]\cdot  \Pr[b \neq \majmod{p}(z) \oplus \parity(z)\  \vert \ y \text{ good}]\\
        &\geq \pbra{1 - O(1/\log n)}\pbra{\frac{1}{2} - O(1/\log n)}\\
        &= \frac{1}{2} - O(1/\log(n)).
    \end{align}
\end{proof}

\section{Lower Bound Against Classical Circuits with Biased Noise}
\label{sec:biased-noise}
As discussed in \Cref{sec:U_m_theta_compiling}, the quantum-classical separations proved in this paper involve quantum circuits with $m$ qubit unitary operations, denoted $\Urot{m}{\theta}$. In \Cref{sec:U_m_theta_compiling} we showed these operations could, in principle, be written as a constant length product of single qubit unitaries and CNOT gates. However, the single qubit unitaries involved in this decomposition were arbitrary; in particular they involve rotations by angles that could scale with the parameter $\theta$ which, in turn, scaled inverse polynomially with the problem size $n$. This should be compared with the classical lower bound which, at the moment, only holds against classical circuits given access to uniform random bits. Is the quantum-classical advantage considered in this paper only a consequence of the fact the quantum circuit can perform arbitrarily small rotations, corresponding to biased sources of randomness? In this section we show the answer to this question is NO, by extending the classical lower bound proved in the main paper to the case where the $\NCZ$ circuit has access to $kn + n^\delta$ random bits with each bit drawn from a Bernoulli distribution with entropy $1/k$. 

As a warmup, we recall the relationship between the bias of a Bernoulli random variable and its entropy. 

\begin{claim}
    Let $B_b$ be Bernoulli distributed random variable with bias $b$, meaning that $B_b$ takes value $0$ with probability $1/2 + b$ and value $1$ with probability $1/2 - b$. Then we have 
    \begin{align}
        1 - 4b^2 \leq H(B_b) \leq (1-4b^2)^{1/\ln(4)}
    \end{align}
\end{claim}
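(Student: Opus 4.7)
The plan is to reduce both inequalities to statements about the Taylor series of $1 - H(B_b)$ around $b = 0$. Expanding $(1+2b)\ln(1+2b) + (1-2b)\ln(1-2b)$ as power series and adding (the odd powers cancel) gives the classical identity
\[
1 - H(B_b) \;=\; \frac{1}{\ln 2}\sum_{k \geq 1}\frac{(2b)^{2k}}{2k(2k-1)},
\]
in bits. Combined with the evaluation $\sum_{k\geq 1}\tfrac{1}{2k(2k-1)} = \sum_{k\geq 1}\bigl(\tfrac{1}{2k-1} - \tfrac{1}{2k}\bigr) = \ln 2$ (a regrouping of the Leibniz series for $\ln 2$), the lower bound is immediate: since $(2b)^{2k} \leq (2b)^2$ for all $k \geq 1$ and $|2b| \leq 1$, term-by-term comparison gives $1 - H(B_b) \leq \tfrac{(2b)^2}{\ln 2}\sum_k \tfrac{1}{2k(2k-1)} = 4b^2$, with equality only when $2b \in \{0, 1\}$.

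For the upper bound $H(B_b)^{\ln 4} \leq 1-4b^2$, I would take logarithms and work with $\Phi(b) := \ln(1-4b^2) - (\ln 4)\ln H(B_b)$, aiming to show $\Phi \geq 0$ on $[0, 1/2)$. From the series above one checks $\Phi(0) = \Phi'(0) = 0$ and the first nonvanishing Taylor coefficient is positive:
\[
\Phi(b) \;=\; \Bigl(\frac{4}{\ln 2} - \frac{16}{3}\Bigr) b^4 \;+\; O(b^6) \;\approx\; 0.44\, b^4,
\]
so $\Phi > 0$ in a neighborhood of $0$. At the opposite endpoint, setting $\epsilon = 1/2 - b$ and using $H(B_b) \sim \epsilon \log_2(1/\epsilon)$ together with $1-4b^2 \sim 4\epsilon$, a direct asymptotic computation yields $\Phi(b) \sim (\ln 4 - 1)\ln(1/\epsilon) \to +\infty$. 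To conclude on the interior I would show $\Phi'(b) \geq 0$ throughout $(0, 1/2)$, which rearranges to the elementary inequality
\[
(1-4b^2)\,\ln\frac{1+2b}{1-2b} \;\geq\; 4b\, H(B_b),
\]
and then integrate from $\Phi(0) = 0$. Both sides admit explicit power series in $u = 2b$, so this can be attacked by combining a series comparison (the dominant $u^3$ term on the left exceeds its counterpart on the right) with an interval estimate near $b = 0$ supplied by the quartic lower bound above.

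The main obstacle is the upper bound: the inequality is quantitatively tight across the entire range $b \in [0, 1/2)$, and none of the standard weaker bounds suffice. Pinsker's inequality and the $\chi^2$ bound give only $1 - H(B_b) \geq 2b^2/\ln 2$, which matches the target at leading order around $b=0$ but not globally; the naive concavity estimate $(1 - D)^{\ln 4} \leq e^{-D\ln 4}$ loses a constant that turns out to be fatal in the middle of the range. Moreover, the coefficient-by-coefficient comparison in the reduced derivative inequality only favors the left-hand side for $k = 1$ and reverses for all $k \geq 2$, so the cross-order cancellation must be handled carefully rather than termwise. A conceptually cleaner proof via a sharp refinement of Pinsker's inequality would be welcome, but does not appear to be available off the shelf, so the argument genuinely relies on the structure of the binary-entropy series together with the specific exponent $1/\ln 4$.
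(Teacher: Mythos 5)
Your route is genuinely different from the paper's: the paper disposes of this claim in one line by citing standard bounds on the binary entropy function (the lower bound $h(p)\ge 4p(1-p)$ is elementary and the upper bound is Tops{\o}e's inequality $h(p)\le (4p(1-p))^{1/\ln 4}$, applied at $p=1/2-b$), whereas you give a self-contained argument from the power series of $1-H(B_b)$. Your lower-bound proof is complete and correct (the series identity, the evaluation $\sum_{k\ge1}\frac{1}{2k(2k-1)}=\ln 2$, and the termwise comparison all check out), your quartic Taylor coefficient $\frac{4}{\ln 2}-\frac{16}{3}\approx 0.44$ and the endpoint asymptotics of $\Phi$ are right, and the reduction of the upper bound to $\Phi'\ge 0$, i.e.
\[
(1-4b^2)\,\ln\frac{1+2b}{1-2b}\;\ge\;4b\,H(B_b)\qquad\text{for }b\in(0,1/2),
\]
is correct. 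What each approach buys: the paper's citation is legitimate because these really are textbook inequalities; your version removes the external dependence at the cost of the analysis above.

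The one soft spot is the endgame for this derivative inequality, which you leave at the level of ``series comparison plus an interval estimate near $b=0$''; as stated that is not yet a proof, since, as you yourself observe, the termwise comparison fails at every order beyond $u^3$. But your own expansion closes it cleanly. With $u=2b$, the difference of the two sides is $\sum_{k\ge1}c_k\,u^{2k+1}$ with
\[
c_k=\frac{1}{\ln 2\;k(2k-1)}-\frac{4}{(2k-1)(2k+1)},
\]
so $c_1>0$ and $c_k<0$ for $k\ge 2$. The two telescoping evaluations $\sum_{k\ge1}\frac{1}{k(2k-1)}=2\ln 2$ and $\sum_{k\ge1}\frac{4}{(2k-1)(2k+1)}=2$ give $\sum_{k\ge1}c_k=0$ (equivalently, both sides of the inequality vanish as $u\to 1$). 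Hence the difference equals $\sum_{k\ge2}(-c_k)\bigl(u^{3}-u^{2k+1}\bigr)\ge 0$ on $u\in[0,1]$, which yields $\Phi'\ge 0$ on $(0,1/2)$ and, together with $\Phi(0)=0$, finishes your upper bound with no case split or numerical interval needed.
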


\begin{proof}
This result follows immediately from standard bounds on the binary entropy function. 
\end{proof}

We also let $B_b^{\otimes \ell}$ be $\ell$ i.i.d. Bernoulli distributed random variables with bias $b$. With this notation in hand we are ready to state the main theorem of this section. 
\begin{thm}
\label{thm:lower_bound_biased_noise}
        Let $B_b$ be a Bernoulli distributed random variable with bias $b$ and entropy $H(B_b) = 1/k$. For each $\delta < 1$, there exists an $\epsilon >0$ such that for all sufficiently large even integer $N$ and prime number $p = \Theta(N^\alpha)$ for $\alpha \in (\delta/3, 1/3)$: Let $f:\{0,1\}^\ell \to \{0,1\}^{N+1}$ be an $(\epsilon \log N)^{1/2}$-local function, with $\ell \leq kN + N^\delta$. Then $\TVD(f(B_b^{\otimes \ell}), (Z, \pmmajmod{p}(Z))) \geq 1/2 - O(1/\log N)$.
\end{thm}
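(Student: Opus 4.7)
My plan is to adapt the proof of \Cref{thm:classical_LB_tree} with two modifications targeted at the biased input distribution $B_b^{\otimes \ell}$, whose total entropy is $\ell/k \leq N + N^\delta/k$. The first modification establishes a biased analogue of \Cref{fact:puniform}: for i.i.d.\ $r_i \sim B_b$ and $a_1, \ldots, a_t$ nonzero mod $p$, a direct Fourier computation over $\ZZ/p\ZZ$ gives $|\widehat{B_b}(r a_i)|^2 = 1 - (1/2 - 2b^2)(1 - \cos(2\pi r a_i/p))$, and since $H(B_b) = 1/k$ forces $b$ to be bounded away from $\pm 1/2$, the coefficient $1/2 - 2b^2$ is bounded below by a constant $c_k > 0$. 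Standard Parseval/Fourier manipulation then yields $\TVD(\sum_i a_i r_i \bmod p, \, U_p) \leq \sqrt{p}\,e^{-\Omega_k(t/p^2)}$. This substitutes verbatim into \Cref{cor:prob_sum_in_set} and \Cref{lem:MMP_of_sum}, preserving their conclusions up to constants depending on $k$.

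The second modification changes the definition of $T_F$ from the proof of \Cref{claim:pmmmp_fail_stat_test} to include a typicality requirement, since the raw counting bound $|T_F| \leq 2^{\ell - N_F}$ now exceeds $2^N$. Let $\mathcal{A}$ be the set of $(w,y) \in \{0,1\}^\ell$ whose Hamming weight lies within $C\sqrt{\ell \log N}$ of $(1/2-b)\ell$, so Hoeffding's inequality gives $\Pr_{B_b^{\otimes \ell}}[\mathcal{A}] \geq 1 - o(1/\log N)$, and set
\[\tilde{T}_F := \{z \in \{0,1\}^{N+1} : \exists\, (w,y) \in \mathcal{A} \text{ with } f(w,y) = z \text{ and at least } N_F \text{ blocks } y\text{-fixed}\}.\]
Each $z \in \tilde{T}_F$ is determined by the pair $(y, w_{UF})$, where $w_{UF}$ is the restriction of $w$ to its $s - F$ unfixed-block coordinates, since the fixed-block bits of $w$ do not affect $f(w,y)$. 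For some $w_F$ to complete $(w,y)$ into $\mathcal{A}$, the weight $|y| + |w_{UF}|$ must lie in a window of length $O(F + \sqrt{\ell \log N}) = \mathrm{poly}(N)$, and Stirling's approximation gives $\binom{\ell - F}{j} \leq 2^{(\ell - F)H(B_b) + O(\log N)}$ for $j$ in this window. Therefore
\[|\tilde{T}_F| \leq 2^{(\ell - N_F)/k + O(\log N)} \leq 2^{N + N^\delta/k - 2N^{3\alpha}/k + O(\log N)} \leq 2^{N - N^{3\alpha}/(2k)}\]
for $N$ large, using $\delta < 3\alpha$, and so $\Pr[D \in \tilde{T}_F] \leq 1/(3N)$.

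With these changes in hand, the statistical test $T' := T_M \cup T_0 \cup \tilde{T}_F \cup T_S$ satisfies $\Pr[D \in T'] \leq 1/N$ by union bound, since the $T_M$, $T_0$, $T_S$ bounds from \Cref{claim:pmmmp_fail_stat_test} depend only on $D$ and carry over unchanged. For the $f$-side I will adapt \Cref{claim:pmmajmod_statistical_test_pass}: conditioning on $(w,y) \in \mathcal{A}$ costs at most $o(1/\log N)$, and the original three-case analysis on fixed $y$ then places $f(w,y)$ in $\tilde{T}_F$, $T_M$, $T_0$, or (via the biased Fourier mixing from Modification~1) $T_S$ with probability $\geq 1/2 - O(1/\log N)$. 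Subtracting the two bounds yields $\TVD(f(B_b^{\otimes \ell}), D) \geq 1/2 - O(1/\log N)$. The main obstacle is the quantitative accounting in Modification~2, where the slack $N^{3\alpha}/k - N^\delta/k$ must dominate both the $O(\log N)$ Stirling correction and the $\log(\mathrm{poly}(N))$ range-length factor; this is afforded by $3\alpha > \delta$ but requires care. A secondary point is that the proof of \Cref{lem:MMP_of_sum} conditions on subsets of the input coordinates, but a product of biased bits conditioned on any fixed coordinate set remains a product of biased bits on the rest, so the argument goes through line by line.
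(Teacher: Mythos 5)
Your Modification 1 is essentially the paper's own route (its \Cref{fact:pbernoulli_bits} and \Cref{cor:prob_sum_in_set_biased} are exactly such a biased analogue of \Cref{fact:puniform}), and the decision to touch only the $T_F$ part of the statistical test also matches the paper. The genuine gap is in Modification 2, i.e.\ in the claim $\Pr[D\in\tilde T_F]\le 1/(3N)$ for the target distribution $D=(Z,\pmmajmod{p}(Z))$. The Stirling step ``$\binom{\ell-F}{j}\le 2^{(\ell-F)H(B_b)+O(\log N)}$ for $j$ in this window'' is false for biased bits: any event of $B_b^{\otimes\ell}$-probability $1-o(1/\log N)$ must allow weight deviations of order $\Omega(\sqrt{\ell})$, and since the derivative of the binary entropy at $1/2-b$ is the nonzero constant $\log_2\tfrac{1/2+b}{1/2-b}$, binomials at the edge of your window exceed $2^{(\ell-F)H(B_b)}$ by a factor $2^{\Theta(\sqrt{\ell\log N})}$ (for uniform bits the window costs only $O(\log N)$ because the binomial is maximized at the center; for biased bits it is not, and bias is the whole point here). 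Moreover the per-fixed-block saving this count actually yields is $-\log_2(1/2+b)$ rather than $H(B_b)=1/k$, which is smaller by roughly a $\log k$ factor when $k$ is large. So your final accounting needs $N_F/k=\Theta(N^{3\alpha}/k)$ to dominate $\Theta(\sqrt{\ell\log N})=\Theta(\sqrt{kN\log N})$, which fails whenever $3\alpha\le 1/2$ --- a regime the theorem allows (e.g.\ $\delta=0.1$, $\alpha=0.05$). Hence $|\tilde T_F|\le 2^{N-N^{3\alpha}/(2k)}$ does not follow; this is precisely the ``main obstacle'' you flagged, and as quantified it does not go through, nor can it be fixed by shrinking the window, since any typical-set argument carries the $\sqrt{\ell}$ fluctuation overhead.

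The paper avoids typical-set counting altogether. It replaces $T_F$ by the set of ``likely $Y$-fixed'' outputs --- those whose preimages with at least $N_F$ $y$-fixed blocks carry $B_b^{\otimes\ell}$-mass at least $2^{-N}/\log N$ --- bounds $\Pr[D\in T_F']$ by a coupling-plus-Shannon-entropy argument (the entropy of the product source is exactly $\ell/k$ with no concentration slack, and fixing $N_F$ blocks removes $N_F/k$ bits up to a $1/\log N$ conditioning loss), and adds \Cref{claim:inT_FnotT_Fp} so that the $f$-side analysis survives the new definition. A second, smaller gap: you treat $k$ as constant (``constant $c_k$'', ``constants depending on $k$''), but the theorem does not, and both your Fourier estimate and your counting degrade as $b\to 1/2$. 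The paper handles this with a case split via \Cref{lem:high_bias_bound}: if the per-bit entropy is below roughly $1/(2d)$ the first $N$ output bits cannot even be close to uniform, and otherwise $1-4b^2\ge (2d)^{-\ln 4}$ and $k\le O(d)$, which is what makes both the biased Fourier bound and the entropy accounting quantitatively sufficient. To salvage your route you would need to replace the weight-window count by an argument without the $\sqrt{\ell}$ slack (in effect, the paper's entropy argument) and add such a high-bias case.
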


First, we show that when the random variables provided as input to a bounded depth circuit are too biased the circuit cannot sample from any distribution of the form $(X, g(x))$ with uniform $X$. 

\begin{lem}
\label{lem:high_bias_bound}
    Let $f : \{0,1\}^\ell \rightarrow \{0,1\}^{N+1}$ be a $\localityf$ local function. Let $B_b^{\otimes \ell}$ be $\ell$ Bernoulli distributed random variables with bias $b$. Assume $b^2 \geq 1/4 \left(1 - (1/2\localityf)^{\ln(4)}\right)$ Then, for any function $g : \{0,1\}^N \rightarrow \{0,1\}$ we have 
    \begin{align}
        \TVD(f(B_b^{\otimes \ell}), (X,g(x))) \geq \frac{1}{2} - \frac{1}{N}
    \end{align}
\end{lem}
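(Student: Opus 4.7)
} The plan is to show that the given bias hypothesis forces each output bit of $f$ to be individually very biased, and then to use a ``count agreements'' statistical test to separate $f(B_b^{\otimes \ell})$ from any distribution of the form $(X, g(X))$ with uniform $X$.

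First I would translate the bias hypothesis into an entropy statement: using the upper bound $H(B_b) \leq (1 - 4b^2)^{1/\ln 4}$ recalled just before the lemma, the hypothesis $b^2 \geq \tfrac{1}{4}\bigl(1 - (1/2d)^{\ln 4}\bigr)$ immediately gives $H(B_b) \leq 1/(2d)$. Since $f$ is $d$-local, every output bit $z_i$ is a function of at most $d$ of the $\ell$ input bits, so subadditivity of Shannon entropy yields $H(z_i) \leq d \cdot H(B_b) \leq 1/2$. Because $z_i$ is Boolean, there exists $c_i \in \{0,1\}$ with $\Pr[z_i = c_i] \geq p^\ast$, where $p^\ast \in (1/2, 1)$ is the unique solution of $h(p^\ast) = 1/2$ (numerically $p^\ast > 0.88$, so $1 - p^\ast < 0.12$).

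Next I would define the statistical test
\[
T := \bigl\{(z_1, \ldots, z_N, b) \in \{0,1\}^{N+1} : \bigl|\{i \in [N] : z_i = c_i\}\bigr| \geq 0.7\, N \bigr\}.
\]
For $f(B_b^{\otimes \ell})$, linearity of expectation gives $\mathbb{E}\bigl[|\{i : z_i \neq c_i\}|\bigr] \leq (1-p^\ast) N < 0.12\, N$, so by Markov's inequality $\Pr\bigl[|\{i : z_i \neq c_i\}| \geq 0.3\, N\bigr] < 0.4$, and therefore $\Pr[f(B_b^{\otimes \ell}) \in T] > 0.6$. Crucially, this step uses only marginal biases and does \emph{not} require independence of the output bits. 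For $(X, g(X))$ with uniform $X$, the indicators $\mathbb{1}[X_i = c_i]$ are i.i.d.\ Bernoulli$(1/2)$, so Hoeffding's inequality gives $\Pr[(X,g(X)) \in T] \leq \exp(-2 N (0.2)^2) = \exp(-0.08\, N) \leq 1/N$ for sufficiently large $N$. Combining via the variational characterization of total variation distance then yields
\[
\TVD\bigl(f(B_b^{\otimes \ell}), (X, g(X))\bigr) \geq 0.6 - 1/N \geq \tfrac{1}{2} - 1/N,
\]
as required.

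The main obstacle is conceptual rather than computational: overlapping neighborhoods in the input--output dependency graph of $f$ mean that the output bits are generally \emph{not} independent, so one cannot simply invoke Chernoff to concentrate the count $\sum_i \mathbb{1}[z_i = c_i]$ under $f(B_b^{\otimes \ell})$. The plan sidesteps this by using only first moments and Markov's inequality on the $f$ side, which is sufficient because the per-bit bias $p^\ast$ is bounded away from $1/2$ by a universal constant. All the concentration needed sits on the uniform side, where independence is automatic.
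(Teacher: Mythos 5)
Your proposal is correct, but it proves the lemma by a genuinely different route than the paper. You localize the entropy hypothesis to each output bit ($H(B_b)\leq 1/(2d)$, hence $H(z_i)\leq d\cdot H(B_b)\leq 1/2$, hence each $z_i$ has a majority value $c_i$ taken with probability at least $p^\ast\approx 0.89$) and then build an explicit statistical test counting agreements with the $c_i$'s, using only first moments and Markov on the $f(B_b^{\otimes\ell})$ side — which correctly sidesteps the dependence among output bits — and Hoeffding on the uniform side. The paper instead argues globally: the first $N$ output bits are a function of at most $dN$ inputs, so their joint entropy is at most $N/2$, while uniform $X$ has entropy $N$; it then converts this entropy deficit into a total-variation lower bound via an optimal coupling and a Fano-type inequality, obtaining $\TVD \geq 1-2/N$ (stated as $\geq 1/2-1/N$) with no test construction and with validity for every $N$. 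Your approach is more elementary and fits the ``statistical test'' style used elsewhere in the paper, but it yields a weaker constant ($0.6$ minus an exponentially small term) and, as you note, needs $N$ sufficiently large so that $e^{-0.08N}\leq 0.1+1/N$; since the surrounding theorem (\Cref{thm:lower_bound_biased_noise}) is stated for sufficiently large $N$, this is immaterial, but the paper's entropy--coupling argument gives the claimed bound uniformly and with a stronger conclusion. Both arguments use the same entry point, namely the bound $H(B_b)\leq (1-4b^2)^{1/\ln 4}\leq 1/(2d)$ from the claim preceding the lemma.
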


\begin{proof}
We consider the TVD between $X$ and the first $N$ output bits of $f(B_b^{\otimes \ell})$, which we denote by $Y$. By the assumption that $f$ is $\localityf$ local, we know that $Y$ is determined by at most $\localityf N$ bits of $B_b^{\otimes \ell}$. It follows that 
\begin{align}
    H(Y) &\leq (\localityf N)H(B_b) \leq (\localityf N) (1/2\localityf) = N/2. 
\end{align}
On the other hand, $X$ is uniform, so $H(X) = N$. We see $H(X) - H(Y) \geq N/2$. 

To convert this difference in entropies to a difference in TVD consider an joint distribution $(\tilde{X},\tilde{Y})$ where the marginal distributions of $\tilde{X}$ and and $\tilde{Y}$ match the distributions of $X$ and $Y$, but are sampled from an optimal coupling between $X$ and $Y$. Then we have
\begin{align}
    P(\tilde{X} \neq \tilde{Y}) = \TVD(X,Y) =: \Delta
\end{align}
where we have introduced the shorthand $\Delta$ for this distance. 
But we can then write the entropy of $\tilde{X}$ as
\begin{align}
H(\tilde{X}) \leq (1 - \Delta) H(\tilde{X} | \tilde{X} = \tilde{Y}) + \Delta H(\tilde{X}| \tilde{X} \neq \tilde{Y}) + H(\Delta)
\end{align}
where $H(\Delta)$ is the binary entropy function. But we can bound this as 
\begin{align}
(1 - \Delta) H(\tilde{X} | \tilde{X} = \tilde{Y}) +  \Delta H(\tilde{X}| \tilde{X} \neq \tilde{Y}) + H(\Delta) 
&\leq (1 - \Delta) H(\tilde{Y}) + \Delta N + H(\Delta) \\
&\leq (1 - \Delta) (N/2) + \Delta N + H(\Delta) \\
&= N (1 +  \Delta)/2 + 1
\end{align}
Since we also have $H(\tilde{X}) = H(X) = N$, we conclude
\begin{align}
    N(1 + \Delta)/2 + 1 &\geq N \\
    \implies \Delta &\geq 1/2 - 1/N 
\end{align}
as claimed. 
\end{proof}

We are now ready to begin the proof of \Cref{thm:lower_bound_biased_noise}. 
\begin{proof}[Proof (\Cref{thm:lower_bound_biased_noise}).]
First note that by symmetry it suffices to prove the result in the case where $b > 0$ and let $\localityf = (\epsilon \log N)^{1/2}$ be the locality of the function $f$. Then, by \Cref{lem:high_bias_bound} the proof of \Cref{thm:lower_bound_biased_noise} is immediate when $b \geq 1/2 (1 - (1/2\localityf)^{\ln(4)})^{1/2}$. So it just remains to prove the result when $b \leq 1/2 (1 - (1/2\localityf)^{\ln(4)})^{1/2}$. The proof in this case largely mirrors the proof of \Cref{thm:classical_LB_tree},  with some important differences. For ease of writing we illustrate just the differences below. 

To begin: Forest Partitions, Tree Neighborhoods, and Minimal Blocks (\Cref{defn:forest_partition,defn:minimal_block,defn:tree_neighbours}) can be defined identically to the definitions given in \Cref{sec:classical_hardness_main}.  Then, for any $\localityf = (\epsilon \log N)^{1/2}$ local function $f: \{0,1\}^\ell \rightarrow \{0,1\}^{N+1}$ note we can partition in the input $u \in \{0,1\}^\ell$ as $u = (w,y)$ for $w\in \zo^s$ and $y\in \zo^{\ell - s}$ as described in \Cref{lem:partition_inputs_tree}. Recall that this partition has the property that after the input variables $y$ are fixed, each bit in $w_i \in w$ will independently control the values of some forest $F_i$ in the forest partition of the output. Additionally, we will have that $\abs{w} = s \geq \Omega(\ell / \localityf^3) \geq \Omega(N / \localityf^3)$. Now, we need to define the following, somewhat relaxed, notion of an output $z' \in \{0, 1\}^{N+1}$ being $y$-fixed. As before, this is defined relative to a function $f$ and partition of the inputs to that function. First, define the set 
\begin{align}
    \mathcal{X} ( z') = \{ (w,y) : f(w,y) = z' \}. 
\end{align}
For any $(w,y)$ let $I(y)$ be the indicator variable taking value one when $N_F \geq 2N^{3\alpha}$ blocks $g_i(w_i,y)$ are $y$-fixed. Also, for any $(w,y) \in \{0,1\}^\ell$ let $p(w,y)$ be the probability of drawing that input from the distribution $B_b^{\otimes{\ell}}$. Then we say that an output $z'$ is \textit{likely $Y$-fixed} if 
\begin{align} \label{eq:likely_y-fixed}
    2^{N} \left(\sum_{(w,y) \in \mathcal{X} ( z') } p(w,y) I (y)\right) \geq 1 / \log(N).
\end{align}
Finally, we are ready to define the following statistical test.  
\paragraph{Statistical Test:} Let $N_0, N_M := 3N^{3\alpha}$ and $N_F := 2N^{3\alpha}$. The statistical test is $T' := T_M \uplus T_0 \uplus T_F' \uplus T_S$, where 
    \begin{align}
        T_M &:=  \{ z'\in \{0,1\}^{N+1} : \  \leq N_M \text{ blocks } i\in [s] \text{ of } z' \text{ are \emph{minimal}}\} \\
        T_0 &:=  \{ z'\in \{0,1\}^{N+1} : \  z'\at{\forest{i}} = 0^{|\forest{i}|} \text{ for } \leq N_0 \text{ blocks } i\in [s]\} \\
        T_F' &:=  \{z' \in \{0,1\}^{N+1} : z' \text{ is likely $Y$-fixed}  \}\\
        T_S &:=  \{(z, b)\in \{0,1\}^{N}\times \{0,1\} : b \neq \pmmajmod{p}(z) \} \qquad \text{(``incorrect strings'')}
    \end{align}
Note this is almost the same as the statistical test defined in \Cref{sec:classical_hardness_main}, but with the condition 
\begin{align}
     T_F &:=  \{z' \in \{0,1\}^{N+1} : \exists (w,y) : f(w,y) = z' \text{ and } \geq N_F \text{ blocks } g_i(w_i, y) \text{ are } y\text{-fixed} \}
\end{align}
replaced by the condition $T_F'$. It remains to show that a sample drawn from $f(B_b^{\otimes \ell})$ passes this test with probability $1/2 - O(1/\log(N))$, while a sample from $(Z, \pmmajmod{p}(Z))$ passes this test with probability $1/N$. We begin with the second claim. 

\begin{claim}
$\Pr[(Z, \pmmajmod{p}(Z)) \in T' ] \leq 1/N$ for sufficiently large $N$.
\end{claim}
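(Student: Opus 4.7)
The plan is to proceed as in \Cref{claim:pmmmp_fail_stat_test}, decomposing via the union bound
\begin{align*}
    \Pr[D \in T'] \leq \Pr[D \in T_S] + \Pr[D \in T_M] + \Pr[D \in T_0] + \Pr[D \in T_F'],
\end{align*}
where $D = (Z, \pmmajmod{p}(Z))$. The three bounds $\Pr[D \in T_S] = 0$, $\Pr[D \in T_M] \leq 1/(3N)$, and $\Pr[D \in T_0] \leq 1/(3N)$ established in \Cref{claim:pmmmp_fail_stat_test} transfer verbatim, as their proofs depend only on the uniformity of $Z$ and on the combinatorial structure of the forest partition from \Cref{lem:partition_inputs_tree}, which in turn depends only on the input-output dependency graph of $f$ and not on how the inputs to $f$ are distributed. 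So the only new ingredient required is a bound $\Pr[D \in T_F'] \leq 1/(3N)$ tailored to the biased input distribution.

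The first step is to convert the threshold-style definition of $T_F'$ into a probability estimate via averaging. Since $D$ is supported uniformly on the $2^N$ valid strings, $\Pr[D \in T_F'] \leq |T_F'|/2^N$. Summing the defining inequality $2^N \sum_{(w,y) \in \mathcal X(z')} p(w,y) I(y) \geq 1/\log N$ over $z' \in T_F'$ and exchanging the order of summation gives
\begin{align*}
    \frac{|T_F'|}{2^N \log N} \leq \sum_{z' \in T_F'} \sum_{(w,y) \in \mathcal X(z')} p(w,y) I(y) \leq \Pr_{(W,Y) \sim B_b^{\otimes \ell}}\!\left[ I(Y) = 1 \right],
\end{align*}
so it suffices to show that $\Pr_Y[I(Y) = 1] \leq 1/(3 N \log N)$.

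The main obstacle will be this last bound, i.e.\ controlling the probability that at least $N_F = 2 N^{3\alpha}$ of the block indicators $\phi_i(Y) := \mathbf{1}[g_i(0, Y) = g_i(1, Y)]$ evaluate to $1$ when $Y$ is drawn from the biased product distribution. In the unbiased setting the analogous step was trivialized by the counting estimate $|T_F| \leq 2^{\ell - N_F}$, but here $\ell \leq k N + N^\delta$ can be much larger than $N$, so naive counting is too weak. The plan is instead to refine the construction of $W$ in \Cref{lem:partition_inputs_tree} in two ways: (i) force the $Y$-variable neighbourhoods of distinct forests $F_i$ to be pairwise disjoint, making the $\phi_i(Y)$ mutually independent; and (ii) discard any block $i$ for which $\Pr_Y[\phi_i(Y) = 1]$ exceeds a chosen threshold $\tau$, treating its control bit as part of the ``free'' input $y$ rather than of $w$. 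Both modifications can be carried out greedily; the technical crux will be verifying that the refinement only costs an additional $\mathrm{poly}(d)$ factor in $|W|$ (so the unchanged bounds for $T_M$ and $T_0$ still go through), and that a sufficiently small $\tau$ is attainable by exploiting the entropy budget $\ell/k \leq N + N^\delta$ of the input in place of the raw bit count used in the unbiased proof.

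Once the refined partition is in place, $\mathbb{E}_Y\bigl[\sum_i \phi_i(Y)\bigr] \leq \tau |W|$; choosing $\tau$ so that $\tau |W| \leq N_F/2$ and applying a standard Chernoff bound for independent Bernoulli variables yields
\begin{align*}
    \Pr_Y\!\left[\sum_i \phi_i(Y) \geq N_F\right] \leq \exp(-\Omega(N_F)) \leq \frac{1}{3 N \log N}
\end{align*}
for $N$ sufficiently large. Combining this with the three transferred bounds through the union bound then gives $\Pr[D \in T'] \leq 4/(3N) \leq 1/N$, completing the claim.
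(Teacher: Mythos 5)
Your overall skeleton (union bound; transferring the $T_S$, $T_M$, $T_0$ bounds verbatim from \Cref{claim:pmmmp_fail_stat_test} since they depend only on the uniformity of $Z$ and the forest partition) matches the paper. The gap is in your treatment of $T_F'$. Your averaging step correctly gives $|T_F'|\,2^{-N}/\log N \le \Pr_Y[I(Y)=1]$, but the target you reduce to --- $\Pr_Y[I(Y)=1] \le 1/(3N\log N)$ --- is a statement about $f$, not about the target distribution, and it is simply false for legitimate $(\epsilon\log N)^{1/2}$-local functions. For example, if $f$ ignores the bits of $W$ (the dependency graph only records syntactic, not semantic, dependence), then every block is $Y$-fixed for every $y$, so $\Pr_Y[I(Y)=1]=1$, yet the claim must still hold for this $f$. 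Indeed the test was designed so that when $Y$-fixing is \emph{likely}, $f$'s output lands in $T_F'$ --- that is exactly what makes the companion claim (that $f(B_b^{\otimes \ell})$ passes the test with probability close to $1/2$) work --- so any correct proof of the present claim has to tolerate $\Pr_Y[I(Y)=1]$ being large, and an argument organized around showing that probability is tiny cannot succeed.

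The proposed repairs do not close this. Forcing the blocks to have disjoint $y$-neighbourhoods so that the indicators $\phi_i(Y)$ are independent is not available here: the classical circuit has \emph{unbounded} fan-out, so a single $y$-bit can feed every block, and a greedy disjointification can destroy all but a vanishing fraction of $W$ (independence tricks of this kind are only used in \Cref{apx:unlimited-input}, where fan-out is bounded). Likewise, discarding every block with $\Pr_Y[\phi_i(Y)=1]>\tau$ can discard essentially all of $W$ (again, $f$ ignoring $W$), so the asserted $\mathrm{poly}(d)$ loss is unsubstantiated; and since $T_M$, $T_0$, $T_F'$ are defined relative to the partition of \Cref{lem:partition_inputs_tree}, redefining the partition means bounding a different test than the one in the claim (and the one the passing claim uses). (Minor slip: with $\Pr[D\in T_S]=0$ your three bounds of $1/(3N)$ sum to $1/N$; ``$4/(3N)\le 1/N$'' is false.) The paper's route is different in kind: it is an entropy/counting argument rather than a concentration argument. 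One couples $D$ conditioned on $D\in T_F'$ to $(W,Y)\sim B_b^{\otimes\ell}$ so that, conditioned on $D\in T_F'$, with probability at least $\gamma=1/\log N$ one has $D=f(W,Y)$ with at least $N_F$ blocks $Y$-fixed; in that event $f(W,Y)$ is a function of at most $\ell-N_F\le kN+N^\delta-N_F$ input bits each of entropy $1/k$, so $\log|T_F'| = H(D\mid D\in T_F') \le N - \gamma\,(N_F-N^\delta)/k + 2$, and using $k\le d=(\epsilon\log N)^{1/2}$ (valid in the bias regime left after \Cref{lem:high_bias_bound}) this gives $\Pr[D\in T_F'] = |T_F'|/2^N \le 2^{\,2-N^{3\alpha}/(\epsilon^{1/2}\log^{3/2} N)} \le 1/(3N)$ for large $N$. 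That conditional-entropy step is the missing idea in your proposal.
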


\begin{proof}
This claim is identical to \Cref{claim:pmmmp_fail_stat_test}, except that the statistical test $T_F$ has been replaced by $T_F'$. Thus, all that is required to prove the claim is to show that $D := (Z, \pmmajmod{p}(Z)) \in T_F'$ with probability at most $1 / 3N$. To prove this, we consider a coupling between the distributions $D$ and $(W,Y) \sim B_b^{\otimes \ell}$ chosen so we have
\begin{align}\label{eq:pr-conditionedonTF}
    \Pr[D = f(W,Y) \text{ and at least } N_F \text{ blocks $g_i(W_i,Y)$ are $Y$-fixed} | D \in T_F'] \geq 1/\log(N). 
\end{align}
Note that, by definition of $T_F'$, this can be accomplished by first coupling each outcome $z$ for $D$ with outcomes $(w,y)$ for $(W,Y)$ satisfying $f(w,y) = z$ and at least $N_F$ blocks $g_i(W_i,Y)$ are $Y$-fixed (until either there are no more outcomes $(w,y)$ satisfying this condition or the sum of the probabilities of the coupled outcomes $(w,y)$ satisfying this condition would exceed $2^{-N}$) then coupling the remaining outcomes arbitrarily (while preserving probabilities). To walk through why this coupling implies \Cref{eq:pr-conditionedonTF}, note that the definition of ``likely $Y$-fixed'' in \Cref{eq:likely_y-fixed} ensures that for each $z\in T_F'$, the probability that $D = f(W, Y) = z$ and at least $N_F$ blocks are $Y$-fixed is at least $2^{-n}/\log n = \Pr[D = z] / \log n$. So if we consider the subset of probability mass of $(D, W, Y)$ where $D \in T_F'$, at least a $1/\log n$-fraction of the mass satisfies $D = f(W, Y)$ and $\geq N_F$ blocks are $Y$-fixed.

Then, conditioning on this event and using that $D$ is a distribution on at most $N+1$ binary variables it follows that 
\begin{align}
    H(D | D \in T_F') \leq \gamma H[f(W,Y) | \; \text{at least } N_F \text{ blocks $g_i(W_i,Y)$ are $Y$-fixed}] + (1- \gamma) (N+1) + H(\gamma) 
\end{align}
where $\gamma  = 1/\log(N)$, and we used the standard entropy inequality that $H(A) \leq \sum_b Pr[B = b] H(A| B = b) + H(B)$ for any distributions $A$ and $B$. But then we also know that if at least $N_F$ blocks $g_i(W_i,Y)$ are $Y$-fixed then $f(W,Y)$ is a function of at most $kN + N^\delta - N_F$ random variables each with entropy at most $1/k$. Thus we have 
\begin{align}
H[f(W,Y) | \; \text{at least } N_F \text{ blocks $g_i(W_i,Y)$ are $Y$-fixed}] \leq N + (N^\delta - N_F)/k
\end{align}
So it follows that 
\begin{align}
H(D | D \in T_F') &\leq N - \gamma (N_F - N^\delta)/k + (1-\gamma) + H(\gamma) \\
&\leq N - \gamma (N^{3\alpha})/k + 2.
\end{align}
Finally, since $D$ is uniformly distributed over bistrings of the form $(Z, \pmmajmod{p}(Z))$ we also know that 
\begin{align}
    \abs{T_F'} &= 2^{H(D | D \in T_F') } \\
    &= 2^{N - \gamma (N^{3\alpha})/k + 2}
\end{align}
and thus 
\begin{align}
\Pr[D \in T_F'] \leq 2^{- \gamma (N^{3\alpha})/k + 2}
\end{align}
But now we use that $b < 1/2( 1- (1/2d)^{\ln(4)})^{1/2}$ and 
hence $k = H(b) \leq d = (\epsilon \log(N))^{1/2}$ to conclude
\begin{align}
    \Pr[D \in T_F'] \leq 2^{2 - \gamma N^{3\alpha} / (\epsilon \log(N))^{1/2}} = 2^{2 - N^{3\alpha} / (\epsilon^{1/2} \log(N))^{3/2})}  \leq (1/3N)
\end{align}
for sufficiently large $N$, as desired.
\end{proof}
We now move on to the first claim. Our first step is to relate the notion of being likely $Y$-fixed introduced in this section to the stronger $y$-fixed criterion used in \Cref{sec:classical_hardness_main}.
\begin{claim}\label{claim:inT_FnotT_Fp}
Given $(W,Y) \sim B_b^{\otimes \ell}$ we have 
\begin{align}
    \Pr[f(W,Y) \notin T_F' \textbf{ and } Y \text{ fixes at least } N_F \text{ blocks } g_i(W_i,Y)] \leq 1/\log(N)
\end{align}
\end{claim}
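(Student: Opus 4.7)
The plan is to prove \Cref{claim:inT_FnotT_Fp} by a direct probability calculation that unpacks the definition of ``likely $Y$-fixed'' in \Cref{eq:likely_y-fixed}. The key observation is that the event $\{f(W,Y) \notin T_F' \text{ and } I(Y) = 1\}$ can be decomposed according to the value $z' = f(W,Y)$: conditioning on $f(W,Y) = z'$, the surviving inputs are precisely the pairs $(w,y) \in \mathcal{X}(z')$, and the $I$-indicator restricts these to inputs with $y$ fixing at least $N_F$ blocks. So the probability of interest naturally factors as a double sum indexed by outputs and input preimages.

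Concretely, I would write
\[
    \Pr\bigl[f(W,Y) \notin T_F' \text{ and } I(Y) = 1\bigr] = \sum_{z' \in \{0,1\}^{N+1} \setminus T_F'} \; \sum_{(w,y) \in \mathcal{X}(z')} p(w,y)\, I(y).
\]
For every $z' \notin T_F'$, the defining condition \Cref{eq:likely_y-fixed} gives that the inner sum is strictly less than $2^{-N}/\log N$. Plugging this in and bounding the outer sum crudely by $|\{0,1\}^{N+1}| = 2^{N+1}$ yields
\[
    \Pr\bigl[f(W,Y) \notin T_F' \text{ and } I(Y) = 1\bigr] < 2^{N+1} \cdot \frac{2^{-N}}{\log N} = \frac{2}{\log N},
\]
which is $O(1/\log N)$. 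This is essentially a Markov-style bookkeeping argument: the mass placed on outputs that each individually receive very little $p \cdot I$-weight must itself be small.

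There is no real obstacle here; the only cosmetic issue is that the calculation yields $2/\log N$ rather than the literal $1/\log N$ stated in the claim, but this is absorbed into the $O(1/\log N)$ error terms used throughout \Cref{sec:classical_hardness_main} (and can be made exact by replacing the threshold $1/\log N$ in \Cref{eq:likely_y-fixed} with $1/(2\log N)$). The role of this lemma in the broader argument is to upgrade the ``deterministic $y$-fixedness'' analysis of \Cref{claim:pmmajmod_statistical_test_pass} to the biased-input setting: Cref{claim:inT_FnotT_Fp} says that outside an $O(1/\log N)$-probability event, an input $(W,Y)$ drawn from $B_b^{\otimes \ell}$ for which $Y$ fixes many blocks does in fact land on an output in $T_F'$, so the lower bound on the ``pass probability'' for $f(B_b^{\otimes \ell})$ in the biased setting follows by the same block-independence argument as in the uniform case, with only the extra $1/\log N$ loss.
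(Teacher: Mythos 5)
Your proposal is correct and is essentially the paper's own argument: both decompose the event by the output value $z' = f(W,Y) \notin T_F'$, bound each inner preimage sum by $2^{-N}/\log N$ via the definition of ``likely $Y$-fixed'', and sum over outputs. Your extra factor of $2$ from counting $2^{N+1}$ outputs is a real (if cosmetic) point — the paper's final step silently uses at most $2^{N}$ outputs — and, as you say, it is harmless since the bound is only ever used as $O(1/\log N)$.
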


\begin{proof}
Let $(\mathcal{W}, \mathcal{Y})$ be the set of all $(w,y) \in \{0,1\}^\ell$ with at least $N_F$ blocks being $y$-fixed but with $f(w,y) \notin T_F'$. Also, for any $z' \in \{0,1\}^{N+1}$ define $\mathcal{X}(z')$ as above, so $\mathcal{X}(z') := \{(w,y) : f(w,y) = z'\}$. Then, letting $p(w,y)$ be the probability of drawing input $(w,y)$ from the distribution $B_b^{\otimes \ell}$ we see:
\begin{align}
\sum_{(w,y) \in (\mathcal{W},\mathcal{Y})} p(w,y) &= \sum_{z' \notin T_F'} \left[ \sum_{(w,y) \in (\mathcal{W},\mathcal{Y}) \cap \mathcal{X}(z') } p(w,y) \right] \\
&\leq \sum_{z' \notin T_F'} 2^{-N} (1 / \log(N)) \leq 1/ \log(N)
\end{align}    
where we used that, by definition, any $z' \notin T_F$ has at most a $2^{-N} / \log(N)$ chance of being the image of a $(w,y)$ with more than $N_F$ blocks which are $y$-fixed, so
\begin{align}
    \sum_{(w,y) \in (\mathcal{W},\mathcal{Y}) \cap \mathcal{X}(z')} p(w,y) \leq 2^{-N}/\log(N)
\end{align}
for any $z' \notin T_F'$. 
\end{proof}

We now move on to the main Claim. 

\begin{claim}
$\Pr[f(B_b^{\otimes \ell}) \in T] \geq \frac{1}{2} - O(1/\log N)$.
\end{claim}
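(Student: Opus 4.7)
The plan is to imitate the structure of \Cref{claim:pmmajmod_statistical_test_pass}, making only the changes forced by the biased input distribution and the replacement of $T_F$ by $T_F'$. Fix an arbitrary $y \in \{0,1\}^{\ell-s}$; it suffices to show $\Pr_W[f(W,y) \in T'] \geq 1/2 - O(1/\log N)$, where conditional on the $Y$-coordinates being equal to $y$ the remaining coordinates $W$ are still i.i.d.\ Bernoulli with bias $b$. If at least $N_F$ blocks $g_i(\cdot,y)$ are $y$-fixed then $f(W,y)$ lies in the old set $T_F$, but only the intersection with $T_F'$ counts towards the new test; by \Cref{claim:inT_FnotT_Fp}, the total probability mass on $T_F \setminus T_F'$ is at most $1/\log N$, which we absorb into the $O(1/\log N)$ error term.

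We may therefore assume fewer than $N_F$ blocks are $y$-fixed. Following the original case analysis: if either at most $N_M$ blocks take their minimal value for some choice of $w$, or at most $N_0$ blocks evaluate to zero for some choice of $w$, then $f(W,y)$ lies deterministically in $T_M$ or $T_0$. Otherwise, we obtain subsets $J, K \subseteq [s]$ of size at least $N^{3\alpha}$ of blocks for which $\blocksum{i}$ and $|g_i|$ respectively depend nontrivially on $w_i$, and the proof splits into the two subcases $|J \cap K| \leq \frac{1}{2}N^{3\alpha}$ and $|J \cap K| > \frac{1}{2}N^{3\alpha}$ exactly as in the uniform case. The only substantive modification is that the invocations of \Cref{fact:puniform}, \Cref{cor:prob_sum_in_set}, and \Cref{lem:MMP_of_sum} must be replaced by biased analogs. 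The ``arbitrary fixing'' of a subset of the $w$-bits carries over without change, since the biased Bernoulli product distribution is a product measure; what changes is the quantitative rate at which the remaining sum $\sum a_i r_i \bmod p$ approaches uniform.

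The technical heart of the proof is thus the biased analog of \Cref{fact:puniform}: if $r_1,\ldots,r_t$ are i.i.d.\ Bernoulli with bias $b$ and $a_1,\ldots,a_t$ are nonzero modulo $p$, then
\begin{align*}
    \TVD\!\left(\sum\nolimits_{i=1}^t a_i r_i \bmod p,\; U_p\right) \;\leq\; \sqrt{p}\,\exp\!\left(-t(1-4b^2)/p^2\right).
\end{align*}
This follows from a standard Fourier-analytic argument: the characteristic function of a single biased bit at frequency $2\pi k/p$ has magnitude $\sqrt{1 - (1-4b^2)\sin^2(\pi k/p)}$, and using $\sin(\pi k/p) \geq 2/p$ for $k \in \{1,\ldots,p-1\}$ combined with independence gives the stated bound. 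In our regime we are left with the case $b \leq \tfrac{1}{2}(1 - (1/2d)^{\ln 4})^{1/2}$ (the complementary case is handled by \Cref{lem:high_bias_bound}), so $1 - 4b^2 \geq (1/2d)^{-\ln 4} = 1/\mathrm{polylog}(N)$; together with $t \geq \tfrac{1}{2}N^{3\alpha}$ and $p = \Theta(N^\alpha)$, this makes the exponent blow up faster than $N^\alpha/\mathrm{polylog}(N)$, so the error is $o(1/N)$ and is swallowed by the $O(1/\log N)$ term everywhere it appears.

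The main obstacle I anticipate is purely bookkeeping rather than conceptual: the proof of \Cref{lem:MMP_of_sum} splits into two subcases (``mostly even $u_i$'' versus ``mostly odd $u_i$'') and each applies \Cref{cor:prob_sum_in_set} to a different surviving subset of coordinates after fixing. One must verify that when the unfixed coordinates are biased Bernoulli rather than uniform, each such application of the biased \Cref{fact:puniform} still produces a bound of the form $\tfrac{1}{2} - O(1/\log N)$, and in particular that no extra $\log N$ factors creep in from the $(1-4b^2)$ losses. Once that verification is done, the estimates from the original Claim go through verbatim and yield $\Pr[f(B_b^{\otimes \ell}) \in T'] \geq \tfrac{1}{2} - O(1/\log N)$.
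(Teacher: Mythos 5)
Your proposal is in substance the paper's own argument: handle the inputs where $Y$ fixes at least $N_F$ blocks via the aggregate bound of \Cref{claim:inT_FnotT_Fp} (the mass escaping $T_F'$ is at most $1/\log N$), and for the remaining inputs rerun the case analysis of \Cref{claim:pmmajmod_statistical_test_pass} with a biased-Bernoulli analogue of \Cref{fact:puniform}. Your Fourier computation of that analogue (single-bit character magnitude $\sqrt{1-(1-4b^2)\sin^2(\pi k/p)}$, then $\sin(\pi k/p)\geq 2/p$) gives exactly the bound the paper obtains in \Cref{fact:pbernoulli_bits} by citing Bogdanov--Viola and Lovett--Viola, and your parameter check ($1-4b^2\geq(1/2d)^{\ln 4}$ in the regime left over after \Cref{lem:high_bias_bound}, so the error is $o(1/N)$) matches \Cref{cor:prob_sum_in_set_biased}; you have a sign typo ($(1/2d)^{-\ln 4}$ should be $(1/2d)^{\ln 4}$) but the intended estimate is right.

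One wrinkle to fix in the write-up: the opening reduction ``fix an arbitrary $y$; it suffices to show $\Pr_W[f(W,y)\in T']\geq 1/2-O(1/\log N)$'' is not literally valid. For a fixed $y$ that fixes many blocks and whose image happens to avoid $T_F'$, the per-$y$ probability can be far below $1/2$; the control you invoke from \Cref{claim:inT_FnotT_Fp} is a joint bound over $(W,Y)$, not a statement about a worst-case $y$. The paper resolves this by partitioning the $y$-space into three events ($y$ always passes; $y$ fixes $\geq N_F$ blocks, handled in aggregate by \Cref{claim:inT_FnotT_Fp}; $y$ fixes fewer blocks, handled per-$y$ as in the uniform case) and union-bounding the failure probability. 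Your argument contains all the ingredients for this, so the fix is purely structural: replace the worst-case-$y$ framing by that average-over-$Y$ decomposition. On your anticipated obstacle, note the paper is equally terse about re-deriving \Cref{lem:MMP_of_sum} under biased inputs (it simply substitutes \Cref{cor:prob_sum_in_set_biased} for \Cref{cor:prob_sum_in_set}), so your level of detail there matches the paper's.
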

\begin{proof}
This proof follows similarly to the proof of \Cref{claim:pmmajmod_statistical_test_pass}, again with some additional complications.

We first define three subsets $A_1, A_2, A_3 \subseteq \zo^{\ell - s}$ of all possible values that $Y$ can take.
\begin{align}
    A_1 &:= \cbra{y \in \zo^{\ell - s} : \forall w\in \zo^s, z' = f(w,y) \text{ satisfies } z' \in T_0 \cup T_M \cup T_F}\\
    A_2 &:= \cbra{y \in \zo^{\ell - s} : y \text{ fixes at least } N_F \text{ blocks } g_i(w_i, y)}\\
    A_3 &:= \cbra{y \in \zo^{\ell - s} : y \text{ fixes } 
    < N_F \text{ blocks}, \textbf{ and } \exists w,w' \in \zo^s : f(w,y) \notin T_0, f(w',y)\notin T_M}.
\end{align}
It is easy to verify that $A_1 \cup A_2 \cup A_3 = \zo^{\ell - s}$. It is then sufficient for us to bound the $\Pr[f(W, Y) \in T \ \textbf{and} \ Y \in A_i]$ for each $i\in \{1,2,3\}$ since 
\begin{align}\label{eq:sum_of_probs_Ai}
    \Pr[f(W,Y) \notin T] \leq \sum_{i=1}^3 \Pr[f(W,Y) \notin T\  \textbf{and} \ Y \in A_i].
\end{align}
By the definition of $A_1$, we have that for each $y\in A_1$, and each $w\in \zo^s$, $f(w,y) \in T$. Therefore $\Pr[f(W,Y) \notin T\  \textbf{and} \ Y \in A_1]~=~0$. In \Cref{claim:inT_FnotT_Fp} we already proved that $\Pr[f(W,Y) \notin T\  \textbf{and} \ Y \in A_2] \leq O(1/\log N)$. All that remains is to bound $\Pr[f(W,Y) \notin T | Y \in A_3]$. The proof here follows the same steps as the proof of~\Cref{claim:pmmajmod_statistical_test_pass}, except that the random variables $x_i$ are no longer uniform but are now biased Bernoulli random variables. As mentioned previously, by \Cref{lem:high_bias_bound} we can assume this bias $b$ is upper bounded by 
\begin{align}
    b^2 \leq 1/4 \left(1 - (1/2 \sqrt{\epsilon \log(N)} )^{\ln(4)}\right).
\end{align}
To accommodate these biased random variables we require the following variant of \Cref{fact:puniform}.

\begin{fact}\label{fact:pbernoulli_bits}
Let $a_1, a_2, ..., a_t$ be nonzero integers modulo $p$ and let $X_1, X_2, ... X_t \in \{0,1\}^n$ be i.i.d. Bernoulli random variables sampled from a distribution with bias $b$. Then the total variation distance distance between $\sum_i a_i X_i \mod{p}$ and the uniform distribution over $\{0,1,..., p-1 \}$ is at most $\sqrt{p} \exp( - \Omega(  t (1 - 4b^2) / p^2))$. 
\end{fact}

\begin{proof}
We follow the same proof as outlined in Fact 3.2 of~\cite{viola2012complexity}. By Claim 33 of~\cite{bogdanov2010pseudorandom} we have that the total variation distance is at most 
\begin{align}
    \sqrt{p} \max_{a \neq 0} \left| \mathbb{E}_X \left[\exp (a 2 \pi i \sum_j a_j X_j)\right]  - \mathbb{E}_{U_p} \left[\exp(a 2 \pi i U_p) \right]\right|
\end{align}
where $U_p$ denotes the uniform distribution and the first expectation is taken over the Bernoulli random variables. For any $a \neq 0$ we have $\mathbb{E}_{U_p} \left[\exp(a 2 \pi i U_p) \right] = 0$. Also, by Lemma 13 of \cite{lovett2009pseudorandom} we have 
\begin{align}
    \mathbb{E}_X \left[\exp (a 2 \pi i \sum_j a_j X_j)\right] \leq \exp( - \Omega(  t (1 - 4b^2) / p^2)).
\end{align}
\end{proof}

Mirroring the use of~\Cref{fact:puniform}, we will now prove the straightforward corollary of \Cref{fact:pbernoulli_bits}. 

\begin{cor}\label{cor:prob_sum_in_set_biased}
    For each prime $p = \Theta(N^\alpha)$ with $\alpha <1$,  $t = \Omega(p^3)$, $a_0, a_1, \dots a_t$ nonzero integers modulo $p$, and $A \subseteq \{0, 1, \dots p-1\}$, let $X_1, ..., X_t$ be i.i.d. Bernoulli random variables with bias $b$ bounded above by $b^2 \leq 1/4( 1 - (1/2\sqrt{\epsilon \log(N)})^{\ln(4)} ) $. Then
    \begin{align}
        \frac{|A|}{p} - O(1/N) \leq \Pr_{X}\left[a_0 + \sum_{i=1}^t a_i X_i \in A\right] \leq \frac{|A|}{p} + O(1/N)
    \end{align}
\end{cor}

\begin{proof}
    Let $U_p$ be the uniform distribution over $\{0,1,..., p-1\}$. Then the result follows from the observation that 
    \begin{align}
        \TVD\left(U_p, a_0 + \sum_{i=1}^t a_i X_i\right) \leq N^{\alpha/2} \exp( - \Omega ( N^\alpha / \sqrt{\epsilon \log(N) } )) \leq O(1/N)
    \end{align}
\end{proof}
Note that this gives exactly the same asymptotic scaling as~\Cref{cor:prob_sum_in_set}. 
Then, following the same argument in \Cref{claim:pmmajmod_statistical_test_pass} and using \Cref{cor:prob_sum_in_set_biased} in place of \Cref{cor:prob_sum_in_set} gives the bound 
\begin{align}
    \Pr[f(W,Y) \notin T | Y \in A_3] &\leq 1/2 + O(1/\log(N)).
\end{align} Combining this with \Cref{eq:sum_of_probs_Ai}, we get that
\begin{align}
    \Pr[f(W,Y) \notin T] \leq O(1/\log n) + \Pr[Y \in A_3] \cdot  1/2 + O(1/\log(N)) = 1/2 + O(1/\log N).
\end{align}
This completes the proof.
\end{proof}
\end{proof}

\bibliographystyle{quantum}
\bibliography{ref}
\end{document}